\begin{document}

\arraycolsep = 0.3\arraycolsep
\def\R{\mathbb R}
\def\C{\mathbb C}
\def\N{\mathbb N}
\def\Z{\mathbb Z}
\newcommand\J{\mathscr J}
\def\A{\mathscr{A}}
\def\p{\left\langle v\right\rangle}
\def\tp{\left\langle \tilde v\right\rangle}
\def\u{\left\langle u\right\rangle}
\def\T{\mathbb{T}}
\def\energy{\mathcal{E}}
\def\D{\mathcal{D}}
\def\be{\begin{equation}}
\def\ee{\end{equation}}
\def\bea{\begin{eqnarray}}
\def\eea{\end{eqnarray}}
\def\beas{\begin{eqnarray*}}
\def\eeas{\end{eqnarray*}}
\def\supp{\mathrm{supp}\,}
\def\sign{\mathrm{sign}\,}

\def\g{\partial}
\def\t{\bar{\partial}}
\def\a{{\bf a}}
\def\K{\mathbb{S}}
\def\l{\lambda}
\def\pa{\partial }
\def\energynergy{{\overline{\mathcal E}^\varphi}}
\def\w{w_{\delta}}
\def\S{\mathcal S}
\def\der{\pa_r^a\slashed\nabla^\beta}
\def\car{\pa^\nu}
\def\d{\text{div}}
\def\sn{\slashed\nabla}
\def\M{\mathscr M}
\def\N{\mathscr N}
\def\drho{\delta\rho}
\def\dl{\delta\lambda}
\def\bcr{\begin{color}{red}}
\def\bcb{\begin{color}{blue}}
\def\ec{\end{color}}

\def\norm{\mathcal S^N}
\def\vortnorm{\mathcal B^N}
\def\energy{\mathcal E^N}
\def\dissipation{\mathcal D^N}
\def\pa{\partial}
\newcommand{\prfe}{\hspace*{\fill} $\Box$

\smallskip \noindent}

\sloppy
\newtheorem{maintheorem}{Theorem}
\newtheorem{theorem}{Theorem}[section]
\newtheorem{definition}[theorem]{Definition}
\newtheorem{proposition}[theorem]{Proposition}
\newtheorem{example}[theorem]{Example}
\newtheorem{corollary}[theorem]{Corollary}
\newtheorem{lemma}[theorem]{Lemma}
\newtheorem{remark}[theorem]{Remark}

\renewcommand{\theequation}{\arabic{section}.\arabic{equation}}

\title{Stability and instability of self-gravitating relativistic
       matter distributions}
\author{Mahir Had\v zi\'c\thanks{Department of Mathematics,
        University College London, London, UK} ,
        Zhiwu Lin\thanks{Department of Mathematics,
        Georgia Institute of Technology, Atlanta, USA} ,
        Gerhard Rein\thanks{Department of Mathematics,
        University of Bayreuth, Germany}}

\maketitle

\begin{abstract}
  We consider steady state solutions of the massive, asymptotically
  flat, spherically symmetric Einstein-Vlasov system, i.e.,
  relativistic models of galaxies or globular clusters,
  and steady state solutions of the Einstein-Euler system, i.e.,
  relativistic models of stars. Such steady states are embedded into
  one-parameter families parameterized by their central redshift $\kappa>0$.
  We prove their linear instability when $\kappa$
  is sufficiently large, i.e., when they are strongly relativistic,
  and that the instability
  is driven by a growing mode.
  Our work confirms the scenario of dynamic instability proposed in the
  1960s by Zel'dovich \& Podurets (for the Einstein-Vlasov system) and by
  Harrison, Thorne, Wakano, \& Wheeler (for the Einstein-Euler system).
  Our results are in sharp contrast to the
  corresponding non-relativistic, Newtonian setting.

  We carry out a careful analysis of the linearized dynamics around
  the above steady states and prove an exponential trichotomy result and
  the corresponding index theorems for the stable/unstable invariant spaces.
  Finally, in the case of the Einstein-Euler system we prove a rigorous version
  of the turning point principle which relates the stability of steady states
  along the one-parameter family to the
  winding points of the so-called mass-radius curve.
\end{abstract}

\tableofcontents

\section{Introduction and main results}
\setcounter{equation}{0}

We consider a smooth spacetime manifold $M$
equipped with a  Lorentzian metric $g_{\alpha \beta}$
with signature $(-{}+{}+{}+)$.
The Einstein equations read
\begin{equation} \label{feqgen}
G_{\alpha \beta} = 8 \pi T_{\alpha \beta},
\end{equation}
where $G_{\alpha \beta}$ is the Einstein tensor induced by the metric, and
$T_{\alpha \beta}$ is the energy-momentum tensor given by the
matter content of the spacetime; Greek indices run from $0$ to $3$,
and we choose units in which the speed
of light and the gravitational constant are equal to $1$.
To obtain a closed system,
the field equations \eqref{feqgen} have to be supplemented by
evolution equations for the matter and by
the definition of $T_{\alpha \beta}$ in terms of the matter and the metric.
We consider two matter models, namely a collisionless gas as described
by the collisionless Boltzmann or Vlasov equation and an ideal fluid
as described by the Euler equations. This results in the
Einstein-Vlasov and Einstein-Euler system respectively.
We study these systems under the assumption that the
spacetime is spherically symmetric and asymptotically flat,
but we first formulate them in general,
together with our main results.

\subsection{The Einstein-Vlasov system}

In the case of a collisionless gas matter is described
by the number density $f$ of the particles on
phase space.  The world line of a test particle
on $M$ obeys the geodesic equation
\[
\dot x^\alpha = p^\alpha,\
\dot p^\alpha = - \Gamma^\alpha_{\beta \gamma} p^\beta p^\gamma,
\]
where $x^\alpha$ denote general coordinates on $M$,
$p^\alpha$ are the corresponding canonical momenta,
$\Gamma^\alpha_{\beta \gamma}$ are the Christoffel symbols
induced by the metric $g_{\alpha \beta}$,
the dot indicates differentiation with respect to proper time along the
world line of the particle, and the Einstein summation convention is applied.
We assume that all the particles have the same
rest mass $m_0\geq 0$ and move forward in time, i.e.,
their number density $f$ is a non-negative function
supported on the mass shell
\[
PM := \left\{ g_{\alpha \beta} p^\alpha p^\beta = -m_0^2,\ p^\alpha \
\mbox{future pointing} \right\},
\]
a submanifold of the tangent bundle $TM$ of the spacetime manifold $M$
which is invariant under the geodesic flow.
Since we are interested in the massive case $m_0>0$, we normalize $m_0=1$.
But at a crucial point in our analysis the massless case $m_0=0$ will
play an important role.
Letting Latin indices range from $1$ to $3$
we use coordinates $(t,x^a)$ with zero shift which implies that
$g_{0a}=0$. On the mass shell $PM$ the variable $p^0$ then becomes
a function of the remaining variables $(t,x^a,p^b)$:
\[
p^0 = \sqrt{-g^{00}} \sqrt{1+g_{ab}p^a p^b}.
\]
Since the particles move like test particles in the given metric,
their number density
$f=f(t,x^a,p^b)$ is constant along the geodesics and hence satisfies
the Vlasov equation
\begin{equation} \label{vlgen}
\partial_t f + \frac{p^a}{p^0}\,\partial_{x^a} f
-\frac{1}{p^0}\,\Gamma^a_{\beta \gamma} p^\beta p^\gamma\,\partial_{p^a} f = 0.
\end{equation}
The energy-momentum tensor is given by
\begin{equation} \label{emtvlgen}
T_{\alpha \beta}
=\int p_\alpha p_\beta f \,|g|^{1/2} \,\frac{dp^1 dp^2 dp^3}{-p_0},
\end{equation}
where $|g|$ denotes the modulus of the determinant of the metric,
and indices are raised and lowered using the metric, i.e.,
$p_\alpha = g_{\alpha \beta}p^\beta$.
The system \eqref{feqgen}, \eqref{vlgen}, \eqref{emtvlgen}
is the Einstein-Vlasov system in general coordinates. We
want to describe isolated systems, and therefore we require that the
spacetime is asymptotically flat.

In Section~\ref{sssection} we will see that steady states of
this system can be obtained via an ansatz
\be \label{miceqstate}
f = \phi(E)
\ee
where $E= -p_0$ is the local or particle energy and
$\phi$ is a prescribed ansatz function;
we refer to \eqref{miceqstate}
as the microscopic equation of state.
We can now state our main result for the Einstein-Vlasov system in an informal
way; the precise meaning of the parameter
$\kappa>0$ is explained in Section~\ref{sssection},
in particular \eqref{yiv} and \eqref{redshiftdef}.

\begin{maintheorem} \label{mainev}
  Let $(f_\kappa)_{\kappa>0}$ be a one-parameter
  family of spherically symmetric steady states to the
  Einstein-Vlasov system with a fixed, decreasing microscopic equation of state
  $\phi$ and $\kappa$ the central redshift of the corresponding steady state.
  \begin{itemize}
  \item[(a)]
    For $\kappa$ sufficiently large, the associated steady state
    is dynamically unstable in the sense that the linearized Einstein-Vlasov
    system possesses an exponentially growing mode.
  \item[(b)]
    For any $\kappa>0$ the phase space of the linearized system
    splits into three invariant subspaces:
    the stable, unstable, and center space.
    The dimension of the stable/unstable space is equal to the
    negative Morse index of
    a suitably defined, macroscopic Schr\"odinger-type operator.
  \end{itemize}
  \end{maintheorem}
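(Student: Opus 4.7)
The strategy is to reduce the linearized Einstein--Vlasov system to a spectral problem for a self--adjoint macroscopic Schrödinger--type operator on radial functions, and then to show that this reduced operator has a negative eigenvalue once $\kappa$ is large. First I would fix a convenient spherically symmetric gauge (e.g.\ Schwarzschild or maximal--areal) so that the metric is encoded in two radial functions, and linearize the coupled system about $(f_\kappa,g_\kappa)$. Because $f_\kappa=\phi(E)$ is time--reversal symmetric, the perturbation $\delta f$ splits into its even and odd parts $\delta f^\pm$ under $p^r\mapsto-p^r$; the odd part is transported by the equilibrium geodesic flow, while the even part sources, via the linearized Hamiltonian and momentum constraints, the metric perturbation. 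Eliminating $\delta f^-$ and the metric unknowns yields a second--order equation
\[
\partial_t^2\,\delta f^+ + \mathcal{L}_\kappa\,\delta f^+ = 0,
\]
with $\mathcal{L}_\kappa$ self--adjoint in the weighted inner product defined by $|\phi'(E)|$. This exhibits the Hamiltonian structure and transfers the linear stability question to the sign of $\mathcal{L}_\kappa$.

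Next, using that the kernel of the equilibrium transport operator consists precisely of functions of the conserved quantities $(E,L)$ along geodesics, I would perform a Birman--Schwinger type reduction: projecting the eigenvalue problem $\mathcal{L}_\kappa u=\mu u$ onto this kernel produces an equivalent spectral problem for a macroscopic operator $\mathcal{M}_\kappa$ acting on spherically symmetric functions of $r$ alone, with the key identity $n^-(\mathcal{L}_\kappa)=n^-(\mathcal{M}_\kappa)$. Writing the full linearized system in Hamiltonian form $\partial_t U=\mathcal{J}\mathcal{B}U$ on a suitable Hilbert space, and observing that, modulo the conservation laws and gauge modes, the quadratic form of $\mathcal{B}$ has only finitely many negative directions, the abstract spectral theory of Hamiltonian operators in the style of Lin--Zeng yields the trichotomy $X=E^{\mathrm{s}}\oplus E^{\mathrm{c}}\oplus E^{\mathrm{u}}$ together with the index identity $\dim E^{\mathrm{u}}=\dim E^{\mathrm{s}}=n^-(\mathcal{M}_\kappa)$. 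This establishes part~(b), and reduces part~(a) to showing $n^-(\mathcal{M}_\kappa)\geq 1$ for $\kappa$ large.

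For the instability at large $\kappa$ I would exhibit a single trial function $u_\kappa$ with $\langle \mathcal{M}_\kappa u_\kappa,u_\kappa\rangle<0$. The heuristic, matching the Zel'dovich--Podurets picture, is that the ultra--relativistic regime $\kappa\to\infty$ is governed, after an appropriate rescaling, by the \emph{massless} problem $m_0=0$, which enjoys an exact scaling symmetry. I would therefore take $u_\kappa$ modeled on the scaling mode of that massless limit, establish sharp quantitative asymptotics of the lapse and mass contributions to $\mathcal{M}_\kappa$ as $\kappa\to\infty$, and deduce that the quadratic form is strictly negative for $\kappa$ large enough. By part~(b) the dimension of the unstable subspace is then positive, producing an exponentially growing mode.

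The central obstacle is the macroscopic reduction: the linearized constraints couple $\delta f^+$, the lapse, and the mass function nonlocally in $r$, and one must verify that after inverting them $\mathcal{M}_\kappa$ remains a genuine Schrödinger--type operator with a well--defined domain and the correct Morse--index relationship to $\mathcal{L}_\kappa$. The secondary difficulty is the ultra--relativistic estimate in the last step: the passage $\kappa\to\infty$ is singular, and controlling the nonlocal correction terms uniformly in $\kappa$ requires a delicate scaling analysis exploiting precisely the massless--limit structure that is absent in the Newtonian Vlasov--Poisson problem.
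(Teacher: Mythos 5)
Your overall architecture (second-order self-adjoint formulation in the $|\phi'(E)|$-weighted space, reduction to a macroscopic Schr\"odinger-type operator, Lin--Zeng exponential trichotomy, and a negative direction for large $\kappa$ obtained from the ultrarelativistic/massless asymptotics) parallels the paper, but the central step of your reduction does not go through as proposed. You want to project the eigenvalue problem for $\mathcal L_\kappa$ onto ``the kernel of the equilibrium transport operator, which consists precisely of functions of the conserved quantities $(E,L)$,'' and conclude a Birman--Schwinger-type identity $n^-(\mathcal L_\kappa)=n^-(\mathcal M_\kappa)$. That characterization of the kernel is exactly a Jeans-type theorem, and Jeans' theorem is known to be \emph{false} for the Einstein--Vlasov system (Schaeffer's counterexamples \cite{Sch}); the paper flags this explicitly, both in the instability argument and in the discussion of whether $R(\mathcal B_\kappa)$ is closed. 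Moreover, even granting a projection, one still needs \emph{two-sided} quadratic-form comparisons between the kinetic form $\mathcal A_\kappa$ and a macroscopic operator; this is precisely where Ipser's earlier attempt \cite{IP1969} required an unverifiable extra hypothesis on the steady state. The paper's way around both problems is a new construction: the modified potential $\bar\mu_f$ of Definition~\ref{D:XDEF} and the sandwich inequalities of Theorem~\ref{T:VLASOVREDUCED}, which bound $\mathcal A_\kappa$ above and below by the forms of the reduced operators $S_\kappa$ and $\tilde S_\kappa$ without ever describing $\ker\mathcal T_\kappa$. Note also that the correct index in the trichotomy is that of the \emph{non-local} reduced operator $\tilde S_\kappa$ (which carries the projection onto $\overline{R(\mathcal B_\kappa)}$, a subspace of infinite codimension not describable by finitely many moment constraints), not of the local operator; your ``modulo conservation laws and gauge modes'' heuristic glosses over exactly this point, and with it the claimed identity $\dim E^u=n^-(\mathcal M_\kappa)$ is unsupported.

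Two smaller remarks. First, in the paper the closed second-order equation is for the \emph{odd}-in-$v$ part of the perturbation ($\pa_{tt}f_-+\mathcal L_\kappa f_-=0$), with the linearized Hamiltonian constraint sourcing $\lambda$ through the even part and the momentum constraint giving $\pa_t\lambda$ through the odd part; your even/odd bookkeeping is partly reversed, and the self-adjointness (not mere symmetry) of $\mathcal L_\kappa$ and the finiteness of its negative Morse index are nontrivial inputs you would still have to supply. Second, for part~(a) your plan to make the macroscopic form negative by a ``scaling mode of the massless limit'' is in the right spirit, but the paper instead constructs an explicit dynamically accessible kinetic perturbation generated by $h=g(r)w$, localized to the annulus where the BKZ approximation of Proposition~\ref{P:EVTOBKZ} holds, and wins by a logarithmic gain over $\kappa^{-\delta}\ln\kappa$ error terms (Theorem~\ref{T:NEGATIVEA}); working directly on the reduced operator would require the negativity to be transferred back through $\tilde S_\kappa$, i.e.\ again through the very reduction your proposal leaves unproved.
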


The Einstein-Vlasov system models large stellar systems such as
galaxies or globular clusters,
and the central redshift is a measure of how relativistic the corresponding
state is. Part~(a) of the theorem says
that strongly relativistic galaxies or globular clusters are unstable.
This is in sharp contrast to the corresponding Newtonian situation,
i.e., to the Vlasov-Poisson system. For this system any steady state
induced by a strictly decreasing microscopic equation of state is non-linearly
stable, cf.\ \cite{GuRe2001,GuRe2007,LeMeRa3,Mou}.
The instability result in Theorem~\ref{mainev} is therefore a genuinely
relativistic phenomenon.

A proof of instability which does not rely on the refined spectral
information about the existence of growing modes can be found in
Theorem~\ref{lininstabEV}, while the existence of a growing mode
is given by Theorem~\ref{thm:main VE2}. A rigorous statement
of exponential trichotomy is provided in Theorem~\ref{thm:main VE}.

\subsection{The Einstein-Euler system}

In this case the matter is described by its mass-energy density $\rho$,
its four velocity $u^\alpha$ which is a future-pointing, time-like unit
vector field, and its pressure $p$. These quantities are defined on the
spacetime manifold $M$ and induce the energy-momentum tensor
\be
T_{\alpha \beta} = (\rho+p) u_\alpha u_\beta + p\, g_{\alpha \beta}.
\label{emtegen}
\ee
The Bianchi identity applied to the field equations yields the evolution
equations for the fluid
\[
\nabla^\alpha T_{\alpha \beta} = 0,
\]
where $\nabla^\alpha$ denotes the covariant derivative associated with the
metric. More explicitly,
\be \label{euler1}
u^\alpha \nabla_\alpha \rho + (\rho + p)  \nabla^\alpha u_\alpha =0,
\ee
\be \label{euler2}
(\rho + p) u^\alpha \nabla_\alpha u_\beta  +
(g_{\alpha\beta} + u_\alpha u_\beta) \nabla^\alpha p =0.
\ee
In order to close the system we need to prescribe a (macroscopic) equation
of state which relates $p$ and $\rho$,
\be \label{eqstate}
p = P(\rho),
\ee
with a prescribed function $P$. Notice that $p$ will always denote the pressure
as a function on spacetime, while $P$ will denote its functional relation to
the mass-energy density $\rho$. We state our
main result for the Einstein-Euler system,
which consists of
\eqref{feqgen}, \eqref{emtegen}, \eqref{euler1}, \eqref{euler2},
\eqref{eqstate},
in an informal way.

\begin{maintheorem} \label{mainee}
  Let $(\rho_\kappa, p_\kappa, u^\alpha_\kappa\equiv 0)_{\kappa>0}$
  be a one-parameter family of spherically symmetric steady states to the
  Einstein-Euler system with a fixed, strictly increasing
  equation of state (satisfying suitable assumptions) and $\kappa$ the
  central redshift of the corresponding steady state.
  \begin{itemize}
  \item[(a)]
    For $\kappa$ sufficiently large,
    the associated steady state is dynamically unstable in the sense that the
    linearized Einstein-Euler system possesses  an exponentially growing mode.
  \item[(b)]
    For any $\kappa>0$, the phase space of the linearized system
    splits into three invariant subspaces:
    the stable, unstable, and center space.
    The dimension of the stable/unstable space is equal to the
    negative Morse index of
    a suitably defined Schr\"odinger-type operator $\Sigma_\kappa$.
  \item[(c)]
    A version of the turning point principle holds, i.e., as $\kappa\in ]0,\infty[$ varies,
    the stability of the steady state can be inferred from the so-called
    mass-radius diagram and the knowledge of the negative Morse index
    of $\Sigma_\kappa$.
  \end{itemize}
\end{maintheorem}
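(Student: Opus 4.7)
My plan is to treat all three parts via a Hamiltonian formulation of the linearized Einstein--Euler system around the static background $(\rho_\kappa, p_\kappa, 0)$. Working in polar areal gauge and exploiting spherical symmetry, the only dynamical degree of freedom is a radial Lagrangian displacement $\xi(t,r)$ of the fluid; the linearized Hamiltonian and momentum constraints can be solved to express the metric perturbations algebraically in terms of $\xi$ and $\partial_t \xi$. After substitution into the linearized Euler equations one is left with a single second-order equation
\[
\mathcal{A}_\kappa\, \partial_t^2 \xi + \mathcal{L}_\kappa \xi = 0,
\]
where $\mathcal{A}_\kappa$ is a strictly positive multiplication operator and $\mathcal{L}_\kappa$ is formally self-adjoint on a suitably weighted $L^2$ space over $[0,R(\kappa)]$. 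Conjugation by $\mathcal{A}_\kappa^{1/2}$ produces the Schr\"odinger-type operator $\Sigma_\kappa$ of the theorem, which is essentially the classical Chandrasekhar pulsation operator. Carrying out this reduction rigorously, with the correct boundary conditions at the center and at the fluid edge where $\rho_\kappa$ may vanish, is the step that requires the most care.

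Given this reduction, part (b) follows from an abstract Hamiltonian exponential trichotomy theorem applied to the first-order system for $(\xi,\partial_t\xi)$: the symplectic structure is inherited from the ADM Hamiltonian, $\Sigma_\kappa$ has compact resolvent on the reduced space so the generator has discrete spectrum near the origin, and the general theory identifies $\dim E^u = \dim E^s = n^-(\Sigma_\kappa)$ with the center space accounting for $\ker \Sigma_\kappa$ and the symplectically isotropic directions.

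For part (a) I would exhibit, for $\kappa$ sufficiently large, a test displacement $\xi_\kappa$ with $\langle \Sigma_\kappa \xi_\kappa, \xi_\kappa\rangle < 0$, yielding $n^-(\Sigma_\kappa) \geq 1$ and hence an exponentially growing mode by part (b). The natural construction rescales about the strongly relativistic core and compares with the massless/ultra-relativistic limit, where the relevant quadratic form becomes manifestly indefinite; the remark in the introduction that the massless case plays a role strongly suggests this route.

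Part (c) is then a spectral counting argument along the family. Differentiating the static Einstein--Euler equations with respect to $\kappa$ shows that $\partial_\kappa$ of the steady profile solves $\Sigma_\kappa \psi = 0$ in the interior, while the boundary condition at $r = R(\kappa)$ is satisfied precisely at a critical point of the ADM mass $M(\kappa)$. A Sturm--Liouville-type monotonicity argument then shows that $n^-(\Sigma_\kappa)$ jumps by $\pm 1$ at each such critical $\kappa$, with the direction of the jump dictated by the sense of winding of the mass-radius curve. Starting from the Newtonian limit $\kappa \to 0^+$, where classical results give $n^-(\Sigma_\kappa)=0$, this encodes the full stability transition data in the mass-radius diagram. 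The principal obstacle throughout is the careful handling of the fluid boundary $r=R(\kappa)$: the equation of state is allowed to degenerate there, the admissible function space for $\Sigma_\kappa$ is constrained by a one-sided boundary condition, and this same boundary condition controls both the test-function construction in (a) and the index jumps in (c).
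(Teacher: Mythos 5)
Your overall skeleton (Hamiltonian formulation, abstract exponential trichotomy, a negative direction for large $\kappa$ built on the ultrarelativistic/massless asymptotics of the core, and a turning point argument obtained by differentiating the static family) is aligned with the paper, and your plan for part (a) — a test perturbation localized to a rescaled annulus around the center where the background is close to the massless limit — is essentially the paper's Theorem~\ref{T:NEGATIVEA2}, modulo all the quantitative work (Propositions~\ref{P:EVTOBKZ} and~\ref{muass}, the choice $\xi=r^{-1/4}\chi_\kappa$, the logarithmic gain). However, there are two genuine gaps. First, your $\Sigma_\kappa$ is not the operator of the theorem. You obtain it by eliminating the metric in Lagrangian variables and conjugating the pulsation equation, i.e.\ it is the (constrained) Chandrasekhar pulsation operator, which in the paper's notation is $A_\kappa'\tilde{\mathcal L}_\kappa A_\kappa$ acting on displacements supported in $[0,R_\kappa]$. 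The paper's $\Sigma_\kappa=-\Delta_\kappa-e^{\lambda_\kappa}/\Psi_\kappa$ is a \emph{macroscopic} Schr\"odinger operator acting on potential perturbations $\psi\in\dot H^1_r(\R^3)$, produced by the reduced-operator bounds of Theorem~\ref{T:EULERREDUCED} (two-sided duality estimates between the nonlocal quadratic form $(\tilde{\mathcal L}_\kappa\rho,\rho)$ and $\langle\Sigma_\kappa\psi,\psi\rangle$ via the modified potential $\bar\mu_\rho$); this construction is the paper's key new ingredient and does not appear in your plan. Because $\Sigma_\kappa$ encodes the \emph{unconstrained} second variation while the dynamically accessible perturbations satisfy $\int\rho\,dx=0$, the rigorous dimension formula is $\dim E^u=\dim E^s=n^-(\Sigma_\kappa)-i_\kappa$ (Theorems~\ref{T:TPP} and~\ref{thm:main EE}), not $n^-(\Sigma_\kappa)$; your formula is only consistent because you have silently built the mass constraint into your operator, which then shifts all the difficulty into part (c).

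Second, your argument for (c) does not close. You propose to track $n^-$ along the family: anchor it at the Newtonian limit and show it jumps by $\pm1$ at critical points of $M(\kappa)$ by a Sturm--Liouville argument. Neither half is available: the paper proves no small-$\kappa$ stability statement for Einstein--Euler, and the claim that the negative index changes by exactly one as $\kappa$ crosses a critical point is precisely the ``subtle point'' the paper flags and defers to future work (Remark~\ref{R:GROWINGMODES}); knowing that $\frac{d\rho_\kappa}{d\kappa}$ lies in $\ker\Sigma_{\kappa^\ast}$ does not by itself control the direction or multiplicity of the crossing, since the operator is not a one-dimensional regular Sturm--Liouville problem (the potential degenerates at the fluid boundary and the relevant form lives on $\dot H^1_r(\R^3)$). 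The paper's turning point principle is instead a fixed-$\kappa$ counting argument: the steady state is a critical point of $\mathscr M-\mathscr N_\kappa$ (ADM mass minus baryon number, with the normalization $c_\kappa=c_0e^{\mu_\kappa(R_\kappa)}$), differentiation in $\kappa$ gives $D^2\tilde{\mathscr E}_\kappa[\rho_\kappa]\frac{d\rho_\kappa}{d\kappa}=-\frac{d}{d\kappa}\bigl(c_\kappa^{-1}\bigr)D\mathscr M[\rho_\kappa]$, hence $\frac{d\rho_\kappa}{d\kappa}$ is $\tilde{\mathcal L}_\kappa$-orthogonal to the constraint set $\mathcal C_\kappa$, and the sign of $\frac{d}{d\kappa}M_\kappa\,\frac{d}{d\kappa}\bigl(\frac{M_\kappa}{R_\kappa}\bigr)$ (via $e^{2\mu_\kappa(R_\kappa)}=1-2M_\kappa/R_\kappa$) decides whether this direction absorbs one negative direction of $\Sigma_\kappa$ or not, giving the count $n^-(\Sigma_\kappa)-i_\kappa$. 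Your proposal contains neither the baryon-number functional nor this orthogonality computation, which are the actual mechanism linking the Morse index to the mass--radius diagram.
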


The Einstein-Euler system models stars, and hence part (a) of the theorem says
that strongly relativistic steady stars---those with very
large central redshift---are unstable. The analogous comment as in the Vlasov
case applies concerning the Newtonian situation.
The rigorous statement and proof of the turning point principle is given in
Theorem~\ref{T:TPP}, the exponential trichotomy is shown in
Theorem~\ref{thm:main EE}, and the instability for large $\kappa$
can be found in Corollary~\ref{C:UNSTABLESTARS}.

\subsection{Motivation and history of the problem}

The Newtonian analogue of the Einstein-Vlasov system is the
gravitational Vlasov-Poisson system
\be \label{E:VLASOV}
\pa_t f+ v\cdot \nabla_x f - \nabla_x U\cdot\nabla_v f = 0,
\ee
\be \label{E:POISSON}
\Delta U = 4\pi \rho,\quad \lim_{|x|\to\infty} U(t,x) = 0,
\ee
\be \label{E:SPATIALDENSITY}
\rho(t,x) = \int f(t,x,v)\,dv.
\ee
Here $f=f(t,x,v)\geq 0$,
a function of time $t$, position $x\in {\mathbb R^3}$, and
velocity $v\in {\mathbb R^3}$, is the phase-space density
of the stars in a galaxy, and $U =U(t,x)$
is the gravitational potential induced by the macroscopic, spatial mass
density $\rho=\rho(t,x)$; integrals without explicitly specified domain
always extend over $\R^3$.
A convenient way of constructing steady states of this system is
to make an ansatz
\[
f(x,v)=\Phi(1 -\frac{E(x,v)}{E^0})
\]
where
\[
E(x,v)=\frac12 |v|^2+U(x)
\]
is the local particle energy.
Modulo some technical assumptions the profile
$\Phi\geq 0$, which we refer to as the microscopic equation of state, is
arbitrary, but must vanish on $]-\infty,0[$. Hence
$E^0<0$ represents the maximal particle energy allowed in the system.
Any such choice of $f$ satisfies \eqref{E:VLASOV} with the given
potential $U$, and the problem of finding a steady state is reduced
to solving~\eqref{E:POISSON}, where the right hand side
becomes a function of $U$ obtained by substituting the ansatz into
\eqref{E:SPATIALDENSITY}. Since such steady states necessarily
are spherically symmetric, it seems convenient to parameterize them
by $U(0)$. But in view of the boundary condition in \eqref{E:POISSON}
and the need of a cut-off energy $E^0$ in the ansatz this seems one
free parameter too many. Instead, one can use $y=E^0-U$ as the basic unknown
function and use $\kappa = y(0)>0$ as the free parameter.
Under suitable assumptions on $\Phi$ the function $y$ has a unique zero
at some radius $R>0$ which marks the boundary of the support of the
induced steady state which then also has finite mass,
cf.~\cite{RaRe} and the references there.
With $E^0=\lim_{r\to\infty}y(r)<0$
the corresponding potential $U$ vanishes at infinity,
and the parameter $\kappa=y(0) = U(R)-U(0)$ is the potential energy difference
between the center and the boundary of the compactly supported steady state.
Examples for ansatz functions $\Phi$ which yield such one-parameter families
of steady states are the polytropes where $\Phi(\eta)=\eta^k$ for $\eta>0$
with $-1<k<7/2$,
and the King model where $\Phi(\eta)=e^\eta$,
both of which are used in astrophysics, cf.~\cite{BiTr} .

A central challenge in the qualitative description of the Vlasov-Poisson
dynamics is to understand the behavior of solutions close to the above
steady states.
In the astrophysics literature formal arguments towards linearized
stability were given for example by Antonov~\cite{An1961},
Doremus, Feix, Baumann~\cite{DoFeBa}, and Kandrup, Sygnet~\cite{KS}.
The stability criterion proposed by these authors is that
$\Phi$ is strictly increasing on $[0,\infty[$.
This is physically reasonable, as it states that the number of stars
in the galaxy is a decreasing function of their energy.
Much work has been invested in a mathematically rigorous, nonlinear proof of
this stability criterion, cf.~\cite{GuRe2001,GuRe2007,GuoLin, LeMeRa2,LeMeRa3}
and the references there.
Remarkably, the size of $\kappa$ is irrelevant for these stability results.

Theorem~\ref{mainev} shows that the above stability criterion for Vlasov-Poisson
is {\em false} for the Einstein-Vlasov system: When the central
redshift $\kappa$ is sufficiently large, there exists a growing
mode despite the monotonicity assumption on $\Phi$.
This fundamental difference between the Vlasov-Poisson and
the Einstein-Vlasov system is driven by relativistic effects
which become dominant at sufficiently large values of the
central redshift $\kappa$. Another difference between these two systems
seems to be that in the non-relativistic case non-linearly stable steady
states can be obtained as minimizers of suitable energy-Casimir
functionals, cf.~\cite{Rein07} and the references there, but this
does not work so well in the relativistic case, as the problem is supercritical;
in \cite{Wo} an attempt in this direction has been made,
but as shown in \cite{AndKun} that paper is not correct. We also mention an interesting recent work
of Andersson and Korzy\'nski~\cite{AnKo} on the variational derivation of the Einstein-Vlasov system.

The Newtonian analogue of the Einstein-Euler system is the Euler-Poisson system,
where the Euler equations
\[
\pa_t \rho+ \text{div}(\rho\,{\bf u}) = 0,
\]
\[
\rho\left(\pa_t+{\bf u}\cdot \nabla\right){\bf u}
+ \nabla p + \rho\nabla U = 0
\]
are coupled to the Poisson equation \eqref{E:POISSON}.
Here $\rho$ is the macroscopic fluid density, and ${\bf u}$ is the
fluid $3$-velocity.
To close this system, one must prescribe an equation of state
$p = P (\rho)$.
A well-investigated choice is again the polytropic one,
i.e., $P(\rho)=\rho^\gamma$, $\gamma>1$.
When $\gamma>\frac65$ there exist spherically symmetric stationary
distributions of compact support, cf.~\cite{BiTr,RaRe}.
A classical, linear stability analysis shows that polytropic steady
stars are stable if $\gamma>\frac43$. By analogy to the Vlasov-Poisson case
this result holds independently of the size of the central density $\rho(0)$
of the steady state. Theorem~\ref{mainee} shows that this
changes drastically in the relativistic context:
Strongly relativistic stars, i.e.,
stars with large central density or equivalently large central redshift,
are unstable.

The subject of stability of isolated self-gravitating solutions of the
Einstein-Euler system has a long history. The first linear stability
result and a variational characterization of the stability question in
spherical symmetry goes back to the pioneering work of
Chandrasekhar~\cite{Chandrasekhar1964} in 1964. The same stability criterion
was then derived by Harrison, Thorne, Wakano, and Wheeler
in 1965~\cite{HaThWaWh}, who were the first to show that Chandrasekhar's
result is equivalent to the positive-definiteness of the second variation
of the ADM mass, subject to the constant baryon number constraint, see also~\cite{BaThMe1966}.
Moreover, a turning point principle is formulated in~\cite{HaThWaWh} and
investigated numerically. It proposes that along the one-parameter family
stability changes to instability or
vice versa when the steady state passes through an extremal point
of the so-called mass-radius curve. Recently, a proof of such a turning point principle was given in~\cite{HaLin}. For a detailed survey of these and
related results from the 1960s see~\cite{Th1966}. A variational approach to stability and the turning point
principle are elegantly formulated in~\cite{Ca1970a,Ca1970b,Ca1971}. 
In Theorem~\ref{T:TPP}
we formulate and prove a rigorous version of the
turning point principle for the Einstein-Euler system,
i.e., of part~(c) of Theorem~\ref{mainee}. A general abstract discussion of
turning point principle is astrophysics is given in~\cite{So1981}.
For the different notion of {\em thermodynamic} instabilities and its
relation to turning point principles,
cf.\ the discussions in~\cite{KaHoKl1975,ScWa2014}.
We refer the reader to~\cite{Gl2000, Straumann} for a comprehensive
overview of this topic from the physics point of view.
By contrast to what is known in the vicinity of massive steady states,
the vacuum solution of the Einstein-Vlasov system is asymptotically
non-linearly stable against small
perturbations~\cite{FaJoSm,LiTa,RR92a}.

A burst of interest in the stability of highly relativistic,
self-gravitating bodies occurred in the mid 1960s with the discovery of
quasars~\cite{BKZ,ZeNo}. Since these objects are very powerful
and very concentrated sources of energy, it was initially unclear whether
their high redshifts were due to them being far away or due to them being very
relativistic. Zel'dovich and Podurets gave an early contribution to the
subject by showing numerically that certain spherically symmetric steady
states turn unstable when their central redshift passes a threshold
value~\cite{ZePo}. This lead Ipser, Thorne, Fackerell and others to
initiate a systematic study of the question of stability of steady
relativistic
galaxies~\cite{FaIpTh,Fa1970,FaSu1976a,FaSu1976b,IT68,IP1969,IP1969b,IP1980}.
In particular, a turning point principle for the Einstein-Vlasov system was
formulated.
It states that the transition from a stable to an unstable configuration
occurs at a critical point of the binding energy, plotted as a function
of the central redshift. For some numerical results in this direction
see~\cite{RaShTe1989,ShTe1985}.
On the other hand, Bisnovatyi-Kogan and Zel'dovich
pursued the question whether there exist {\em stable} self-gravitating
configurations with arbitrarily high central redshifts \cite{BKZ,BKThorne1970}.
The steady states studied in~\cite{BKZ} are singular
and have infinite mass and extent. However, in Section~\ref{sssection}
we shall see that certain explicit steady states of the massless
Einstein-Vlasov system have the same macroscopic properties
as the solutions in \cite{BKZ}. These massless solutions, which we
refer to as the BKZ solutions, capture the behavior of massive steady states
close to the center of the galaxy when the central redshift is sufficiently
large. An analogous assertion holds for the Einstein-Euler system,
see Proposition~\ref{P:EVTOBKZ}.
This observation appears to be new and of independent interest,
but it also plays a fundamental role in our proof of instability in part (a) of
Theorems~\ref{mainev} and~\ref{mainee}.

In the context of relativistic stars,
Meltzer and Thorne~\cite{MeltzerThorne1966}
discuss the stability properties of very dense relativistic stars. They
predict that perturbations localized to the core of a very dense  star
will result in gravitational collapse, which is consistent with our
methodology, see Sections~\ref{SS:NEGATIVE} and~\ref{SS:NEGATIVEEE}.

\subsection{Methodology}

Under natural assumptions on a given equation of state $f = \phi (E)$ or
$p=P(\rho)$, both the Einstein-Vlasov and the Einstein-Euler systems
posses a corresponding
one-parameter family of compactly
supported steady states with finite mass, parameterized by the value of
the central density $\rho(0)$, or equivalently,
by the central redshift $\kappa>0$, cf.~Section~\ref{sssection},
in particular \eqref{yiv} and \eqref{redshiftdef}.
Even though the equations of equilibrium are classical~\cite{OpVo1939},
a rigorous proof of the existence and finiteness of the total mass and
support of a steady star/galaxy is nontrivial and depends crucially on
the assumptions on the micro- and macroscopic equations of state,
see~\cite{RaRe} and references therein.
In what follows $M_\kappa$ denotes the ADM mass and $R_\kappa$ the
radius of the support of the corresponding steady state.

In both cases, the formal linearization around the steady state leads to
a Hamiltonian partial differential equation, which comes with a
rich geometric structure. In the case of the Einstein-Vlasov  system
the steady state can be interpreted as a critical point of the
so-called energy-Casimir functional~\cite{HaRe2013,KM}.
To prove the instability for large values of $\kappa>0$ it is
therefore natural to investigate the sign of the second variation
of this energy-Casimir.
In Theorem~\ref{T:NEGATIVEA} we
construct an explicit test function that turns the second variation
negative for $\kappa>0$ sufficiently large.
The key to the construction is a precise understanding
of the limiting behavior of the sequence of steady states
$(f_\kappa)_{\kappa>0}$ in the singular limit $\kappa\to\infty$.
We show that in a suitably rescaled annulus around the center $r=0$
the behavior of the steady states is at the leading order described
by the BKZ solutions mentioned above, see
Sections~\ref{SS:BKZ} and \ref{SS:LARGEKAPPA}.
In the physics literature such a limit is referred to as
the ultra-relativistic limit.  Since the BKZ solutions are known explicitly,
we obtain sharp a priori
control over the second variation in the large $\kappa$ regime,
and we can make a judicious choice of a test function whose support
is spatially localized to the aforementioned annulus,
close to the center of the galaxy. In Theorem~\ref{T:NEGATIVEA} we carry out
careful estimates showing that such a test function makes
the second variation functional negative.
A similar procedure applies to the Einstein-Euler system,
see Theorem~\ref{T:NEGATIVEA2}.

The existence of a negative energy direction is sufficient for a proof of
linear exponential instability as shown in
Theorem~\ref{lininstabEV} for the Einstein-Vlasov system; a similar proof
can be given for the Einstein-Euler system. To this end we adapt
a strategy developed in the study of plasma instabilities by
Laval, Mercier, and Pellat~\cite{LaMePe}.
However, with nonlinear applications in mind it is important
to show that the instability is driven by a growing mode,
a statement not afforded by Theorem~\ref{lininstabEV}. In the context of
the Vlasov theory, this is a highly nontrivial question, as the presence of
the continuous spectrum is generally unavoidable, and a more refined analysis
is necessary. Thus to get a more precise spectral information, we must
carefully analyze the linearized operator. For the Euler-Einstein system
Makino showed that the spectrum of the linearized operator is in fact
discrete~\cite{Ma1998,Ma2016} by formulating it as a singular
Sturm-Liouville type operator on a bounded domain.

Both the linearized Einstein-Euler and the linearized Einstein-Vlasov system
can be written in the second order form
\[
\pa_{tt} f  + L_\kappa f = 0, \ \ f\in X,
\]
where $L_\kappa \colon X \supset D(L_\kappa) \to X$ a  self-adjoint  linear operator,
and $X$ a Hilbert space. In both cases, it is shown that the second
variation of the energy corresponds to to the quadratic form $(L_\kappa f,f)_X$.
In the case of the Einstein-Vlasov system, a naive attempt to minimize the
functional
\[
f\mapsto (L_\kappa f,f)_X
\]
leads to difficulties due to the possible loss of compactness along minimizing
sequences. A related loss of compactness has been well-known in the stability
theory for various Vlasov-matter systems, most notably the Vlasov-Poisson and
the Vlasov-Maxwell system~\cite{LinStrauss2008}. We adopt a different approach.

{\em The reduced operator.}
The key step in our analysis is to construct a suitable reduced operator
which by definition is a self-dual, macroscopic, Schr\"odinger-type operator
$S_\kappa\colon  \dot H^1(\mathbb R^3)\to \dot H^1(\mathbb R^3)' $ such that for any
$f\in D(L_\kappa)$ there exists an element $\psi\in  \dot H^1(\mathbb R^3)$ such that
\be\label{E:REDUCEDINTRO}
(L_\kappa f,f )_X \ge  \langle S_\kappa \psi,\psi\rangle,
\ee
and for any $\psi \in \dot H^1(\mathbb R^3)$ there exists an $f\in D(L_\kappa)$ such that
\be\label{E:REDUCEDINTRO2}
\langle S_\kappa \psi,\psi\rangle \ge (L_\kappa f,f )_X.
\ee
Here $S_\kappa = -\Delta_\kappa + V_\kappa$, where $\Delta_\kappa$ is an
explicit, nondegenerate, linear, elliptic operator with variable coefficients,
$V_\kappa$ a compactly supported potential, and $\dot H^1(\mathbb R^3)$ is a homogeneous
Sobolev space.
Since $S_\kappa$ has at most finitely many negative eigenvalues, we can use
the bounds~\eqref{E:REDUCEDINTRO}, \eqref{E:REDUCEDINTRO2} to conclude
that the negative Morse index of $L_\kappa$ is finite.
The derivation of $S_\kappa$ in both the kinetic and the fluid case is
{\em new}.

An attempt to construct a reduced operator was made earlier by
Ipser~\cite{IP1969}, but in that work the bound~\eqref{E:REDUCEDINTRO}
with a different choice of $S_\kappa$ is satisfied only under an additional
hypothesis on the steady states, which appears to be hard to verify
rigorously. We point out that a related construction of the reduced operators
for the Newtonian analogues is nontrivial, yet  considerably simpler;
for the Vlasov-Poisson system see~\cite{GuoLin}
and for the Euler-Poisson system~\cite{LinZeng2017b}.

While the reduced operator can be constructed for any $\kappa>0$,
see Theorems~\ref{T:VLASOVREDUCED} and~\ref{T:EULERREDUCED}, only
in the regime of large $\kappa$ we know that there exists at
least one negative direction. By proving that $L_\kappa$ is self-adjoint
and not merely symmetric, we then infer that the unstable space consists
of eigenfunctions of finite multiplicity. The proof of self-adjointness
of the operator $L_\kappa$ in both the Einstein-Vlasov and the Einstein-Euler
case is not obvious and is presented in Lemmas~\ref{L:SECONDORDEREV}
and~\ref{prop-self-adjoint-composition} respectively.

{\em Exponential trichotomy.}
To obtain further information about the linearized dynamics, we are forced
to work with the first order Hamiltonian formulation of the problem.
This part of our analysis applies to all $\kappa>0$.
Abstractly, both problems can be recast in the form
\be\label{E:FIRSTORDERINTRO}
\pa_t {\bf u} = \mathscr J_\kappa \mathscr L_\kappa {\bf u},
\ee
where $\mathscr L_\kappa \colon \mathcal X \to \mathcal X'$ is a self-dual
operator and $\mathscr J_\kappa \colon \mathcal X'\to \mathcal X$ is
anti-self-dual; $\mathcal X$ is an appropriate Hilbert space.
The operators $\mathscr J_\kappa, \mathscr L_\kappa$ for the
Einstein-Vlasov system are given in Lemma~\ref{L:EVLIN1}, for the
Einstein-Euler system in Lemma~\ref{L:LINEAREE}.
Using crucially the existence of the reduced operator
and~\eqref{E:FIRSTORDERINTRO} we are able to apply the general
framework developed by Lin and Zeng~\cite{LinZeng2017} to obtain a
quantitative exponential trichotomy result, see Theorems~\ref{thm:main VE}
and~\ref{thm:main EE}. Roughly speaking, we show that the phase space
around the steady state naturally splits into a stable, unstable, and
center invariant subspace. Under a non-degeneracy assumption on
the reduced operator $S_\kappa$, we can go a step further and prove a
quantitative Lyapunov stability statement on the center space in the natural
energy topology, see parts (v) of
Theorems~\ref{thm:main VE} and~\ref{thm:main EE}.

The exponential trichotomy statement shall provide a foundation for the
understanding of the nonlinear dynamics in the vicinity of steady states.
For instance, one expects that in the presence of growing modes the
phase space will split into perturbations leading to gravitational
collapse, and those leading to global existence via dispersion,
separated by an invariant (co-dimension 1 center stable) manifold. This is consistent
with the dynamical picture proposed in the study of
criticality phenomena~\cite{AnRe1,GaGu}.

{\em Turning point principle.}
In the context of self-gravitating relativistic bodies, it is desirable to
have a simple criterion for stability
that depends on certain bulk properties of the system under study. Precisely
such an idea was put forward by Zel'dovich~\cite{Ze1963} and
Harrison, Thorne, Wakano, and Wheeler in~\cite{HaThWaWh}, wherein the
so-called turning point principle was formulated. If one plots the mass
$M_\kappa$ and the steady star radius $R_\kappa$ along a curve parameterized
by the central redshift $\kappa$, it is proposed in~\cite{HaThWaWh}
that the stability can be lost to instability and vice versa only at the
extremal points of the mass plotted as a function of $\kappa$ ,
see also~\cite{ScWa2014,Straumann}.

The steady states of the Einstein-Euler system can be interpreted as
critical points of the ADM mass among all densities of the same total
baryon mass. This observation goes back to~\cite{HaThWaWh}. By computing
the second variation of the ADM mass we can derive the well-known
Chandrasekhar stability criterion~\cite{Chandrasekhar1964},
which states that the static star is (spectrally)
stable if the linearized operator is non-negative on a codimension-1
subspace of a certain Hilbert space, see Proposition~\ref{P:CHANDRASEKHAR}.
Using this characterization, we show in Theorem~\ref{T:TPP} that when $\frac{d}{d\kappa} M_\kappa \frac{d}{d\kappa}\left(\frac{M_\kappa}{R_\kappa}\right)\ne 0$
the number of growing modes associated with the linearized operator
$\mathscr J_\kappa \mathscr L_\kappa$ equals $n^-(\Sigma_\kappa)-i_\kappa$,
where $\Sigma_\kappa$ is the reduced operator associated with the
Einstein-Euler system, $n^-(\Sigma_\kappa)$ is its negative Morse index, and
the index $i_\kappa\in\{0,1\}$ depends  on the
mass-radius curve  through the formula
\[
i_\kappa =
  \begin{cases}
    1&\text{if }
    \frac{d}{d\kappa} M_\kappa \frac{d}{d\kappa}\left(\frac{M_\kappa}{R_\kappa}\right) >0, \\
    0&\text{if }
    \frac{d}{d\kappa} M_\kappa\frac{d}{d\kappa}\left(\frac{M_\kappa}{R_\kappa}\right) <0.
  \end{cases}
\]
Finally, in line with Antonov's First Law for Newtonian
self-gravitating systems, we prove a relativistic ``micro-macro"
stability principle in Theorem~\ref{T:MICROMACRO}, which in a precise way
states that a steady galaxy with a certain microscopic equation of state
is spectrally stable if an individual star
with the corresponding macroscopic equation of state
is spectrally stable.

{\em Some open questions.}
A natural question for further study is the {\em nonlinear instability}
of steady states when $\kappa\gg1$.
For the Einstein-Vlasov case a local well-posedness result was established
in \cite{RR92a} which can be used (or if necessary adapted) for initial data
close to a steady state. A non-trivial problem which needs to be
understood before attacking the non-linear regime is the regularity
of eigenfunctions corresponding to the growing modes of the linearized system.
For the Einstein-Euler case we are naturally led
to the vacuum free boundary problem wherein the location and the regularity
of the boundary separating the star from the vacuum is an
unknown. This question comes with a number of mathematical difficulties, and
even the local-in-time well-posedness theory is an open problem in this context.

When $\kappa\ll1$ it is known~\cite{HaRe2013, HaRe2014} that the steady states
of the Einstein-Vlasov system are linearly stable. Since the Einstein
equations are energy supercritical, this type of a priori control coming
from the linearized problem is far from sufficient for proving any kind of
nonlinear stability; note that the situation is different for the
Vlasov-Poisson system. It is an open problem to show that nonlinear orbital
stability is true for $\kappa\ll1$.
A linear stability result for so-called hard stars is given
in~\cite{FoSc}.

A further open question is to show, as conjectured in the physics
literature~\cite{HaThWaWh, Straumann}, that for the Einstein-Euler system
the number of unstable directions grows to infinity as $\kappa\to\infty$.
As pointed out in Remark~\ref{R:GROWINGMODES}
there is strong evidence for this to be true.

\section{The Einstein-Vlasov and Einstein-Euler system in spherical symmetry}
\setcounter{equation}{0}

For the above systems, questions like the stability or instability of
steady states are at present out of reach
of a rigorous mathematical treatment, unless simplifying symmetry
assumptions are made. We assume spherical symmetry, use
Schwarzschild coordinates $(t,r,\theta,\varphi)$, and write
the metric in the form
\be \label{metric}
ds^2=-e^{2\mu(t,r)}dt^2+e^{2\lambda(t,r)}dr^2+
r^2(d\theta^2+\sin^2\theta\,d\varphi^2).
\ee
Here $t\in \R$ is a time coordinate,
and the polar angles $\theta\in [0,\pi]$ and $\varphi\in [0,2\pi]$
coordinatize the surfaces of constant $t$ and $r>0$.
The latter are the orbits
of $\mathrm {SO}(3)$, which acts isometrically on this spacetime, and
$4 \pi r^2$ is the area of these surfaces. The boundary condition
\begin{equation}\label{boundcinf}
\lim_{r\to\infty}\lambda(t, r)=\lim_{r\to\infty}\mu(t, r)=0.
\end{equation}
guarantees asymptotic flatness, and in order to guarantee a regular center
we impose the boundary condition
\begin{equation}\label{boundc0}
\lambda(t, 0)=0.
\end{equation}
It is convenient to introduce the corresponding Cartesian coordinates
\[
x = (x^1,x^2,x^3) =
r (\sin \theta \cos \varphi,\sin \theta \sin \varphi,
\cos \theta).
\]
Instead of  the corresponding canonical momenta $p=(p^1,p^2,p^3)$
we use the non-canonical momentum variables
\[
v^a = p^a + (e^\lambda -1)\frac{x\cdot p}{r} \, \frac{x^a}{r},\ a=1,2,3.
\]
In these variables,
\be \label{pdef}
p_0 = - e^\mu \sqrt{1+|v|^2} =: - e^\mu \p,
\ee
and $f$ is spherically symmetric iff
\[
f(t,x,v) = f(t,Ax,Av),\ x, v \in \R^3,\ A \in \mathrm{SO}\,(3).
\]
The spherically symmetric, asymptotically flat Einstein-Vlasov system
takes the following form:
\be \label{vlasov}
\partial_t f + e^{\mu - \lambda}\frac{v}{\p}\cdot \partial_x f -
\left( \dot \lambda \frac{x\cdot v}{r} + e^{\mu - \lambda} \mu'
\p \right) \frac{x}{r} \cdot \partial_v f =0,
\ee
\be
e^{-2\lambda} (2 r \lambda' -1) +1
=
8\pi r^2 \rho , \label{ein1}
\ee
\be
e^{-2\lambda} (2 r \mu' +1) -1
=
8\pi r^2 p, \label{ein2}
\ee
\be
\dot\lambda =
- 4 \pi r e^{\lambda + \mu} j, \label{ein3}
\ee
\be
e^{- 2 \lambda} \left(\mu'' + (\mu' - \lambda')
(\mu' + \frac{1}{r})\right)
- e^{-2\mu}\left(\ddot\lambda +
\dot\lambda \, (\dot\lambda - \dot\mu)\right)
=
8 \pi p_T, \label{ein4}
\ee
where
\bea
\rho(t,r)
&=&
\rho(t,x) = \int \p f(t,x,v)\,dv ,\label{r}\\
p(t,r)
&=&
p(t,x) = \int \left(\frac{x\cdot v}{r}\right)^2
 f(t,x,v)\frac{dv}{\p}, \label{p}\\
j(t,r)
&=&
j(t,x) = \int \frac{x\cdot v}{r} f(t,x,v) dv, \label{j}\\
p_T(t,r)
&=&
p_T(t,x) = \frac{1}{2} \int \left|{\frac{x\times v}{r}}\right|^2
f(t,x,v) \frac{dv}{\p}. \label{p_T}
\eea
Here $\dot{}$ and ${}'$ denote the derivatives with respect to
$t$ and $r$ respectively.
For a detailed derivation of these
equations we refer to \cite{Rein95}.
It should be noted that in this formulation no raising and
lowering of indices using the metric appears anywhere.
It is a completely explicit system of partial differential equations
where $x,v \in \R^3$,
$x\cdot v$ denotes the Euclidean scalar product,
$|v|^2 = v\cdot v$, and $\p$ is defined in \eqref{pdef};
integrals without explicitly specified domain
extend over $\R^3$.

We note that $p$ and $p_T$ are the pressure in the radial and
tangential directions respectively, which for the Vlasov matter model
in general are not equal. They are equal for the isotropic steady states
which we consider in the next section, and of course also for the
Euler matter model.

We now formulate the spherically symmetric Einstein-Euler system
in Schwarzschild coordinates, i.e., for a metric of the form \eqref{metric},
where we keep the boundary conditions
\eqref{boundcinf} and \eqref{boundc0}.
The spherically symmetric matter quantities
$\rho=\rho(t,r)$, $p=p(t,r)$, $u=u(t,r)$ are scalar functions,
and the four velocity is $u^\alpha = (u^0,u,0,0)$ where
\[
u^0 = e^{-\mu}\sqrt{1+e^{2\lambda} u^2} =: e^{-\mu} \u;
\]
here $u^2$ is a power and not to be confused with a component of
the four velocity; no relativity index notation appears from this point on.
The field equations become
\be
e^{-2\lambda} (2 r \lambda' -1) +1 =
8\pi r^2 \left(\rho + e^{2\lambda}(\rho + p) u^2\right),\label{eelambda}
\ee
\be
e^{-2\lambda} (2 r \mu' +1) -1 =
8\pi r^2 \left(p + e^{2\lambda}(\rho + p) u^2\right),\label{eemu}
\ee
\be
\dot \lambda =
- 4 \pi  r e^{\mu + 2\lambda}  \u \, u\, (\rho + p)
\label{eelambdad},
\ee
\be
e^{- 2 \lambda} \left(\mu'' + (\mu' - \lambda')(\mu' + \frac{1}{r})\right)
- e^{-2\mu}\left(\ddot \lambda + \dot\lambda (\dot \lambda - \dot \mu)\right)
= 8 \pi p. \label{ee2ndo}
\ee
The Euler equations become
\bea
\dot \rho + e^{\mu} \frac{u}{\u} \rho'
+ \left(\rho + p\right)
\biggl[\dot \lambda + e^\mu \frac{u}{\u}
\left(\lambda' + \mu'+ \frac{2}{r}\right)
+ e^\mu \frac{u'}{\u}
+ e^{2\lambda} \frac{u}{\u}
\frac{\dot \lambda u + \dot u}{\u}\biggr]
&=&
0,\qquad \label{rhod}\\
(\rho +p) \left[e^{2\lambda} \left( \dot u + 2 \dot \lambda u\right)
  + e^{\mu} \u \mu'
+ e^{\mu+2\lambda} \frac{u}{\u}
\left(u' + \lambda' u\right)\right]
+ e^{\mu}\u p' + e^{2\lambda} u \,\dot p
&=&
0.\qquad \label{ud}
\eea
One should keep in mind that \eqref{eqstate} holds,
which should be used to express $p$ and its derivatives above.

\section{Steady states} \label{sssection}
\setcounter{equation}{0}
\subsection{The basic set-up; one-parameter families}\label{setup}
\noindent
We first consider the Vlasov case.
The characteristic system of the stationary, spherically symmetric
Vlasov equation reads
\[
\dot x = \frac{v}{\p},\
\dot v = - \p \mu'(r) \frac{x}{r},
\]
and the particle energy
\be \label{parten}
E=E(x,v)=e^{\mu(x)}\p=e^{\mu(x)}\sqrt{1+|v|^2}
\ee
is constant along characteristics. Hence the static
Vlasov equation is satisfied if $f$ is taken to be a
function of the particle energy, and the following
form of this ansatz is convenient:
\be \label{trueansatz}
f(x,v) = \phi(E) = \Phi\left(1-\frac{E}{E^0}\right).
\ee
Here $E^0>0$ is a prescribed cut-off energy---notice that
the particle energy (\ref{parten}) is always positive.
Because of spherical symmetry, the quantity
\[
L=|x\times v|^2
\]
is conserved along characteristics as well so that we could include a dependence
on $L$ in the ansatz~\eqref{trueansatz}.
This leads to anisotropic steady states which cannot be treated
in parallel with the Euler case and are not pursued in this paper.
We require that $\Phi$ has the following properties;
for the stability analysis in Section~\ref{sec_stab_EV}
this will be strengthened by ($\Phi$2):

\smallskip

\noindent
{\bf Assumption ($\Phi$1).}
$\Phi \colon \R \to [0,\infty[$, with
$\Phi(\eta)=0$ for $\eta\leq 0$,
$\Phi\in L^\infty_\mathrm{loc}(]0,\infty[)$,
and there exists $-1/2<k<3/2$ and constants $c_1, c_2 >0$ such that for all
sufficiently small $\eta >0$,
\be\label{Phiasymp}
c_1 \eta^k \leq \Phi(\eta) \leq c_2 \eta^k.
\ee

\smallskip

These assumptions are sufficient for the results of the present section,
in particular for the analysis of the steady states in the
limit of large central redshift, which we believe has some interest in itself.
A typical guiding example which satisfies
$(\Phi 1)$ is the function
\[
\Phi(\eta) = c\, \eta^k \ \mbox{for}\ \eta \geq 0,\
\Phi(\eta)=0\ \mbox{for}\ \eta < 0
\]
with constants $c>0$ and $-1/2<k<3/2$. These are the so-called polytropes,
named so by analogy to the well-known polytropic equations of state in
compressible fluid dynamics.

Since only the metric
quantity $\mu$ enters into the definition of the particle energy
$E$ in \eqref{parten}, the field equations can be reduced to a
single equation for $\mu$. It is tempting to prescribe
$\mu(0)$, but since the ansatz \eqref{trueansatz} contains the cut-off
energy $E^0$ as another, in principle free parameter and since
$\mu$ must vanish at infinity due to \eqref{boundcinf} this approach is not
feasible. Instead we define $y:= \log E^0 - \mu$ so that $e^\mu = E^0 e^{-y}$.
For the ansatz \eqref{trueansatz}
the spatial mass density and pressure become functions of $y$, i.e.,
\be \label{rhoyrelv}
\rho(r) = g(y(r)), \quad
p(r) = h(y(r)) = p_T(r),
\ee
where
\be \label{gdefv}
g(y) :=
4 \pi e^{4 y} \int_0^{1-e^{-y}} \Phi(\eta)\, (1-\eta)^2\,
\left((1-\eta)^2-e^{-2y}\right)^{1/2} d\eta
\ee
and
\be \label{hdefv}
h(y) :=
\frac{4 \pi}{3} e^{4 y} \int_0^{1-e^{-y}} \Phi(\eta)\,
\left((1-\eta)^2-e^{-2y}\right)^{3/2} d\eta.
\ee
The functions $g$ and $h$ are continuously differentiable
on $\R$, cf.\ \cite[Lemma 2.2]{RR00}, and they vanish for $y<0$.
The metric coefficient $\lambda$ can be eliminated from the system,
because the field equation \eqref{ein1} together with the boundary condition
\eqref{boundc0} at zero imply that
\begin{equation}\label{lambdadef}
e^{-2\lambda(r)} = 1 - \frac{2  m (r)}{r},
\end{equation}
where the mass function $m$ is defined by
\be \label{mdef_ssv}
m(r) = m(r,y) = 4\pi \int_0^r \sigma^2 \rho(\sigma)\, d\sigma.
\ee
Hence the static Einstein-Vlasov
system is reduced to the equation
\be \label{yeq}
y'(r)= - \frac{1}{1-2 m(r)/r} \left(\frac{m(r)}{r^2} + 4 \pi r p(r)\right) ,
\ee
which is equivalent to \eqref{ein2};
here $m$, $\rho$, and $p$ are given in terms of $y$ by \eqref{rhoyrelv}
and \eqref{mdef_ssv}.

In \cite{RaRe} it is shown that for every central value
\be\label{yiv}
y(0)=\kappa>0
\ee
there exists a unique smooth solution $y=y_\kappa$
to \eqref{yeq}, which is defined on $[0,\infty[$
and which has a unique zero at some radius $R>0$. In view of
\eqref{rhoyrelv}--\eqref{hdefv} this implies that the induced quantities
$\rho$ and $p$ are supported on the interval $[0,R]$,
and a non-trivial steady state
of the Einstein-Vlasov system with compact support and finite mass is obtained.
We observe that
the limit $y(\infty):=\lim_{r\to \infty} y(r) < 0$ exists, since $r\mapsto y(r)$ is a decreasing function and has a unique zero as mentioned above.
The metric quantity $\mu$ is given by $e^{\mu(r)}= E^0 e^{-y(r)}$, and in order
that $\mu$ has the correct boundary value at infinity we must define
$E^0:=e^{y(\infty)}$. Since $y(R)=0$ we also see that  $E^0=e^{\mu(R)}$.
We want to relate the parameter $\kappa$ to the redshift factor $z$
of a photon which is emitted at the center $r=0$ and received at the
boundary $R$ of the steady state; this is not the standard definition
of the central redshift where the photon is received at infinity,
but it is a more suitable parameter here:
\be \label{redshiftdef}
z = \frac{e^{\mu(R)}}{e^{\mu(0)}} - 1 = \frac{e^{y(0)}}{e^{y(R)}} - 1 = e^\kappa -1.
\ee
Hence $\kappa$ is in one-to-one correspondence with the central redshift
factor $z$ with $\kappa\to \infty$ iff  $z\to \infty$, and although this
is not the standard terminology
we refer to $\kappa$ as the central redshift.
For a fixed ansatz function $\Phi$ we therefore obtain
a family $(y_\kappa)_{\kappa>0}$ of solutions to \eqref{yeq},
which induce steady states to the Einstein-Vlasov system
parameterized by the central redshift $\kappa$,
and each member of this family represents a galaxy in equilibrium,
which has finite mass and compact support.

\begin{remark}
  The central redshift $\kappa$ and the central density $\rho_c=\rho(0)$
  are related via $\rho_c= g(\kappa)$, see~\eqref{gdefv}.
  It can also be seen from~\eqref{gdefv} that $g$ is monotonically
  increasing so that $\kappa$ and $\rho_c$ are in a 1-1 relationship.
  In the literature both $\kappa$ and $\rho_c$ are used to parameterize
  the steady state solutions, and the two parametrizations are equivalent.
\end{remark}

We now consider the Euler case.
For stationary, spherically symmetric solutions of the Einstein-Euler system
the velocity field necessarily vanishes, $u=0$, cf.~\eqref{eelambdad}.
The remaining field equations \eqref{eelambda}, \eqref{eemu}, \eqref{ee2ndo}
then become identical to the Vlasov case, the Euler equation \eqref{rhod}
is satisfied identically, and \eqref{ud} reduces to
\be \label{tov0}
(\rho +p) \mu' + p'= 0.
\ee
We define
\[
Q(\rho) := \int_0^\rho \frac{P'(s)}{s+P(s)} ds,\ \rho\geq 0.
\]
Then \eqref{tov0} can be written as
\[
0= Q'(\rho)\rho' + \mu' = \frac{d}{dr}\left(Q(\rho) + \mu\right)
\]
which means that on the support of the matter,
\[
Q(\rho(r)) + \mu(r) = const.
\]
In analogy to the Vlasov case we introduce $y=const -\mu$ and find that $\rho$
is given in terms of $y$,
\be \label{gdefe}
\rho = g(y) := \left\{
\begin{array}{ccl}
  Q^{-1}(y)&,&y>0,\\
  0&,&y\leq 0.
\end{array}
\right.
\ee
Taking into account the equation of state \eqref{eqstate} it follows that
\be \label{hdefe}
p = h(y) := P(g(y)).
\ee
Hence the stationary system takes exactly the same form as in the Vlasov case
and can be reduced to the equation \eqref{yeq}, the only difference being
the definitions of the functions $g$ and $h$ in \eqref{gdefv}, \eqref{hdefv}
or \eqref{gdefe}, \eqref{hdefe} respectively.
The interpretation of $\kappa=y(0)$ remains as explained
for the Vlasov case above.

We now specify the assumptions on the function $P$
which defines the equation of state.

\smallskip

\noindent
{\bf Assumption (P1).}
$P \in C^1([0,\infty[)$ with $P'(\rho) >0$ for $\rho>0$,
$P(0)=0$, there exist constants $c_1, c_2>0$ and $0<n<3$ such that
\be \label{Pass1}
P'(\rho) \leq c_1 \rho^{1/n}\ \mbox{for all sufficiently small}\ \rho>0
\ee
and the inverse of $P$ exists on $[0,\infty[$ and
satisfies the estimate
\be \label{Pass2}
|P^{-1}(p) - 3 p| \leq c_2 p^{1/2}\ \mbox{for all}\ p>0.
\ee

\smallskip

Before we discuss examples for such equations of state in the next subsection, we briefly mention
that the condition~\eqref{Pass1} forces the equation of state to take the approximate form $P(\rho)\sim_{\rho\to0}\rho^{1+\frac1n}$ in the vicinity
of the vacuum boundary, while~\eqref{Pass2} makes the equation of state linear at the leading order for large values of $\rho$, i.e.
$P(\rho)\sim_{\rho\to\infty}\frac13\rho$. The first assumption is the well-known polytropic law from the classical gas dynamics, while the second assumption is necessary to ascertain that the speed of sound remains smaller than the speed of light at high densities.

Assumption \eqref{Pass1} together with the required regularity guarantees
that the function $g$ and $h$ defined in \eqref{gdefe} and \eqref{hdefe}
are $C^1$ on $\R$, and for each $\kappa= y(0)>0$ the equation \eqref{yeq}
has a unique solution with the same properties as stated for the Vlasov
case, cf.\ \cite{RaRe}.
For easier reference we collect the findings of this subsection.

\begin{proposition} \label{ssfamilies}
  \begin{itemize}
  \item[(a)]
    Let $\Phi$ satisfy $(\Phi 1)$.
     Then there exists a one-parameter family of steady states
     $(f_{\kappa},\lambda_{\kappa},\mu_{\kappa})_{\kappa >0}$
     of the spherically symmetric, asymptotically flat
     Einstein-Vlasov system.
  \item[(b)]
    Let $P$ satisfy $({\mathrm P}1)$.
     Then there exists a one-parameter family
     of steady states $(\rho_{\kappa},\lambda_{\kappa},\mu_{\kappa})_{\kappa >0}$
     of the spherically symmetric, asymptotically flat
     Einstein-Euler system.
  \end{itemize}
  In both cases these steady states are compactly supported on some interval
  $[0,R_\kappa]$,
  $\rho_\kappa, p_\kappa \in C^1([0,\infty[)$,
  $y_\kappa, \mu_\kappa, \lambda_\kappa  \in C^2([0,\infty[)$, and
  $\rho'_\kappa(0)=p'_\kappa(0)= y'_\kappa(0)= \mu'_\kappa(0)= \lambda'_\kappa(0)=0$.
  The following two identities hold on  $[0,\infty[$, the second being
  known as the Tolman-Oppenheimer-Volkov equation:
  \be \label{laplusmu}
  \lambda_\kappa' + \mu_\kappa'
  = 4 \pi r e^{2\lambda_\kappa} \left(\rho_\kappa +p_\kappa\right),
  \ee
  \be \label{tov}
   p_\kappa' = - \left(\rho_\kappa +p_\kappa\right) \,\mu_\kappa'.
  \ee
\end{proposition}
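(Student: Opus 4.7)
The plan is to reduce the static system in both matter models to the scalar ODE \eqref{yeq} for $y := \log E^0 - \mu$ with initial condition $y(0) = \kappa$, coupled to $\rho = g(y)$ and $p = h(y)$ given by \eqref{gdefv}--\eqref{hdefv} in the Vlasov case or \eqref{gdefe}--\eqref{hdefe} in the Euler case, and to $m(r,y) = 4\pi \int_0^r \sigma^2 g(y(\sigma))\, d\sigma$ from \eqref{mdef_ssv}. Once $y_\kappa$ is known on $[0,\infty)$, the metric is recovered from $\lambda_\kappa$ via \eqref{lambdadef} and from $\mu_\kappa = \log E^0 - y_\kappa$ with $E^0 := e^{y_\kappa(\infty)}$, so that the asymptotic flatness condition $\mu_\kappa(\infty) = 0$ holds. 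In the Vlasov case $f_\kappa$ is then defined directly by the ansatz \eqref{trueansatz}.

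First I would verify that $g, h \in C^1(\mathbb{R})$ and vanish on $(-\infty, 0]$. For the Vlasov case this is \cite[Lemma 2.2]{RR00}; the square-root factors in \eqref{gdefv}--\eqref{hdefv} vanish at the upper endpoint $\eta = 1 - e^{-y}$ with sufficient order under ($\Phi$1). For the Euler case, (P1) gives $Q'(\rho) = P'(\rho)/(\rho + P(\rho))$ strictly positive on $(0, \infty)$, so $Q$ is a $C^1$ diffeomorphism and $g = Q^{-1}$, $h = P \circ g$ are $C^1$. Next, since $m(r)/r^2 = O(r)$ near $r = 0$, the ODE \eqref{yeq} is regular at the origin and local existence is classical. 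Global existence on $[0,\infty)$ and compact support are provided by \cite{RaRe}: while $y_\kappa > 0$, the right-hand side of \eqref{yeq} is strictly negative (both terms in the bracket are nonnegative, and $2m(r)/r < 1$ is preserved along the solution), so $y_\kappa$ is strictly decreasing and crosses zero at a unique finite $R_\kappa$; for $r \geq R_\kappa$ one has $g(y_\kappa) = h(y_\kappa) = 0$ and the solution extends trivially to infinity.

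The boundary values at $r = 0$ follow by inspection: the expansion $m(r) = \tfrac{4\pi}{3}\rho_\kappa(0)r^3 + O(r^4)$ plugged into \eqref{yeq} gives $y_\kappa'(0) = 0$, whence $\mu_\kappa'(0) = -y_\kappa'(0) = 0$, $\rho_\kappa'(0) = g'(\kappa) y_\kappa'(0) = 0$ and $p_\kappa'(0) = 0$; differentiating \eqref{lambdadef} at the origin and using the same expansion yields $\lambda_\kappa'(0) = 0$. The $C^2$ regularity of $y_\kappa, \lambda_\kappa, \mu_\kappa$ propagates by bootstrapping from \eqref{yeq} and \eqref{lambdadef} once the $C^1$ regularity of $g$ and $h$ is in hand.

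For the two identities: \eqref{laplusmu} is obtained by adding the static forms of \eqref{ein1} and \eqref{ein2} and dividing by $2r e^{-2\lambda_\kappa}$. The TOV equation \eqref{tov} follows from the pointwise identity $h'(y) = g(y) + h(y)$, which holds in both models. In the Euler case this is immediate from $h = P \circ g$ combined with $g'(y) = (g(y) + h(y))/P'(g(y))$. In the Vlasov case it follows by differentiating \eqref{hdefv} under the integral sign (the boundary contribution vanishes since the $3/2$-power of the square-root factor does), and using the algebraic decomposition $(1-\eta)^2[(1-\eta)^2 - e^{-2y}]^{1/2} = [(1-\eta)^2 - e^{-2y}]^{3/2} + e^{-2y}[(1-\eta)^2 - e^{-2y}]^{1/2}$ to identify the integrand of $g$. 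One then concludes $p_\kappa' = h'(y_\kappa) y_\kappa' = -(\rho_\kappa + p_\kappa)\mu_\kappa'$. The substantive obstacle is the global existence and the finiteness of $R_\kappa$, which is the content of \cite{RaRe} under the assumed hypotheses on the equation of state; everything else is direct verification.
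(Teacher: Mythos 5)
Your proposal is correct and follows essentially the same route as the paper: the existence, compact support, and regularity statements are delegated to \cite{RaRe} (and \cite{RR00} for the $C^1$ property of $g,h$), the identity \eqref{laplusmu} is obtained by adding the two field equations, and the Tolman--Oppenheimer--Volkov equation \eqref{tov} in the Vlasov case rests on the relation $h'=g+h$, which is exactly the paper's computation in Subsection~\ref{micromacro} via $g=3h+3j$ and $h'=4h+3j$ (your algebraic decomposition of $(1-\eta)^2\left((1-\eta)^2-e^{-2y}\right)^{1/2}$ is the same identity). The only cosmetic difference is that the paper phrases this as ``a steady state of the Einstein--Vlasov system is macroscopically a steady state of the Einstein--Euler system,'' while you verify the identity directly.
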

\begin{proof}
  Eqn.~\eqref{laplusmu}, which holds also in the time-dependent case,
  follows by adding the field equations
  \eqref{ein1}, \eqref{ein2} or \eqref{eelambda}, \eqref{eemu}
  respectively. For the Einstein-Euler case \eqref{tov} was already
  stated above as \eqref{tov0}. That this equation also holds in
  the Einstein-Vlasov case is due to the fact that a steady state
  of this system is macroscopically also one of the Einstein-Euler system,
  as will be seen in the next section.
\end{proof}

\subsection{Microscopic and macroscopic equations of state}\label{micromacro}

Let us consider a microscopic equation of state $\Phi$ which satisfies
the assumption ($\Phi 1$) (see~\eqref{Phiasymp}), and a corresponding
steady state of the Einstein-Vlasov system.
Given the relations \eqref{rhoyrelv} it is tempting to write
\[
p = h(g^{-1}(\rho))
\]
and interpret this steady state as a solution to the Einstein-Euler
system with the macroscopic equation of state given by
\be \label{pphi}
P = P_\Phi := h\circ g^{-1};
\ee
we fix some $\kappa >0$ and drop the corresponding dependence
since it plays no role here. Indeed, this maneuver is perfectly
rigorous and provides a class of examples $P=P_\Phi$ which satisfy
$({\mathrm P} 1)$ (see~\eqref{Pass1}--\eqref{Pass2}). To see this, we first observe that
\be \label{ghj}
g = 3 h + 3 j
\ee
with
\be \label{jdefv}
j(y) :=
\frac{4 \pi}{3} e^{2y} \int_0^{1-e^{-y}} \Phi(\eta)\,
\left((1-\eta)^2-e^{-2y}\right)^{1/2} d\eta.
\ee
The functions $h$ and $j$ and hence also $g$ are
continuously differentiable with
\be \label{hprime}
h' = 4 h + 3 j
\ee
and
\be \label{jprime}
j'(y) = 2 j(y) + \frac{4 \pi}{3} \int_0^{1-e^{-y}} \Phi(\eta)\,
\left((1-\eta)^2-e^{-2y}\right)^{-1/2} d\eta,
\ee
cf.\ \cite[Lemma 2.2]{RR00}.
In particular, the functions $g$ and $h$ are strictly increasing
on $[0,\infty[$
with $\lim_{y\to\infty} g(y) = \lim_{y\to\infty} h(y) = \infty$
so that these functions are one-to-one and onto on $[0,\infty[$.

\begin{proposition} \label{veqstate}
  Assume that $\Phi$ satisfies $(\Phi 1)$.
  Then $P_\Phi$ defined in \eqref{pphi} satisfies
  $({\mathrm P} 1)$.
  If in addition $\Phi \in C^1([0,\infty[)$,
  then $P_\Phi \in C^2([0,\infty[)$.
\end{proposition}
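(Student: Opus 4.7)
The plan is to verify the four clauses of $(\mathrm{P}1)$ directly from the integral representations of $g$, $h$, $j$ in \eqref{gdefv}, \eqref{hdefv}, \eqref{jdefv}, using the algebraic identities $g = 3h + 3j$ and $h' = 4h + 3j$ recorded just before the proposition, together with the pointwise scaling bound $c_1 \eta^k \leq \Phi(\eta) \leq c_2 \eta^k$ supplied by $(\Phi 1)$. The regularity $P_\Phi \in C^1$ and the sign/vanishing conditions $P_\Phi(0) = 0$, $P_\Phi'(\rho) > 0$ are essentially immediate: $g, h$ are $C^1$ on $[0,\infty[$, $g(0) = h(0) = 0$, and $g' = 3h' + 3j'$ and $h'$ are strictly positive on $]0,\infty[$ by inspection of their formulas \eqref{hprime}, \eqref{jprime}, so $g$ is a $C^1$ bijection onto $[0,\infty[$ and $P_\Phi'(\rho) = h'(y)/g'(y) > 0$ at $y = g^{-1}(\rho)$.

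For the small-$\rho$ bound $P_\Phi'(\rho) \leq c_1 \rho^{1/n}$, I would carry out a careful asymptotic analysis as $y \to 0^+$. Substituting $\eta = y\tau$ in each integral and using the expansion $(1 - y\tau)^2 - e^{-2y} = 2y(1-\tau) + O(y^2)$, together with $\Phi(y\tau) \asymp (y\tau)^k$, I would derive the matching two-sided bounds
\[
g(y) \asymp y^{k+3/2}, \quad h(y) \asymp y^{k+5/2}, \quad h'(y) \asymp y^{k+3/2}, \quad g'(y) \asymp y^{k+1/2},
\]
where the last uses the explicit formula \eqref{jprime} for $j'$. Inverting the first relation gives $y \asymp \rho^{1/(k+3/2)}$, whence $P_\Phi'(\rho) = h'(y)/g'(y) \asymp y \asymp \rho^{1/n}$ with $n := k + 3/2$. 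Since $-1/2 < k < 3/2$ by $(\Phi 1)$, this yields $1 < n < 3$, which is stronger than the $0 < n < 3$ requirement in $(\mathrm{P}1)$.

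The global bound $|P_\Phi^{-1}(p) - 3p| \leq c_2 p^{1/2}$ is, via the identity $g - 3h = 3j$, equivalent to boundedness of $j(y)^2/h(y)$ on $]0, \infty[$. I would split the analysis into three regimes: (i) near $y = 0$, the asymptotics above give $j^2/h \asymp y^{k+1/2} \to 0$ since $k > -1/2$; (ii) near $y = \infty$, the facts $1 - e^{-y} \to 1$ and $e^{-2y} \to 0$ combined with dominated convergence yield $j(y) \sim C_j e^{2y}$ and $h(y) \sim C_h e^{4y}$, so $j^2/h$ has a finite positive limit; (iii) on compact sub-intervals of $]0,\infty[$ the ratio is continuous and strictly positive, hence bounded. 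Stitching the three regimes together produces the uniform constant $c_2$. For the $C^2$ claim under $\Phi \in C^1$, the idea is to differentiate $g$ and $h$ twice under the integral sign: the boundary contributions at $\eta = 1 - e^{-y}$ vanish because $(1-\eta)^2 - e^{-2y}$ does, and the second derivative introduces at worst an integrable $(-1/2)$-power singularity at the upper endpoint, so $g, h$ are $C^2$ and $P_\Phi = h \circ g^{-1}$ inherits $C^2$ regularity on the interior by the chain rule (since $g' > 0$ there). The main obstacle will be step (iii) of the global bound: to piece the small- and large-$y$ asymptotics together with the middle regime into a single uniform constant requires effective, quantitative error terms so that nothing degenerates at the transition between regimes.
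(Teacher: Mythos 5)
Your verification of (P1) is essentially correct, and for the key estimate \eqref{Pass2} you take a genuinely different, slightly cleaner route than the paper. Your small-$y$ scaling is in substance the same computation the paper performs: every relevant quantity has the form $e^{ay}\int_0^{1-e^{-y}}\Phi(\eta)\left((1-\eta)^2-e^{-2y}\right)^m d\eta$ and is comparable to $(1-e^{-y})^{k+m+1}$, which yields \eqref{Pestsmallrho} and hence \eqref{Pass1} with $n=k+3/2\in\,]1,3[$. For \eqref{Pass2} the paper first proves $|P(\rho)-\rho/3|\le C\rho^{1/2}$, rewrites it as $|P^{-1}(p)-3p|\le C (P^{-1}(p))^{1/2}$, and then must additionally bound $P^{-1}(p)\le 6p$ for large $p$ and $P^{-1}(p)\le Cp^{1/(\gamma+1)}$ (with $\gamma<1$) for small $p$; your observation that, with $p=h(y)$ and $P^{-1}(p)-3p=3j(y)$, the desired inequality is exactly the boundedness of $j^2/h$ on $]0,\infty[$ short-circuits that conversion. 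Your closing worry about matching the regimes is unfounded: the vanishing limit as $y\to0^+$ (exponent $k+1/2>0$), the finite limit as $y\to\infty$, and continuity of $j^2/h$ on the compact middle interval already give a uniform constant; no quantitative error terms are needed.

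The $C^2$ part, however, contains a genuine gap. Differentiating the one-dimensional representations \eqref{gdefv}, \eqref{hdefv} twice under the integral sign fails for $g$: the first $y$-derivative already lowers the exponent to $-1/2$ (this is the integral appearing in \eqref{jprime}), and a second differentiation inside the integral produces $\left((1-\eta)^2-e^{-2y}\right)^{-3/2}$, which is not integrable at the upper endpoint, while the Leibniz boundary term at this second step is the $(-1/2)$-power integrand evaluated at $\eta=1-e^{-y}$ and hence divergent, not zero. So the claim that only an integrable $(-1/2)$-singularity appears is wrong precisely for $g''$ (equivalently $j''$). Tellingly, your scheme never differentiates $\Phi$, i.e., the hypothesis $\Phi\in C^1$ is never used, whereas it is exactly what is needed at this point. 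The paper transfers one derivative onto $\Phi$: starting from $g(y)=\int\Phi(1-e^{-y}\p)\p\,dv$ and $h(y)=\int\Phi(1-e^{-y}\p)\left(\frac{x\cdot v}{r}\right)^2\frac{dv}{\p}$, it differentiates under the integral (this is where $\Phi\in C^1$ enters) and changes variables, obtaining formulas for $g'$, $h'$ of the same benign type as \eqref{gdefv}, \eqref{hdefv} but with $\Phi'$ and exponents $1/2$, $3/2$; one further application of the regularity argument of \cite[Lemma 2.2]{RR00} then gives $g,h\in C^2$. An integration by parts in $\eta$ in \eqref{gdefv} would achieve the same transfer. Finally, your chain-rule conclusion only yields smoothness of $P_\Phi$ on $]0,\infty[$, since $g'(0)=0$; the paper draws its conclusion from the same formula \eqref{Pprime}, so you lose nothing relative to it there, but the differentiation step itself must be repaired as above.
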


\noindent
\begin{proof}
Let $P$ be given by \eqref{pphi}. Then for $\rho>0$ and with
$y=g^{-1}(\rho)>0$,
\be \label{Pprime}
P'(\rho) = \frac{h'(g^{-1}(\rho))}{g'(g^{-1}(\rho))}
= \frac{1}{3} \frac{g'(y) - j'(y)}{g'(y)}
\ee
which implies that $P\in C^1([0,\infty[)$ with $P(0)=0$ and
\be \label{pprimecond}
0< P'(\rho) < \frac{1}{3},\ \rho>0.
\ee
We claim that
\be \label{pcond}
\left|P(\rho) -\frac{1}{3} \rho\right| \leq C \rho^{1/2},\ \rho\geq 0
\ee
for some constant $C>0$; such constants may depend only on $\Phi$
and may change their value from line to line.
For $0\leq \rho \leq 1$ this is obvious, since by
\eqref{pprimecond}, $P(\rho) \leq \rho/3$. With $y=g^{-1}(\rho)$ the estimate
\eqref{pcond} is equivalent to
\be \label{pcondy}
\left|h(y) -\frac{1}{3} g(y)\right| \leq C g(y)^{1/2}
\ee
which needs to be shown for $y\geq g^{-1}(1)=:y_0$. Now
\[
h(y) -\frac{1}{3} g(y) = j(y) \leq
\frac{4 \pi}{3} e^{2y} \int_0^{1} \Phi(\eta)\,
(1-\eta)\, d\eta = C e^{2y},
\]
while
\[
g(y)^{1/2} \geq h(y)^{1/2} \geq
\left(\int_0^{1-e^{-y_0}} \Phi(\eta)\,
\left((1-\eta)^2-e^{-2y_0}\right)^{3/2} d\eta\right)^{1/2} e^{2 y} = C e^{2y},
\ y\geq y_0.
\]
Combining both estimates yields \eqref{pcondy} for $y\geq y_0$.
We rewrite \eqref{pcond}
in terms of $P^{-1}$ to find that
\be \label{pinvcond}
\left| P^{-1}(p) - 3 p\right| \leq C (P^{-1}(p))^{1/2},\ p\geq 0.
\ee
By \eqref{pcond}, $P(\rho) \geq \rho/6$ for $\rho$ large which means that
$P^{-1}(p) \leq 6 p$ for $p$ large so that \eqref{pinvcond} implies
\eqref{Pass2} in that case. We prove that
there exist constants $c_1, c_2 >0$ such that
with $\gamma = 1/(k+3/2)$,
\be \label{Pestsmallrho}
c_1 \rho^{\gamma} \leq P'(\rho) \leq c_2 \rho^{\gamma},\quad
c_1 \rho^{\gamma+1} \leq P(\rho) \leq c_2 \rho^{\gamma+1} \
\mbox{for}\ \rho>0\ \mbox{small}.
\ee
This implies \eqref{Pass1} with $n=k+3/2 \in ]1,3[$, and
$P^{-1}(p) \leq C p^{1/(\gamma+1)}$ for $p$ small.
Thus
\[
\left| P^{-1}(p) - 3 p\right| \leq C( p^{1/(\gamma+1)} + p) \leq C p^{1/2}
\]
for $p$ small which completes the proof of \eqref{Pass2} so that (P1) holds;
notice that $\gamma < 1$ since $k>-1/2$.
It therefore remains to prove \eqref{Pestsmallrho}, and it suffices
to prove the estimate for $P'$ which implies the one for $P$.
In view of \eqref{Pprime} and with $\rho=g(y)$ the estimate for $P'$ is
equivalent to
\be \label{Pprimeestsmally}
c_1 g(y)^\gamma g'(y) \leq h'(y) \leq c_2 g(y)^\gamma g'(y)
\ \mbox{for}\ y>0\ \mbox{small}.
\ee
Now
\[
g=3 h + 3 j \geq 3 j,\ g' = 3 h' + 3 j' \geq 3 j',\
h'= 4 h + 3 j \geq 3 j
\]
and $j'$ is given by \eqref{jprime}.
Hence all the relevant terms are of the form
\[
e^{a y} \int_0^{1-e^{-y}} \Phi(\eta)\,
\left((1-\eta)^2-e^{-2y}\right)^{m} d\eta
\]
with $a\in\{0,2,4\}$ and $m\in\{-1/2,1/2,3/2\}$, and these terms
need to be estimated from above and from below for $y$ small.
We  can drop the factor $e^{a y}$. For $\Phi(\eta)$ we use the assumption
\eqref{Phiasymp}, and we observe that
\[
c (1-\eta -e^{-y}) \leq (1-\eta  + e^{-y}) (1-\eta - e^{-y})
= (1-\eta)^2 - e^{-2y} \leq 1-\eta - e^{-y}.
\]
Hence the relevant terms can be estimated from above and below by
\[
\int_0^{1-e^{-y}} \eta^k\,
\left(1-\eta-e^{-y}\right)^{m} d\eta = C (1-e^{-y})^{k+m+1} .
\]
This implies that $h'(y)$ can be estimated from above and below by
$(1-e^{-y})^{k+3/2}$, and   $g(y)^\gamma g'(y)$
can be estimated from above and below by
$(1-e^{-y})^{\gamma (k+3/2)+(k+1/2)}$. But by the choice of $\gamma$ the two exponents
are equal and the estimate \eqref{Pprimeestsmally} holds.

It remains to show the regularity assertion. So let in addition
$\Phi \in C^1([0,\infty[)$. We observe that
\beas
g(y)
&=&
\int \Phi(1-e^{-y}\p)\p dv,\\
h(y)
&=&
\int \Phi(1-e^{-y}\p) \left(\frac{x\cdot v}{r}\right)^2 \frac{dv}{\p}.
\eeas
We differentiate these expressions under the integral and change variables
to find that
\beas
g'(y)
&=&
4 \pi e^{4 y} \int_0^{1-e^{-y}} \Phi'(\eta)\,(1-\eta)^3
\left((1-\eta)^2 - e^{-2 y}\right)^{1/2} d\eta,\\
h'(y)
&=&
\frac{4 \pi}{3} e^{4 y} \int_0^{1-e^{-y}} \Phi'(\eta)\,(1-\eta)\,
\left((1-\eta)^2 - e^{-2 y}\right)^{3/2} d\eta.
\eeas
Arguing as in \cite[Lemma 2.2]{RR00} it follows that $g,h \in C^2(\R)$,
and hence by \eqref{Pprime}, $P\in C^2([0,\infty[)$.
\end{proof}

\begin{remark}
  \begin{itemize}
    \item[(a)]
      The above proof shows that any equation of state derived from
      a microscopic ansatz satisfying $(\Phi 1)$ obeys the asymptotic
      relation
      \be \label{radiation}
      P(\rho) \approx \frac{1}{3} \rho\ \mbox{for}\ \rho\ \mbox{large},
      \ee
      which for technical reasons
      is expressed in terms of $P^{-1}$ in $({\mathrm P} 1)$.
      The limiting equation of state $P(\rho)=\rho/3$
      is known in astrophysics and cosmology as the equation of state for
      radiation. It is remarkable that this physically reasonable behavior
      is taken care of automatically if the equation of state derives from a
      microscopic one.
    \item[(b)]
      For the fluid model this asymptotic behavior must be put in
      by hand. In particular, it excludes equations of state of the form
      $P(\rho)=c \rho^\gamma$ with $\gamma >1$.
      Such equations of state, which also violate the requirement that the
      speed of sound $\sqrt{P'}$ must be less than the speed of light,
      are common for the Euler-Poisson system.
    \item[(c)]
      It would be a minor technical modification to replace the factor $1/3$
      in \eqref{radiation} and $({\mathrm P} 1)$ by some factor
      $\alpha \in ]0,1[$.
  \end{itemize}
\end{remark}

The above discussion indicates
that macroscopic quantities induced by an
isotropic steady state of the Einstein-Vlasov system
represent a steady state of the Einstein-Euler system with
an equation of state given by \eqref{pphi}.
To complete this argument, it remains to check
the stationary Euler equation \eqref{tov}.
But using the relations \eqref{ghj} and \eqref{hprime},
\[
p' = h'(y)\, y' = -(4 h + 3 j)(y) \mu' = -(g(y) + h(y)) \mu'
= - (\rho + p)\, \mu'.
\]

As a genuinely fluid dynamical equation of state we consider the one
used for neutron stars in \cite{HsuLinMakino}. Here for $y\geq 0$,
\[
\tilde g(y) :=
3 \int_0^y s^2 \sqrt{1+s^2} ds,\
\tilde h(y) :=
\int_0^y \frac{s^4}{\sqrt{1+s^2}} ds,
\]
and the relation between $p$ and $\rho$ is given by
\[
\rho = \tilde g(y),\ p= \tilde h(y),
\]
or equivalently, by the equation of state
\be \label{eqstatens}
p = P_{\mathrm{NS}}(\rho) := \tilde h(\tilde g^{-1}(\rho)).
\ee

\begin{proposition} \label{nsok}
  The equation of state \eqref{eqstatens} for a neutron star is well
  defined, satisfies $({\mathrm P}1)$, and in addition
  $P_\mathrm{NS} \in C^2([0,\infty[)$ with $P_\mathrm{NS}'' >0$ on $]0,\infty[$.
\end{proposition}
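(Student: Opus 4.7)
The plan is to work throughout in the auxiliary parameter $y$, in which both $\rho=\tilde g(y)$ and $p=\tilde h(y)$ are given as explicit integrals, and translate back to $\rho$ via $\tilde g^{-1}$ only at the end. Since $\tilde g\in C^\infty(\R)$ with $\tilde g(0)=0$ and $\tilde g'(y)=3y^2\sqrt{1+y^2}>0$ for $y>0$, $\tilde g$ is a strictly increasing bijection of $[0,\infty[$ onto itself, $\tilde g^{-1}$ is $C^\infty$ on $]0,\infty[$ and continuous at the origin, and hence $P_\mathrm{NS}=\tilde h\circ \tilde g^{-1}$ is well defined and continuous on $[0,\infty[$ with $P_\mathrm{NS}(0)=0$ and smooth on $]0,\infty[$. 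The chain rule together with $\tilde h'(y)=y^4/\sqrt{1+y^2}$ yields the clean identity
\[
P'_\mathrm{NS}(\rho)=\frac{\tilde h'(y)}{\tilde g'(y)}=\frac{y^2}{3(1+y^2)},\qquad y=\tilde g^{-1}(\rho),
\]
from which one reads off $0<P'_\mathrm{NS}(\rho)<1/3$ on $]0,\infty[$ and $P'_\mathrm{NS}(\rho)\to 0$ as $\rho\downarrow 0$, so $P_\mathrm{NS}\in C^1([0,\infty[)$ with $P'_\mathrm{NS}(0)=0$.

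For the two asymptotic bounds in $({\rm P}1)$ I would extract the leading-order behavior of $\tilde g$ and $\tilde h$ at both endpoints. Taylor expanding $\sqrt{1+s^2}$ near $s=0$ gives $\tilde g(y)=y^3+O(y^5)$ and $\tilde h(y)=y^5/5+O(y^7)$, hence $y=\tilde g^{-1}(\rho)\sim\rho^{1/3}$ for small $\rho$ and $P'_\mathrm{NS}(\rho)\sim\rho^{2/3}/3$; this is \eqref{Pass1} with $n=3/2\in\,]0,3[$. For \eqref{Pass2} the key observation is the algebraic identity
\[
\tilde g(y)-3\tilde h(y)=3\int_0^y\frac{s^2}{\sqrt{1+s^2}}\,ds,
\]
obtained by combining the two integrands, whose right-hand side is bounded above by $\tfrac{3}{2}y^2$. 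Since $\tilde g(y)\geq c\,y^4$ for $y\geq 1$, this yields $|\rho-3p|\leq C\sqrt\rho$ whenever $\rho\geq 1$; coupled with $\rho\sim 3p$ for large $p$ (itself a consequence of $P'_\mathrm{NS}\to 1/3$), this gives $|P_\mathrm{NS}^{-1}(p)-3p|\leq C\sqrt p$ for large $p$. On any bounded $p$-interval one uses $P_\mathrm{NS}^{-1}(p)\leq C p^{3/5}\leq C\sqrt p$ (the last inequality because $3/5>1/2$ and $p\leq 1$), which together with the trivial bound on $3p$ completes the proof of \eqref{Pass2}.

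Finally, a second chain-rule differentiation of $P'_\mathrm{NS}(\rho)=y^2/[3(1+y^2)]$ through $dy/d\rho=1/[3y^2\sqrt{1+y^2}]$ yields the explicit formula
\[
P''_\mathrm{NS}(\rho)=\frac{2}{9\,y\,(1+y^2)^{5/2}},\qquad y=\tilde g^{-1}(\rho),
\]
which is manifestly positive on $]0,\infty[$, giving strict convexity. Smoothness of $P''_\mathrm{NS}$ on $]0,\infty[$ is inherited from that of $\tilde g^{-1}$ there, while the small-$\rho$ expansion shows $P''_\mathrm{NS}(\rho)\sim c\,\rho^{-1/3}$ at the origin, so the $C^2$ assertion is to be read on the open interval $]0,\infty[$ with continuous extension of $P_\mathrm{NS}$ and $P'_\mathrm{NS}$ to $[0,\infty[$. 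No step is genuinely hard: the only mildly delicate point is isolating the identity for $\tilde g-3\tilde h$ and exploiting it to obtain the sharp $\sqrt{p}$ rate in \eqref{Pass2}.
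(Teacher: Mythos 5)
Your proposal is correct and follows essentially the same route as the paper: the explicit formula $P_\mathrm{NS}'(\rho)=\tfrac13\,y^2/(1+y^2)$ with $y=\tilde g^{-1}(\rho)$, the small-$\rho$ asymptotics $y\sim\rho^{1/3}$ giving \eqref{Pass1} with $n=3/2$, the bound $|\rho-3P_\mathrm{NS}(\rho)|\le C\sqrt\rho$ followed by the small-$p$/large-$p$ split ($P_\mathrm{NS}^{-1}(p)\le Cp^{3/5}$ versus $P_\mathrm{NS}^{-1}(p)\le Cp$) for \eqref{Pass2}, and the explicit positive formula for $P_\mathrm{NS}''$. The only methodological difference is cosmetic: you get $\rho-3p=3\int_0^y s^2(1+s^2)^{-1/2}ds\le\tfrac32 y^2\le C\sqrt\rho$ from an exact identity, whereas the paper integrates the derivative bound $|P_\mathrm{NS}'(\rho)-\tfrac13|\le C\rho^{-1/2}$; note that the paper's version simultaneously verifies assumption $({\mathrm P}2)$, which is later invoked for $P_\mathrm{NS}$ before Theorem~\ref{T:NEGATIVEA2}, though with your formula for $P_\mathrm{NS}'$ that estimate is an immediate one-liner. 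Finally, your remark that $P_\mathrm{NS}''(\rho)\sim\tfrac29\rho^{-1/3}$ as $\rho\downarrow0$ is accurate: the literal claim ``$P_\mathrm{NS}\in C^2([0,\infty[)$'' is too strong (near the origin $P_\mathrm{NS}$ behaves like a $\rho^{5/3}$ polytrope, so only $C^1$ up to $\rho=0$), and the paper's own proof in fact only establishes smoothness and $P_\mathrm{NS}''>0$ on $]0,\infty[$; reading the $C^2$ statement on the open interval, as you do, is the correct interpretation and costs nothing elsewhere in the paper, since only $({\mathrm P}1)$ and $({\mathrm P}2)$ are used later.
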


\begin{proof}
Clearly, $\tilde g, \tilde h \in C^\infty([0,\infty[)$ with
\[
\tilde g'(y) = 3 y^2 \sqrt{1+y^2},\ \tilde h'(y) = \frac{y^4}{\sqrt{1+y^2}}.
\]
Also, both functions are one-to-one and onto on the interval $[0,\infty[$.
Hence $P_\mathrm{NS} \in C^\infty(]0,\infty[)\cap C([0,\infty[)$ with
$P_\mathrm{NS}(0)=0$. Moreover,
\be \label{pnsprime}
P_\mathrm{NS}'(\rho) =
\frac{1}{3} \frac{(\tilde g^{-1}(\rho))^2}{1+(\tilde g^{-1}(\rho))^2}.
\ee
This implies that for $\rho >0$,
\be \label{pnsprimeest}
0 < P_\mathrm{NS}'(\rho) < \frac{1}{3}
\ee
and
\[
\left|P_\mathrm{NS}'(\rho)-\frac{1}{3}\right|
= \frac{1}{3} \frac{1}{1+(\tilde g^{-1}(\rho))^2}.
\]
For $y>0$ large it holds that $\tilde g(y)\leq C y^4$ and hence
$\tilde g^{-1}(\rho)\geq C \rho^{1/4}$ and
\[
\left|P_\mathrm{NS}'(\rho)-\frac{1}{3}\right| \leq C \rho^{-1/2}
\]
for $\rho>0$ large, but due to \eqref{pnsprimeest} this estimate
also holds for $\rho>0$ small. Hence
\be \label{pnsest13}
\left|P_\mathrm{NS}(\rho)-\frac{1}{3}\rho\right| \leq C \rho^{1/2},\ \rho\geq 0.
\ee
Next we observe that for $y>0$ small it holds that
$y^3 \leq \tilde g(y) \leq C y^3$ and hence for $\rho>0$ small,
$c \rho^{1/3} \leq \tilde g^{-1}(\rho) \leq \rho^{1/3}$
which together with \eqref{pnsprime} implies that
\be \label{pnsest}
c \rho^{2/3} \leq P_\mathrm{NS}'(\rho) \leq \rho^{2/3},\quad
c \rho^{5/3} \leq P_\mathrm{NS}(\rho) \leq C \rho^{5/3}
\ee
for $\rho\geq 0$ small; constants like $0<c<C$ may change their value
from line to line. In particular, \eqref{Pass1} in $({\mathrm P} 1)$ holds.
The estimate \eqref{pnsest13} implies that
\[
\left| P_\mathrm{NS}^{-1}(p) - 3 p\right| \leq C  (P_\mathrm{NS}^{-1}(p))^{1/2}.
\]
Since $P_\mathrm{NS}(\rho)\geq \rho/6$ for $\rho >0$ large and hence
$P_\mathrm{NS}^{-1}(p) \leq 6 p$ for $p>0$ large, this implies
\eqref{Pass2} for such $p$. But if $\rho$ and $p$ are small, then \eqref{pnsest}
implies that $P_\mathrm{NS}^{-1}(p) \leq C p^{3/5}$ and hence
\[
\left| P_\mathrm{NS}^{-1}(p) - 3 p\right| \leq C  (p^{3/5} + p) \leq C p^{1/2}
\]
which completes the proof of \eqref{Pass2}.
From \eqref{pnsprime} it follows that
\[
P_\mathrm{NS}''(\rho) =
\frac{2}{3} \frac{\tilde g^{-1}(\rho)(\tilde g^{-1})'(\rho)}
     {(1+(\tilde g^{-1}(\rho))^2)^2} > 0
\]
for $\rho>0$, and the proof is complete.
\end{proof}


\subsection{The Bisnovatyi-Kogan--Zel'dovich (BKZ) solution} \label{SS:BKZ}


A key ingredient in our instability analysis is rather precise information
on the form of the steady states for large central redshift $\kappa$.
In order to obtain this information it is useful to understand that
the steady states approach steady states of a certain version of the
Einstein-Vlasov or Einstein-Euler system as $\kappa\to\infty$.
In this subsection we introduce this limiting system
and discuss a special, explicit solution to it.

We first consider the Vlasov case.
For $\kappa\to\infty$ we expect that close to the center the
corresponding solution of \eqref{yeq}, \eqref{yiv} is large, so that formally,
\[
g(y) = 4 \pi e^{4 y} \int_0^{1-e^{-y}} \Phi(\eta)\, (1-\eta)^2\,
\left((1-\eta)^2 - e^{-2y}\right)^{1/2} d\eta
\approx g^\ast (y)
\]
and
\[
h(y) =
\frac{4 \pi}{3} e^{4y} \int_0^{1-e^{-y}} \Phi(\eta)\,
\left((1-\eta)^2-e^{-2y}\right)^{3/2} d\eta
\approx h^\ast (y)
\]
where
\be \label{ghdef_as}
g^\ast(y) :=
e^{4y},\
h^\ast(y) :=\frac{1}{3} e^{4y},\  y\in \R,
\ee
and for the sake of notational simplicity we normalized
\[
4 \pi \int_0^{1} \Phi(\eta)\, (1-\eta)^3\, d\eta = 1.
\]
Since $g^\ast$ and $h^\ast$ are strictly positive,
a corresponding steady state is never compactly supported.
Let us for the moment assume that we have a solution $y$
of \eqref{yeq}, \eqref{yiv} where $g$ and $h$ are replaced by
$g^\ast$ and $h^\ast$, and let
$\mu, \lambda, \rho, p$ be induced by $y$.
Then these quantities satisfy
the stationary Einstein equations together with the radiative equation of state
\be \label{radeqstate}
p = P^\ast(\rho) = \frac{1}{3} \rho.
\ee
We want to understand what happens with the Vlasov equation in the limit
$\kappa \to \infty$. First, we note that the equation
of state \eqref{radeqstate} can not come from an isotropic
steady state particle distribution of the form \eqref{trueansatz}
(and of course not from an anisotropic one either):
\[
p(r) =
\int  f(x,v)\, \left(\frac{x\cdot v}{r}\right)^2 \frac{dv}{\p}
=
\frac{1}{3}\int  f(x,v)\, |v|^2 \frac{dv}{\p}
<
\frac{1}{3}\int  f(x,v)\, (1+|v|^2) \frac{dv}{\p}
= \frac{1}{3} \rho(r).
\]
The physical meaning of this
is simply that massive particles are not radiation.
However, this observation carries its own cure: In the $\kappa \to \infty$
limit the stationary Vlasov equation \eqref{vlasov}
must be replaced by its ultrarelativistic, or massless version
\be \label{urvlasov}
e^{\mu - \lambda}\frac{v}{|v|}\cdot \partial_x f -
e^{\mu - \lambda} |v| \, \mu'(r) \, \frac{x}{r}\cdot \partial_v f =0.
\ee
An isotropic solution of the latter equation is obtained by the ansatz
\[
f(x,v) = \phi(e^{\mu(r)}|v|);
\]
it is straightforward that this satisfies \eqref{urvlasov}.
Moreover, for such a particle distribution $f$,
\[
p(r) = \int f(x,v)\, \left(\frac{x\cdot v}{r}\right)^2
\frac{dv}{|v|}
=
\frac{1}{3}\int  f(x,v)\, |v|^2 \frac{dv}{|v|} \\
= \frac{1}{3} \rho(r)
\]
as required by \eqref{radeqstate}.
Finally,
\[
\rho(r) = \int \phi(e^{\mu(r)}|v|)\, |v|\, dv
= 4\pi \int_0^\infty \phi(e^{\mu(r)} u)\, u^3 du
= 4\pi \int_0^\infty \phi(\eta)\, \eta^3 d\eta \, e^{-4\mu(r)}
\]
which is as expected from \eqref{ghdef_as} so that
the above $f$ is a consistent solution to the massless
Vlasov equation \eqref{urvlasov}, and a stationary solution of the
massless Einstein-Vlasov system, i.e., of the system
\eqref{feqgen}, \eqref{vlgen}, \eqref{emtvlgen} for particles with
rest mass $m_0=0$.


\begin{lemma}
For any $\kappa >0$ there exists a unique
solution $y=y_\kappa\in C^1([0,\infty[)$ to the problem
\be \label{grzeq}
y'(r)= - \frac{1}{1-2 m^\ast(r)/r} \left(\frac{m^\ast(r)}{r^2}
+ 4 \pi r p^\ast(r)\right) ,
\ee
\be\label{ziv}
y(0)=\kappa>0,
\ee
where $\rho^\ast=g^\ast(y)$, $p^\ast=h^\ast(y)$ with \eqref{ghdef_as},
and
\be \label{minfdef}
m^\ast(r) = m^\ast(r,y) = 4\pi \int_0^r \sigma^2 \rho^\ast(\sigma)\, d\sigma.
\ee
\end{lemma}

\begin{proof}
The proof follows using the arguments in \cite[Theorem~3.4]{Rein94}.
\end{proof}


We want to find a special, explicit solution of  \eqref{grzeq}.
To this end, we make the ansatz
\[
y(r) = \gamma - \beta \ln r,\ r>0 .
\]
Then
\[
\rho(r) = g^\ast(y(r)) = e^{4\gamma} r^{-4\beta},\
p(r) = h^\ast(y(r)) = \frac{1}{3} e^{4\gamma} r^{-4\beta},
\]
and
\[
m(r) = \frac{4 \pi e^{4\gamma}}{3-4\beta} r^{3-4\beta}.
\]
Substituting this into \eqref{grzeq} it turns out that the latter equation
holds iff
\[
\beta = \frac{1}{2} \ \mbox{and}\ \frac{56}{3}\pi e^{4\gamma} = 1.
\]
Thus
\be \label{bkzsolution}
\rho(r) = \frac{3}{56 \pi} r^{-2},\ p(r) = \frac{1}{56 \pi} r^{-2},\
m(r) = \frac{3}{14} r,\ \frac{2m(r)}{r} =  \frac{3}{7},\
e^{2\lambda} = \frac{7}{4},\ \mu'(r) = \frac{1}{2 r}.
\ee
These macroscopic data of the solution are
the same as for the massive solutions found
by {\sc  Bisnovatyi-Kogan} and {\sc Zel'dovich} in \cite{BKZ},
and we refer to it as the BKZ solution. We note that the corresponding
$y$ is singular at the origin, and the solution
violates the condition \eqref{boundc0} for a regular center. It
has infinite mass and also violates the boundary condition \eqref{boundcinf}
at infinity, i.e., it does not represent an isolated system.

\begin{remark}[Geometric properties of the BKZ solution]
To understand the BKZ solution better we compute
its Ricci curvature $R(r)$
and its Kretschmann scalar $K(r)$. With the help of Maple
it turns out that
\[
R(r) = 0,\ K(r) := R_{\alpha\beta\gamma\delta} R^{\alpha\beta\gamma\delta}(r)
= \frac{72}{49} r^{-4}
\]
so that this solution has a spacetime singularity at $r=0$.
A radially outgoing null geodesic satisfies the relation
\[
0 = - c_1 r dt^2 + \frac{7}{4} dr^2,
\]
hence
\[
\frac{dr}{dt} = \sqrt{\frac{4 c_1}{7} r}  =: c_2  \sqrt{r}
\]
and
\[
r(t) = \left(\sqrt{r(0)} + \frac{1}{2} c_2 t\right)^2,\
t\geq - \frac{2 \sqrt{r(0)}}{c_2}
\]
describe the radially outgoing null geodesics. They
start at the singularity and escape to $r=\infty$, i.e.,
the singularity is visible for observers away from the singularity,
and hence it violates the strong cosmic censorship conjecture.
The concept of weak cosmic censorship is not applicable to this solution,
since it is not asymptotically flat. According to the
cosmic censorship hypothesis ``naked'' singularities, i.e., those
which are visible
to distant observers, should be ``non-generic'' and/or ``unstable''.
Our analysis turns out to be nicely consistent with this hypothesis:
regular steady states, by being close to the BKZ solution
for large central redshift, seem to inherit this instability and are
indeed unstable themselves, as we will show below.
\end{remark}

\begin{remark}
  The massless Einstein-Vlasov system has been considered in the literature,
  for example in \cite{andfajthal,dafermos,taylor}.
  The results of the present section, more precisely of the next subsection,
  show
  that the massless system captures the behavior of solutions of the massive
  system in a strongly relativistic situation, at least in the context of
  spherically symmetric steady states.
  The massless system plays a similar role in \cite{Andr2007}.

  The size of the central redshift is not the only possible measure of
  the strength of relativstic effects in a given steady state.
  Another possible choice is
  the compactness ratio $\frac{2m(r)}{r}$, which by the Buchdahl
  inequality~\cite{Buch,Andr2008}
  is always less than $\frac89$. We see from~\eqref{bkzsolution}
  that the compactness ration for the BKZ solution is
  $\frac37$ and it is therefore ``far" from saturating the
  Buchdahl bound $\frac89$.
  \end{remark}

So far, the discussion of the present subsection was restricted to the
Einstein-Vlasov system, but in view of the relation between the corresponding
steady states explained in Subsection~\ref{micromacro} it applies equally well
to the Einstein-Euler system with the equation of state \eqref{radiation},
which we would refer to as the massless or radiative Einstein-Euler system.

\subsection{The ultrarelativistic limit $\kappa\to\infty$}
\label{SS:LARGEKAPPA}

In this subsection we prove that in a specified region
close to the center the steady states provided by Proposition~\ref{ssfamilies}
are approximated by the BKZ solution when $\kappa$ is sufficiently large.
This is done in two steps. We first prove that the former steady states
are approximated by solutions of the massless problem
\eqref{grzeq} with the same central value \eqref{ziv}. In a second step
we show that the behavior of the latter solutions is captured by the BKZ one.

For the first step we start with the Vlasov case.
It is fairly simple to make the asymptotic relation between
the functions $g$ and $h$ defined by \eqref{gdefv} and \eqref{hdefv}
and their massless counterparts
$g^\ast$ and $h^\ast$ precise:
\[
|g(y) - g^\ast(y)| + |h(y) - h^\ast(y)| \leq C e^{2y},\ y\in \R,
\]
It should be noted that this is indeed a good approximation
for large $y$, since all the terms on the left are then of order $e^{4y}$.
This approximation would suffice to estimate the difference between
solutions of \eqref{yeq}, \eqref{yiv} and solutions of
\eqref{grzeq}, \eqref{ziv}.
However, for the functions $g$ and $h$ defined by
\eqref{gdefe} and \eqref{hdefe}
in the Euler case the above estimate seems hard or impossible to obtain.
For the latter case a different set-up must be used to obtain the desired
asymptotics, but that set-up can be chosen such that it also works
for the Vlasov case.

So from now on we treat both cases simultaneously and consider a
steady state as provided by Proposition~\ref{ssfamilies}.
For the moment we drop the subscript
$\kappa$ and note that by \eqref{tov}, which as we showed holds for
the Vlasov case as well,
\[
p' = -(\rho + p)\, \mu'
= -\frac{\rho + p}{1-\frac{2 m}{r}}\left(\frac{m}{r^2} + 4\pi r p\right) .
\]
We want to read this as a differential equation for $p$, and so we observe that
the equation of state $p=P(\rho)$ can be inverted to define
$\rho$ in terms of $p$:
\be \label{eqstateinv}
\rho = P^{-1}(p) =: S (p);
\ee
we notice that under our general assumptions
the function $P$ is one-to-one and onto on the
interval $[0,\infty[$ both in the Vlasov and in the Euler case
and both for the massive and the massless equations.
We also notice that $p$ determines all the other components of
the corresponding steady state. Hence we consider the equation
\be \label{peq}
p' =
-\frac{S(p) + p}{1-\frac{8 \pi}{r}\int_0^r s^2 S(p)\,ds}
\left(\frac{4 \pi}{r^2}\int_0^r s^2 S(p)\,ds + 4\pi r p\right),
\ee
and its massless counterpart where according to \eqref{radeqstate}
we set $S(p)=3 p$:
\be \label{peqml}
p' =
-\frac{4 p}{1-\frac{8 \pi}{r}\int_0^r s^2 3 p\,ds}
\left(\frac{4 \pi}{r^2}\int_0^r s^2 3 p\,ds + 4\pi r p\right).
\ee
In what follows, $p_\kappa$ and $p^\ast_\kappa$ denote the solutions
to \eqref{peq} and \eqref{peqml} respectively, satisfying the
boundary condition
\be\label{piv}
p_\kappa (0) = \frac{1}{3} e^{4 \kappa} = p^\ast_\kappa(0).
\ee
\begin{remark}\label{repara}
  Strictly speaking we reparametrize our steady state
  family here and use the central pressure as the new parameter.
  However, the quantities $y$ and $p$ are in a strictly increasing,
  one-to-one correspondence in such a way that $y\to \infty$
  iff $p\to \infty$, i.e., $p(0) \to \infty$ is equivalent to
  $y(0) \to \infty$. In view of \eqref{ghdef_as} the above way of writing
  $p(0)$ guarantees that at least for $\kappa$ large this quantity
  asymptotically coincides with its original definition, which may justify
  the misuse of notation.
\end{remark}

We now show that close to the origin and for large $\kappa$ the behavior
of $p_\kappa$ is captured by $p^\ast_\kappa$.
\begin{lemma} \label{p-pml}
  There exists a constant $C>0$ which depends
  only on $\Phi$ or $P$ respectively
  such that for all $\kappa > 0$ and $r\geq 0$,
  \[
  |p_\kappa(r) - p^\ast_\kappa (r)|
  \leq
  C e^{6\kappa} \left(r^2 + e^{4\kappa} r^4 \right)
  \exp\left(C\left(e^{4\kappa} r^2 + e^{8\kappa} r^4\right)\right) .
  \]
\end{lemma}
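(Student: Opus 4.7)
The plan is a Gronwall comparison applied to $q_\kappa := p_\kappa - p^\ast_\kappa$, viewed as a solution of the integrated forms of \eqref{peq}--\eqref{peqml} with the common initial condition \eqref{piv}. First I would record elementary a priori bounds: the right-hand sides of \eqref{peq} and \eqref{peqml} are manifestly non-positive, so both $p_\kappa$ and $p^\ast_\kappa$ are non-negative and monotonically decreasing, whence $0\le p_\kappa(r),\,p^\ast_\kappa(r)\le \tfrac{1}{3} e^{4\kappa}$ for all $r\ge 0$. Assumption $({\mathrm P}1)$ together with \eqref{eqstateinv} then yields $|S(p_\kappa(r))-3p_\kappa(r)|\le C\sqrt{p_\kappa(r)}\le C e^{2\kappa}$, while trivially $\int_0^r s^2 S(p_\kappa)\,ds + \int_0^r s^2 p^\ast_\kappa\,ds \le C e^{4\kappa} r^3$. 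In the regime where the claimed estimate is non-trivial---that is, where $e^{4\kappa} r^2 + e^{8\kappa} r^4$ is at most moderately large---the two denominators $1-\frac{8\pi}{r}\int_0^r s^2 S(p_\kappa)\,ds$ and $1-\frac{24\pi}{r}\int_0^r s^2 p^\ast_\kappa\,ds$ stay bounded below by a fixed positive constant; outside that range the exponential weight on the right of the claim is already so large that the inequality is trivially true.

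Writing $F[p](r)$ and $F^\ast[p](r)$ for the right-hand sides of \eqref{peq} and \eqref{peqml}, I decompose
\[
F[p_\kappa](r) - F^\ast[p^\ast_\kappa](r) = \bigl(F[p_\kappa](r) - F^\ast[p_\kappa](r)\bigr) + \bigl(F^\ast[p_\kappa](r) - F^\ast[p^\ast_\kappa](r)\bigr).
\]
For the first, \emph{model error} piece, every replacement of $S(p_\kappa)$ by $3p_\kappa$---in the prefactor $S(p)+p$ versus $4p$, in the mass-type integral $I_S(r):=\frac{4\pi}{r^2}\int_0^r s^2 S(p)\,ds$, and in the denominator---contributes an error controlled by $|S(p_\kappa)-3p_\kappa|\le Ce^{2\kappa}$. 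Combining this with the a priori sizes $p_\kappa\lesssim e^{4\kappa}$, $I_S\lesssim e^{4\kappa} r$, and the denominator bound, a direct computation yields the pointwise estimate
\[
\bigl|F[p_\kappa](r) - F^\ast[p_\kappa](r)\bigr| \le C e^{6\kappa}\bigl(r + e^{4\kappa} r^3\bigr),
\]
whose antiderivative on $[0,r]$ is exactly the prefactor $Ce^{6\kappa}(r^2 + e^{4\kappa} r^4)$ appearing in the conclusion.

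For the second, \emph{Lipschitz} piece, I regard $F^\ast$ as a functional depending on $p$ both pointwise and through $\int_0^r s^2 p(s)\,ds$, and linearize along the segment joining $p^\ast_\kappa$ to $p_\kappa$. Using the a priori bounds on $p_\kappa, p^\ast_\kappa$ and on the denominators, the four resulting contributions (from the prefactor $4p$, from the $4\pi r p$ term, from the mass-type integral, and from the denominator) can be estimated by
\[
\bigl|F^\ast[p_\kappa](r) - F^\ast[p^\ast_\kappa](r)\bigr| \le C\bigl(e^{4\kappa} r + e^{8\kappa} r^3\bigr)\,\sup_{s\in[0,r]} |q_\kappa(s)|.
\]
Setting $M(r):=\sup_{[0,r]}|q_\kappa|$ and assembling both estimates yields the integral inequality
\[
M(r) \le C e^{6\kappa}\bigl(r^2 + e^{4\kappa} r^4\bigr) + C\int_0^r \bigl(e^{4\kappa} s + e^{8\kappa} s^3\bigr)\,M(s)\,ds,
\]
and Gronwall's inequality closes the argument, since $\int_0^r (e^{4\kappa} s + e^{8\kappa} s^3)\,ds \le C(e^{4\kappa} r^2 + e^{8\kappa} r^4)$ produces precisely the exponential weight in the claim.

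The main obstacle is the bookkeeping of all the $\kappa$-dependent constants, in particular establishing the uniform lower bound on the two denominators over the intended range of $r$, so that the Lipschitz linearization of $F^\ast$ is valid there. This will rely on the monotone decay of both $p_\kappa$ and $p^\ast_\kappa$ from the common initial value $\tfrac{1}{3}e^{4\kappa}$ and on the smallness $e^{4\kappa} r^2 \lesssim 1$ implicit in the exponential weight; outside this smallness regime the stated inequality becomes automatic from the a priori bounds $p_\kappa,p^\ast_\kappa\le \tfrac{1}{3}e^{4\kappa}$.
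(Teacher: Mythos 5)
Your overall architecture (split into a model error $F[p_\kappa]-F^\ast[p_\kappa]$ controlled by $|S(p)-3p|\le C\sqrt p$, plus a Lipschitz term handled by Gronwall) is exactly the paper's argument, which merely performs it in the rescaled variables $r=e^{-2\kappa}\tau$, $p=e^{4\kappa}\sigma$; your pointwise error bound $Ce^{6\kappa}(r+e^{4\kappa}r^3)$ and Lipschitz factor $C(e^{4\kappa}r+e^{8\kappa}r^3)$ translate precisely into the paper's $C(\tau+\tau^3)(e^{-2\kappa}+x(\tau))$. The genuine gap is in how you control the denominators $1-\tfrac{2m(r)}{r}$. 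Your dichotomy is: (i) if $e^{4\kappa}r^2+e^{8\kappa}r^4$ is moderately small, the denominators are bounded below by smallness of $\tfrac{8\pi}{r}\int_0^r s^2 S(p)\,ds\lesssim e^{4\kappa}r^2$; (ii) otherwise the claimed inequality is ``trivially true'' from $|p_\kappa-p^\ast_\kappa|\le \tfrac23 e^{4\kappa}$. Part (ii) fails in an intermediate regime: with $u:=e^{4\kappa}r^2$, the right-hand side of the lemma is of size $e^{2\kappa}(u+u^2)\exp(C(u+u^2))$, so the trivial bound $\tfrac23 e^{4\kappa}$ is dominated only once $u+u^2\gtrsim \kappa$ (up to constants), i.e.\ roughly $u\gtrsim\sqrt\kappa$, whereas (i) only covers $u\le c_0$ for a fixed small $c_0$. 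For $c_0\le u\lesssim\sqrt\kappa$ neither branch applies, and without a lower bound on the denominators valid there your Lipschitz linearization and hence the Gronwall constants are uncontrolled. This is not a corner case: Proposition~\ref{P:EVTOBKZ} applies the lemma exactly on $r e^{2\kappa}\in[\kappa^{\alpha_1},(2\kappa/C^\ast)^{1/4}]$, i.e.\ $u\in[\kappa^{2\alpha_1},\sqrt{2\kappa/C^\ast}]$, which lies squarely in the uncovered regime.

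The paper closes this by invoking Buchdahl's inequality $\tfrac{2m(r)}{r}<\tfrac89$ for all $r>0$ (valid for both the massive and the massless steady states, citing \cite{AnRe2}), which gives the uniform bound $\bigl(1-\tfrac{8\pi}{r}\int_0^r s^2 S(p)\,ds\bigr)^{-1}<9$ on the whole half-line and makes the Gronwall argument global in $r$. Note also that your opening a priori step (``the right-hand sides are manifestly non-positive, hence $p_\kappa,p^\ast_\kappa$ decrease'') already presupposes positivity of these denominators, so some structural input of this kind cannot be avoided. To repair your proof, either quote Buchdahl's inequality as the paper does, or supply an equivalent uniform-in-$r$ lower bound on $1-2m(r)/r$ for both $p_\kappa$ and $p^\ast_\kappa$; with that in hand the rest of your argument goes through verbatim.
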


\begin{proof}
In order to keep the notation simple we drop the subscript $\kappa$ and
write $p$ and $p^\ast$ for the two solutions which we want to compare.
We introduce rescaled variables as follows:
\be \label{rescale}
p(r) = \alpha^{-2} \sigma(\tau),\
p^\ast(r) = \alpha^{-2} \sigma^\ast(\tau),\ r = \alpha \tau,
\ee
with $\alpha >0$. A straightforward computation shows that
\be \label{sigmaeq}
\sigma'(\tau)=
- \frac{\sigma(\tau) + \alpha^2 S(\alpha^{-2}\sigma(\tau))}
{1-\frac{8\pi}{\tau}\int_0^\tau s^2 \alpha^2 S(\alpha^{-2}\sigma(s))\, ds}
\left(\frac{4\pi}{\tau^2}\int_0^\tau s^2 \alpha^2 S(\alpha^{-2}\sigma(s))\, ds
+ 4 \pi \tau \sigma(\tau)\right) ,
\ee
\be \label{zetaeq}
{\sigma^\ast}'(\tau)=
- \frac{4 \sigma^\ast(\tau)}
{1-\frac{8\pi}{\tau}\int_0^\tau s^2 3 \sigma^\ast(s))\, ds}
\left(\frac{4\pi}{\tau^2}\int_0^\tau s^2 3 \sigma^\ast(s))\, ds
+ 4 \pi \tau \sigma^\ast(\tau)\right) ,
\ee
and
\be\label{etazetaic}
\sigma(0)=\frac{1}{3}=\sigma^\ast(0);
\ee
notice that \eqref{zetaeq} is actually the same equation as \eqref{peqml}.
We choose
\[
\alpha = e^{-2\kappa}.
\]
In order to estimate the difference $\sigma' - {\sigma^\ast}'$
and apply a Gronwall argument we first collect a number of estimates.
In these estimates
$C$ denotes a positive constant which depends only on the ansatz function
$\Phi$ or $P$ and which may change its value from line to line. In particular,
such constants are independent of $\tau$ and $\kappa$.
First we note that $\sigma$ and $\sigma^\ast$ are decreasing, and hence for
$\tau \geq 0$,
\[
0\leq \sigma(\tau), \sigma^\ast(\tau) \leq \frac{1}{3}.
\]
By \eqref{Pass2},
\[
\left|\alpha^2 S\left(\frac{\sigma(\tau)}{\alpha^2}\right)
- 3 \sigma(\tau)\right|
= \alpha^2 \left|S\left(\frac{\sigma(\tau)}{\alpha^2}\right)
- 3 \frac{\sigma(\tau)}{\alpha^2}\right|
\leq C \alpha^2 \left(\frac{\sigma(\tau)}{\alpha^2}\right)^{1/2}
\leq C \alpha = C e^{-2\kappa}.
\]
This implies that
\[
\alpha^2 S(\sigma(\tau)/\alpha^2)\leq C \alpha + 3 \sigma(\tau) \leq C
\]
and
\be \label{S-3}
\left|\alpha^2 S(\sigma(\tau)/\alpha^2) - 3 \sigma^\ast(\tau)\right|
\leq C e^{-2\kappa} + 3 \left|\sigma(\tau) - \sigma^\ast(\tau)\right|.
\ee
A non-trivial issue is to get uniform control on the denominators
in \eqref{sigmaeq} and \eqref{zetaeq}.
But the spherically symmetric steady states
of the Einstein-Vlasov or Einstein-Euler system which we consider here
satisfy the estimate
\begin{equation}\label{buchdahl}
\frac{2 m(r)}{r} < \frac{8}{9},\ r>0,
\end{equation}
which is known as Buchdahl's inequality,
cf.~\cite{Buch}; a proof for very general
matter models which covers the present
situation and also anisotropic steady states is given in
\cite{Andr2008}.
Hence,
\[
\frac{1}{1-\frac{8\pi}{\tau}\int_0^\tau s^2 \alpha^2 S(\alpha^{-2}\sigma(s))\,
ds},
\quad
\frac{1}{1-\frac{8\pi}{\tau}\int_0^\tau s^2 3 \sigma^\ast(s))\, ds}
< 9,\ \tau>0.
\]
Let us abbreviate
\[
x(\tau) := \max_{0\leq s \leq \tau} |\sigma(s) - \sigma^\ast(s)| .
\]
Then these estimates together imply that
\begin{eqnarray*}
&&
|\sigma'(\tau)-{\sigma^\ast}'(\tau)|
\leq
\left|\sigma(\tau) + \alpha^2 S(\alpha^{-2}\sigma(\tau)) -
4 \sigma^\ast(\tau)\right|\\
&&
\qquad\qquad\qquad
\frac{1}{1-\frac{8\pi}{\tau}\int_0^\tau s^2 \alpha^2 S(\alpha^{-2}\sigma(s))\, ds}
\left|\frac{4\pi}{\tau^2}\int_0^\tau s^2 \alpha^2 S(\alpha^{-2}\sigma(s))\, ds
+ 4 \pi \tau \sigma(\tau)\right|\\
&&
\qquad\quad{}+
4 \sigma^\ast(\tau)
\left|\frac{1}{1-\frac{8\pi}{\tau}
\int_0^\tau s^2 \alpha^2 S(\alpha^{-2}\sigma(s))\, ds}
-\frac{1}{1-\frac{8\pi}{\tau}\int_0^\tau s^2 3 \sigma^\ast(s))\, ds} \right|\\
&&
\qquad\qquad\qquad
\left|\frac{4\pi}{\tau^2}\int_0^\tau s^2 \alpha^2 S(\alpha^{-2}\sigma(s))\, ds
+ 4 \pi \tau \sigma(\tau)\right|\\
&&
\qquad\quad{}+
\frac{4 \sigma^\ast(\tau)}
{1-\frac{8\pi}{\tau}\int_0^\tau s^2 3 \sigma^\ast(s))\, ds}\\
&&
\qquad\qquad\qquad
\left|\frac{4\pi}{\tau^2}\int_0^\tau s^2 \alpha^2 S(\alpha^{-2}\sigma(s))\, ds
+ 4 \pi \tau \sigma(\tau)
-\frac{4\pi}{\tau^2}\int_0^\tau s^2 3 \sigma^\ast(s))\, ds
- 4 \pi \tau \sigma^\ast(\tau)\right|\\
&&
\qquad \leq
C \left(e^{-2\kappa} + x(\tau)\right)\,\tau +
C \left(e^{-2\kappa} + x(\tau)\right)\,\tau^2 \tau +
C \left(e^{-2\kappa} + x(\tau)\right)\,\tau\\
&&
\qquad \leq
C (\tau + \tau^3)\,\left(e^{-2 \kappa} + x(\tau) \right).
\end{eqnarray*}
Integration of this estimate yields
\[
x(\tau)\leq C (\tau^2 + \tau^4)\,e^{-2 \kappa}
+ C \int_0^\tau (s + s^3)\, x(s)\, ds,
\]
Gronwall's lemma implies that
\[
x(\tau)\leq C (\tau^2 + \tau^4)\,
\exp\left( C \left(\tau^2 + \tau^4\right)\right)\,e^{-2 \kappa},
\]
by \eqref{rescale},
\beas
\left|p(r) -p^\ast(r)\right|
&\leq&
(\tau^2 + \tau^4)\,\exp\left( C \left(\tau^2 + \tau^4\right)\right)\,e^{2 \kappa}
\\
&=&
(e^{4 \kappa}r^2 + e^{8 \kappa}r^4)\,
\exp\left( C \left(e^{4 \kappa}r^2 + e^{8 \kappa}r^4\right)\right)\,e^{2 \kappa},
\eeas
and the proof is complete.
\end{proof}

In order to understand the behavior of $p^\ast_\kappa$ in the limit
$\kappa \to \infty$ the following observation is useful; we recall
\eqref{piv} and Remark~\ref{repara} to avoid miss-interpretation of the
notation $p^\ast_\kappa$.
\begin{lemma} \label{pmlscale}
  Let $p^\ast_\kappa$ denote the solution of \eqref{peqml} with initial data
  $p^\ast_\kappa(0)=e^{4\kappa}/3$.
  Then for all $\kappa \geq 0$,
  \[
  p^\ast_\kappa (r) = e^{4\kappa}p^\ast_0 (e^{2\kappa} r),\ r\geq 0 .
  \]
\end{lemma}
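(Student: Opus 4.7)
The plan is to exploit the scaling invariance of the massless TOV equation \eqref{peqml}. The equation involves no intrinsic length or pressure scale (the equation of state $\rho=3p$ is homogeneous of degree one), so I expect a two-parameter family of scaling symmetries of the form $p(r) \mapsto a^2 p(a r/b)$ with $b^2 = a$ (to preserve both sides of the ODE dimensionally). The specific scaling in the lemma corresponds to $a = e^{2\kappa}$ and the compatible choice $b = e^{\kappa}$.

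Concretely, I would set $q(r) := e^{4\kappa} p^\ast_0(e^{2\kappa} r)$ and verify two things: (i) $q(0) = e^{4\kappa} p^\ast_0(0) = e^{4\kappa}/3$, matching the initial condition \eqref{piv} for $p^\ast_\kappa$; and (ii) $q$ solves \eqref{peqml}. For (ii), substitute $\tau = e^{2\kappa} r$ and change variables $\sigma = e^{2\kappa} s$ inside the mass integral to get
\[
\int_0^r s^2 \, 3 q(s)\, ds = e^{-2\kappa}\int_0^\tau \sigma^2\, 3 p^\ast_0(\sigma)\, d\sigma .
\]
Dividing by $r = e^{-2\kappa}\tau$ shows that the denominator $1 - \frac{8\pi}{r}\int_0^r s^2 3q\, ds$ equals $1 - \frac{8\pi}{\tau}\int_0^\tau \sigma^2 3p^\ast_0\, d\sigma$, i.e., it is scale-invariant. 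A parallel computation for the bracket $\frac{4\pi}{r^2}\int_0^r s^2 3q\, ds + 4\pi r q$ produces the overall prefactor $e^{2\kappa}$ times the analogous bracket evaluated at $\tau$. Combined with the factor $4q = 4 e^{4\kappa} p^\ast_0(\tau)$ in the numerator, the full right-hand side of \eqref{peqml} evaluated at $q$ becomes $e^{6\kappa}$ times $(p^\ast_0)'(\tau)$. On the other hand, $q'(r) = e^{6\kappa} (p^\ast_0)'(\tau)$ by the chain rule, so the two sides agree.

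Since $q$ satisfies \eqref{peqml} with the same initial value as $p^\ast_\kappa$, uniqueness of the solution to \eqref{peqml}, \eqref{piv} (which follows from standard ODE theory once one controls the denominator via Buchdahl's inequality \eqref{buchdahl}, exactly as in the existence argument cited from \cite[Thm.~3.4]{Rein94}) gives $q = p^\ast_\kappa$, as claimed.

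The main obstacle is essentially bookkeeping: tracking the homogeneity weights of each term in \eqref{peqml} to confirm that the massless equation truly is scale-invariant. Crucially, the relation $S(p) = 3p$ (radiative equation of state) is itself homogeneous of degree one in $p$, which is what makes the scaling close; for the massive equation \eqref{peq} the corresponding $S = P^{-1}$ is not homogeneous and no such exact rescaling exists — that is precisely why one first approximates $p_\kappa$ by $p^\ast_\kappa$ (Lemma~\ref{p-pml}) and then uses the scaling of $p^\ast_\kappa$ to reduce the large-$\kappa$ asymptotics to the behavior of the single function $p^\ast_0$.
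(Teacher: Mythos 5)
Your proof is correct and follows essentially the same route as the paper: both arguments rest on the scale invariance of the massless equation \eqref{peqml} under $p(r)\mapsto \alpha^{-2}p(r/\alpha)$ with $\alpha=e^{-2\kappa}$, matching of the initial value $e^{4\kappa}/3$, and uniqueness of the solution to the initial value problem. The only cosmetic difference is direction: the paper rescales $p^\ast_\kappa$ down to initial datum $1/3$ by reusing the computation \eqref{zetaeq} from Lemma~\ref{p-pml}, whereas you rescale $p^\ast_0$ up and verify the substitution directly.
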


\begin{proof}
As we noted in the proof of Lemma~\ref{p-pml} the rescaled function
$\sigma^\ast(\tau) = e^{-4\kappa}p^\ast_\kappa(r)$ like $p^\ast_\kappa$
solves \eqref{peqml},
but to the initial data $\sigma^\ast(0)=\frac{1}{3}=p^\ast_0(0)$. Hence
$e^{-4\kappa}p^\ast_\kappa(r)=p^\ast_0(\tau)=p^\ast_0(e^{2\kappa}r)$,
which is the assertion.
\end{proof}

In view of Lemma~\ref{pmlscale} we now need to understand the
behavior of $p^\ast_0 (r)$ for large $r$.

\begin{lemma} \label{dynsys}
  Let $\rho^\ast_0$ and $m^\ast_0$ denote the quantities induced by $p^\ast_0$.
  Then for any $\gamma \in ]0,3/2[$,
  \[
  \left|r^2 \rho^\ast_0 (r) - \frac{3}{56 \pi}\right|
  + \left| \frac{m^\ast_0(r)}{r} - \frac{3}{14}\right|
  \leq C r^{-\gamma},\ r>0.
  \]
\end{lemma}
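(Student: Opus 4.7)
The plan is to reformulate the massless TOV equation \eqref{peqml} as an autonomous planar dynamical system via the self-similar change of variables $s = \ln r$, identify the BKZ profile \eqref{bkzsolution} as a fixed point of that system, and then combine a phase-plane argument with a linearization to obtain polynomial decay.

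Introduce the dimensionless variables $U(r) := r^2 \rho^\ast_0(r)$ and $V(r) := m^\ast_0(r)/r$, so that the lemma becomes the assertion $|U(r) - 3/(56\pi)| + |V(r) - 3/14| \leq C r^{-\gamma}$. Using $\rho^\ast_0 = 3 p^\ast_0$, $(m^\ast_0)'(r) = 4\pi r^2\rho^\ast_0$, and \eqref{peqml}, a short computation shows that in the variable $s = \ln r$ the pair $(U,V)$ satisfies the autonomous planar system
\[
\frac{dU}{ds} = 2U - \frac{4U(3V + 4\pi U)}{3(1 - 2V)}, \qquad \frac{dV}{ds} = 4\pi U - V.
\]
In $[0,\infty)^2$ this system has exactly two stationary points: the origin and the BKZ point $(U_\ast, V_\ast) = (3/(56\pi),\, 3/14)$.

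First I would trap the trajectory $(U(s), V(s))$ in a bounded positively invariant subset of $(0,\infty)^2$. The regularity of $p^\ast_0$ at $r=0$ yields $U(r), V(r) \to 0$ as $r \to 0^+$, so the trajectory emerges from $(0,0)$ as $s \to -\infty$ along its one-dimensional unstable manifold (the origin linearization has eigenvalues $-1$ and $2$). Positivity of $\rho^\ast_0$ forces $U > 0$, while Buchdahl's inequality \eqref{buchdahl} forces $V < 4/9$ throughout.

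Second, I would show $(U(s), V(s)) \to (U_\ast, V_\ast)$ as $s \to +\infty$. Since the orbit lies in a bounded planar invariant region containing only the two fixed points above, the Poincar\'e--Bendixson theorem reduces this to excluding periodic orbits. I would attempt this by exhibiting a Dulac function $B(U,V)$ --- for instance of product form $B = U^\alpha V^\beta (1-2V)^\delta$ with exponents tuned after a short calculation --- whose divergence $\partial_U(B F_1) + \partial_V(B F_2)$ has a definite sign in the invariant region, where $(F_1, F_2)$ is the right-hand side above. This is the crux of the argument.

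Third, I would linearize at $(U_\ast, V_\ast)$. A direct computation shows the Jacobian there has trace $-3/2$ and determinant $7/2$, and hence complex-conjugate eigenvalues with strictly negative real part. A Hartman--Grobman or standard Lyapunov-equation argument then yields $|U(s) - U_\ast| + |V(s) - V_\ast| \lesssim e^{-\mu s}$ for some $\mu > 0$ and $s$ large, i.e.\ the polynomial bound $\lesssim r^{-\mu}$ for $r$ large. On any bounded range of $r>0$ both quantities are controlled by a fixed constant, which is absorbed into $C r^{-\gamma}$ by enlarging $C$; this yields the lemma for all $\gamma$ in the allowed range. The principal obstacle is the global convergence in the second step --- the explicit construction of a Dulac or Lyapunov function. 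A secondary technical point is matching the decay rate obtained from the linearization to the full range of $\gamma$ claimed in the statement, which may require a finer nonlinear bootstrap beyond the bare linear rate.
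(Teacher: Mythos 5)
Your reduction is exactly the paper's: the same variables $(U,V)=(r^2\rho^\ast_0,\,m^\ast_0/r)$, the same autonomous system in $s=\ln r$, the same two equilibria, and the same three-step scheme (bounded invariant region, Poincar\'e--Bendixson plus a Dulac function, linearization at the BKZ point \eqref{bkzsolution}). The two points you leave open are settled very simply. First, the region you actually justify, $\{U>0,\ 0<V<4/9\}$ (positivity plus Buchdahl \eqref{buchdahl}), is unbounded in $U$, so Poincar\'e--Bendixson does not yet apply; the paper closes it by adding the line $\{U=V\}$ and checking that the vector field points into the triangle bounded by $\{U=0\}$, $\{V=4/9\}$, $\{U=V\}$ along that line, while the unstable direction $(3,4\pi)$ at the origin points into the same triangle -- this is the piece missing from your first step. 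Second, the Dulac function you propose to hunt for lies in your own ansatz with $\alpha=-1$, $\beta=\delta=0$: taking $B=1/U$ gives $\partial_U(BF_1)+\partial_V(BF_2) = -\tfrac{16\pi}{3(1-2V)}-\tfrac1U<0$ on $\{U>0,\ V<1/2\}$, so there are no periodic orbits and the $\omega$-limit set of the orbit is the BKZ point; this ``crux'' is a one-line computation.

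Your caution about the decay rate is justified -- in fact more than you suspect. With trace $-3/2$ and determinant $7/2$ (which you compute correctly, and which agree with the Jacobian displayed in the paper) the eigenvalues are $-\tfrac34\pm\tfrac{\sqrt{47}}{4}\,i$; the paper's stated values $-\tfrac32\pm\tfrac{\sqrt{47}}{2}\,i$ are inconsistent with its own Jacobian and appear to be a slip in the quadratic formula. Since the orbit spirals into a hyperbolic focus, it approaches at exactly the rate $e^{\mathrm{Re}\,\lambda\, s}$ up to bounded factors, and no nonlinear bootstrap can beat the linear rate; so this argument (yours or the paper's) delivers the estimate only for $\gamma\in\,]0,3/4[$, not the stated range $]0,3/2[$. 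The discrepancy is harmless for the application: Proposition~\ref{P:EVTOBKZ} only uses some small $\gamma$ (its proof in any case restricts to $\gamma<1$), so any $\gamma<3/4$ suffices. Apart from these two repairs -- closing the invariant region and fixing the admissible range of $\gamma$ -- your proposal is the paper's proof.
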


The limiting constants above are the corresponding
values of the BKZ solution \eqref{bkzsolution}.

\begin{proof}
The proof relies on turning the massless steady state equations
into a planar, autonomous dynamical system.
We recall the Tolman-Oppenheimer-Volkov
equation \eqref{tov}, which as we showed above holds in both the Euler and the
Vlasov case, and combine it with the
massless equation of state $p=\rho/3$ and \eqref{yeq}. This leads to
\begin{eqnarray*}
\frac{d\rho}{dr}
&=&
-\frac{4 \rho}{1-\frac{2 m}{r}}
\left( \frac{m}{r^2} + \frac{4\pi}{3}r \rho \right),\\
\frac{d m}{dr}
&=&
4 \pi r^2 \rho.
\end{eqnarray*}
We rewrite this in terms of
$u_1 (r) = r^2 \rho(r),\ u_2(r) = m(r)/r$ to obtain
\begin{eqnarray*}
\frac{d u_1}{dr}
&=&
\frac{2}{r} u_1 - \frac{4 u_1}{1 - 2 u_2}
\left( \frac{1}{r} u_2 + \frac{4\pi}{3} \frac{1}{r} u_1 \right),\\
\frac{d u_2}{dr}
&=&
 4 \pi \frac{1}{r} u_1 - \frac{1}{r} u_2.
\end{eqnarray*}
Now we multiply both equations with $r$ and introduce
$w_1(\tau) = u_1(r),\ w_2(\tau)= u_2(r)$ with $\tau=\ln r$. Then
\begin{eqnarray}
\frac{d w_1}{d\tau}
&=&
\frac{2 w_1}{1 - 2 w_2}\left(1 - 4 w_2 -\frac{8\pi}{3} w_1 \right),\label{w1}\\
\frac{d w_2}{d\tau}
&=&
 4 \pi w_1 - w_2.\label{w2}
\end{eqnarray}
We denote the right hand side of the system
\eqref{w1}, \eqref{w2} by $F(w)$,
which is defined and smooth for $w_1\in \R$ and $w_2\in ]-\infty,1/2[$.
The system has two steady states:
\[
F(w) = 0\ \Leftrightarrow\ w=(0,0) \ \mbox{or}\
w=Z:=\left(\frac{3}{56\pi},\frac{3}{14}\right).
\]
We aim to show that the solution which corresponds to
$p^\ast_0$ converges to $Z$ exponentially fast as $\tau\to\infty$.
To this end we first observe that
\[
DF(Z) =
\begin{pmatrix}
  -\frac{1}{2}& -\frac{3}{4\pi}\\
  4 \pi & -1
\end{pmatrix}
\]
has eigenvalues
\[
\lambda_{1,2} = -\frac{3}{2} \pm \frac{\sqrt{47}}{2} i
\]
so that $Z$ is an exponential sink. On the other hand,
\[
DF(0,0) =
\begin{pmatrix}
  2 & 0 \\
  4 \pi & -1
\end{pmatrix}
\]
with eigenvalues $-1$ and $2$, stable direction $(0,1)$ and unstable direction
$(3,4\pi)$.
For the solution induced by $p^\ast_0$ it holds that $w(\tau) \to (0,0)$ for
$\tau\to -\infty$ which corresponds to $r\to 0$. Since the corresponding
trajectory lies in the first quadrant $\{w_1 >0,\ w_2>0\}$,
it must coincide with
the corresponding branch $T$ of the unstable manifold of $(0,0)$.
We want to show that the trajectory $T$ approaches the point $Z$.
To this end, let $D$ denote the triangular region bounded by the lines
\[
\{w_1=0\},\ \{w_2 =4/9\},\ \{w_1 = w_2\}.
\]
Clearly, $Z\in D\subset ]0,\infty[\times ]0,1/2[$ so that $D$ lies
in the domain where $F$ is defined and smooth. We want to show that
$T\subset D$. Since the unstable direction $(3,4\pi)$ points
into $D$, this is at least true for the part of $T$ close to
the origin. The line $\{w_1=0\}$ is invariant and can not be crossed by $T$; notice that
this line is also the stable manifold of the point $(0,0)$.
Due to Buchdahl's inequality, $T$ must lie below the line
$\{w_2 =4/9\}$. Finally, along the line $\{w_1 = w_2\}$ the vector
$(-1,1)$ is normal to it and points into the domain $D$.
Since along this line,
\[
(-1,1)\cdot F(w_1,w_1) = -F_1(w_1,w_1) + F_2(w_1,w_1)
= \frac{1}{1-2 w_1}
\left((4 \pi -3)\,w_1 + \frac{30-8\pi}{3} w_1^2\right) >0
\]
so that the vector field $F$ points into the domain $D$,
the trajectory $T$ cannot leave this domain.
Hence according to
Poincar\'{e}-Bendixson theory, the $\omega$-limit set of $T$ must
either coincide with $Z$, or with a periodic orbit. But
according to Dulac's negative criterion, the set $]0,\infty[ \times ]0,1/2[$
does not contain a periodic orbit, since
\[
\mathrm{div}\left(\frac{1}{w_1} F(w)\right) =
-\frac{8\pi}{3} \frac{2}{1-2 w_2} - \frac{1}{w_1} < 0.
\]
In view of the real part of the eigenvalues $\lambda_{1,2}$ it follows that
for any $0<\gamma < 3/2$ and all $\tau$ sufficiently large, i.e., all $r$
sufficiently large,
\[
|w(\tau) - Z| \leq C e^{-\gamma \tau}.
\]
When rewritten in terms of the original variables this is the assertion.
\end{proof}

We now combine the previous three lemmas to show that
for $\kappa$ large the steady state of the massive system which
corresponds to $p_\kappa$ is approximated well by the BKZ solution
on some interval close to the origin;
$\rho_\kappa, m_\kappa, \mu_\kappa,\lambda_\kappa$ denote quantities
induced by $p_\kappa$.

\begin{proposition}\label{P:EVTOBKZ}
There exist parameters $0<\alpha_1 < \alpha_2 < \frac{1}{4}$, $\kappa_0>0$
sufficiently large, and constants $\delta>0$ and $C>0$
such that on the interval
\[
[r_\kappa^1, r_\kappa^2] = [\kappa^{\alpha_1} e^{-2\kappa},\kappa^{\alpha_2} e^{-2\kappa}]
\]
and for any $\kappa\ge \kappa_0$
the following estimates hold:
\[
\left|r^2 \rho_\kappa (r) - \frac{3}{56 \pi}\right|,\
\left|r^2 p_\kappa (r) - \frac{1}{56 \pi}\right|,\
\left|\frac{m_\kappa(r)}{r} -\frac{3}{14}\right|,\
\left|2r\mu_\kappa'-1\right|,\
\left|e^{2\lambda_\kappa}-\frac{7}{4}\right|,\  \left|r \lambda_\kappa'\right|
\leq
C\,\kappa^{-\delta}.
\]
\end{proposition}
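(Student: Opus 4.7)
The plan is to chain Lemmas~\ref{p-pml}, \ref{pmlscale}, and~\ref{dynsys} via the rescaled radius $\tau := e^{2\kappa} r$, under which the target interval $[r_\kappa^1, r_\kappa^2]$ becomes $\tau \in [\kappa^{\alpha_1}, \kappa^{\alpha_2}]$. A direct change of variables in the definitions gives the identities $r^2 p^\ast_\kappa(r) = \tau^2 p^\ast_0(\tau)$ and $m^\ast_\kappa(r)/r = m^\ast_0(\tau)/\tau$. Applying Lemma~\ref{dynsys} at $\tau \geq \kappa^{\alpha_1}$ (large once $\kappa$ is large), and using $\rho^\ast_0 = 3 p^\ast_0$, yields
\[
\left|r^2 p^\ast_\kappa(r) - \tfrac{1}{56\pi}\right| + \left|\tfrac{m^\ast_\kappa(r)}{r} - \tfrac{3}{14}\right| \leq C \kappa^{-\alpha_1 \gamma}
\]
for any fixed $\gamma \in \,]0, 3/2[$, so the massless solution with the same central pressure is polynomially close to BKZ on the target interval.

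The next step is to replace $p^\ast_\kappa$ by the physical $p_\kappa$ via Lemma~\ref{p-pml}. On the interval, $e^{4\kappa} r^2 \leq \kappa^{2\alpha_2}$ and $e^{8\kappa} r^4 \leq \kappa^{4\alpha_2}$; provided $\alpha_2 < 1/4$, the doubly exponential factor $\exp(C\kappa^{4\alpha_2})$ is $o(e^{\varepsilon \kappa})$ for every $\varepsilon>0$. Multiplying the bound in Lemma~\ref{p-pml} by $r^2 \leq \kappa^{2\alpha_2} e^{-4\kappa}$ yields
\[
r^2 |p_\kappa(r) - p^\ast_\kappa(r)| \leq C e^{-2\kappa} (\kappa^{4\alpha_2} + \kappa^{6\alpha_2}) \exp(C \kappa^{4\alpha_2}),
\]
which is $O(\kappa^{-N})$ for every $N$. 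Combined with the previous display, $|r^2 p_\kappa - 1/(56\pi)| \leq C\kappa^{-\delta}$.

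The other quantities reduce to this one. By $({\mathrm P}1)$, $|S(p) - 3p| \leq C p^{1/2}$ (applicable in both the Vlasov and the Euler cases thanks to Proposition~\ref{veqstate}), so $|r^2 \rho_\kappa - 3 r^2 p_\kappa| \leq C r\,(r^2 p_\kappa)^{1/2} \leq C \kappa^{\alpha_2} e^{-2\kappa}$, which gives the density estimate. For $m_\kappa/r$, I would integrate $(m_\kappa - m^\ast_\kappa)'(s) = 4\pi s^2 (\rho_\kappa - \rho^\ast_\kappa)(s)$ over $[0, r]$, splitting $\rho_\kappa - \rho^\ast_\kappa$ into the piece $S(p_\kappa) - 3 p_\kappa$ (bounded by $C e^{2\kappa}$ via $p_\kappa(s) \leq e^{4\kappa}/3$) and the piece $3(p_\kappa - p^\ast_\kappa)$ (bounded by Lemma~\ref{p-pml}); dividing by $r \geq r_\kappa^1 = \kappa^{\alpha_1} e^{-2\kappa}$ produces an $e^{-2\kappa} \kappa^{O(1)}$ bound, now using $\alpha_1 > 0$. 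Then $e^{-2\lambda_\kappa} = 1 - 2 m_\kappa/r$ together with Buchdahl's inequality~\eqref{buchdahl} (which keeps the denominator bounded away from zero) gives $|e^{2\lambda_\kappa} - 7/4| \leq C |m_\kappa/r - 3/14|$, and the field equations in the form
\[
2 r \lambda_\kappa' = 1 + (8\pi r^2 \rho_\kappa - 1) e^{2\lambda_\kappa}, \qquad 2 r \mu_\kappa' = -1 + (8\pi r^2 p_\kappa + 1) e^{2\lambda_\kappa},
\]
coming from \eqref{ein1}--\eqref{ein2} or \eqref{eelambda}--\eqref{eemu}, translate the preceding estimates into the desired bounds on $|r \lambda_\kappa'|$ and $|2 r \mu_\kappa' - 1|$, using the BKZ identities $(8\pi \cdot \tfrac{3}{56\pi} - 1)\cdot \tfrac{7}{4} = -1$ and $(8\pi \cdot \tfrac{1}{56\pi} + 1)\cdot \tfrac{7}{4} = 2$.

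The main obstacle is the compatibility of the two cutoffs. The upper cutoff $\alpha_2 < 1/4$ is forced by the doubly exponential factor in Lemma~\ref{p-pml}, which controls the deviation between the massive/fluid equation of state and the radiation equation of state; the lower cutoff $\alpha_1 > 0$ is forced by the polynomial convergence rate in Lemma~\ref{dynsys}, which pushes $\tau$ into the basin of attraction of the BKZ fixed point. Any $0 < \alpha_1 < \alpha_2 < 1/4$ admits a proof, and then $\delta$ can be taken strictly smaller than $\alpha_1 \gamma$ with $\gamma$ any number in $\,]0, 3/2[$.
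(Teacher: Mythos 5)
Your proposal is correct and follows essentially the same route as the paper: chain Lemma~\ref{p-pml}, Lemma~\ref{pmlscale} and Lemma~\ref{dynsys} through the rescaling $\tau=e^{2\kappa}r$, transfer from $p$ to $\rho$ via the radiative estimate $|S(p)-3p|\leq Cp^{1/2}$, integrate to control $m_\kappa/r$, and then use \eqref{lambdadef} with Buchdahl's inequality and the field equations \eqref{ein1}--\eqref{ein2} (equivalently \eqref{eelambda}--\eqref{eemu}) to obtain the bounds on $e^{2\lambda_\kappa}$, $2r\mu_\kappa'$ and $r\lambda_\kappa'$. The only cosmetic difference is that you fix the interval $[\kappa^{\alpha_1}e^{-2\kappa},\kappa^{\alpha_2}e^{-2\kappa}]$ at the outset, whereas the paper first assembles a single error function $E_\kappa(r)$ and then chooses the window (with upper endpoint $\sim\kappa^{1/4}e^{-2\kappa}$) at the end; both yield $\delta$ of order $\gamma\alpha_1$.
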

\begin{proof}
First we note that by Lemma~\ref{pmlscale}, the radiative equation of state
\eqref{radeqstate}, and Lemma~\ref{dynsys},
\be \label{pml-bkz}
\left| r^2 p^\ast_\kappa (r) - \frac{1}{56 \pi}\right|
=
\frac{1}{3}
\left| r^2 \rho^\ast_\kappa (r) - \frac{3}{56 \pi}\right|
=
\frac{1}{3}
\left| (e^{2\kappa} r)^2 \rho^\ast_0(e^{2\kappa}r) - \frac{3}{56 \pi}\right|
\leq C e^{-2\kappa\gamma} r^{-\gamma}
\ee
so that together with Lemma~\ref{p-pml},
\be \label{p-pbkz}
\left| r^2 p_\kappa (r) - \frac{1}{56 \pi}\right|
\leq
C e^{6\kappa} \left(r^4 + e^{4\kappa} r^6 \right)
\exp\left(C\left(e^{4\kappa} r^2 + e^{8\kappa} r^4\right)\right)
+ C e^{-2\kappa\gamma} r^{-\gamma}.
\ee
Below we will simplify the above right hand side on a suitably chosen
$r$ interval close to the origin, but first we estimate various
other quantities in the same fashion. In order to estimate
the difference in $\rho$ we recall the rescaled variables introduced
in \eqref{rescale} and the estimate \eqref{S-3}. This implies that
\begin{eqnarray}
  \left|\rho_\kappa (r) - \rho^\ast_\kappa(r)\right|
  &=&
  \left|S(p_\kappa (r)) - 3 p^\ast_\kappa (r)\right|
  =
  \alpha^{-2}\left|\alpha^2S(\alpha^{-2}\sigma(\tau)) - 3 \sigma^\ast(\tau)\right|
  \nonumber \\
  &\leq&
  \alpha^{-2}\left(C e^{-2\kappa} + 3 |\sigma(\tau)-\sigma^\ast(\tau)|\right)
  \nonumber \\
  &\leq&
  C e^{2\kappa} \left(1+(\tau^2 + \tau^4) \exp(C(\tau^2 + \tau^4)\right).
  \label{rho-rhoml}
\end{eqnarray}
If we combine this with the definition of $\tau$ and \eqref{pml-bkz}
it follows that
\beas
\left| r^2 \rho_\kappa (r) - \frac{3}{56 \pi}\right|
&\leq&
C e^{2\kappa} r^2 + C e^{6\kappa} \left(r^4 + e^{4\kappa} r^6 \right)
\exp\left(C\left(e^{4\kappa} r^2 + e^{8\kappa} r^4\right)\right)
+ C e^{-2\kappa\gamma} r^{-\gamma}\\
&=:&
E_\kappa(r),
\eeas
where notice that $E_\kappa$ also bounds the right hand side in \eqref{p-pbkz}.
The estimate \eqref{rho-rhoml} implies that
\[
\left|\frac{m_\kappa(r)}{r} - \frac{m^\ast_\kappa(r)}{r} \right|
\leq C e^{2\kappa} r^2 + C e^{6\kappa} \left(r^4 + e^{4\kappa} r^6 \right)
\exp\left(C\left(e^{4\kappa} r^2 + e^{8\kappa} r^4\right)\right),
\]
and combining this with the second estimate in Lemma~\ref{dynsys}
and the scaling property in Lemma~\ref{pmlscale} it follows that
\[
\left|\frac{m_\kappa(r)}{r} -\frac{3}{14}\right| \leq C E_\kappa (r);
\]
integrating the $\rho$ estimate directly yields the same
result, but $E_\kappa$ is only integrable at the origin if we
choose $\gamma < 1$.
Next we use \eqref{lambdadef} to find that
\[
\left|e^{-2\lambda_\kappa} -\frac{4}{7}\right|
=\left|-2 \frac{m_\kappa(r)}{r} + 2\frac{3}{14}\right| \leq C E_\kappa (r),
\]
and using Buchdahl's inequality \eqref{buchdahl},
\[
\left|e^{2\lambda_\kappa} -\frac{7}{4}\right|
= e^{2\lambda_\kappa} \frac{7}{4} \left|e^{-2\lambda_\kappa} -\frac{4}{7}\right|
 \leq C E_\kappa (r).
\]
Using \eqref{eemu} for $\mu_\kappa'$ and $\mu_{\mathrm{BKZ}}'$ it follows that
\begin{eqnarray*}
\left| r \mu_\kappa' (r) -\frac{1}{2}\right|
&\leq&
e^{2\lambda_\kappa} \left(\left|\frac{m_\kappa(r)}{r}-\frac{3}{14}\right| +
\left| 4 \pi r^2 p_\kappa(r) - \frac{1}{14}\right|\right)
+
\left(\frac{3}{14} + \frac{1}{14}\right)
\left|e^{2\lambda_\kappa} -\frac{7}{4}\right|\\
&\leq&
C\, E_\kappa (r),
\end{eqnarray*}
where we have again used Buchdahl's inequality \eqref{buchdahl};
note that $\mu_{\mathrm{BKZ}}'$ refers to the BKZ solution~\eqref{bkzsolution}.
Finally, with the established bounds on $r^2\rho_\kappa$ and $e^{2\l_\kappa}$ above,
equation \eqref{ein1} immediately gives
\[
\left| r \lambda_\kappa' (r) - 0\right|
\leq
C\,E_\kappa (r).
\]
We now analyze the error term $E_\kappa(r)$ for $r>0$ and
$\kappa>0$ such that
\[
r_1(\kappa) \leq r e^{2\kappa} \leq r_2(\kappa),
\]
where $0<r_1(\kappa) < r_2(\kappa)$ and $r_2(\kappa) >1$.
Then
\[
E_\kappa(r)
\leq C r_1(\kappa)^{-\gamma} + C r_2 (\kappa)^6 e^{-2\kappa} e^{C r_2 (\kappa)^4}
\leq  C e^{-\gamma \ln r_1(\kappa)} + C e^{C^\ast r_2 (\kappa)^4 -2\kappa}.
\]
We choose
\[
r_1(\kappa) = \kappa^{\alpha_1} \ \mbox{with some}\ 0<\alpha_1 < \frac{1}{4}
\]
and
\[
r_2(\kappa)
= \left(\frac{2\kappa - \alpha_1 \gamma \ln \kappa}{C^\ast}\right)^{1/4}.
\]
Then for $\kappa$ large, $r_1(\kappa) < r_2(\kappa)$ and $1< r_2(\kappa)$
as required,
\[
E_\kappa (r)
\leq C \kappa^{-\gamma \alpha_1}\ \mbox{for}\
\kappa^{\alpha_1} e^{-2\kappa} \leq r \leq
\left(\frac{2\kappa - \alpha_1 \gamma \ln \kappa}{C^\ast}\right)^{1/4}
e^{-2\kappa},
\]
and with $\delta := \gamma \alpha_1$ and any
$\alpha_2 \in ]\alpha_1,\frac{1}{4}[$ the proof is complete.
\end{proof}

In the above proposition no information is provided for $\mu_\kappa$ itself.
Such an estimate is derived next.
\begin{proposition}\label{muass}
  Under the assumptions of Proposition~\ref{P:EVTOBKZ} it follows that
  \[
  C_\kappa \exp\left(-C \kappa^{-\delta}\ln \kappa\right)
  \leq \frac{e^{2\mu_\kappa(r)}}{r} \leq
  C_\kappa \exp\left(C \kappa^{-\delta}\ln \kappa\right)
  \]
  on the $r$ interval specified in Proposition~\ref{P:EVTOBKZ}.
  Here $C_\kappa >0$ does depend on $\kappa$, but $C>0$ does not.
  The point in this estimate is that the exponential terms on both sides
  converge to $1$ as $\kappa\to \infty$.
\end{proposition}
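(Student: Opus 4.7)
The plan is to extract the bound from the estimate $|2r\mu_\kappa'(r) - 1| \leq C\kappa^{-\delta}$ provided by Proposition~\ref{P:EVTOBKZ}, which on the interval $[r_\kappa^1, r_\kappa^2]$ can be rewritten as
\[
(2\mu_\kappa)'(r) = \frac{1}{r} + \frac{\eta_\kappa(r)}{r}, \qquad |\eta_\kappa(r)| \leq C\kappa^{-\delta}.
\]
This is an ODE for $2\mu_\kappa$ that we simply integrate from the left endpoint $r_\kappa^1$ to an arbitrary $r \in [r_\kappa^1, r_\kappa^2]$.

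Integration yields
\[
2\mu_\kappa(r) - 2\mu_\kappa(r_\kappa^1) = \ln\!\left(\frac{r}{r_\kappa^1}\right) + \int_{r_\kappa^1}^{r} \frac{\eta_\kappa(s)}{s}\,ds,
\]
and the error integral is bounded in absolute value by $C\kappa^{-\delta}\ln(r/r_\kappa^1)$. Exponentiating and dividing by $r$ gives
\[
\frac{e^{2\mu_\kappa(r)}}{r} = \frac{e^{2\mu_\kappa(r_\kappa^1)}}{r_\kappa^1}\,\exp\!\left(\int_{r_\kappa^1}^{r}\frac{\eta_\kappa(s)}{s}\,ds\right),
\]
so setting $C_\kappa := e^{2\mu_\kappa(r_\kappa^1)}/r_\kappa^1$ (which absorbs all $\kappa$-dependence that is not of interest) yields the representation
\[
\frac{e^{2\mu_\kappa(r)}}{r} = C_\kappa\,\exp(\xi_\kappa(r)), \qquad |\xi_\kappa(r)| \leq C\kappa^{-\delta}\ln(r/r_\kappa^1).
\]

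The final step is to use the definition $[r_\kappa^1,r_\kappa^2] = [\kappa^{\alpha_1}e^{-2\kappa}, \kappa^{\alpha_2}e^{-2\kappa}]$, which gives
\[
\ln\!\left(\frac{r}{r_\kappa^1}\right) \leq \ln\!\left(\frac{r_\kappa^2}{r_\kappa^1}\right) = (\alpha_2 - \alpha_1)\ln\kappa \leq C\ln\kappa,
\]
so $|\xi_\kappa(r)| \leq C\kappa^{-\delta}\ln\kappa$, which yields the two-sided bound in the statement. Since $\kappa^{-\delta}\ln\kappa \to 0$ as $\kappa\to\infty$, the exponential prefactors tend to $1$, as claimed. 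There is no real obstacle here: the argument is purely an ODE integration, and the only subtlety is choosing to anchor the integration at $r_\kappa^1$ (rather than, say, the origin, where $\mu_\kappa'$ has the bad behavior $1/(2r)$ and the approximation is not controlled) so that the error factor remains $\kappa$-small throughout the interval of interest.
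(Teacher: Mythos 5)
Your proposal is correct and follows essentially the same route as the paper: both integrate $2\mu_\kappa'$ from the anchor point $r_\kappa^1$, split off the exact $\ln(r/r_\kappa^1)$ term, bound the remainder using $|2r\mu_\kappa'-1|\leq C\kappa^{-\delta}$ from Proposition~\ref{P:EVTOBKZ}, and conclude via $\ln(r_\kappa^2/r_\kappa^1)=(\alpha_2-\alpha_1)\ln\kappa$, with $C_\kappa=e^{2\mu_\kappa(r_\kappa^1)}/r_\kappa^1$. No gaps.
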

\begin{proof}
Clearly,
\begin{eqnarray*}
  2 \mu_\kappa(r)
  &=&
  2 \mu_\kappa(r_\kappa^1) +\int_{r_\kappa^1}^r \frac{1}{s} ds
  + \int_{r_\kappa^1}^r \left(2\mu_\kappa'(s) - \frac{1}{s}\right) ds\\
  &=&
  2 \mu_\kappa(r_\kappa^1) + \ln\left(\frac{r}{r_\kappa^1}\right)
  + \int_{r_\kappa^1}^r \left(2\mu_\kappa'(s) - \frac{1}{s}\right) ds.
\end{eqnarray*}
Hence
\begin{eqnarray*}
  e^{2 \mu_\kappa(r)}
  &=&
  e^{2 \mu_\kappa(r_\kappa^1)} \frac{r}{r_\kappa^1}
  \exp\left(\int_{r_\kappa^1}^r \left(2\mu_\kappa'(s) - \frac{1}{s}\right) ds\right).
\end{eqnarray*}
Using Proposition~\ref{P:EVTOBKZ} it follows that
\[
\exp\left(\int_{r_\kappa^1}^r \left(2\mu_\kappa'(s) - \frac{1}{s}\right) ds\right)
\leq
\exp\left(\int_{r_\kappa^1}^{r_\kappa^2}
\left(C \kappa^{-\delta} \frac{1}{s}\right) ds\right)
= \left(\frac{r_\kappa^2}{r_\kappa^1}\right)^{C \kappa^{-\delta}}
= e^{C (\alpha_2 - \alpha_1)\kappa^{-\delta}\ln\kappa}
\]
as desired, and the lower estimate is completely analogous.
\end{proof}

\section{Stability analysis for the Einstein-Vlasov system}\label{sec_stab_EV}
\setcounter{equation}{0}

In this section we fix some ansatz function $\Phi$, but we need to strengthen
the assumptions on $\Phi$ as follows.

\smallskip

\noindent
{\bf Assumption ($\Phi$2).} $\Phi$ satisfies $(\Phi\,1)$,
and $\Phi\in C^1([0,\infty[)\cap C^2(]0,\infty[)$
with $\Phi'(\eta)>0$ for $\eta>0$.

\smallskip

\noindent
A typical example is
\[
\Phi(\eta) = c\, \eta^k \ \mbox{for}\ \eta \geq 0,\
\Phi(\eta)=0\ \mbox{for}\ \eta < 0
\]
with constants $c>0$ and $1\leq k<3/2$;
notice that we require the right hand side derivative at $0$ to exist,
but it need not vanish.

We consider a corresponding steady state
$(f_\kappa,\l_\kappa,\mu_\kappa)$ as obtained in Proposition~\ref{ssfamilies}
with some $\kappa >0$.
We aim to prove that this steady state is linearly unstable when $\kappa$
is sufficiently large.
In order to keep the notation short, we write
\[
f_\kappa(x,v) = \Phi\left(1-\frac{E_\kappa}{E_\kappa^0}\right)
= \phi_\kappa(E_\kappa),
\]
where we recall the definition of the particle energy
\[
E_\kappa = e^{\mu_\kappa(x)}\p=e^{\mu_\kappa(x)}\sqrt{1+|v|^2}
\]
and note the fact that the cut-off energy depends on $\kappa$
as well; $E^0_\kappa := e^{y_\kappa(\infty)} = e^{\mu_\kappa(R_\kappa)}$
where $R_\kappa$ is the radius of the spatial support of the steady state,
cf.~Section~\ref{setup}. By ($\Phi$2), $\phi_\kappa\in C^1([0,E^0_\kappa])$ with
\be \label{phiprop}
\phi_\kappa'(E) <0\ \mbox{for}\ 0\leq E<E_\kappa^0,
\ \phi_\kappa'(E) =0\ \mbox{for}\ E>E_\kappa^0.
\ee
The basic strategy on which our instability result relies is
the following. The Einstein-Vlasov system conserves the ADM mass.
The second variation of the ADM mass at the given steady state
is a conserved quantity for the linearized system, restricted
to linearly dynamically accessible states. The key to linear
instability then is to prove that there is a negative energy direction
for the linearized system, i.e., a linearly dynamically accessible state
on which the second variation of the ADM mass is negative.
This is the content of Theorem~\ref{T:NEGATIVEA}.

This fact rather directly implies a linear, exponential instability result,
cf.~Theorem~\ref{lininstabEV}. However, much more precise instability
information, including the existence of a simple growing mode,
can be derived from Theorem~\ref{T:NEGATIVEA}. In order to do so the
Hamiltonian character of the linearized system must be exploited
which leads to the desired spectral properties of the generator of
the $C^0$ group induced by the linearized system,
cf.~Theorem~\ref{thm:main VE}. In order to emphasize the basic mechanism
leading to our instability result we try to bring in the more functional
analytic, abstract tools and terminology only when they are finally
needed to derive Theorem~\ref{thm:main VE}.

\subsection{Dynamic accessibility and the linearized Einstein-Vlasov system}
\label{linEV}

Sufficiently regular solutions of the spherically symmetric
Einstein-Vlasov system conserve the ADM mass
\[ 
\mathcal H_{\text{ADM}}(f) = \iint \p f(x,v)\,dv\,dx.
\]
In addition, the Einstein-Vlasov flow conserves the  Casimir functionals
\be \label{Casimir}
\mathcal C (f) = \iint e^{\l_f}\chi(f)\,dv\,dx.
\ee
Here $\chi\in C^1(\mathbb R)$ is an arbitrary, prescribed function with
$\chi(0)=0$, and $\l_f$ is defined by
\be \label{lambdaf}
e^{-2\lambda_f} = 1 - \frac{2  m_f (r)}{r} =
1 - \frac{8\pi}{r}\int_{|y|\leq r}\int \p f(y,v)\,dv\,dy,
\ee
which is the solution of \eqref{ein1} induced by $f$
and satisfying \eqref{boundc0}, cf.~\eqref{lambdadef}.
In a stability analysis it is natural to restrict the admissible
perturbations of the given steady state to such as preserve all the
Casimir invariants. These {\em dynamically accessible} perturbations
form the symplectic leaf $\mathcal S_{f_\kappa}$ through the given steady state.
Its formal tangent space at $f_\kappa$ is the set of
{\em linearly dynamically accessible perturbations}~\cite{KM}.
In order to proceed we need to recall the usual Poisson bracket
\[
\{f,g\}:=\nabla_x f \cdot\nabla_v g-\nabla_v f \cdot\nabla_x g
\]
of two continuously differentiable functions $f$ and $g$ of
$x,v \in \R ^3$. The product rule for the Poisson bracket reads
\[
\{f,g h\}= \{f,g\} h + \{f,h\} g.
\]
In addition we denote the radial component of a vector $v\in \R^3$ by
\be\label{E:WDEF}
w:= \frac{x\cdot v}{r},\ x,v \in \R^3,\ r=|x|.
\ee
An important subclass of linearly dynamically accessible states are of the form
\be\label{dynacdef}
f_h
= \phi_\kappa'(E_\kappa) \left(e^{-\lambda_\kappa}\{h,E_\kappa\}
+  e^{\mu_\kappa} \frac{w^2}{\p} \lambda_h\right)
= - \mathcal B_\kappa (\phi_\kappa' h);
\ee
the operator $\mathcal B_\kappa$ is defined in Definition~\ref{defXT},
the concept of linearly dynamically accessible states
which we adopt later is made precise in Definition~\ref{D:DYNACCDEF}.
The generating function $h=h(x,v)$ of the perturbation should be
spherically symmetric and $C^1$. In this case, we denote the perturbed metric
field $\l$ by $\l_h$ and we have
\be \label{dynacl}
\lambda_h = 4\pi r  e^{\mu_\kappa+\lambda_\kappa}
\int \phi_\kappa'(E_\kappa)\,h(x,v)\,w\,dv.
\ee
On linearly dynamically accessible states of the form~\eqref{dynacdef}
the second variation of the ADM mass $\mathcal H_{\text{ADM}}$ takes the form
\begin{align} \label{E:ADEF}
\mathcal{A}_\kappa(h,h)
=
\iint \frac{e^{\lambda_\kappa}}{|\phi_\kappa'(E_\kappa)|}(f_h)^2\,dv\,dx
-\int_0^\infty e^{\mu_\kappa - \lambda_\kappa}
\left(2 r \mu_\kappa' +1\right)\, (\lambda_h)^2\,dr,
\end{align}
see~\cite{HaRe2013}.

\begin{remark}
We emphasize that the above variational structure and the discussion of dynamic accessibility is limited to
radially symmetric perturbations; a detailed derivation is given in~\cite{HaRe2013}. However, in our stability analysis of the
steady states of the EV-system
variational methods do not play a role.
\end{remark}

The relation of the quadratic form $\mathcal{A}_\kappa$ to the non-linear
Einstein-Vlasov system is not relevant for the present analysis,
but it will become so in a possible extension of the linear instability
result to the non-linear system. For the present purposes it is important
that $\mathcal{A}_\kappa$ is conserved along the linearized flow, which we
need to recall next for the special class of perturbations of the
form~\eqref{dynacdef}.
The dynamics of the generating function $h$ is determined by
\begin{align} \label{heq}
  \partial_t h + e^{-\lambda_\kappa}\{h,E_\kappa\} +
  e^{\mu_\kappa} \lambda_h \frac{w^2}{\p} - e^{\mu_\kappa}\p \mu_h = 0,
\end{align}
where $\lambda_h$ and $\mu_h$ are determined by
\begin{eqnarray}
e^{-2\lambda_\kappa}(r \lambda_h'- \lambda_h\,(2r\lambda_\kappa'-1))
&=&
4\pi r^2 \rho_h, \label{lineelambda}\\
e^{-2\lambda_\kappa}(r \mu_h'- \lambda_h\,(2r\mu_\kappa'+1))
&=&
4\pi r^2 p_h,\label{lineemu}
\end{eqnarray}
together with the boundary conditions
$\lambda_h(t,0)=0=\mu_h(t,\infty)$,
and
\begin{eqnarray}
\rho_h(t,r)
&=&
\int \p f_h(t,x,v)\,dv, \label{linrhodef}\\
p_h(t,r)
&=&
\int \frac{w^2}{\p } f_h(t,x,v)\,dv.\label{linpdef}
\end{eqnarray}
It can be checked from~\eqref{lineelambda} that $\l_h$ is indeed
given by the formula~\eqref{dynacl}, cf.~\cite{HaRe2013}.

Let us denote
\be \label{suppss}
D_\kappa := \{(x,v)\in \R^6 \mid E_\kappa(x,v) < E_\kappa^0\}
\ee
so that $\overline{D}_\kappa$ is the support of the steady state
under consideration.
A simple iteration argument the details of which are indicated
in \cite{HaRe2013} shows that for any $\mathring{h} \in C^1(\overline{D}_\kappa)$
there exists a unique solution
$h\in C^1([0,\infty[;C(\overline{D}_\kappa)) \cap C ([0,\infty[;C^1(\overline{D}_\kappa))$
of the above system
with $h(0)=\mathring{h}$; in passing we note that Eqn.~(5.10) in
\cite{HaRe2013}, which corresponds to \eqref{heq}, contains a sign error.
Moreover, we note that the characteristic flow of \eqref{heq}
is the one of the steady state solution and leaves the set
$\overline{D}_\kappa$ invariant.
Under the present regularity assumptions on $\Phi$ this characteristic flow
is $C^3$, and the limiting factor concerning regularity
of solutions to the above system is the factor $\phi_\kappa'$
in \eqref{dynacdef}.
We recall from \cite{HaRe2013} that the energy
$\mathcal{A}_\kappa(h,h)$ is conserved along solutions of the linearized
system stated above.

\begin{remark} \label{suppfh}
  It should be noted that by \eqref{phiprop} and \eqref{dynacdef}
  the dynamically accessible perturbation $f_h$ induced by some generator
  $h$ vanishes outside $\overline{D}_\kappa$ and that only the values
  of $h$ on that set matter for the definition of $f_h$ and
  $\mathcal{A}_\kappa(h,h)$.
\end{remark}

In \cite{HaRe2014} it was shown that for $\kappa$ sufficiently small
the quantity $\mathcal{A}_\kappa$ is positive definite, which leads
to a linear stability result. Here we aim for the opposite.

\subsection{A negative energy direction for $\kappa$ sufficiently large}
\label{SS:NEGATIVE}

Our next aim is to show the existence of a linearly dynamically accessible
perturbation for which the bilinear form $\mathcal A_\kappa$ is negative.
\begin{theorem}\label{T:NEGATIVEA}
There exists $\kappa_0>0$ such that for all $\kappa>\kappa_0$
there exists a spherically symmetric function
$h\in C^2(\R^6)$
which is odd in $v$, such that
\[
\mathcal A_\kappa (h,h):= \mathcal A_\kappa (f_h,f_h) < 0,
\]
where $f_h$ is given by~\eqref{dynacdef}.
\end{theorem}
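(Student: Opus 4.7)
My plan is to exploit the split of $\mathcal{A}_\kappa(h,h)$ in~\eqref{E:ADEF} into a non-negative kinetic piece and a non-negative gravitational piece and to construct, via a radial-displacement ansatz localized on the BKZ annulus, an $h$ for which the gravitational term strictly dominates. I will take
\[
h(x,v) = a(r)\,w,\qquad w = \frac{x\cdot v}{r},
\]
with $a$ a smooth, compactly supported radial profile on the annulus $[r_\kappa^1, r_\kappa^2]$ of Proposition~\ref{P:EVTOBKZ}. This $h$ is spherically symmetric, odd in $v$, and $C^2$; the corresponding $f_h$ vanishes outside $\overline{D}_\kappa$ thanks to the factor $\phi_\kappa'(E_\kappa)$ in~\eqref{dynacdef}, and it is the natural Vlasov generator of a radial displacement of the steady matter.

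Direct differentiation yields
\[
\{h, E_\kappa\} = e^{\mu_\kappa}\Bigl[a'\frac{w^2}{\p} + \frac{a}{r\p}(|v|^2 - w^2) - a\,\mu_\kappa'\,\p\Bigr],
\]
and~\eqref{dynacl} gives $\lambda_h = 4\pi r e^{\mu_\kappa+\lambda_\kappa}a(r)\int \phi_\kappa'(E_\kappa)w^2\,dv$. Substituting into $\mathcal{A}_\kappa(h,h)$ and performing the angular $v$-integration reduces the functional to a one-dimensional quadratic form
\[
\mathcal{A}_\kappa(h,h) = \int_0^{\infty}\!\bigl[A_\kappa(r)(a')^2 + B_\kappa(r)\,a'a + C_\kappa(r)\,a^2\bigr]\,dr,
\]
whose coefficients $A_\kappa, B_\kappa, C_\kappa$ are explicit momentum integrals of $\phi_\kappa'(E_\kappa)$ weighted by steady-state metric factors; the $\lambda_h^2$ contribution from the kinetic piece and the gravitational term both feed into $C_\kappa$ with opposite signs. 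After integrating the cross term by parts, one is left with a functional in $a^2$ and $(a')^2$ alone.

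To extract the large-$\kappa$ behavior I will rescale $\tilde v = e^{\mu_\kappa(r)}v$, under which $E_\kappa\approx|\tilde v|$ and the momentum integrals of $\phi_\kappa'(E_\kappa)$ become, up to controlled error, $\kappa$-independent integrals of $\phi_\kappa'$ against powers of $|\tilde v|$. Combined with the BKZ asymptotics $e^{2\lambda_\kappa}\to 7/4$, $2r\mu_\kappa'+1\to 2$, $\rho_\kappa\sim 3/(56\pi r^2)$, $p_\kappa\sim 1/(56\pi r^2)$ of Propositions~\ref{P:EVTOBKZ} and~\ref{muass}, this produces a leading-order approximation by an explicit BKZ limit functional. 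A further rescaling $a(r)=\chi(r/r_\kappa^\ast)$ with $r_\kappa^\ast = e^{-2\kappa}$ places the $a^2$ and $(a')^2$ contributions on equal footing in $\kappa$ and makes the leading coefficient of $\int\chi^2$ explicit. Choosing a bump $\chi$ supported on $[\kappa^{\alpha_1},\kappa^{\alpha_2}]$ that realizes this negative coefficient then yields $\mathcal{A}_\kappa(h,h)<0$ for $\kappa$ sufficiently large.

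The hard part will be the ultrarelativistic reduction of the momentum integrals of $\phi_\kappa'(E_\kappa)$: because $\phi_\kappa$ depends on $\kappa$ through the cut-off energy $E_\kappa^0$ and because $v$ ranges up to order $e^{-\mu_\kappa}E_\kappa^0$ on the support, tracking the leading $\kappa$-dependence requires careful bookkeeping of the BKZ error estimates from Propositions~\ref{P:EVTOBKZ} and~\ref{muass} under the substitution $v\mapsto\tilde v$. A secondary technical issue is to verify, after integrating the cross term by parts, that the sign of the leading coefficient of the BKZ limit functional is preserved and that the $(a')^2$ contribution does not overwhelm the negative $a^2$ contribution under the chosen scaling.
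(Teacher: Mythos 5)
Your overall strategy is the same as the paper's: a generator of the form $h=a(r)\,w$ localized to the annulus $[r_\kappa^1,r_\kappa^2]$ of Proposition~\ref{P:EVTOBKZ}, reduction of $\mathcal A_\kappa(h,h)$ to a radial quadratic form, and use of the BKZ asymptotics of Propositions~\ref{P:EVTOBKZ} and~\ref{muass}. However, as written the proposal stops exactly where the content of the theorem lies, and two steps would not go through in the form you describe. First, the negativity of the ``leading coefficient of $\int\chi^2$'' is never established; you defer it to a choice of bump ``that realizes this negative coefficient'' and list the verification of the sign, and of the $(a')^2$ terms not overwhelming it, as a secondary issue. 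This is the crux, and it is not automatic: in the paper the negativity comes from a specific mechanism tied to the weighted profile $g=e^{\frac12\mu_\kappa+\lambda_\kappa}\chi_\kappa$. With that weight the kinetic $v$-integral takes the form $\int|\phi_\kappa'|\bigl(\frac{w^2}{2\p}+\p-\frac{1}{r\mu_\kappa'}\frac{|v|^2-w^2}{\p}\bigr)^2dv$, which via a pointwise inequality and the integration-by-parts identities of Lemma~\ref{L:ABIDENTITY} is bounded by $4e^{-\mu_\kappa}(\rho_\kappa+p_\kappa)-\frac94 e^{-\mu_\kappa}p_\kappa$; the $4(\rho_\kappa+p_\kappa)$ part cancels against the gravitational term $(2r\mu_\kappa'+1)(\mu_\kappa'+\lambda_\kappa')^2$ through \eqref{laplusmu} and the BKZ asymptotics, and the surviving $-\frac94 p_\kappa$ produces, after integration over the annulus, a term of size $-C_\kappa\ln\kappa$ (the logarithmic gain from $\int dr/r$ over an annulus of aspect ratio $\kappa^{\alpha_2-\alpha_1}$). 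Moreover the plateau structure of $\chi_\kappa$ (identically $1$ on $[2r_\kappa^1,r_\kappa^2/2]$, derivative bounded by $1/r$ at the edges) is what confines the derivative and cut-off contributions to $O(C_\kappa)$; a generic bump with $(a')^2$ spread over the whole annulus puts those terms ``on equal footing'' with the bulk, which is precisely what must be avoided, and your proposal gives no argument that the full quadratic form (including cross and derivative terms) remains negative in that case.

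Second, your route to the large-$\kappa$ asymptotics of the momentum integrals -- rescaling $\tilde v=e^{\mu_\kappa}v$ and claiming the integrals of $\phi_\kappa'$ become essentially $\kappa$-independent -- is not supported by the propositions you invoke. Proposition~\ref{P:EVTOBKZ} controls only macroscopic quantities ($\rho_\kappa$, $p_\kappa$, $m_\kappa/r$, $r\mu_\kappa'$, $\lambda_\kappa$), not the distribution function or its $v$-moments against arbitrary weights, and after the rescaling the integrals still depend on $\kappa$ through the cut-off $E_\kappa^0$ and through $e^{\mu_\kappa(r)}$. The paper avoids this issue entirely: every moment of $\phi_\kappa'(E_\kappa)$ that appears is first converted, by the $v$-integration-by-parts identities of Lemma~\ref{L:ABIDENTITY}, into $\rho_\kappa$, $p_\kappa$ plus sign-definite remainders, and only then are the macroscopic BKZ estimates applied. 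Without this device (or an equivalent one), the ``hard part'' you identify has no route to completion, and the remaining steps of your plan cannot be carried out as described.
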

For the proof of this result we first establish two auxiliary results.
It will be convenient to use, in addition to the radial momentum component
$w=x\cdot v/r$, also the modulus of the tangential momentum component, i.e.,
\[
\omega := \left|\frac{x\times v}{r}\right|;\ |v|^2 = w^2 + \omega^2;
\]
$L= r^2 \omega^2$ is the modulus of angular momentum squared mentioned
in the introduction.
\begin{lemma}\label{L:ABIDENTITY}
The following identities hold:
\begin{align}
\int |\phi_\kappa'(E_\kappa)|\frac{w^4}{\p^2}\,dv
& =
3 e^{-\mu_\kappa} p_\kappa
- e^{-\mu_\kappa}\int \phi_\kappa(E_\kappa)\frac{w^4}{\p^3}\,dv,
\label{31}\\ 
\int |\phi_\kappa'(E_\kappa)|\,w^2\,dv
& =
\frac{1}{2} \int |\phi_\kappa'(E_\kappa)|\,\omega^2\,dv
= e^{-\mu_\kappa}(\rho_\kappa + p_\kappa),
\label{11} \\
\int |\phi_\kappa'(E_\kappa)|\frac{\omega^4}{\p^2}\,dv
& =
8 e^{-\mu_\kappa} p_\kappa
- e^{-\mu_\kappa}\int \phi_\kappa(E_\kappa)\frac{\omega^4}{\p^3}\,dv,
\label{31om}\\
\int |\phi_\kappa'(E_\kappa)|\frac{w^2 \omega^2}{\p^2}\,dv
& =
2 e^{-\mu_\kappa} p_\kappa
- e^{-\mu_\kappa}\int \phi_\kappa(E_\kappa)\frac{w^2 \omega^2}{\p^3}\,dv
\label{mixed}.
\end{align}
\end{lemma}

\begin{proof}
  First we observe that all the integrands depend only on $w$ and $\omega$
  so we can change variables using
  \[
  dv = 2\pi\, \omega\, d\omega\, dw.
  \]
  With $E_\kappa =e^{\mu_\kappa} \sqrt{1+w^2+\omega^2}$ it follows that
  \[
  \partial_w \phi_\kappa(E_\kappa)=\phi_\kappa'(E_\kappa)\,e^{\mu_\kappa}\frac{w}{\p},\
  \partial_\omega \phi_\kappa(E_\kappa)=\phi_\kappa'(E_\kappa)\,
  e^{\mu_\kappa}\frac{\omega}{\p}.
  \]
  Using the first identity for the integrals containing $w^2$ or $w^4$ and the
  second one for the others, the assertions follow by integration by parts with
  respect to $w$ or $\omega$ and the definitions of $\rho_\kappa$ and $p_\kappa$;
  no boundary terms arise since $\phi_\kappa(E_\kappa)$ vanishes on the boundary
  of its support and the powers of $\omega$ vanish at $\omega = 0$.
\end{proof}
\begin{lemma}\label{phiprimeint}
  On the interval $[r_\kappa^1, r_\kappa^2]$ introduced in
  Proposition~\ref{P:EVTOBKZ} and for $\kappa$ sufficiently large,
  \[
  \int |\phi_\kappa'(E_\kappa)|\,dv
  \leq \frac{1}{4} e^{-\mu_\kappa} p_{\kappa}.
  \]
\end{lemma}
\begin{proof}
  The same change of variables which yields the formulas \eqref{rhoyrelv}
  implies that
  \[
  \int |\phi_\kappa'(E_\kappa)|\,dv
  = e^{-\mu_\kappa} 4\pi e^{2 y_\kappa} \int_0^{1-e^{-y_\kappa}} \Phi'(\eta)\, (1-\eta)\,
  \sqrt{(1-\eta)^2 - e^{-2y_\kappa}}\, d\eta;
  \]
  the point here is that only the factor $e^{2 y_\kappa}$ appears, while
  for $p_\kappa$ (and $\rho_\kappa$) the factor is $e^{4 y_\kappa}$; recall also that
  $\phi_\kappa(E_\kappa)=\Phi(1-E_\kappa/E_\kappa^0)$, cf.~\eqref{trueansatz}. Thus
  \[
  \int |\phi_\kappa'(E_\kappa)|\,dv \leq e^{-\mu_\kappa} 4\pi e^{2 y_\kappa}
  \int_0^1 \Phi'(\eta)\, (1-\eta)^2  d\eta =: e^{-\mu_\kappa} C e^{2 y_\kappa}
  \leq e^{-\mu_\kappa} C e^{2 \kappa};
  \]
  the constant $C$ is finite, since by Assumption $(\Phi2)$ the last integral
  converges.
  On the other hand, Proposition~\ref{P:EVTOBKZ} implies that on the interval
  $[r_\kappa^1, r_\kappa^2]$ and for $\kappa$ sufficiently large,
  \[
  p_\kappa(r) > \frac{1}{57\pi} r^{-2} \geq \frac{1}{57\pi} (r_\kappa^2)^{-2}
  = \frac{1}{57\pi} \kappa^{-2\alpha_2} e^{4\kappa}
  \]
  with $\alpha_2>0$ from Proposition~\ref{P:EVTOBKZ}. Comparing the two
  estimates yields the assertion.
\end{proof}
\noindent
{\em Proof of Theorem~\ref{T:NEGATIVEA}.}\quad
We shall look for a negative energy direction $h$ of the form
\be\label{E:HG}
h(x,v) = g(r) w,
\ee
where $g\in C^2([0,\infty[)$ is to be chosen. We recall the definition of
the coordinate $w$~\eqref{E:WDEF}.
As a function of $v$, $h$ is clearly odd.
With~\eqref{E:HG} and \eqref{dynacl} in mind
we can simplify the expression for $\lambda_h$:
\begin{align*}
\lambda_{h}& = 4\pi r  e^{\mu_\kappa+\lambda_\kappa}g
\int \phi_\kappa'(E_\kappa)\,w^2\,dv \nonumber \\
& = 4\pi r  e^{\mu_\kappa+\lambda_\kappa}
g \left(-\frac{e^{-2\lambda_\kappa-\mu_\kappa}}{4\pi  r} 
\left(\lambda_\kappa'+\mu_\kappa'\right)\right)\nonumber \\
& =- e^{-\lambda_\kappa}g\left(\lambda_\kappa'+\mu_\kappa'\right),
\end{align*}
where we used \eqref{11} and \eqref{laplusmu}.
On the other hand,
\begin{align*}
\{h,f_\kappa\}
&= \phi_\kappa'(E_\kappa)
\left(\nabla_x h \nabla_v E_\kappa - \nabla_v h \nabla_x E_\kappa\right)
\nonumber\\
& = \phi_\kappa'(E_\kappa)\, e^{\mu_\kappa}
\left( g'(r)\frac{w^2}{\p} - \mu_\kappa'g(r)\p + g(r)\frac{|v|^2-w^2}{r\p}\right).
\end{align*}
Therefore
\begin{align*}
f_{h} = e^{\mu_\kappa-\lambda_\kappa}
\phi_\kappa'(E_\kappa)\left((g'-g(\mu_\kappa'+\lambda_\kappa'))
\frac{w^2}{\p}-\mu_\kappa'g\p+g\frac{|v|^2-w^2}{r\p}\right).
\end{align*}
According to Proposition~\ref{P:EVTOBKZ},
on the spatial interval $[r_\kappa^1, r_\kappa^2]$ the steady state
$(f_\kappa,\lambda_\kappa,\mu_\kappa)$ is well approximated by the BKZ
solution of the massless Einstein-Vlasov system, provided $\kappa$
is sufficiently large. 
Hence we localize the perturbation $h$ given by \eqref{E:HG}
to this interval by setting
\[
g = e^{\frac12\mu_\kappa+\lambda_\kappa}\chi_{\kappa},
\]
where $0\leq \chi_\kappa\leq 1$
denotes a smooth cut-off function supported in the interval
$[r_\kappa^1, r_\kappa^2]$ and identically equal to $1$ on the interval 
$[2 r_\kappa^1, r_\kappa^2/2]$; note that the latter interval is non-trivial
for $\kappa$ sufficiently large. In addition, we require that
\be \label{chipbounds}
|\chi_\kappa'(r)|\le \frac{2}{r_\kappa^1}\ \mbox{for}\
r\in [r_\kappa^1, 2 r_\kappa^1],\quad
|\chi_\kappa'(r)| \le \frac{4}{r_\kappa^2}\
\mbox{for}\ r\in [r_\kappa^2/2,r_\kappa^2].
\ee
Then the perturbation $f_{h}$ takes the form
\begin{align}\label{E:DELTAFSIMPLER}
f_{h} = e^{\frac32\mu_\kappa} \phi_\kappa'(E_\kappa)
\left(-\mu_\kappa' \chi_\kappa\left[\frac{w^2}{2\p}
  +\p-\frac{1}{\mu_\kappa'r}\frac{|v|^2-w^2}{\p}\right]+
\chi_\kappa'\frac{w^2}{\p}\right).
\end{align}
We plug \eqref{E:DELTAFSIMPLER} into~\eqref{E:ADEF}
and obtain the following identity:
\begin{align}
&
\mathcal{A}_\kappa(h,h) \notag \\
&
= \int_{r_\kappa^1}^{r_\kappa^2} e^{2\mu_\kappa-\lambda_\kappa} \chi_\kappa^2
\biggl[ 4\pi r^2 e^{\mu_\kappa+2\lambda_\kappa}(\mu_\kappa')^2
\int |\phi_\kappa'(E_\kappa)|
\left(\frac{w^2}{2\p}+\p-\frac{1}{\mu_\kappa'r}\frac{|v|^2-w^2}{\p}\right)^2\,dv
\notag \\
&
\qquad \qquad \qquad \qquad  - (2r\mu_\kappa'+1)
(\mu_\kappa'+\lambda_\kappa')^2\biggr]\,dr \notag\\
&
\quad
+ 4\pi \int_{\text{supp}\chi_\kappa'} r^2 e^{3\mu_\kappa+\lambda_\kappa}(\chi_\kappa')^2
\int |\phi_\kappa'(E_\kappa)|\frac{w^4}{\p^2}\,dv\,dr  \notag \\
&
\quad
+ 8\pi \int_{\text{supp}\chi_\kappa'}
r^2 e^{3\mu_\kappa+\lambda_\kappa} \mu_\kappa' \chi_\kappa \chi_\kappa'
\int |\phi_\kappa'(E_\kappa)| \left(\frac{w^4}{2\p^2}+w^2
-\frac{1}{\mu_\kappa'r}\frac{w^2|v|^2-w^4}{\p^2}\right)\,dv\,dr \notag\\
&= : A_1 +A_2 +A_3. \label{E:ABREAKDOWN}
\end{align}
The idea now is as follows: If we replace the steady state
$(f_\kappa,\lambda_\kappa,\mu_\kappa)$ in $A_1$ by the corresponding
limiting quantities according to Proposition~\ref{P:EVTOBKZ},
then a strictly negative term arises, together with various error
terms, which like  $A_2$ and $A_3$ do not destroy the overall negative
sign of $\mathcal{A}_\kappa(h,h)$, provided $\kappa$ is sufficiently
large.

Since by Proposition~\ref{P:EVTOBKZ} $\frac{1}{r \mu_\kappa'} \approx 2$ on the support of $\chi_\kappa$,
we split $A_1$ further:
\begin{align}
A_1 
&
= \int_{r_\kappa^1}^{r_\kappa^2} e^{2\mu_\kappa-\lambda_\kappa} \chi_\kappa^2
\biggl[ 4\pi r^2 e^{\mu_\kappa+2\lambda_\kappa}(\mu_\kappa')^2
\int |\phi_\kappa'(E_\kappa)|
\left(\frac{w^2}{2\p}+\p-2\frac{|v|^2-w^2}{\p}\right)^2 dv
\notag \\
&
\qquad \qquad \qquad \qquad  - (2r\mu_\kappa'+1)
(\mu_\kappa'+\lambda_\kappa')^2\biggr]\,dr \notag\\
&
\quad + \int_{r_\kappa^1}^{r_\kappa^2} e^{2\mu_\kappa-\lambda_\kappa} \chi_\kappa^2
4\pi r^2 e^{\mu_\kappa+2\lambda_\kappa}(\mu_\kappa')^2
\int |\phi_\kappa'(E_\kappa)|
\Biggl[
\left(\frac{w^2}{2\p}+\p-\frac{1}{\mu_\kappa'r}\frac{|v|^2-w^2}{\p}\right)^2
\notag \\
&
\qquad \qquad \qquad \qquad
- \left(\frac{w^2}{2\p}+\p-2\frac{|v|^2-w^2}{\p}\right)^2\Biggr]\, dv\,dr \notag\\
&
=: A_{11} + A_{12}.
\end{align}
Recalling the definition of $\omega$ some algebra implies that
\[
\left[\frac{w^2}{2\p}+\p-2\frac{|v|^2-w^2}{\p}\right]^2
=
1+ \frac{w^4}{4\p^2} + 2 w^2 -3 \omega^2 + 4 \frac{\omega^4}{\p^2}
-2\frac{w^2 \omega^2}{\p^2}. 
\]
If we apply Lemma~\ref{L:ABIDENTITY} and Lemma~\ref{phiprimeint} and collect terms it
follows that
\begin{align*}
&
\int |\phi_\kappa'(E_\kappa)|
\left[\frac{w^2}{2\p}+\p-2\frac{|v|^2-w^2}{\p}\right]^2 dv\\
&\qquad\qquad
\leq e^{-\mu_\kappa}\left( 25 p_{\kappa} - 4 \rho_\kappa\right)
- e^{-\mu_\kappa}\int \phi_\kappa(E_\kappa)\frac{1}{\p^3}
\left[\frac{1}{4} w^4 + 4 \omega^4 - 2 w^2 \omega^2\right]\,dv\\
&\qquad\qquad
= 4 e^{-\mu_\kappa}\left(p_{\kappa} + \rho_\kappa\right)
+ e^{-\mu_\kappa}\left(21 p_{\kappa} -8 \rho_\kappa\right)
-e^{-\mu_\kappa}\int \phi_\kappa(E_\kappa)\frac{1}{\p^3}\,
\left[\frac{1}{2} w^2 - 2 \omega^2\right]^2 dv\\
&\qquad\qquad
\leq 4 e^{-\mu_\kappa}\left(\rho_\kappa + p_\kappa\right) - 3 e^{-\mu_\kappa} p_\kappa;
\end{align*}
in the last estimate we used the fact that $3 p_\kappa \leq \rho_\kappa$.
Using \eqref{laplusmu} it follows that
\begin{align}
A_{11}
&
< - 12 \pi \int_{r_\kappa^1}^{r_\kappa^2} e^{2\mu_\kappa+ \lambda_\kappa} \chi_\kappa^2 r^2
\left(\mu_\kappa'\right)^2 p_\kappa dr\notag \\
&
\quad + \int_{r_\kappa^1}^{r_\kappa^2} e^{2\mu_\kappa- \lambda_\kappa}
\chi_\kappa^2 \left(\lambda_\kappa'+\mu_\kappa'\right)
\left(4 r \left(\mu_\kappa'\right)^2 - \left(\lambda_\kappa'+\mu_\kappa'\right)
\left(2 r \mu_\kappa'+1\right)\right)\, dr\notag \\
&
=:A_{111} + A_{112}.
\end{align}
In order to estimate the various parts into which
$\mathcal{A}_\kappa(h,h)$ has now been split we observe
that by Proposition~\ref{P:EVTOBKZ} the following estimates hold,
provided $\kappa$ is sufficiently large:
\be \label{bounds}
1\leq e^{\lambda_\kappa} \leq 2,\ \frac{1}{4}\leq r \mu_\kappa' \leq 1,\
|r\lambda'_\kappa| \leq 1,\
\frac{1}{57\pi}\leq r^2 p_\kappa,\ r^2\rho_\kappa \leq 1.
\ee
Hence using the fact that $\chi_\kappa=1$ on $[2r_\kappa^1,r_\kappa^2/2]$ and
Proposition~\ref{muass},
\bea
A_{111}
&\leq&
- 12 \pi \int_{2 r_\kappa^1}^{r_\kappa^2/2} e^{2\mu_\kappa+ \lambda_\kappa} r^2
\left(\mu_\kappa'\right)^2 p_\kappa dr \nonumber \\
&\leq&
- C C_\kappa e^{-C \kappa^{-\delta}\ln \kappa}\int_{2 r_\kappa^1}^{r_\kappa^2/2} \frac{dr}{r}
\nonumber\\
&=&
-  C C_\kappa e^{-C \kappa^{-\delta}\ln \kappa} \ln(r_\kappa^2/(4r_\kappa^1))\nonumber\\
&=&
-  C C_\kappa e^{-C \kappa^{-\delta}\ln \kappa} (\ln \kappa - \ln 4)\nonumber\\
&\leq& 
- C C_\kappa e^{-C \kappa^{-\delta}\ln \kappa} \ln \kappa.\label{A111est}
\eea
Here and in what follows $C_\kappa$ always denotes the constant introduced
in Proposition~\ref{muass}, while $C$ denotes a generic constant which
does not depend on $\kappa$, but which may change its value from line
to line. 

We now proceed to prove that all the remaining terms 
are smaller in modulus than the negative term just obtained, provided
$\kappa$ is sufficiently large. In order to estimate $A_{112}$ we note that
by Proposition~\ref{P:EVTOBKZ},
\begin{align*}
4 r \left(\mu_\kappa'\right)^2 - & \left(\lambda_\kappa'+ \mu_\kappa'\right)
\left(2 r \mu_\kappa'+1\right)\\
&
\leq
4 r \left(\frac{1}{2r} + \frac{C}{r}\kappa^{-\delta}\right)^2 -
\left(\frac{1}{2r} - \frac{C}{r}\kappa^{-\delta}\right)
\left(1- C \kappa^{-\delta}+1\right)\\
&
= \frac{1}{r}
\left( \left(1 + C \kappa^{-\delta}\right)^2 -
\left(\frac{1}{2} - C\kappa^{-\delta}\right)
\left(2- C \kappa^{-\delta}\right)\right)\\
&
\leq \frac{C}{r} \kappa^{-\delta}.
\end{align*}
Combining this with \eqref{bounds} and Proposition~\ref{muass},
\be \label{A112est}
A_{112} \leq C C_\kappa e^{C \kappa^{-\delta}\ln \kappa}
\kappa^{-\delta} \int_{r_\kappa^1}^{r_\kappa^2}\frac{dr}{r}
= C C_\kappa e^{C \kappa^{-\delta}\ln \kappa} \kappa^{-\delta} \ln\kappa.
\ee
In order to estimate $A_{12}$ we first observe that
\begin{align*}
&
\left|
\left(\frac{w^2}{2\p}+ \p-\frac{1}{\mu_\kappa'r}\frac{|v|^2-w^2}{\p}\right)^2
- \left(\frac{w^2}{2\p}+\p-2\frac{|v|^2-w^2}{\p}\right)^2\right|\\
&
\qquad =
\Biggl|
- 2 \left(\frac{w^2}{2\p}+\p\right) \frac{1}{\mu_\kappa'r}\frac{|v|^2-w^2}{\p}
+ \left(\frac{1}{\mu_\kappa'r}\right)^2 \left(\frac{|v|^2-w^2}{\p}\right)^2\\
&
\qquad\qquad {}+ 2 \left(\frac{w^2}{2\p}+\p\right) 2 \frac{|v|^2-w^2}{\p}
- 4 \left(\frac{|v|^2-w^2}{\p}\right)^2 \Biggr|\\
&
\qquad \leq
\left|\frac{1}{r \mu_\kappa'}-2\right|
2 \left(\frac{w^2}{2\p}+\p\right) \frac{|v|^2-w^2}{\p}
+ \left|\frac{1}{(r \mu_\kappa')^2}-4\right| \left(\frac{|v|^2-w^2}{\p}\right)^2\\
&
\qquad \leq
\frac{1}{r \mu_\kappa'}\left|1 -2 r \mu_\kappa' \right|
\frac{|v|^2-w^2}{\p}\left(\frac{w^2}{\p} + 2 \p + 6 \frac{|v|^2-w^2}{\p}\right)\\
&
\qquad \leq
C |v|^2 \kappa^{-\delta}.
\end{align*}
Using Lemma~\ref{L:ABIDENTITY} we find that
\[
\int |\phi_\kappa'(E_\kappa)| |v|^2 dv =  3 e^{-\mu_\kappa} (\rho_\kappa + p_\kappa)
\leq 4 e^{-\mu_\kappa} \rho_\kappa.
\]
We insert the above two estimates into the definition of $A_{12}$, and
using \eqref{bounds} and Proposition~\ref{muass} we conclude that
\begin{align}
A_{12}
&
\leq C \kappa^{-\delta}
\int_{r_\kappa^1}^{r_\kappa^2} e^{2\mu_\kappa-\lambda_\kappa}
r^2 (\mu_\kappa')^2
\rho_\kappa \,dr
\leq
C C_\kappa e^{C \kappa^{-\delta}\ln \kappa}
\kappa^{-\delta} \int_{r_\kappa^1}^{r_\kappa^2} \frac{dr}{r}\notag \\
&
= C C_\kappa e^{C \kappa^{-\delta}\ln \kappa}  \kappa^{-\delta} \ln\kappa .\label{A12est}
\end{align}
To estimate $A_2$ we notice that 
\[
\text{supp}\chi_\kappa'=[r_\kappa^1,2 r_\kappa^1] \cup [r_\kappa^2/2,r_\kappa^2],
\]
we use the bounds \eqref{chipbounds} for $\chi_\kappa'$
and the estimate
\[
\int |\phi_\kappa'(E_\kappa)|\frac{w^4}{\p^2}\,dv \le \int |\phi_\kappa'(E_\kappa)|w^2\,dv
= \frac{e^{-2\lambda_\kappa-\mu_\kappa}}{4\pi  r} 
\left(\lambda_\kappa'+\mu_\kappa'\right);
\]
cf.~\eqref{11} and \eqref{laplusmu}.
Together with \eqref{bounds} this implies that
\begin{align}
A_2
&\le
\int_{\text{supp}\chi_\kappa'} r e^{2\mu_\kappa-\lambda_\kappa}(\chi_\kappa')^2
\left(\lambda_\kappa'+\mu_\kappa'\right)\,dr \notag \\
&\le
C C_\kappa e^{C \kappa^{-\delta}\ln\kappa}
\int_{\text{supp}\chi_\kappa'} (\chi_\kappa')^2 r\,dr \notag \\
& =
C C_\kappa e^{C \kappa^{-\delta}\ln\kappa}
\left(\int_{r_\kappa^1}^{2 r_\kappa^1} (\chi_\kappa')^2 r\,dr 
+\int_{r_\kappa^2/2}^{r_\kappa^2} (\chi_\kappa')^2 r\,dr\right) \notag \\
& \leq
C C_\kappa e^{C \kappa^{-\delta}\ln\kappa}
\left(\int_{r_\kappa^1}^{2 r_\kappa^1} \frac{1}{(r_\kappa^1)^2} r\,dr 
+\int_{r_\kappa^2/2}^{r_\kappa^2} \frac{1}{(r_\kappa^2)^2} r\,dr\right) \notag \\
& \le
C C_\kappa e^{C \kappa^{-\delta}\ln\kappa}. \label{E:A2BOUND}
\end{align}
In order to estimate $A_3$ we first estimate the $v$-integral contained
in that term:
\begin{align*}
\int |\phi_\kappa'(E_\kappa)| \left|\frac{w^4}{2\p^2}+w^2
-\frac{1}{\mu_\kappa'r}\frac{w^2|v|^2-w^4}{\p^2}\right|\,dv
& \leq
C \int |\phi_\kappa'(E_\kappa)| w^2 \,dv\\
&
= C \frac{e^{-2\lambda_\kappa -\mu_\kappa}}{r}\left(\lambda'_\kappa +\mu'_\kappa\right),
\end{align*}
where we used \eqref{11} and \eqref{laplusmu}.
Hence
\begin{align}
A_3
&\le
C \int_{\text{supp}\chi_\kappa'} r e^{2\mu_\kappa-\lambda_\kappa}|\chi_\kappa'|
\, \mu_\kappa'\left(\lambda_\kappa'+\mu_\kappa'\right)\,dr \notag \\
&\le
C C_\kappa e^{C \kappa^{-\delta}\ln\kappa}
\int_{\text{supp}\chi_\kappa'} |\chi_\kappa'|\,dr \notag \\
& =
C C_\kappa e^{C \kappa^{-\delta}\ln\kappa}
\left(\int_{r_\kappa^1}^{2 r_\kappa^1} |\chi_\kappa'|\,dr 
+\int_{r_\kappa^2/2}^{r_\kappa^2} |\chi_\kappa'|\,dr\right) \notag \\
& \leq
C C_\kappa e^{C \kappa^{-\delta}\ln\kappa}
\left(\int_{r_\kappa^1}^{2 r_\kappa^1} \frac{1}{r_\kappa^1}\,dr 
+\int_{r_\kappa^2/2}^{r_\kappa^2} \frac{1}{r_\kappa^2}\,dr\right) \notag \\
& \le
C C_\kappa e^{C \kappa^{-\delta}\ln\kappa}. \label{E:A3BOUND}
\end{align}
We add up \eqref{A111est}, \eqref{A112est}, \eqref{A12est},
\eqref{E:A2BOUND}, and \eqref{E:A3BOUND} to find
that besides the constants
$C_\kappa$ and $C$ introduced in Proposition~\ref{muass}
there exist positive constants $C_1, C_2, C_3$ such that 
\[
\mathcal{A}_\kappa(h,h) <
- C_\kappa e^{-C \kappa^{-\delta}\ln \kappa}
\left( C_1 \ln \kappa - C_2 e^{2 C \kappa^{-\delta}\ln \kappa} \kappa^{-\delta} \ln\kappa
-C_3 e^{2 C \kappa^{-\delta}\ln \kappa}\right) < 0,
\]
provided $\kappa$ is sufficiently large, and the proof is complete.
\prfe

\begin{remark}
The logarithmic gain in~\eqref{A111est} is fundamental to
the proof of Theorem~\ref{T:NEGATIVEA}. The perturbation
$f_{h}$ is carefully engineered to produce such a gain, taking
advantage of the exact $\kappa\to\infty$ asymptotics provided by
Proposition~\ref{P:EVTOBKZ}.
\end{remark}

\subsection{Linear exponential instability}\label{SS:LEI}

Theorem~\ref{T:NEGATIVEA} is sufficient for showing a linear exponential
instability result, without proving the existence of an exponentially
growing mode.
We present this simple argument first,
before we turn to the growing mode in the following
sections.

In order to exploit Theorem~\ref{T:NEGATIVEA} we further analyze
Eqn.~\eqref{heq}, which governs the dynamics of the linearly dynamically
accessible perturbations.
We split $h$ into even and odd parts with respect to $v$,
i.e., $h_+ (x,v) = \frac{1}{2}\left(h(x,v) + h(x,-v)\right)$ and
$h_- (x,v) = \frac{1}{2}\left(h(x,v) - h(x,-v)\right)$. It is then
easy to see that $\lambda_h =\lambda_{h_-}$,
$\mu_h =\mu_{h_-}$, and \eqref{heq} can be turned into the system
\begin{align}
\partial_t h_- & = \mathcal T h_+,\label{hsystem1}\\
\partial_t h_+ &= \mathcal T h_- - \mathcal C h_-,\label{hsystem2}
\end{align}
where we define
\be\label{E:MATHCALTDEF}
\mathcal T h := - e^{-\lambda_\kappa}\{h,E_\kappa\},\qquad
\mathcal C h :=
e^{\mu_\kappa}\lambda_{h} \frac{w^2}{\p} - e^{\mu_\kappa} \mu_{h} \p;
\ee
the operator $\mathcal T$ essentially represents the transport
along the characteristic flow of the steady state under consideration.
\begin{remark}
  It is tempting to turn the system \eqref{hsystem1}, \eqref{hsystem2}
  into a second order equation for the odd
  part $h_-$ alone, but under the regularity assumption on $\Phi$ the
  corresponding second order derivatives of $h_-$ need not exist.
  To do this rigorously we would have to interpret the equation in a
  distributional sense.
\end{remark}
In what follows we use the weight
$W : = e^{\lambda_\kappa}|\phi_\kappa'(E_\kappa)|$
and denote by $L^2_{W}=L^2_{W}(D_\kappa)$ the corresponding
weighted $L^2$ space on the set $D_\kappa$, cf.~\eqref{suppss};
$(\cdot,\cdot)_{L^2_{W}}$ denotes the corresponding scalar product.
We obtain the following linear, exponential instability result:
\begin{theorem} \label{lininstabEV}
  There exist initial data
  $\mathring h_+, \mathring h_-$ and constants $c_1,c_2>0$
  such that for the corresponding solution to the
  system \eqref{hsystem1}, \eqref{hsystem2},
  \[
  \|h_-(t)\|_{L^2_{W}} ,  \|\mathcal T h_+(t)\|_{L^2_{W}}
  \ge c_1 e^{c_2 t}.
  \]
\end{theorem}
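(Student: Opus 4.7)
The plan is to follow the Laval--Mercier--Pellat strategy \cite{LaMePe} advertised in the introduction and extract exponential growth directly from the negative-energy direction of Theorem~\ref{T:NEGATIVEA} via a virial differential inequality, without invoking any spectral information about the linearized generator. The structural input is that the transport operator $\mathcal T$ of~\eqref{E:MATHCALTDEF} is skew-symmetric on the spherically symmetric functions in $L^2_W(D_\kappa)$. This is an integration-by-parts statement: the Hamiltonian vector field $\{\,\cdot\,, E_\kappa\}$ is phase-space divergence free, $\phi_\kappa'(E_\kappa)$ is a first integral of the steady characteristic flow, and $\overline D_\kappa = \{E_\kappa \leq E_\kappa^0\}$ is flow-invariant, so no boundary terms appear. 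Combined with the parity-in-$v$ decomposition already used in Section~\ref{linEV} (the even part $h_+$ contributes an odd-in-$v$ summand to $f_h$ via $-\phi_\kappa'\,\mathcal T h_+$, while $h_-$ together with $\lambda_{h_-}=\lambda_h$ contributes an even-in-$v$ summand), a direct computation from~\eqref{E:ADEF} using Lemma~\ref{L:ABIDENTITY} produces the orthogonal decomposition
\[
\mathcal A_\kappa(h,h) \;=\; \|\mathcal T h_+\|_{L^2_W}^2 \;+\; L(h_-), \qquad L(h_-) := \mathcal A_\kappa(h_-, h_-),
\]
together with the identification $L(h_-) = \|\mathcal T h_-\|_{L^2_W}^2 - (\mathcal T h_-, \mathcal C h_-)_{L^2_W}$ which drops out of the same integration by parts.

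For initial data one takes $\mathring h_- := \bar h$, where $\bar h = g(r)w \in C^2(\mathbb R^6)$ is the odd-in-$v$ function of Theorem~\ref{T:NEGATIVEA} with $\mathcal A_\kappa(\bar h, \bar h) < 0$, and $\mathring h_+ := 0$. The system~\eqref{hsystem1}--\eqref{hsystem2} then admits a unique classical solution on $[0,\infty[$. Set $I(t) := \|h_-(t)\|_{L^2_W}^2$ and $E_0 := L(\bar h) < 0$. Conservation of $\mathcal A_\kappa$ forces $L(h_-(t)) \leq E_0$ for every $t$. Differentiating $I$ twice, using $\partial_t h_- = \mathcal T h_+$ and the skew-symmetry of $\mathcal T$ applied to the right-hand side of~\eqref{hsystem2}, one finds
\[
\tfrac12 \ddot I(t) \;=\; \|\mathcal T h_+(t)\|_{L^2_W}^2 - L(h_-(t)) \;=\; 2\|\mathcal T h_+(t)\|_{L^2_W}^2 + |E_0|,
\]
while $\mathring h_+ = 0$ gives $\dot I(0) = 2(h_-(0), \mathcal T h_+(0))_{L^2_W} = 0$. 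The Cauchy--Schwarz bound $\dot I^2 \leq 4 I \,\|\mathcal T h_+\|_{L^2_W}^2$ then closes the system to
\[
\ddot I \;\geq\; \frac{\dot I^2}{I} + 2|E_0|, \qquad \text{equivalently}\qquad \ddot J \;\geq\; 2|E_0|\,e^{-J},\quad J := \log I.
\]

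Because $\ddot J > 0$ throughout, $\dot J$ is strictly increasing from $0$, and a simple contradiction argument (if $J$ were bounded, $\ddot J$ would be uniformly bounded below by a positive constant, so $J$ would grow at least quadratically) forces $J(t) \to \infty$. Multiplying $\ddot J \geq 2|E_0|e^{-J}$ by $\dot J \geq 0$ and integrating yields
\[
\dot J(t)^2 \;\geq\; 4|E_0|\bigl(e^{-J(0)} - e^{-J(t)}\bigr) \;\longrightarrow\; \frac{4|E_0|}{\|\bar h\|_{L^2_W}^2} \quad \text{as } t \to \infty,
\]
so $\dot J(t) \geq c_2 := \sqrt{2|E_0|}/\|\bar h\|_{L^2_W}$ for $t$ sufficiently large, whence $I(t) \geq c_1 e^{c_2 t}$ and therefore $\|h_-(t)\|_{L^2_W} \geq \sqrt{c_1}\,e^{c_2 t/2}$. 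Feeding this back into the Cauchy--Schwarz bound $\|\mathcal T h_+(t)\|_{L^2_W}^2 \geq \dot I(t)^2/(4 I(t)) = (\dot J(t))^2 I(t)/4$ produces the companion exponential lower bound for $\|\mathcal T h_+(t)\|_{L^2_W}$.

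The only genuinely delicate ingredient is the orthogonal splitting $\mathcal A_\kappa(h,h) = \|\mathcal T h_+\|^2 + L(h_-)$ together with the quadratic-form identification of $L(h_-)$: both reduce to phase-space integrations by parts that rely on the parity of the Poisson bracket, the linearized Einstein constraints~\eqref{lineelambda}--\eqref{lineemu}, and the support property of $f_h$ recorded in Remark~\ref{suppfh} (which rules out boundary contributions). Once those two identities are in place, everything else is the elementary virial / logarithmic-ODE scheme above, with the strictly negative conserved energy $E_0$ produced by Theorem~\ref{T:NEGATIVEA} acting as the sole driver of exponential growth.
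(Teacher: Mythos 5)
Your argument rests on exactly the same two structural identities as the paper's proof: the parity splitting of the conserved energy, $\mathcal A_\kappa(h,h)=\|\mathcal T h_+\|_{L^2_W}^2+\mathcal A_\kappa(h_-,h_-)$ (the paper's \eqref{E:ENERGY}, obtained from $\lambda_h=\lambda_{h_-}$ and the oddness/evenness of $\{h_\pm,E_\kappa\}$ rather than from Lemma~\ref{L:ABIDENTITY}, which is not really what is needed there), and the virial identity $\tfrac12\ddot I=\|\mathcal T h_+\|^2-\mathcal A_\kappa(h_-,h_-)$ (the paper's \eqref{E:VIRIAL}, whose derivation does require the identification $\mathcal A_\kappa(h_-,h_-)=\|\mathcal T h_-\|^2-(\mathcal C h_-,\mathcal T h_-)_{L^2_W}$ via the linearized constraints and an $r$-integration by parts, as you indicate). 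Where you genuinely diverge is the choice of data and the ODE step: you take $\mathring h_+=0$ and then must work harder on the differential inequality, proving $J=\log I\to\infty$ by a convexity/contradiction argument and then integrating $\dot J\ddot J\ge 2|E_0|e^{-J}\dot J$ to get an asymptotic lower bound on $\dot J$; the paper instead takes $\mathring h_+=-\varepsilon\mathcal T\mathring h_-$ with $\varepsilon$ small enough that the total energy stays negative, which makes $\dot y(0)=c_\varepsilon>0$, and then mere convexity of $y=\log(I/I(0))$ gives $\dot y\ge c_\varepsilon$ for all $t\ge0$ with no further analysis. Your route buys a slightly more self-contained ODE lemma at the price of a longer argument; the paper's trick buys a one-line conclusion and, more importantly, uniformity down to $t=0$.

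That last point is the one concrete defect in your write-up: with $\mathring h_+=0$ you have $\mathcal T h_+(0)=0$, and by continuity $\|\mathcal T h_+(t)\|_{L^2_W}$ is small for small $t$, so the asserted bound $\|\mathcal T h_+(t)\|_{L^2_W}\ge c_1e^{c_2t}$ cannot hold on all of $[0,\infty[$ for any $c_1>0$; your argument only delivers it for $t\ge T$. (The bound for $\|h_-(t)\|$ is fine for all $t$ since $I$ is nondecreasing.) This is easily repaired, either by relabelling the solution at time $T$ as the initial data, or, more simply, by adopting the paper's choice $\mathring h_+=-\varepsilon\mathcal T\mathring h_-$ with \eqref{E:CHOICE2}, which keeps the conserved energy negative, makes $\dot J(0)>0$, and lets your scheme (or the paper's shorter one) produce the stated bounds for every $t\ge0$. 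Also note that the negativity $E_0<0$ is essential in your inequality $\ddot J\ge 2|E_0|e^{-J}$ when $\dot J(0)=0$; with $E_0=0$ the argument collapses, so Theorem~\ref{T:NEGATIVEA} is indeed the sole driver, exactly as in the paper.
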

\begin{proof}
The proof relies on the fact that solutions to
\eqref{hsystem1}, \eqref{hsystem2} preserve energy, i.e.,
\begin{align}\label{E:ENERGY}
  (\mathcal T h_+,\mathcal T h_+)_{L^2_W} + \mathcal A_\kappa (h_-,h_-)
  = const,
\end{align}
and on the virial identity
\begin{align}\label{E:VIRIAL}
\frac{1}{2}\frac{d^2}{dt^2}(h_-,h_-)_{L^2_{W}} =
- \mathcal A_\kappa(h_-,h_-) + (\mathcal T h_+,\mathcal T h_+)_{L^2_{W}}.
\end{align}
As to \eqref{E:ENERGY} we recall from \cite{HaRe2013} that the energy
$\mathcal A_\kappa (h,h)$ is conserved,
but if we substitute $h=h_+ + h_-$ and use
the fact that $\lambda_h = \lambda_{h_-}$ it follows easily
that
\[
\mathcal A_\kappa(h,h) = \mathcal A_\kappa(h_-,h_-)
+ (\mathcal T h_+,\mathcal T h_+)_{L^2_{W}}.
\]
Now
\[
\frac{1}{2}\frac{d}{dt}(h_-,h_-)_{L^2_{W}}
= (\partial_t h_-,h_-)_{L^2_{W}}
= (\mathcal T h_+,h_-)_{L^2_{W}},
\]
and
\beas
\frac{1}{2}\frac{d^2}{dt^2}(h_-(t),h_-(t))_{L^2_{W}}
&=&
\lim_{\delta \to 0}
\left(\mathcal T \frac{h_+(t+\delta)-h_+(t)}{\delta},
h_-(t)\right)_{L^2_{W}}
+
\left(\mathcal T h_+(t),\partial_t h_-(t)\right)_{L^2_{W}}\\
&=&
- \lim_{\delta \to 0}
\left(\frac{h_+(t+\delta)-h_+(t)}{\delta},\mathcal T h_-(t)\right)_{L^2_{W}}
+ \left(\mathcal T h_+(t),\partial_t h_-(t)\right)_{L^2_{W}}\\
&=&
-
\left(\partial_t h_+(t),\mathcal T h_-(t)\right)_{L^2_{W}}
+ \left(\mathcal T h_+(t),\partial_t h_-(t)\right)_{L^2_{W}}\\
&=&
- \left(\mathcal T h_-(t)- \mathcal C h_-(t),
\mathcal T h_-(t)\right)_{L^2_{W}}
+ \|\mathcal T h_+(t)\|_{L^2_{W}}^2;
\eeas
here we used the anti-symmetry of $\mathcal T$ stated
in Lemma~\ref{intbyparts} below.
The identity
\[
\lambda_{h_-} = 4\pi r  e^{\mu_\kappa+\lambda_\kappa}
\int \phi_\kappa'(E_\kappa)\,h_- (x,v)\,w\,dv
\]
implies that
\[
\mathcal T h_-
= \frac{1}{|\phi_\kappa'(E_\kappa)|}\left( f_{h_-} -
e^{\mu_\kappa}\phi_\kappa'(E_\kappa) \frac{w^2}{\p} \lambda_{h_-}\right).
\]
If we substitute this into the above expression and use the linearized
field equations \eqref{lineelambda} and \eqref{lineemu} to replace the
terms $\rho_h$ and $p_h$ it follows by a straightforward
computation that
\beas
&&
- \left(\mathcal T h_-(t)- \mathcal C h_-(t),
        \mathcal T h_-(t)\right)_{L^2_{W}}\\
&&
\qquad = -\mathcal A_\kappa(h_-,h_-) + \int_0^\infty e^{\mu_\kappa -\lambda_\kappa}
\left[\left(r \left(\lambda'_\kappa - \mu'_\kappa\right)-1\right)
  \lambda_h \mu_h
  - r\left(\mu'_h \lambda_h + \mu_h \lambda'_h\right)
  \right]\, dr.
\eeas
If we integrate the $\mu'_h \lambda_h$ term by parts
the $r$ integral vanishes and \eqref{E:VIRIAL} follows.

To prove the linear exponential instability we follow the approach used by
Laval, Mercier, Pellat~\cite{LaMePe} in the context of
stability of plasma flows.
Since Jeans' Theorem does not hold for the Einstein-Vlasov system, cf.\
\cite{Sch}, we must modify the argument of~\cite{LaMePe}.

By Theorem~\ref{T:NEGATIVEA} and Remark~\ref{suppfh}
there exists an odd-in-$v$ function
$\mathring h_-\in C^2 (\overline{D}_\kappa)$
such that
\begin{align}\label{E:HCHOICE}
\mathcal A_\kappa(\mathring h_-,\mathring h_-) < 0,
\end{align}
and to obtain initial data for the system \eqref{hsystem1}, \eqref{hsystem2}
we supplement this by
\begin{align*}
\mathring h_+ = -\varepsilon \mathcal T \mathring h_-,
\end{align*}
where $\varepsilon>0$ is chosen so small that
\be\label{E:CHOICE2}
\|\mathcal T \mathring h_+\|_{L^2_{W}}^2 =
\varepsilon^2 \|\mathcal T^2 \mathring h_-\|_{L^2_{W}}^2 <
- \mathcal A_\kappa(\mathring h_-,\mathring h_-).
\ee
Conservation of energy \eqref{E:ENERGY} and \eqref{E:CHOICE2} imply that
\begin{align}\label{E:ENERGY1}
  \mathcal A_\kappa(h_-,h_-) +  \|\mathcal T h_+\|_{L^2_{W}}^2 =
  \mathcal A_\kappa(\mathring h_-,\mathring h_-) +
  \|\mathcal T \mathring h_+\|^2_{L^2_{W}} = c
  <0.
\end{align}
By the Cauchy-Schwarz inequality,
\[
  \frac{1}{4} \left(\frac{d}{dt} \|h_-\|_{L^2_{W}}^2\right)^2
  = (\mathcal T h_+,h_-)^2_{L^2_{W}} \leq
  \|\mathcal T h_+\|_{L^2_{W}}^2 \|h_-\|_{L^2_{W}}^2.
\]
Combining this with the virial identity~\eqref{E:VIRIAL} and
\eqref{E:ENERGY1} it follows that
\begin{align}\label{E:KEYINEQUALITY}
\frac{d^2}{dt^2}\|h_-\|_{L^2_{W}}^2 = -2 c
+ 4  \|\mathcal T h_+\|_{L^2_{W}}^2
\ge
\frac{\left[\frac{d}{dt}\|h_-\|_{L^2_{W}}^2\right]^2}{\|h_-\|_{L^2_{W}}^2}
\end{align}
Setting
\[
y = \ln \left(\frac{ \|h_-\|_{L^2_{W}}^2}
{\|\mathring h_-\|_{L^2_{W}}^2}\right)
\]
we see that $y(0)=0$ and
\beas
\dot y(t)
&=&
\frac{\frac{d}{dt}\|h_-\|_{L^2_{W}}^2}{\|h_-\|_{L^2_{W}}^2}
=2 \frac{(\partial_t h_-,h_-)_{L^2_{W}}}{\|h_-\|_{L^2_{W}}^2},\\
\ddot y(t)
&=& \frac{1}{\|h_-\|_{L^2_{W}}^4}
\left[\frac{d^2}{dt^2} \|h_-\|_{L^2_{W}}^2 \|h_-\|_{L^2_{W}}^2
  - \left(\frac{d}{dt} \left(\|h_-\|_{L^2_{W}}^2\right)\right)^2\right]
\geq 0
\eeas
by \eqref{E:KEYINEQUALITY}. Hence
\[
\dot y(t) \geq \dot y(0) =
2 \frac{(\mathcal T \mathring h_+,\mathring h_-)_{L^2_{W}}}
{\|\mathring h_-\|_{L^2_W}^2}
= 2 \epsilon \frac{\|\mathcal T \mathring h_-\|^2_{L^2_{W}}}
{\|\mathring h_-\|^2_{L^2_{W}}} =: c_\epsilon >0,
\]
and thus $y(t) \geq c_\epsilon t$. By definition of $y$ this is the assertion
for $\|h_-(t)\|_{L^2_{W}}$. By Cauchy-Schwarz,
\[
c_\epsilon\leq \dot y(t)
=2 \frac{(\partial_t h_-,h_-)_{L^2_{W}}}{\|h_-\|_{L^2_{W}}^2}\leq
2 \frac{\|\mathcal T h_+(t)\|_{L^2_{W}}}{\|h_-(t)\|_{L^2_{W}}},
\]
which implies the estimate for $\|\mathcal T h_+(t)\|_{L^2_{W}}$,
and the proof is complete.
\end{proof}

\begin{remark}
  It should be noted that $\mathring h_+$ defined in this proof
  is indeed even so that $\mathring h_\pm$
  are the even and odd parts of some initial data $\mathring h$ for the
  original linearized system \eqref{heq}--\eqref{linpdef}.
\end{remark}

\begin{lemma} \label{intbyparts}
Let $\chi \in C (]0,E^0_\kappa[)$, $\chi>0$, and $g,h \in C^1(D_\kappa)$.
Then
\[
\left(\mathcal T g,h\right)_{L^2_{e^{\lambda_\kappa} \chi(E_\kappa)}}
= - \left(g,\mathcal T h\right)_{L^2_{e^{\lambda_\kappa} \chi(E_\kappa)}},
\]
provided both integrals exist.
\end{lemma}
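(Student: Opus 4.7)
The plan is to reduce the claim to the identity
\[
\iint_{D_\kappa} \chi(E_\kappa)\,\{hg, E_\kappa\}\,dv\,dx = 0,
\]
and then exploit the Hamiltonian structure. First I would substitute $\mathcal T g = -e^{-\lambda_\kappa}\{g, E_\kappa\}$ and $\mathcal T h = -e^{-\lambda_\kappa}\{h, E_\kappa\}$, which cancels the factor $e^{\lambda_\kappa}$ in the weight and reduces the assertion to
\[
\iint_{D_\kappa} \chi(E_\kappa)\bigl[\{h, E_\kappa\}\,g + \{g, E_\kappa\}\,h\bigr]\,dv\,dx = 0.
\]
The product rule and antisymmetry of the Poisson bracket give
\[
\{h, E_\kappa\}g + \{g, E_\kappa\}h = -\{E_\kappa, h\}g - \{E_\kappa, g\}h = -\{E_\kappa, hg\} = \{hg, E_\kappa\},
\]
so it suffices to prove the displayed identity above for the $C^1$ function $F := hg$.

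Next I would observe that the Hamiltonian vector field of $E_\kappa$,
\[
X_{E_\kappa}(x,v) = \bigl(\nabla_v E_\kappa,\ -\nabla_x E_\kappa\bigr)
= \Bigl(e^{\mu_\kappa}\tfrac{v}{\p},\ -e^{\mu_\kappa}\mu_\kappa'(r)\,\p\,\tfrac{x}{r}\Bigr),
\]
is divergence-free on $\R^3_x\times\R^3_v$: a short computation yields
\[
\mathrm{div}_x\!\Bigl(e^{\mu_\kappa}\tfrac{v}{\p}\Bigr)
= e^{\mu_\kappa}\mu_\kappa'\,\tfrac{x\cdot v}{r\,\p}
= \mathrm{div}_v\!\Bigl(e^{\mu_\kappa}\mu_\kappa'\,\p\,\tfrac{x}{r}\Bigr).
\]
Consequently $\{F, E_\kappa\} = X_{E_\kappa}\cdot\nabla_{(x,v)}F = \mathrm{div}_{(x,v)}\bigl(F\,X_{E_\kappa}\bigr)$ for any $C^1$ function $F$.

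Finally I would integrate by parts in $(x,v)$. Writing
\[
\iint_{D_\kappa} \chi(E_\kappa)\,\{F, E_\kappa\}\,dv\,dx
= \iint_{D_\kappa} \chi(E_\kappa)\,\mathrm{div}_{(x,v)}(F\,X_{E_\kappa})\,dv\,dx,
\]
the divergence theorem transfers the derivative onto $\chi(E_\kappa)$ and produces a boundary integral. The interior term equals
\[
-\iint_{D_\kappa} F\,X_{E_\kappa}\!\cdot\!\nabla_{(x,v)}\chi(E_\kappa)\,dv\,dx
= -\iint_{D_\kappa} F\,\chi'(E_\kappa)\,\{E_\kappa, E_\kappa\}\,dv\,dx = 0,
\]
since the Poisson bracket of $E_\kappa$ with itself vanishes identically.

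\textbf{Main obstacle.} The only delicate point is justifying the vanishing of the boundary contributions. Recall $D_\kappa = \{E_\kappa < E_\kappa^0\}$, whose boundary is the level set $\{E_\kappa = E_\kappa^0\}$; since $X_{E_\kappa}\!\cdot\!\nabla E_\kappa = \{E_\kappa, E_\kappa\} = 0$, the Hamiltonian vector field is everywhere tangent to $\partial D_\kappa$, so the flux $F\,\chi(E_\kappa)\,X_{E_\kappa}\!\cdot n$ through the boundary is zero. As observed in Section~\ref{setup}, $D_\kappa$ is a bounded set in $\R^6$ (the constraint $e^{\mu_\kappa}\p < E_\kappa^0$ together with $\mu_\kappa'>0$ confines $r$ to $[0,R_\kappa]$ and $|v|$ to a bounded range), so there is no contribution from infinity. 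Under the assumption that both integrals in the statement exist, these manipulations are rigorous, which completes the argument.
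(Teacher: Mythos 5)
Your core argument is the same as the paper's: the antisymmetry comes from an integration by parts in $(x,v)$ in which the boundary contribution vanishes because the Hamiltonian vector field $(\nabla_v E_\kappa,-\nabla_x E_\kappa)$ is tangent to the level sets of $E_\kappa$ (the outer normal being parallel to $(\nabla_x E_\kappa,\nabla_v E_\kappa)$), while the remaining interior terms cancel through the identities $\partial_{x_i}\partial_{v_i}E_\kappa-\partial_{v_i}\partial_{x_i}E_\kappa=0$ and $\nabla_x E_\kappa\cdot\nabla_v E_\kappa-\nabla_v E_\kappa\cdot\nabla_x E_\kappa=0$. Your repackaging via the product rule ($F=gh$, $\{F,E_\kappa\}=\mathrm{div}_{(x,v)}(F\,X_{E_\kappa})$) is a clean way to organize exactly these cancellations, so the route is essentially identical.

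Two technical points are glossed over, both of which the paper treats explicitly. First, you differentiate $\chi$, but $\chi$ is only assumed continuous on $]0,E^0_\kappa[$; one must first replace $\chi$ by a smooth approximation (as the paper remarks) and pass to the limit, or organize the computation so that $\chi$ is never differentiated. Second, and more substantively, you apply the divergence theorem on all of $D_\kappa$ and evaluate a flux over $\partial D_\kappa=\{E_\kappa=E^0_\kappa\}$; but $\chi$ is only continuous and positive on the \emph{open} interval, so $\chi(E_\kappa)$ may be unbounded as $E_\kappa\to E^0_\kappa$, in which case neither the boundary integral nor the direct application of the divergence theorem up to $\partial D_\kappa$ is justified --- this singular behaviour is precisely why the statement carries the caveat ``provided both integrals exist.'' The paper's remedy is to integrate by parts on the exhausting sublevel sets $\{E_\kappa\le E^0_\kappa-\epsilon\}$, where everything is bounded and your tangency argument kills the boundary term on each level set $\{E_\kappa=E^0_\kappa-\epsilon\}$, and then let $\epsilon\to0$ using the assumed existence of the two integrals. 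With these two standard fixes incorporated, your argument is complete and coincides with the paper's proof.
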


\begin{proof}
Clearly,
\[
\left(\mathcal T g,h\right)_{L^2_{e^{\lambda_\kappa} \chi(E_\kappa)}}
= - \iint_{D_\kappa} \{g,E_\kappa\} h \chi(E_\kappa)
= - \lim_{\epsilon \to 0} \iint_{E_\kappa \leq E_\kappa^0 -\epsilon}
\{g,E_\kappa\} h \chi(E_\kappa).
\]
For fixed $\epsilon >0$ we now integrate by parts:
\beas
\iint_{E_\kappa \leq E_\kappa^0 -\epsilon}
\{g,E_\kappa\} h \chi(E_\kappa)
&=&
\iint_{E_\kappa = E_\kappa^0 -\epsilon}
\left(\nabla_v E_\kappa \cdot \nu_x - \nabla_x E_\kappa \cdot \nu_v\right)\,
g h \chi(E_\kappa)\, dS(x,v)\\
&&
- \iint_{E_\kappa \leq E_\kappa^0 -\epsilon}
\{h,E_\kappa\} g \chi(E_\kappa);
\eeas
all the other terms which appear actually drop out, because
they contain either
$\partial_{x_i}\partial_{v_i} E_\kappa - \partial_{v_i}\partial_{x_i} E_\kappa$
or
$\nabla_v E_\kappa \cdot\nabla_x E_\kappa-\nabla_x E_\kappa \cdot\nabla_v E_\kappa$,
and we may temporarily replace $\chi$ by a smooth approximation.
But the outer unit normal $\nu = (\nu_x,\nu_v)$ which appears in the boundary
integral is parallel to $(\nabla_x E_\kappa,\nabla_v E_\kappa)$ so that the
boundary integral vanishes as well, and the assertion follows.
\end{proof}

It should be noted that the above integration-by-parts formula does
not require any boundary conditions on either $g, h$ or on the weight $\chi$.

\subsection{The Hamiltonian structure of the linearized Einstein-Vlasov system}
\label{sec-hamilton}
We begin by defining a Hilbert space which is the suitable state space for the
linearized dynamics, and a certain transport operator.
\begin{definition}[The Hilbert space $X$ and the operators
                   $\mathcal T_\kappa$ and $\mathcal B_\kappa$.]
\label{defXT}
\begin{itemize}
\item[(a)]
  The Hilbert space $X$ is the space of all spherically symmetric
  functions in the weighted
  $L^2$ space on the set $D_\kappa$ with weight
  $e^{\lambda_\kappa}/|\phi _{\kappa }^{\prime }|$ and inner product
  \be\label{E:INNERPRODUCT}
  \left( f_{1},f_{2}\right)_X : =\iint_{D_\kappa}
  \frac{e^{\lambda_\kappa}}{|\phi_\kappa'(E_\kappa)|} f_{1} f_{2} dv\,dx.
  \ee
  Functions $f\in X$ are extended by $0$ to all of $\R^6$.
\item[(b)]
  For a function $f\in X$ the transport term
  $\mathcal T_\kappa f := - e^{-\lambda_\kappa}\{f,E_\kappa\}$
  {\em exists weakly in $X$}, iff there exists a function $\eta \in X$ such that
  for all test functions $\psi \in C^\infty_c (D_\kappa)$,
  \[
  \left(\eta, \psi\right)_X = - \left(f, \mathcal T_\kappa \psi\right)_X.
  \]
  Let
  \[
  D(\mathcal T_\kappa)
  := \{f\in X \mid \mathcal T_\kappa f \ \mbox{exists weakly in}\ X\},
  \]
  and $\mathcal T_\kappa f:=\eta$ for $f\in D(\mathcal T_\kappa)$.
  \item[(c)]
  The operator $\mathcal B_\kappa \colon D(\mathcal B_\kappa)\subset X \to X$
  is defined by $D(\mathcal B_\kappa) = D(\mathcal T_\kappa)$ and
  \begin{align}
    \mathcal B_\kappa f & : = \mathcal T_\kappa f
    + 4\pi r \phi_\kappa'(E_\kappa) e^{2 \mu_\kappa+\l_\kappa}
    \left(w \int  f \frac{\tilde w^2}{\tp}\,d\tilde v
    - \frac{w^2}{\p}\int  f \tilde w\,d\tilde v\right), \label{E:BDEF}
  \end{align}
  where we recall the definition of $w$~\eqref{E:WDEF}.
\end{itemize}
\end{definition}
\begin{remark}\label{Tasa}
\begin{itemize}
\item[(a)]
 When it exists, the function $\eta$ in Definition~\ref{defXT}~(b)
 is uniquely determined by $f$.
\item[(b)]
  The fact that the operator $\mathcal T_\kappa$ is anti-self-adjoint (or
  skew-adjoint) is shown in detail in \cite{RStr}.
\end{itemize}
\end{remark}


In what follows we denote perturbations of $f_\kappa,\mu_\kappa,\l_\kappa$ by
$f, \mu, \l$ respectively, i.e., $f_\kappa +f,\mu_\kappa +\mu,\l_\kappa+\l$
is the solution to the Einstein-Vlasov system \eqref{vlasov}--\eqref{p_T},
which is then linearized in $f, \mu, \l$. In particular,
$\l=\l_f$ depends on $f$ through the non-local relationship
\be\label{E:LAMBDAOFF}
\l = 4\pi \frac{e^{2\l_\kappa}}r \int_0^r s^2 \int f \p \, dv\,ds,
\ee
the linearization of \eqref{lambdaf}.

\begin{remark}\label{R:LAMBDAREGULARITY}
  A simple application of the Cauchy-Schwarz inequality implies that $\l$
  is well-defined for any $f\in X$, i.e. there exists a $C>0$ such that
  $ \left\vert\int_0^r s^2 \int f \p \, dv\,ds\right\vert \le C$, $r\in[0,\infty[$,
  as $f$ is supported on $D_\kappa$. In particular $|\l(r)|\le \frac Cr$ and thus
  $\int_0^\infty \l(s)^2\,ds <\infty$.
\end{remark}

As shown in \cite{HaRe2013} the formal variation of~\eqref{Casimir} yields the relation
\begin{align*}
  \delta\mathcal C(f_\kappa)  f =
  \int_{D_\kappa} \frac{e^{\l_\kappa}}{|\phi_\kappa'(E_\kappa)|} \chi'(f_\kappa)
  \left(|\phi_\kappa'(E_\kappa)| f -
  e^{\mu_\kappa}\frac{w^2}{\p}|\phi_\kappa'(E_\kappa)| \l\right)\,dv\,dx.
\end{align*}
Therefore, the formal tangent space of linearly dynamically accessible
perturbations can be interpreted as the set of perturbations $f$ satisfying
the orthogonality condition
\begin{align}\label{E:ORTHOGONALITY}
  \left( \chi'(f_\kappa),
  \, |\phi_\kappa'(E_\kappa)|f - e^{\mu_\kappa}\frac{w^2}{\p}|\phi_\kappa'(E_\kappa)|
  \l\right)_X = 0
\end{align}
for all $\chi \in C^1(\mathbb R)$, $\chi(0)=0$.
 Here $\l$ is defined through~\eqref{E:LAMBDAOFF}.
By Proposition 3.2 in~\cite{HaRe2013} for any $h\in C^1(D_\kappa)$
the orthogonality condition~\eqref{E:ORTHOGONALITY} is satisfied
if $f=\mathcal B_\kappa h$. A somewhat lengthy, but standard density
argument shows
that~\eqref{E:ORTHOGONALITY} is true for any
$f\in \overline{R(\mathcal B_\kappa)}$.
This motivates the following definition.

\begin{definition}[Linearly dynamically accessible perturbations]
  \label{D:DYNACCDEF}
  A function $f\in X$ is a {\em linearly dynamically accessible perturbation}
  if $f\in \overline{R(\mathcal B_\kappa)}$.
\end{definition}

\begin{remark}
  \begin{itemize}
  \item[(a)]
    It is an open question whether
    $R(\mathcal B_\kappa)=  \overline{R(\mathcal B_\kappa)}$.
    In the context of the Vlasov-Poisson system, the
    analogous statement is true and can be inferred from~\cite{GuRe2007}.
    The above question is intimately tied to the validity of
    Jeans' theorem; if the former is true then one can indeed show that
    $R(\mathcal B_\kappa)=  \overline{R(\mathcal B_\kappa)}$.
    However, Jeans' theorem is known to be false in general for the
    Einstein-Vlasov system~\cite{Sch}.
\item[(b)]
  We shall see in Lemma~\ref{L:EVLIN1} that the operator
  $\mathcal B_\kappa$ and the Hilbert space $X$ arise naturally in the
  study of the linearized Einstein-Vlasov system around the steady
  state $(f_\kappa,\mu_\kappa,\l_\kappa)$.
  \end{itemize}
\end{remark}

A key ingredient in the following analysis is a modified potential
induced by a state $f\in X$.

\begin{definition}[The modified potential $\bar \mu$ and the
operator $\bar{\mathcal L}_\kappa$.] \label{D:XDEF}
\begin{itemize}
\item[(a)]
For $f\in X$ the {\em induced modified potential $\bar \mu$} is defined as
\begin{align}\label{E:MUBARDEFNEW}
\bar\mu(r) = \bar\mu_f(r)
: =- e^{-\mu_\kappa-\l_\kappa}\int_r^\infty \frac1s\,
e^{\mu_\kappa(s)+\l_\kappa(s)}(2s\mu_\kappa'(s)+1)\,\l(s)\,ds
\end{align}
with $\lambda$ defined by \eqref{E:LAMBDAOFF}.
\item[(b)]
The operator $\bar{\mathcal L}_\kappa \colon X \to X$ is defined by
\begin{align}
\bar{\mathcal L}_\kappa f & : =
f - \phi_\kappa'(E_\kappa)E_\kappa \bar\mu. \label{E:LKAPPA1DEF}
\end{align}
\end{itemize}
\end{definition}
As it shall be often used
in the remainder of this paper, we observe that
\begin{equation}
2 r \mu_\kappa' +1 \geq 1,\ r\geq 0;
\label{lower bound mu}
\end{equation}
this follows since $\mu_\kappa' \geq 0$.
\begin{lemma} \label{mubarlemma}
For $f\in X$, $\bar\mu = \bar\mu_f \in C([0,\infty[)\cap C^1(]0,\infty[)$,
and $|\bar \mu (r)| \leq C \| f\|_X$, $r\geq 0$, with some $C>0$ independent
of $f$. Moreover,
\begin{align}\label{E:BASICVLASOV}
\frac{e^{-\mu_\kappa-\l_\kappa}r}{2r\mu_\kappa'+1}(e^{\mu_\kappa+\l_\kappa} \bar\mu)' = \l,
\ r\geq 0,
\end{align}
$\bar\mu \in \dot H^1_r$,
the subspace of spherically symmetric functions in the homogeneous Sobolev space
$\dot H^1(\R^3)$,  and
\begin{align}\label{E:BARMUELLIPTIC}
\frac{1}{4\pi r^2}
\frac{d}{dr}\left(\frac{e^{-\mu_\kappa-3\l_\kappa}r^2}{2r\mu_\kappa'+1}
\frac{d}{dr}\left(e^{\mu_\kappa+\l_\kappa} \bar\mu\right)\right) = \rho
= \int f\p\,dv\ \mbox{a.~e.}
\end{align}
in the weak sense.
\end{lemma}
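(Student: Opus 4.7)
The plan is to first control the linearized metric perturbation $\lambda=\lambda_f$ defined by \eqref{E:LAMBDAOFF} and the integrand of \eqref{E:MUBARDEFNEW}, then to extract all the claimed properties of $\bar\mu$, and finally to reduce the elliptic identity \eqref{E:BARMUELLIPTIC} to a distributional statement about the mass function $m_f(r):=\int_0^r s^2\int f\p\,dv\,ds$. Cauchy--Schwarz in the $X$ inner product applied to \eqref{E:LAMBDAOFF} gives
\[
m_f(r)^2\le \frac{\|f\|_X^2}{4\pi}\cdot \int_0^r s^2\int |\phi_\kappa'(E_\kappa)|\,e^{-\lambda_\kappa}\,\p^2\,dv\,ds,
\]
where the second factor is supported in $D_\kappa$ and bounded by $C\min\{r^3,1\}$; this yields $m_f(r)\le C\|f\|_X\min\{r^{3/2},1\}$ and hence the key bound $|\lambda(r)|\le C\|f\|_X\min\{r^{1/2},r^{-1}\}$ valid for all $r>0$. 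In particular $\lambda\in L^2(]0,\infty[)$ and $|\lambda|/s\in L^1(]0,\infty[)$. Since $\mu_\kappa+\lambda_\kappa$ is bounded on $[0,\infty[$ (it vanishes outside the steady state support) and, by \eqref{lower bound mu} together with Buchdahl's inequality \eqref{buchdahl}, $1\le 2r\mu_\kappa'+1\le C$, the integrand in \eqref{E:MUBARDEFNEW} is majorized by $C|\lambda(s)|/s$. This delivers at once that $\bar\mu$ is well defined, continuous on $[0,\infty[$, of class $C^1$ on $]0,\infty[$ by the fundamental theorem of calculus, and satisfies the uniform bound $|\bar\mu(r)|\le C\|f\|_X$.

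Differentiating \eqref{E:MUBARDEFNEW} directly yields \eqref{E:BASICVLASOV}; multiplying it by $re^{-2\lambda_\kappa}$ produces the key identity
\[
\frac{e^{-\mu_\kappa-3\lambda_\kappa}r^2}{2r\mu_\kappa'+1}\,(e^{\mu_\kappa+\lambda_\kappa}\bar\mu)'=re^{-2\lambda_\kappa}\lambda=4\pi\, m_f(r).
\]
A further Cauchy--Schwarz in $v$ shows that $\rho=\int f\p\,dv\in L^2(\R^3)$, so $m_f$ is absolutely continuous with $m_f'(r)=r^2\rho(r)$ for a.e.~$r$; dividing the displayed identity by $4\pi r^2$ and differentiating distributionally then gives \eqref{E:BARMUELLIPTIC} in the weak sense. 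For the $\dot H^1_r$ membership, reading off $\bar\mu'=(2\mu_\kappa'+1/r)\lambda-(\mu_\kappa'+\lambda_\kappa')\bar\mu$ from \eqref{E:BASICVLASOV} and using $\lambda\in L^2$, the uniform bound on $\bar\mu$, and the fact that $r(\mu_\kappa'+\lambda_\kappa')=4\pi r^2 e^{2\lambda_\kappa}(\rho_\kappa+p_\kappa)$ is compactly supported and bounded by \eqref{laplusmu}, one obtains $\int_0^\infty(\bar\mu')^2 r^2\,dr\le C\|f\|_X^2<\infty$.

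The main obstacle is interpreting \eqref{E:BARMUELLIPTIC} for merely $f\in X$, where $\lambda$ need not be classically differentiable. The device above sidesteps this by factoring the left-hand side through the monotone function $m_f$, whose absolute continuity (a consequence of $\rho\in L^2(\R^3)$) carries the identity; the delicate behaviour of $\lambda$ near the origin is absorbed by the $s^2$-weight already present in the definition of $m_f$, which is the one place where the gravitational geometry is essential rather than generic Sobolev estimates.
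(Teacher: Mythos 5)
Your argument is correct and follows essentially the same route as the paper's proof: Cauchy--Schwarz in $X$ to bound $m_f$ and hence $\lambda$ and $\bar\mu$, direct differentiation of \eqref{E:MUBARDEFNEW} for \eqref{E:BASICVLASOV} and the formula for $\bar\mu'$, the resulting $L^2$ bound on the radial derivative for $\bar\mu\in\dot H^1_r$, and multiplication of \eqref{E:BASICVLASOV} by $re^{-2\lambda_\kappa}$ followed by one more radial derivative for \eqref{E:BARMUELLIPTIC}. Your extra step of routing the weak identity through the absolute continuity of $m_f$ merely makes explicit what the paper compresses into the remark that $r^2\rho\in L^1$.
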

\begin{proof}
The properties of the steady state imply that the quantity
$e^{\mu_\kappa+\l_\kappa}(2 r \mu_\kappa'+1)$ is bounded;
this is seen from the field equation~\eqref{ein2} and the boundedness of
$p_\kappa,\mu_\kappa,\l_\kappa$ which is explained in Section~\ref{setup}.
For $f\in X$ the
Cauchy-Schwarz inequality implies that
\begin{align}\label{E:REFQUERYONE}
\left|\int_0^r s^2 \int f\, \p dv\, ds \right|
\leq C \min\{1,r^{3/2}\}\| f\|_X
\end{align}
which implies the estimate for $\bar\mu$. Its continuity is obvious,
and since $\l$ is continuous, $\bar\mu$ is continuously differentiable
for $r>0$ with
\be \label{mubarder}
\bar \mu' = - (\mu_\kappa' + \l_\kappa') \, \bar \mu
+ \frac{2r\mu_\kappa' +1}{r} \lambda,
\ee
which follows after multiplying~\eqref{E:MUBARDEFNEW} by $e^{\mu_\kappa+\l_\kappa}$ and then differentiating with respect to $r$.
Eqn.~\eqref{E:BASICVLASOV} follows directly from \eqref{mubarder}.
By Remark~\ref{R:LAMBDAREGULARITY} and~\eqref{lower bound mu} we conclude
from~\eqref{E:BASICVLASOV}
that $\frac{d}{dr}\left(e^{\mu_\kappa+\lambda_\kappa}\bar\mu\right)\in L^2_r$.
Since $\mu_\kappa'+\l_\kappa'=0$ outside $D_\kappa$, it follows that
$\bar\mu\in \dot H^1_r$.
Multiplying both sides of~\eqref{E:BASICVLASOV}
by $e^{-2\l_\kappa}r$ and taking one more radial derivative we obtain
\eqref{E:BARMUELLIPTIC}; note that $r^2 \rho\in L^1([0,\infty[)$.
\end{proof}

\begin{lemma}\label{L:KCOMPACT}
  The operator $K\colon X\to X$, $Kf: = \phi_\kappa'(E_\kappa)E_\kappa \bar\mu_f$ with
  $\bar\mu_f$ given by~\eqref{E:MUBARDEFNEW} is bounded, self-adjoint, and compact.
\end{lemma}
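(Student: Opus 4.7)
My plan is to verify the three properties separately. Boundedness is almost immediate from Lemma~\ref{mubarlemma}; self-adjointness requires an integration-by-parts computation exploiting the elliptic identity~\eqref{E:BARMUELLIPTIC}; compactness follows by factoring $K$ through a Rellich gain of regularity on a bounded ball.

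\textbf{Boundedness.} Since $\phi_\kappa'(E_\kappa)$ vanishes off the bounded set $\overline{D}_\kappa$, so does $Kf$. The weight $e^{\lambda_\kappa}|\phi_\kappa'(E_\kappa)|E_\kappa^2$ is bounded on $\overline{D}_\kappa$, and combining this with the pointwise bound $|\bar\mu_f(r)|\le C\|f\|_X$ from Lemma~\ref{mubarlemma} yields $\|Kf\|_X^2\le C\|f\|_X^2\,|\overline{D}_\kappa|$.

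\textbf{Self-adjointness.} For $f,g\in X$, using $\phi_\kappa'<0$ on $D_\kappa$ and $E_\kappa=e^{\mu_\kappa}\p$, I compute
\begin{align*}
(Kf,g)_X
= -\iint_{D_\kappa}e^{\lambda_\kappa}E_\kappa\,\bar\mu_f\,g\,dv\,dx
= -\int_0^\infty 4\pi r^2\,e^{\lambda_\kappa+\mu_\kappa}\,\bar\mu_f\,\rho_g\,dr,
\end{align*}
where $\rho_g(r)=\int g\p\,dv$. Substituting $4\pi r^2\rho_g$ from the weak identity~\eqref{E:BARMUELLIPTIC} applied to $g$ and integrating by parts once in $r$, then invoking~\eqref{E:BASICVLASOV} to rewrite $\frac{d}{dr}(e^{\mu_\kappa+\lambda_\kappa}\bar\mu_h)=\frac{e^{\mu_\kappa+\lambda_\kappa}(2r\mu_\kappa'+1)}{r}\lambda_h$ for both $h=f$ and $h=g$, I obtain
\begin{align*}
(Kf,g)_X=\int_0^\infty e^{\mu_\kappa-\lambda_\kappa}(2r\mu_\kappa'+1)\,\lambda_f\,\lambda_g\,dr,
\end{align*}
which is manifestly symmetric in $(f,g)$. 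The boundary term $\bigl[r\,e^{\mu_\kappa-\lambda_\kappa}\bar\mu_f\,\lambda_g\bigr]_0^\infty$ vanishes: at $r=0$ because $r\lambda_g(r)=4\pi e^{2\lambda_\kappa(r)}\int_0^r s^2\rho_g\,ds\to 0$ while $\bar\mu_f(0)$ is finite, and at $r=\infty$ because, outside the (compact) support of the steady state, $\lambda_g(r)=O(1/r)$ and $\bar\mu_f(r)=O(1/r)$ as one reads directly from~\eqref{E:MUBARDEFNEW}.

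\textbf{Compactness.} I factor $K=J\circ\iota\circ\Psi$. The map $\Psi\colon X\to \dot H^1_r\cap L^\infty(\R^3)$, $f\mapsto\bar\mu_f$, is bounded by Lemma~\ref{mubarlemma} combined with the $L^\infty$-bound used above. Restriction to any ball $B$ containing the spatial support of $\overline{D}_\kappa$ therefore sends bounded sets of $X$ to bounded sets of $H^1(B)$, and by Rellich--Kondrachov the embedding $\iota\colon H^1(B)\hookrightarrow L^2(B)$ is compact. Finally, $J\colon L^2(B)\to X$, $\psi\mapsto \phi_\kappa'(E_\kappa)E_\kappa\,\psi(x)$, is bounded because the $v$-slice of $D_\kappa$ has uniformly bounded measure for $x\in B$, so
\begin{align*}
\|J\psi\|_X^2=\int_B\psi(x)^2\!\!\int_{\{v:\,(x,v)\in D_\kappa\}}\!\!e^{\lambda_\kappa}|\phi_\kappa'(E_\kappa)|E_\kappa^2\,dv\,dx\le C\|\psi\|_{L^2(B)}^2.
\end{align*}
Hence $K=J\circ\iota\circ\Psi$ is compact. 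The main obstacle in the whole proof is the self-adjointness step, where one must identify the correct integration by parts turning the non-symmetric expression $-\int r^2 e^{\mu_\kappa+\lambda_\kappa}\bar\mu_f\rho_g\,dr$ into the symmetric one in $\lambda_f\lambda_g$; this is exactly the role played by the two identities~\eqref{E:BARMUELLIPTIC} and~\eqref{E:BASICVLASOV}, and once it is in place the compactness argument is a standard application of Rellich--Kondrachov.
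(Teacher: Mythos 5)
Your proof is correct, and on the compactness step it takes a genuinely different route from the paper. For boundedness and self-adjointness you essentially flesh out what the paper leaves terse: your integration by parts turning $-\int_0^\infty 4\pi r^2 e^{\mu_\kappa+\lambda_\kappa}\bar\mu_f\rho_g\,dr$ into the symmetric expression $\int_0^\infty e^{\mu_\kappa-\lambda_\kappa}(2r\mu_\kappa'+1)\lambda_f\lambda_g\,dr$ via \eqref{E:BARMUELLIPTIC} and \eqref{E:BASICVLASOV} is exactly the computation the paper alludes to (and is consistent with \eqref{E:QUADRATICFORM}, since $\bar{\mathcal L}_\kappa=\mathrm{id}-K$), and your treatment of the boundary terms is sound. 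For compactness, however, the paper does not use a compact embedding at all: it takes a weakly convergent sequence $(f_n)$, writes $\lambda_{f_n-f}(s)$ as an $X$-inner product against the cut-off weight ${\bf 1}_{B_s(0)}$, proves a Lipschitz-in-$s$ equicontinuity bound for these profiles, and uses Arzel\`a--Ascoli to get uniform convergence $\sup_s|\lambda_{f_n-f}(s)|\to 0$ along a subsequence, which is then fed directly into the formula \eqref{E:MUBARDEFNEW} for $\bar\mu$. You instead factor $K$ through $X\to H^1(B)\hookrightarrow L^2(B)\to X$ and invoke Rellich--Kondrachov; this is the same strategy the paper itself uses for the Euler analogue in Lemma~\ref{lemma-self-adjoint-L-kappa}, and it is arguably cleaner since it reduces everything to a standard embedding, whereas the paper's argument is more hands-on but self-contained and avoids any Sobolev machinery. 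One small point to make explicit in your factorization: Lemma~\ref{mubarlemma} only asserts $\bar\mu_f\in\dot H^1_r$ and $\|\bar\mu_f\|_{L^\infty}\le C\|f\|_X$, not the quantitative bound $\|\nabla\bar\mu_f\|_{L^2}\le C\|f\|_X$ that your map $\Psi$ needs; that bound does follow immediately from \eqref{E:BASICVLASOV} together with the estimate $|\lambda_f(r)|\le C\min\{r^{1/2},r^{-1}\}\|f\|_X$ (Cauchy--Schwarz as in Remark~\ref{R:LAMBDAREGULARITY}), but you should state it rather than cite the lemma alone. Likewise your bound on $J$ uses that $|\phi_\kappa'(E_\kappa)|$ is bounded on $D_\kappa$, which is where assumption $(\Phi 2)$ enters; this is worth flagging.
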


\begin{proof}
  Boundedness follows from Lemma~\ref{mubarlemma} and self-adjointness is obvious
  from an integration-by-parts argument and~\eqref{E:BARMUELLIPTIC}.
  To prove compactness, assume that $(f_n)_{n\in\mathbb N}\subset X$ converges
  weakly to some $f\in X$. From~\eqref{E:LAMBDAOFF},
  \[
  \l_{f_n-f}(s) =
  \frac{e^{2\l_\kappa}}{s} \left(f_n-f,e^{-\l_\kappa}|\phi_\kappa'(E_\kappa)|\p {\bf 1}_{B_s(0)}\right)_X .
  \]
  If we denote
  $C_n(s):=e^{2\l_\kappa}\left(f_n-f,
  e^{-\l_\kappa}|\phi_\kappa'(E_\kappa)| \p {\bf 1}_{B_s(0)}\right)_X$,
  it follows from the weak convergence of $(f_n)_{n\in\mathbb N}$
  that $\lim_{n\to\infty}C_n(s)=0$ for any $s\ge0$. Moreover for any $s\ge R_\kappa$,
  $C_n(s)=C_n(R_\kappa)$, where $[0,R_\kappa]$ is the spatial support of the steady state.
  For any $s,s'\in[0,R_\kappa]$,
  \[
  \left\vert C_n(s)-C_n(s') \right\vert =
  \left\vert e^{2\l_\kappa}\left(f_n-f,
  e^{-\l_\kappa}|\phi_\kappa'(E_\kappa)| \p {\bf 1}_{B_s(0)\setminus B_{s'}(0)}\right)_X\right\vert
  \le C |s-s'|,
  \]
  where the last bound follows from H\"older's inequality.
  In particular, the sequence of functions $(C_n)_{n\in\mathbb N}$
  is equicontinuous on $[0,R_\kappa]$. By the Arzela-Ascoli theorem it follows
  that along some subsequence $(f_{n_k})_{k\in \mathbb N}$,
  $\lim_{k\to\infty} c_{n_k} = 0$,
  where $c_n:=\sup_{s\ge0}|C_n(s)|=\max_{s\in[0,R_\kappa]}|C_n(s)|$.
  From this we conclude that
  \begin{align*}
    &
    \|Kf_{n_k}-Kf\|_X^2 \\
    &
    \quad= \iint_{D_\kappa} |\phi_\kappa'(E_\kappa)|e^{-\l_\kappa}\p^2
    \left\vert \int_r^\infty \frac1s\,e^{\mu_\kappa(s)+\l_\kappa(s)}(2s\mu_\kappa'(s)+1)\,
    \l_{f_{n_k}-f}(s)\,ds \right\vert^2\,dv\,dx \\
    &
    \quad\le C  \iint_{D_\kappa} |\phi_\kappa'(E_\kappa)|\left(\int_r^\infty \frac1s\,
    \left\vert\l_{f_{n_k}-f}(s)\right\vert\,ds\right)^2 \,dv\,dx \\
    &
    \quad\le C c_{n_k}
    \iint_{D_\kappa} |\phi_\kappa'(E_\kappa)|\left(\int_r^\infty \frac1{s^2}\,ds\right)^2 \,dv\,dx \\
    &
    \quad\le C c_{n_k}\iint_{D_\kappa} |\phi_\kappa'(E_\kappa)| \frac1{r^2} \,dv\,dx \\
    &
    \quad
    \le C c_{n_k} \to 0 \ \ \text{ as } \ k\to\infty.
  \end{align*}
  This shows the compactness of the map $K$.
\end{proof}

\begin{lemma}
The operator $\mathcal B_\kappa$ defined in Definition~\ref{defXT} is
densely defined and anti self-adjoint on $X$.
The operator $\bar{\mathcal L}_\kappa$
is bounded and symmetric on $X$.
\end{lemma}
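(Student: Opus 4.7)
The plan is to split $\mathcal{B}_\kappa = \mathcal{T}_\kappa + \mathcal{A}_\kappa$, where
\[
\mathcal{A}_\kappa f := 4\pi r \phi_\kappa'(E_\kappa) e^{2\mu_\kappa+\lambda_\kappa}
\left(w \int f \tfrac{\tilde w^2}{\tp}\, d\tilde v - \tfrac{w^2}{\p}\int f \tilde w\, d\tilde v\right).
\]
Since $C^\infty_c(D_\kappa)\subset D(\mathcal{T}_\kappa)$ and $\mathcal{A}_\kappa$ will be shown to be everywhere defined and bounded on $X$, the domain $D(\mathcal{B}_\kappa)=D(\mathcal{T}_\kappa)$ is automatically dense. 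The anti-self-adjointness of $\mathcal{T}_\kappa$ is already recalled (Remark~\ref{Tasa}(b)), so by the perturbation principle that the sum of an anti-self-adjoint operator and a bounded skew-symmetric operator is anti-self-adjoint, it suffices to establish that $\mathcal{A}_\kappa$ is bounded and skew-symmetric on $X$.

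For boundedness, I would fix $x\in\R^3$ and apply Cauchy--Schwarz in $v$:
\[
\Bigl|\int f\tfrac{\tilde w^2}{\tp}\,d\tilde v\Bigr|^2
\le \Bigl(\int \tfrac{e^{\lambda_\kappa}}{|\phi_\kappa'(E_\kappa)|} f^2\,d\tilde v\Bigr)\,
       \Bigl(\int \tfrac{|\phi_\kappa'(E_\kappa)|}{e^{\lambda_\kappa}}\tfrac{\tilde w^4}{\tp^2}\,d\tilde v\Bigr),
\]
and similarly for $\int f\tilde w\, d\tilde v$. The second factor is a continuous function of $r$ compactly supported in $[0,R_\kappa]$, by ($\Phi$2) and the fact that $D_\kappa$ is bounded in $v$ uniformly in $r\in[0,R_\kappa]$. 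Squaring $\mathcal{A}_\kappa f$, multiplying by the weight $e^{\lambda_\kappa}/|\phi_\kappa'(E_\kappa)|$, and integrating gives $\|\mathcal{A}_\kappa f\|_X\le C\|f\|_X$.

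For skew-symmetry, introduce the radial functions $a_f(r):=\int f\tfrac{\tilde w^2}{\tp}\,d\tilde v$ and $b_f(r):=\int f\tilde w\,d\tilde v$, and analogously $a_g, b_g$. Using that $\phi_\kappa'/|\phi_\kappa'|=-1$ on $D_\kappa$, Fubini's theorem reduces the pairing to a one-dimensional integral,
\[
(\mathcal{A}_\kappa f,g)_X = -(4\pi)^2 \int_0^\infty r^3 e^{2\lambda_\kappa+2\mu_\kappa}\bigl(a_f(r)\,b_g(r)-b_f(r)\,a_g(r)\bigr)\,dr,
\]
and the same computation with the roles of $f$ and $g$ swapped produces the opposite sign, giving $(\mathcal{A}_\kappa f,g)_X = -(f,\mathcal{A}_\kappa g)_X$. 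Thus $\mathcal{A}_\kappa$ is bounded and anti-self-adjoint; combined with $\mathcal{T}_\kappa$ this yields the anti-self-adjointness of $\mathcal{B}_\kappa$.

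For $\bar{\mathcal{L}}_\kappa$ the matter is essentially done: by Definition~\ref{D:XDEF} and Lemma~\ref{L:KCOMPACT}, $\bar{\mathcal{L}}_\kappa = I - K$ where $K f=\phi_\kappa'(E_\kappa)E_\kappa\bar\mu_f$ has already been shown to be bounded and self-adjoint. Hence $\bar{\mathcal{L}}_\kappa$ is bounded and symmetric. The only real obstacle in the whole lemma is the bookkeeping of the weights in the Fubini step for $\mathcal{A}_\kappa$; the crucial cancellation is built into the specific form of the correction term, which is precisely the projection onto linearly dynamically accessible states and hence structurally must be skew-symmetric with respect to $(\cdot,\cdot)_X$.
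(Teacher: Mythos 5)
Your proof is correct and follows essentially the same route as the paper: the same splitting $\mathcal B_\kappa=\mathcal T_\kappa+\mathcal R_\kappa$ with the remainder shown bounded and anti-symmetric (you merely spell out the Cauchy--Schwarz and Fubini computations the paper dismisses as straightforward), together with Remark~\ref{Tasa} for $\mathcal T_\kappa$. For $\bar{\mathcal L}_\kappa$ you invoke Lemma~\ref{L:KCOMPACT} (writing $\bar{\mathcal L}_\kappa=\mathrm{id}-K$), whereas the paper cites the $\bar\mu$ estimate of Lemma~\ref{mubarlemma} and checks symmetry directly; since Lemma~\ref{L:KCOMPACT} precedes this statement, your shortcut is legitimate and equivalent.
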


\begin{proof}
We write $\mathcal B_\kappa = \mathcal T_\kappa + \mathcal R_\kappa$
where for $f\in X$,
\[
\mathcal R_\kappa f := 4\pi r \phi_\kappa'(E_\kappa) e^{2 \mu_\kappa+\l_\kappa}
\left( w \int  f \frac{\tilde w^2}{\tp}\,d\tilde v
- \frac{w^2}{\p}\int  f \tilde w\,d\tilde v \right);
\]
notice that $f$ is extended by $0$ to all of $\R^6$.
Since the functions $r, \mu_\kappa, \lambda_\kappa, \phi'(E_\kappa), v$
are all bounded on the set $D_\kappa$ a straightforward computation shows
that $\mathcal R_\kappa$ is bounded on $X$, and its anti-symmetry is just as easy
to check. Hence the assertion on $\mathcal B_\kappa$ follows from
Remark~\ref{Tasa}.
The estimate for $\bar \mu$ in Lemma~\ref{mubarlemma}
implies that $\bar{\mathcal L}_\kappa$ is bounded on $X$, and
it is easy to check that  $\bar{\mathcal L}_\kappa$ is symmetric.
\end{proof}

We now establish a first order Hamiltonian formulation of the
linearized Einstein-Vlasov system.
\begin{lemma} \label{L:EVLIN1}
The linearized, spherically symmetric Einstein-Vlasov system takes the form
\begin{align}\label{E:EVLINFIRST}
\pa_t f = \mathcal B_\kappa \bar{\mathcal L}_\kappa f,
\end{align}
where the operators $\mathcal B_\kappa$ and $\bar{\mathcal L}_\kappa$
are defined in~\eqref{E:BDEF} and~\eqref{E:LKAPPA1DEF} respectively.
Furthermore $D(\mathcal B_\kappa \bar{\mathcal L}_\kappa)=D(\mathcal B_\kappa)$.
The operator $\bar{\mathcal L}_\kappa$ induces a
quadratic form on $X$ which takes the form
\be\label{E:QUADRATICFORM}
\mathcal A_\kappa(f,f) :=
(\bar{\mathcal L}_\kappa f,f)_X
= \iint \frac{e^{\lambda _{\kappa }}}{|\phi _{\kappa }^{\prime }(E_{\kappa })|}
f^2\,dv\,dx
- \int_0^\infty e^{\mu_\kappa-\lambda_\kappa}
\left( 2r\mu_\kappa'+1\right) \,\lambda^2\,dr,
\ee
where $\l$ depends on $f$ through \eqref{E:LAMBDAOFF}.
The flow of \eqref{E:EVLINFIRST} preserves $\mathcal A_\kappa(f,f)$.
\end{lemma}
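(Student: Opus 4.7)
The plan is to derive \eqref{E:EVLINFIRST} directly by linearizing the spherically symmetric Vlasov equation \eqref{vlasov} around $(f_\kappa,\mu_\kappa,\lambda_\kappa)$ and then rewriting the result using the operators introduced in Definition~\ref{defXT} and Definition~\ref{D:XDEF}. Using the fact that $f_\kappa$ satisfies $e^{-\lambda_\kappa}\{f_\kappa,E_\kappa\}=0$, together with the explicit gradients $\nabla_x f_\kappa = \phi_\kappa'(E_\kappa) e^{\mu_\kappa}\mu_\kappa'\p\,x/r$ and $\nabla_v f_\kappa = \phi_\kappa'(E_\kappa) e^{\mu_\kappa} v/\p$, the linearization of \eqref{vlasov} collapses to
\[
\partial_t f \;=\; \mathcal T_\kappa f + \phi_\kappa'(E_\kappa)\,e^{\mu_\kappa}\!\left(\dot\lambda\,\frac{w^2}{\p} + e^{\mu_\kappa - \lambda_\kappa}\mu'\,w\right),
\]
where $\mu,\lambda$ denote the metric perturbations.

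Next, the linearized constraint equations are used to eliminate $\dot\lambda$ and $\mu'$. The linearization of \eqref{ein3} gives $\dot\lambda = -4\pi r e^{\mu_\kappa + \lambda_\kappa}\!\int \tilde w\,f\,d\tilde v$, and solving the linearization of \eqref{ein2} for $\mu'$ yields $\mu' = \tfrac{2r\mu_\kappa'+1}{r}\lambda + 4\pi r e^{2\lambda_\kappa}\!\int \tilde w^2 f/\tp\,d\tilde v$. Inserting both expressions and gathering the resulting $f$-bilinear terms reconstructs precisely the non-transport piece in \eqref{E:BDEF}, so that
\[
\partial_t f \;=\; \mathcal B_\kappa f + \phi_\kappa'(E_\kappa)\,e^{2\mu_\kappa - \lambda_\kappa}\,\frac{2r\mu_\kappa' +1}{r}\,\lambda\,w.
\]
The crucial algebraic step is then to show that this leftover equals $-\mathcal B_\kappa\bigl(\phi_\kappa'(E_\kappa)E_\kappa \bar\mu\bigr)$. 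Because $\bar\mu = \bar\mu(r)$ and $E_\kappa$ is conserved by the steady-state flow, $\mathcal T_\kappa(\phi_\kappa'(E_\kappa)E_\kappa\bar\mu) = -\phi_\kappa'(E_\kappa)e^{2\mu_\kappa-\lambda_\kappa}\bar\mu'\,w$, while the non-transport part of $\mathcal B_\kappa$ applied to the same argument is evaluated using identity \eqref{11} (the integral involving $\tilde w$ vanishes by parity in $\tilde v$). Summing these and invoking the first-order relation $\bar\mu' + (\mu_\kappa'+\lambda_\kappa')\bar\mu = (2r\mu_\kappa'+1)\lambda/r$, which is obtained by expanding the derivative in \eqref{E:BASICVLASOV}, gives exactly the claimed identity. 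The equality $D(\mathcal B_\kappa \bar{\mathcal L}_\kappa) = D(\mathcal B_\kappa)$ follows because this same computation, together with Lemma~\ref{mubarlemma} and the $L^2$ bound on $\lambda$ from Remark~\ref{R:LAMBDAREGULARITY}, shows that $\phi_\kappa'(E_\kappa)E_\kappa\bar\mu \in D(\mathcal T_\kappa)$ for every $f\in X$.

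For the quadratic form identity \eqref{E:QUADRATICFORM}, I would insert $\bar{\mathcal L}_\kappa f = f - \phi_\kappa'(E_\kappa)E_\kappa\bar\mu$ into $(\cdot,\cdot)_X$, use $\phi_\kappa'/|\phi_\kappa'| = -1$ and $\int E_\kappa f\,dv = e^{\mu_\kappa}\rho$ to rewrite the cross term as a one-dimensional integral of $\bar\mu\,\rho$, apply \eqref{E:BARMUELLIPTIC} to express $4\pi r^2\rho$ as a radial derivative, and integrate by parts once; boundary terms vanish by the decay of $\bar\mu$ at infinity and the vanishing of the weight $e^{-\mu_\kappa-3\lambda_\kappa}r^2/(2r\mu_\kappa'+1)$ at the origin. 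Substituting \eqref{E:BASICVLASOV} for the derivative of $e^{\mu_\kappa+\lambda_\kappa}\bar\mu$ produced by the integration by parts then delivers exactly the $-\int_0^\infty e^{\mu_\kappa-\lambda_\kappa}(2r\mu_\kappa'+1)\lambda^2\,dr$ term. Conservation of $\mathcal A_\kappa(f,f)$ along \eqref{E:EVLINFIRST} is a one-line consequence of the symmetry of $\bar{\mathcal L}_\kappa$ and the anti-self-adjointness of $\mathcal B_\kappa$:
\[
\tfrac{d}{dt}(\bar{\mathcal L}_\kappa f,f)_X = 2(\bar{\mathcal L}_\kappa \mathcal B_\kappa\bar{\mathcal L}_\kappa f,f)_X = 2(\mathcal B_\kappa \bar{\mathcal L}_\kappa f, \bar{\mathcal L}_\kappa f)_X = 0.
\]

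The main technical obstacle will be the bookkeeping of exponential weights and signs when identifying the leftover $\lambda$-term with $-\mathcal B_\kappa(\phi_\kappa'(E_\kappa)E_\kappa\bar\mu)$; the identity that makes this cancellation work is precisely the first-order ODE derived from \eqref{E:BASICVLASOV}, and any slip between the coefficients $2r\mu_\kappa'+1$ and $\mu_\kappa'+\lambda_\kappa'$ would destroy the match. Once that single algebraic identity is verified, all remaining assertions of the lemma reduce to routine manipulations.
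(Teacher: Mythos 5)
Your proposal is correct and follows essentially the same route as the paper: linearize \eqref{vlasov}, eliminate $\dot\lambda$ and $\mu'$ via the linearized \eqref{ein3} and \eqref{ein2} to reach $\pa_t f=\mathcal B_\kappa f+\phi_\kappa'(E_\kappa)e^{2\mu_\kappa-\lambda_\kappa}w\,\frac{2r\mu_\kappa'+1}{r}\lambda$, identify the leftover with $-\mathcal B_\kappa(\phi_\kappa'(E_\kappa)E_\kappa\bar\mu)$ using \eqref{inX}, \eqref{11} and \eqref{mubarder}, and then obtain \eqref{E:QUADRATICFORM} from the cross term and conservation from symmetry plus anti-self-adjointness. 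Your minor variations (using \eqref{E:BARMUELLIPTIC} with one integration by parts instead of directly rewriting $4\pi r^2\rho=(e^{-2\lambda_\kappa}r\lambda)'$, and invoking $\int_0^\infty\lambda^2\,dr<\infty$ from Remark~\ref{R:LAMBDAREGULARITY} for the domain statement) are equivalent to the paper's steps.
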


\begin{proof}
First we notice that for $f\in X$ at least formally,
\be\label{inX}
\mathcal T_\kappa (\phi_\kappa'(E_\kappa) E_\kappa \bar\mu)
= - e^{2 \mu_\kappa-\l_\kappa} \phi_\kappa'(E_\kappa)\, w\, \bar\mu',
\ee
where recall that the coordinate $w$ is given by~\eqref{E:WDEF}.
Since $\l_f$ is given by~\eqref{E:LAMBDAOFF} and the bound~\eqref{E:REFQUERYONE} holds, we conclude that
$|\l_f(r)| \leq C r^{1/2}$. This bound together with~\eqref{E:BASICVLASOV} implies that
$|\bar\mu'(r)| \leq C r^{-1/2}$, and the right hand side of
\eqref{inX} lies in $X$. Hence $\bar{\mathcal L}_\kappa$
maps $D(\mathcal B_\kappa)=D(\mathcal T_\kappa)$ into itself,
and the operator product $\mathcal B_\kappa \bar{\mathcal L}_\kappa$
is defined on $D(\mathcal B_\kappa)$.

To linearize the equations we expand $f = f_\kappa+\epsilon\delta f, \mu = \mu_\kappa+\epsilon\delta \mu, \l = \l_\kappa+\epsilon\delta \l$
into~\eqref{vlasov}, use~\eqref{ein3},
and neglect terms of quadratic order and higher.
By a slight abuse of notation we rename $\delta f, \delta\l, \delta\mu$ into $f,\l,\mu$ respectively, and
obtain the linearized Vlasov equation
\begin{align}\label{E:EVLIN1}
\pa_t  f =- e^{-\l_\kappa}\{f, E_\kappa\}
- 4\pi r \phi_\kappa'(E_\kappa) e^{2\mu_\kappa+\l_\kappa}\frac{w^2}{\p}
\int f \tilde w\,d\tilde v
+ e^{2\mu_\kappa-\l_\kappa}\phi_\kappa'(E_\kappa) w \mu'.
\end{align}
From the field equation~\eqref{ein2} and the definition~\eqref{p} of $p$
it follows that
\[
\mu' = 4\pi r e^{2\l_\kappa} \int f\frac{w^2}{\p}\,dv  +\frac{2r\mu_\kappa'+1}{r}\l.
\]
Plugging this back into~\eqref{E:EVLIN1} we obtain the equation
\begin{align}\label{E:EVLIN2}
\pa_t  f = \mathcal B_\kappa  f +
\phi_\kappa'(E_\kappa)e^{2\mu_\kappa-\l_\kappa} w \frac{2r\mu_\kappa'+1}{r} \l,
\end{align}
where $\mathcal B_\kappa$ is given by~\eqref{E:BDEF}.
Using \eqref{inX}, \eqref{11}, and \eqref{mubarder} it follows that
\[
\mathcal{B}_{\kappa }\left(
\phi _{\kappa }^{\prime }E_{\kappa }\bar{\mu}\right)=
-\phi _{\kappa }^{\prime }(E_{\kappa })e^{2\mu _{\kappa }-\lambda _{\kappa }}w
\frac{2r\mu _{\kappa }^{\prime }+1}{r}\lambda.
\]
Plugging this into~\eqref{E:EVLIN2} we obtain
\[
\partial _{t}f=\mathcal{B}_{\kappa }\left( f-\phi _{\kappa }^{\prime}
E_\kappa \bar{\mu}\right) = \mathcal{B}_\kappa \bar{\mathcal L}_\kappa f,
\]
which proves~\eqref{E:EVLINFIRST}.

The conservation of $(\bar{\mathcal L}_\kappa f,f)_X$ is a direct
consequence of the antisymmetry of $\mathcal B_\kappa$
and~\eqref{E:EVLINFIRST}.
Finally, from the definition of $\bar{\mathcal L}_\kappa$ we have
\begin{align*}
(\bar{\mathcal{L}}_{\kappa} f,f)_X & =
(f-\phi _{\kappa }^{\prime }E_{\kappa }\bar{\mu},\ f)_X \\
& = \iint \frac{e^{\lambda_\kappa }}{|\phi_\kappa'(E_{\kappa })|} f^{2}\,dv\,dx
+ \int e^{\mu_\kappa+\l_\kappa} \bar\mu \int f\p,\,dv\,dx \\
& =  \iint \frac{e^{\lambda_\kappa}}{|\phi_\kappa'(E_{\kappa })|} f^{2}\,dv\,dx
+ 4\pi \int_0^\infty r^2\bar\mu\,  e^{\mu_\kappa+\l_\kappa}\rho\,dr .
\end{align*}
Replacing $4\pi r^2 \rho$ by $\left(e^{-2\l_\kappa}r\l\right)'$
and using~\eqref{E:BASICVLASOV} we obtain~\eqref{E:QUADRATICFORM}.
\end{proof}

The above first order Hamiltonian formulation
appears to be new and not exploited elsewhere in the literature.
We now derive a second order formulation of the linearized
Einstein-Vlasov system, which was
first given in 1968 by Ipser and Thorne~\cite{IT68}.


\begin{lemma} \label{L:SECONDORDEREV}
A formal linearization of the spherically symmetric Einstein-Vlasov
system takes the form
\begin{align}\label{E:LINEARISEDEV}
\pa_{tt} f_- + \mathcal L_\kappa f_- = 0,
\end{align}
where $f_-$ denotes the odd part of $f$ with respect to the variable $v$,
\begin{align} \label{E:MATHCALLDEF}
\mathcal L_\kappa f & : =
- \mathcal B_\kappa\mathcal B_\kappa f
+ 4\pi e^{3\mu_\kappa} \phi_\kappa'(E_\kappa) w \,(2r\mu_\kappa'+1)
\int \tilde w\, f\,d\tilde v
= - \mathcal B_\kappa \bar{\mathcal L}_\kappa \mathcal B_\kappa f,
\end{align}
and $\mathcal B_\kappa$ is defined in~\eqref{E:BDEF}.
The operator $\mathcal L_\kappa$ is self-adjoint on $X$
with
\[
D(\mathcal L_\kappa) : = \{f\in X \mid  \mathcal B_\kappa\mathcal B_\kappa f
\in X\ \text{ {\em in a weak sense}}\},
\]
the latter being defined in analogy to Definition~\ref{defXT}~(b).
Therefore $\|\pa_t f_-\|_X^2 + (\mathcal L_\kappa f_-,f_-)_X$
is formally conserved along the flow of~\eqref{E:LINEARISEDEV},
and
\[
\|\pa_t f_-\|_X^2 + (\mathcal L_\kappa f_-,f_-)_X =
\mathcal A_\kappa(\pa_t f,\pa_t f).
\]
\end{lemma}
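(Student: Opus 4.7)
The plan has five conceptual steps. First, I would exploit the $v$-parity structure of the two operators in the first-order formulation $\partial_t f = \mathcal{B}_\kappa \bar{\mathcal{L}}_\kappa f$ from Lemma~\ref{L:EVLIN1}. Inspection of \eqref{E:BDEF} shows that $\mathcal{B}_\kappa$ swaps $v$-parity: the transport part $-e^{-\lambda_\kappa}\{\cdot, E_\kappa\}$ flips parity because $E_\kappa$ is even in $v$ while $\nabla_v E_\kappa$ is odd, and in the remaining term exactly one of the two $v$-integrals survives according to the parity of $f$, leaving a factor $w$ or $w^2/\p$ of the opposite parity. Conversely, $\bar{\mathcal{L}}_\kappa$ preserves $v$-parity because $\bar\mu_f$ depends on $f$ only through $\rho_f = \int f \p\,dv$, which vanishes on odd-in-$v$ functions; in particular $\bar{\mathcal{L}}_\kappa f_- = f_-$. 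Splitting the first-order system accordingly yields $\partial_t f_+ = \mathcal{B}_\kappa f_-$ and $\partial_t f_- = \mathcal{B}_\kappa \bar{\mathcal{L}}_\kappa f_+$. Differentiating the second and substituting the first gives $\partial_{tt} f_- = \mathcal{B}_\kappa \bar{\mathcal{L}}_\kappa \mathcal{B}_\kappa f_-$, which is \eqref{E:LINEARISEDEV} with $\mathcal{L}_\kappa = -\mathcal{B}_\kappa \bar{\mathcal{L}}_\kappa \mathcal{B}_\kappa$.

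Second, to obtain the explicit form of $\mathcal{L}_\kappa$ in \eqref{E:MATHCALLDEF}, I would write
\[
\mathcal{L}_\kappa f = -\mathcal{B}_\kappa \mathcal{B}_\kappa f + \mathcal{B}_\kappa\bigl(\phi_\kappa'(E_\kappa) E_\kappa \bar\mu_{\mathcal{B}_\kappa f}\bigr)
\]
and invoke the identity $\mathcal{B}_\kappa(\phi_\kappa'E_\kappa \bar\mu) = -\phi_\kappa' e^{2\mu_\kappa-\lambda_\kappa} w (2r\mu_\kappa'+1) \lambda/r$, already derived in the proof of Lemma~\ref{L:EVLIN1}, where $\lambda$ is associated to $\bar\mu$ via~\eqref{E:BASICVLASOV}. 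The crux is then to establish the identity
\[
\lambda_{\mathcal{B}_\kappa f_-}(r) = -4\pi r\, e^{\mu_\kappa+\lambda_\kappa}\! \int f_- \tilde w\, d\tilde v,
\]
which I would prove by computing $\rho_{\mathcal{B}_\kappa f_-} = \int (\mathcal{B}_\kappa f_-)\,\p \, dv$ by integration by parts in $v$ (using spherical symmetry of $f_-$ to reduce $v$-integrals to functions of $\int f_- w\,dv$) and by invoking the integration formula \eqref{11} from Lemma~\ref{L:ABIDENTITY}; differentiating the claimed $\lambda_{\mathcal{B}_\kappa f_-}$ in $r$ and matching yields the result. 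Plugging back produces exactly the advertised formula for $\mathcal{L}_\kappa$.

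Third, for self-adjointness I would use the abstract machinery built on $\mathcal{B}_\kappa$ being densely defined and anti-self-adjoint (hence closed). By von Neumann's theorem, $-\mathcal{B}_\kappa\mathcal{B}_\kappa = \mathcal{B}_\kappa^\ast \mathcal{B}_\kappa$ is a self-adjoint nonnegative operator on its natural domain, which in the weak formulation coincides with $D(\mathcal{L}_\kappa)$ as defined in the statement. The additive term $4\pi e^{3\mu_\kappa}\phi_\kappa'(E_\kappa) w(2r\mu_\kappa'+1)\int \tilde w f\,d\tilde v$ is bounded and symmetric on $X$ (indeed of compact type, like the operator $K$ in Lemma~\ref{L:KCOMPACT}), so the Kato--Rellich theorem preserves self-adjointness of the sum on the same domain.

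Fourth, for the energy identity $\|\partial_t f_-\|_X^2 + (\mathcal{L}_\kappa f_-, f_-)_X = \mathcal{A}_\kappa(\partial_t f, \partial_t f)$ I would expand $\mathcal{A}_\kappa(\partial_t f, \partial_t f) = (\bar{\mathcal{L}}_\kappa \partial_t f, \partial_t f)_X$ from Lemma~\ref{L:EVLIN1} using the parity decomposition $\partial_t f = \partial_t f_+ + \partial_t f_-$. Since $\bar{\mathcal{L}}_\kappa$ preserves $v$-parity and the inner product of an even and an odd function in $v$ vanishes, the cross terms drop out; the odd-odd piece reduces to $\|\partial_t f_-\|_X^2$ because $\bar{\mathcal{L}}_\kappa$ is the identity on odd functions, while the even-even piece equals $(\bar{\mathcal{L}}_\kappa \mathcal{B}_\kappa f_-, \mathcal{B}_\kappa f_-)_X = -(\mathcal{B}_\kappa \bar{\mathcal{L}}_\kappa \mathcal{B}_\kappa f_-, f_-)_X = (\mathcal{L}_\kappa f_-, f_-)_X$ by anti-self-adjointness of $\mathcal{B}_\kappa$. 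Time-independence follows either from the conservation of $\mathcal{A}_\kappa$ along \eqref{E:EVLINFIRST} or directly by pairing the second-order equation against $\partial_t f_-$. The main obstacle I anticipate is the rigorous treatment of the domain in the self-adjointness step, especially verifying that the weak-sense definition of $\mathcal{B}_\kappa\mathcal{B}_\kappa$ coincides with the natural domain on which $\mathcal{B}_\kappa^\ast \mathcal{B}_\kappa$ is self-adjoint.
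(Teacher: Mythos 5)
Your proposal is correct, and its skeleton---parity splitting of the first-order formulation \eqref{E:EVLINFIRST}, time differentiation to obtain \eqref{E:LINEARISEDEV} with $\mathcal L_\kappa=-\mathcal B_\kappa\bar{\mathcal L}_\kappa\mathcal B_\kappa$, decomposition of $\mathcal L_\kappa$ into $-\mathcal B_\kappa\mathcal B_\kappa$ plus a bounded symmetric term, and parity orthogonality in $X$ for the energy identity---is the same as the paper's. You deviate in two localized places. First, to pass to the explicit local formula in \eqref{E:MATHCALLDEF} you propose to prove $\lambda_{\mathcal B_\kappa f_-}=-4\pi r e^{\mu_\kappa+\lambda_\kappa}\int f_-\tilde w\,d\tilde v$ directly from \eqref{E:LAMBDAOFF}, whereas the paper simply quotes the linearized momentum constraint \eqref{dtlambdalin}, i.e.\ the linearization of \eqref{ein3}. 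Your identity is precisely the statement that this constraint is propagated by \eqref{E:EVLINFIRST}, and the computation does close: integrating by parts in $v$ and using \eqref{11} gives $\rho_{\mathcal B_\kappa f_-}=-e^{\mu_\kappa-\lambda_\kappa}\bigl[\tfrac1{r^2}(r^2 j)'+(\mu_\kappa'-\lambda_\kappa')\,j\bigr]$ with $j=\int f_- w\,dv$, which is exactly what is needed so that $\tfrac{d}{dr}\bigl(re^{-2\lambda_\kappa}\,\lambda_{\mathcal B_\kappa f_-}\bigr)=4\pi r^2\rho_{\mathcal B_\kappa f_-}$ holds for the claimed $\lambda_{\mathcal B_\kappa f_-}$; this buys a derivation self-contained within the Hamiltonian formulation at the cost of an extra computation. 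Second, for self-adjointness the paper observes that $\mathcal B_\kappa\mathcal B_\kappa$ is classically defined on $C^\infty_c(D_\kappa)$ (this is where $\Phi\in C^2$ enters) and is self-adjoint ``by arguments analogous to Remark~\ref{Tasa}'', then perturbs by the bounded symmetric $\mathcal R_\kappa$; your route via von Neumann's theorem for $\mathcal B_\kappa^{\ast}\mathcal B_\kappa=-\mathcal B_\kappa\mathcal B_\kappa$ plus Kato--Rellich is equally valid and arguably cleaner, but, as you yourself flag, identifying the von Neumann domain with the weak-sense domain $D(\mathcal L_\kappa)$ of the statement (equivalently, that test functions form a core) requires the same input that the paper defers to \cite{RStr}, so on this point your rigor matches, but does not exceed, the paper's. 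The energy identity argument is the paper's argument phrased at the operator level rather than through the explicit formula \eqref{E:QUADRATICFORM}, and is fine.
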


\begin{proof}
We begin by examining the operator $\mathcal L_\kappa$.
First we note that
$\mathcal B_\kappa\mathcal B_\kappa f \in X$
is defined in a pointwise, classical sense, provided
$f\in C^\infty_c(D_\kappa)$; here the regularity assumption
$\Phi \in C^2(]0,\infty[)$, i.~e., $\phi_\kappa(E_\kappa)\in C^2(D_\kappa)$
enters. Thus $\mathcal B_\kappa\mathcal B_\kappa$ is densely defined
and self-adjoint by arguments analogous to those in Remark~\ref{Tasa}.
If we define
\[
\mathcal R_\kappa f : =
4\pi e^{3\mu_\kappa} \phi_\kappa'(E_\kappa) w (2r\mu_\kappa'+1)
\int \tilde w\, f\,d\tilde v,
\]
then it follows easily that $\mathcal R_\kappa$ is bounded on $X$ and
symmetric, which proves the assertion for $\mathcal L_\kappa$.

Since $\phi_\kappa'(E_\kappa) E_\kappa \bar\mu$ is even in $v$ it is immediate
from~\eqref{E:EVLINFIRST} that
\begin{eqnarray*}
\partial _{t}f_{+} &=&\mathcal{B}_{\kappa }f_{-},\ \  \\
\partial _{t}f_{-} &=&\mathcal{B}_{\kappa }f_{+}+\phi_\kappa' (E_{\kappa })\,
e^{2\mu_\kappa - \lambda_\kappa} w \frac{2r\mu_\kappa'+1}{r} \lambda .
\end{eqnarray*}
We take the time derivative of the second equation and use the relation
\be \label{dtlambdalin}
\partial _{t}\lambda =-4\pi re^{\mu _{\kappa }+\lambda _{\kappa }}\int w\,f\,dv
=-4\pi re^{\mu _{\kappa }+\lambda _{\kappa }}\int w\, f_{-} dv,
\ee
which is the linearization of the field equation~\eqref{ein3}. Thus
\[
\partial _{tt}f_{-}=\mathcal{B}_{\kappa }^{2}f_{-}
-4\pi \phi _{\kappa}^{\prime }(E_{\kappa })e^{3\mu_\kappa} w\,
(2r\mu_\kappa'+1)\int \tilde w\ f_{-}d\tilde v =-\mathcal{L}_{\kappa }f_{-} .
\]
Next,
\begin{eqnarray*}
\left(\mathcal{L}_{\kappa }f_{-},f_{-}\right)_X
&=&
\left(\mathcal{B}_{\kappa }f_- ,\mathcal{B}_{\kappa }f_-\right)_X
-4\pi \int e^{3\mu_{\kappa }+\lambda _{\kappa }}
\left( 2r\mu _{\kappa }^{\prime }+1\right)
\left( \int w\ f_{-}dv\,\right) ^{2}\,dx  \\
&=&
\|\partial_t f_+ \|_X^2
-\int_{0}^{\infty}e^{\mu _{\kappa }-\lambda _{\kappa }}\left( 2r\mu_\kappa'+1\right) \,
(\lambda_{\partial_t f})^2\,dr;
\end{eqnarray*}
here we recall \eqref{dtlambdalin} and observe that
$\partial_t \lambda = \lambda_{\partial_t f}$.
Therefore
\begin{align*}
\|\pa_t f_-\|_X^2 + (\mathcal L_\kappa f_-,f_-)_X
&=\|\partial_t f_-\|_X^2 + \|\partial_t f_+\|_X^2
-\int_{0}^{\infty}e^{\mu _{\kappa }-\lambda _{\kappa }}\left( 2r\mu_\kappa'+1\right) \,
(\lambda_{\partial_t f})^2\,dr\\
& = \|\partial_t f \|_X^2
-\int_{0}^{\infty}e^{\mu _{\kappa }-\lambda _{\kappa }}\left( 2r\mu_\kappa'+1\right) \,
(\lambda_{\partial_t f})^2\,dr\\
& = \mathcal A_\kappa(\pa_t f,\pa_t f),
\end{align*}
by \eqref{E:QUADRATICFORM} and the fact that $f_+$ and $f_-$ are orthogonal
to each other in the space $X$.
\end{proof}

\begin{remark}
\begin{itemize}
\item[(a)]
The fact that $\mathcal A_\kappa(\pa_t f,\pa_t f)$ is conserved along
the linearized flow can also be seen directly from the first order equation
satisfied by $\partial _{t}f$,
\[
\partial _{t}\left( \partial _{t}f\right) =\mathcal{B}_{\kappa }\mathcal{L}
_{\kappa }\partial _{t}f.
\]
Note that by~\eqref{E:EVLINFIRST},
$\partial _{t}f\in R\left(\mathcal{B}_{\kappa }\right)$,
the range of the operator $\mathcal{B}_{\kappa }$.
\item[(b)]
As noted at the beginning of Section~\ref{linEV}
the quadratic form~\eqref{E:QUADRATICFORM} arises as the
second variation of a suitably chosen energy-Casimir
functional~\cite{HaRe2013,HaRe2014,KM}, which is conserved
along the flow of the nonlinear Einstein-Vlasov system.
\end{itemize}
\end{remark}

\subsection{The reduced Einstein-Vlasov operator}
\label{SS:REDUCED}

The main result of this section is Theorem~\ref{T:VLASOVREDUCED},
which states in a precise manner that the operator $\mathcal L_\kappa$
is bounded from below and above by a simpler,
Schr\"odinger-type operator. Note that this operator acts only on functions that depend on
the spatial variable $r$ and we therefore refer to it as a macroscopic operator.
In Section~\ref{SS:PROOFOFTHEOREM1} we shall use this to obtain
refined spectral information about the operator $\mathcal L_\kappa$.

An attempt to derive such a reduced operator was made by
Ipser in~\cite{IP1969}, yet the operator derived there bounds the
operator $\mathcal L_\kappa$ from below only under an additional
assumption on the steady state, which appears to be hard to verify.
Our approach is new and the key ingredient is the use of the modified
potential $\bar\mu$ introduced in \eqref{E:MUBARDEFNEW}.

The operator $\Delta_\kappa$ is given by
\be\label{lap_kappa}
\Delta_\kappa \psi : =
\frac{e^{\mu_\kappa+\l_\kappa}}{4\pi r^2}
\frac{d}{dr}\left(\frac{e^{-\mu_\kappa-3\l_\kappa}r^2}{2r\mu_\kappa'+1}
\frac{d}{dr} \left(e^{\mu_\kappa+\l_\kappa}\psi\right)\right).
\ee
The {\em reduced operator} is given by
\be\label{E:SKAPPADEFINITION}
S_\kappa\psi : =
- \Delta_\kappa \psi
- e^{\l_\kappa}\int |\phi_\kappa'(E_\kappa)|E_\kappa^2\,dv\,\psi.
\ee
The {\em non-local reduced operator}
$\tilde S_\kappa$ is given by
\be\label{E:STILDEKAPPADEF}
\tilde S_\kappa \psi : =
- \Delta_\kappa \psi
- e^{\l_\kappa}\int \left(\mathrm{id}-\Pi\right)
\left(|\phi_\kappa'(E_\kappa)|\,E_\kappa \psi \right) E_\kappa\,dv,
\ee
where $\Pi$ denotes the projection onto the orthogonal
complement of $R(\mathcal B_\kappa)$ in $X$, and $\mathrm{id}$ is the identity.
On a flat background, i.e., for $\lambda_\kappa=\mu_\kappa=0$ the
operator $ 4\pi \Delta_\kappa$ is the Laplacian applied to
spherically symmetric functions.

To make sense classically of the above operators it is natural to
require $\psi\in H^2_r=H^2(\R^3)\cap L^2_r$, where $L^2_r$
denotes the space of all spherically symmetric $L^2$ functions.
However, this is too strong for our purposes, since the
modified potentials $\bar \mu_f$ from Definition~\ref{D:XDEF} must belong
to the domain of definition of $S_\kappa$ and $\tilde S_\kappa$,
but they in general belong to $\dot H^1_r$. Therefore we define
$S_\kappa$ and $\tilde S_\kappa$ using duality:

\begin{definition}[The operators $S_\kappa$ and $\tilde S_\kappa$ as symmetric quadratic forms]
  \label{redEVdef}
  Let $(\dot H^1_r)'$ denote the dual space of $\dot H^1_r$,
  the radial subspace of the homogeneous Sobolev space $\dot H^1(\R^3)$.
  The dual pairing is denoted by $\langle\cdot,\cdot\rangle$.
  Then  $S_\kappa\colon \dot H^1_r \to  (\dot H^1_r)'$ is defined by
  \begin{align*}
    \langle S_\kappa \phi,\psi\rangle
    = & \int_0^\infty \frac{e^{-\mu_\kappa-3\l_\kappa}}{2r\mu_\kappa'+1}
    \frac{d}{dr} \left(e^{\mu_\kappa+\l_\kappa}\phi\right)
    \frac{d}{dr} \left(e^{\mu_\kappa+\l_\kappa}\psi\right)r^2\,dr \\
    & - \iint_{D_\kappa} e^{\l_\kappa} |\phi_\kappa'(E_\kappa)|E_\kappa^2 \phi \psi\,dv\,dx.
  \end{align*}
  The operator $\tilde S_\kappa\colon \dot H^1_r \to  (\dot H^1_r)'$
  is defined analogously.
\end{definition}

We use a standard definition of $\dot H^1$ as given e.g. in~\cite{Ge1998},
and we recall that $D_\kappa = \supp f_\kappa$.
As defined above the operators $S_\kappa$ and $\tilde S_\kappa$ are clearly self-dual.
Moreover, taking the $r$ derivative of
$\rho_\kappa = \int \phi_\kappa(E_\kappa) \p \,dv$ it follows that
\be\label{E:IDLOC}
\int |\phi_\kappa'(E_\kappa)|\p^2\,dv
= -e^{-\mu_\kappa} \frac{\rho_\kappa'}{\mu_\kappa'}.
\ee
Thus the operator $S_\kappa$ can formally be written in the form
\be\label{redEVmacro}
S_\kappa \psi  =
-\Delta_\kappa \psi
+ e^{\mu_\kappa+\l_\kappa} \frac{\rho_\kappa'}{\mu_\kappa'} \psi.
\ee
The key result of this section is the following theorem.


\begin{theorem}\label{T:VLASOVREDUCED}
\begin{enumerate}
\item[(a)]
  For every $\mu\in \dot H^1_r$ and
  $f = f_\mu := \phi_\kappa'(E_\kappa)E_\kappa\mu$ it holds that
  \be  \label{E:UPPERBOUND1}
  \langle S_\kappa\mu,\,\mu\rangle \ge {\mathcal A}_\kappa( f_\mu, f_\mu).
  \ee
  For every $f\in X$ and $\bar\mu_f$ as defined in \eqref{E:MUBARDEFNEW},
  \be  \label{E:LOWERBOUND1}
  {\mathcal A}_\kappa( f, f) \ge \langle\S_\kappa\bar\mu_f,\,\bar\mu_f\rangle.
  \ee
\item[(b)]
  For every $\mu\in \dot H^1_r$ and
  $\tilde f = \tilde f_\mu :=
  (\mathrm{id}-\Pi)(\phi_\kappa'(E_\kappa)E_\kappa\mu)\in R(\mathcal B_\kappa)$
  it holds that
  \be  \label{E:UPPERBOUND2}
  \langle \tilde S_\kappa\mu,\,\mu\rangle \ge
  {\mathcal A}_\kappa( \tilde f_\mu, \tilde f_\mu).
  \ee
  For every $f\in X$,
  \be  \label{E:LOWERBOUND2}
  {\mathcal A}_\kappa( f, f) \ge
  \langle\tilde S_\kappa\bar\mu_f,\,\bar\mu_f\rangle.
  \ee
\end{enumerate}
\end{theorem}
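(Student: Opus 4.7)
The plan is to prove both directions of both inequalities from a single variational identity. Given any $f \in X$, substitute the representation
\[
\lambda_f = \frac{e^{-\mu_\kappa-\lambda_\kappa}r}{2r\mu_\kappa'+1}\left(e^{\mu_\kappa+\lambda_\kappa}\bar\mu_f\right)'
\]
from \eqref{E:BASICVLASOV} into the potential part of $\mathcal A_\kappa(f,f)$. The resulting integral is exactly the gradient term of $\langle S_\kappa \bar\mu_f,\bar\mu_f\rangle$, which by the elliptic equation \eqref{E:BARMUELLIPTIC} and integration by parts reduces to $-\iint e^{\lambda_\kappa}\bar\mu_f f E_\kappa\,dv\,dx$. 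This yields the master identity
\[
\mathcal A_\kappa(f,f) = \|f\|_X^2 + \iint e^{\lambda_\kappa}\bar\mu_f\, f\, E_\kappa\,dv\,dx. \quad (\star)
\]
The decay for $\bar\mu_f$ provided by Lemma~\ref{mubarlemma}, together with Remark~\ref{R:LAMBDAREGULARITY}, justifies that the boundary terms at $r=0$ and $r=\infty$ vanish.

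\textbf{Joint form.} Next, introduce the joint quadratic form
\[
H(f,\mu) := \|f - \phi_\kappa'(E_\kappa)\,E_\kappa\,\mu\|_X^2 + \langle S_\kappa\mu,\mu\rangle
\]
on $X\times\dot H^1_r$. After cancelling the local potential $\iint e^{\lambda_\kappa}|\phi_\kappa'(E_\kappa)|E_\kappa^2\mu^2\,dv\,dx$ it becomes
\[
H(f,\mu) = \|f\|_X^2 + 2\iint e^{\lambda_\kappa} f E_\kappa \mu\,dv\,dx + \int_0^\infty \frac{e^{-\mu_\kappa-3\lambda_\kappa}r^2}{2r\mu_\kappa'+1}[(e^{\mu_\kappa+\lambda_\kappa}\mu)']^2\,dr,
\]
which is strictly convex in each argument separately. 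Its Euler-Lagrange equation in $\mu$ is precisely $\Delta_\kappa \mu = e^{\mu_\kappa+\lambda_\kappa}\rho_f$, i.e.\ the defining equation \eqref{E:BARMUELLIPTIC} of $\bar\mu_f$; and its $f$-minimum is trivially attained at $f = f_\mu := \phi_\kappa'(E_\kappa)E_\kappa\mu$. Combined with $(\star)$ this produces the twin identities
\[
H(f,\bar\mu_f) = \mathcal A_\kappa(f,f), \qquad H(f_\mu,\mu) = \langle S_\kappa\mu,\mu\rangle.
\]

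\textbf{Part (a).} Both bounds of (a) now fall out from comparing the two minimization directions at the appropriate point of $X\times\dot H^1_r$:
\[
\langle S_\kappa\mu,\mu\rangle = H(f_\mu,\mu) \ge \min_{\mu'\in\dot H^1_r} H(f_\mu,\mu') = \mathcal A_\kappa(f_\mu,f_\mu),
\]
\[
\mathcal A_\kappa(f,f) = H(f,\bar\mu_f) \ge \min_{f'\in X} H(f',\bar\mu_f) = \langle S_\kappa\bar\mu_f,\bar\mu_f\rangle,
\]
which are exactly \eqref{E:UPPERBOUND1} and \eqref{E:LOWERBOUND1}.

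\textbf{Part (b) and main obstacle.} The same scheme produces (b) upon replacing $H$ with the projected form
\[
\tilde H(f,\mu) := \|f - (\mathrm{id}-\Pi)(\phi_\kappa'(E_\kappa)E_\kappa\mu)\|_X^2 + \langle\tilde S_\kappa\mu,\mu\rangle,
\]
whose $f$-minimum is the asserted $\tilde f_\mu$. The essential additional input is the observation that the potential term of $\mathcal A_\kappa$ depends on $f$ only through $\rho_f$, and hence only through the component $(\mathrm{id}-\Pi)f \in \overline{R(\mathcal B_\kappa)}$, so that $\|\Pi f\|_X^2$ contributes as a positive extra piece to $\mathcal A_\kappa(f,f)$ and upgrades $S_\kappa$ to $\tilde S_\kappa$ in both directions. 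The hard technical point is the careful justification of the integrations by parts in the first step at $r=0$ and $r=\infty$ using only the weak $\dot H^1_r$ regularity of $\bar\mu_f$ from Lemma~\ref{mubarlemma}; for (b) the further delicate issue is to verify that $\rho$ annihilates $N(\mathcal B_\kappa) = R(\mathcal B_\kappa)^\perp$, i.e.\ that the linear functionals $f\mapsto\int\xi(x)\rho_f(x)\,dx$, viewed in their $X$-inner-product representation, all lie in $\overline{R(\mathcal B_\kappa)}$.
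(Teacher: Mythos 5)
Your part (a) is correct, and it is essentially the paper's own proof in variational packaging: the square you minimize out in the $f$-variable is exactly the paper's completed square $\iint e^{\lambda_\kappa}\bigl(f/|\phi_\kappa'|^{1/2}+e^{\mu_\kappa}\bar\mu\,|\phi_\kappa'|^{1/2}\p\bigr)^2dv\,dx$, and the square you minimize out in the $\mu$-variable is the nonnegative term the paper discards in its expansion of $\langle-\Delta_\kappa\mu,\mu\rangle$; both rest on the same identities \eqref{E:BASICVLASOV}, \eqref{E:BARMUELLIPTIC}. The upper bound \eqref{E:UPPERBOUND2} also goes through in your scheme, since there $(\mathrm{id}-\Pi)\tilde f_\mu=\tilde f_\mu$ and $\tilde H(\tilde f_\mu,\cdot)=H(\tilde f_\mu,\cdot)$.

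The gap is in the lower bound \eqref{E:LOWERBOUND2}. Expanding $\tilde H$ one finds $\tilde H(f,\mu)=\|f\|_X^2+2\iint e^{\lambda_\kappa}\,[(\mathrm{id}-\Pi)f]\,E_\kappa\mu\,dv\,dx+\langle-\Delta_\kappa\mu,\mu\rangle$, so for your chain $\mathcal A_\kappa(f,f)\ge\tilde H(f,\bar\mu_f)\ge\min_{f'}\tilde H(f',\bar\mu_f)=\langle\tilde S_\kappa\bar\mu_f,\bar\mu_f\rangle$ you need $\mathcal A_\kappa(f,f)-\tilde H(f,\bar\mu_f)=2\iint e^{\lambda_\kappa}(\Pi f)E_\kappa\bar\mu_f\,dv\,dx=-2\bigl(\Pi f,\Pi(\phi_\kappa'(E_\kappa)E_\kappa\bar\mu_f)\bigr)_X\ge0$. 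Your "essential observation" — that $\rho_f$, hence the potential term of $\mathcal A_\kappa$, depends only on $(\mathrm{id}-\Pi)f$ — would make this term vanish, but that observation is false: it is equivalent to $\rho_g\equiv0$ for every $g\in\ker\mathcal B_\kappa=\overline{R(\mathcal B_\kappa)}^{\,\perp}$, whereas $g_0:=|\phi_\kappa'(E_\kappa)|E_\kappa e^{-\mu_\kappa-\lambda_\kappa}$ lies in $\ker\mathcal B_\kappa$ (this is precisely the fact, following from \eqref{11}, that the paper uses in the proof of Theorem~\ref{T:MICROMACRO}) and $\rho_{g_0}=e^{-\lambda_\kappa}\int|\phi_\kappa'(E_\kappa)|\p^2dv>0$ on the support. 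Equivalently, the representer of the total-mass functional $f\mapsto\int\rho_f\,dx$ lies in $\ker\mathcal B_\kappa$, not in $\overline{R(\mathcal B_\kappa)}$ — the opposite of what you assert; indeed it is exactly this membership in the kernel that makes $\int\rho_f\,dx=0$ for $f\in R(\mathcal B_\kappa)$. So your derivation of \eqref{E:LOWERBOUND2} does not close for general $f\in X$. The repair is to prove the bound for $f\in\overline{R(\mathcal B_\kappa)}$, where $\Pi f=0$ and the offending cross term vanishes; this is what the paper's own proof does (its first line assumes $f\in R(\mathcal B_\kappa)$, using $(f,\Pi h)_X=0$ in \eqref{mix1}), and the restricted version is all that is used later, e.g.\ in \eqref{E:DIMENSIONFORMULA} and Lemma~\ref{lemma-quadratic-L-R}.
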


\begin{proof}
We first observe that for $\mu\in \dot H^1_r$,
\begin{equation}\label{lap_kappa_mu}
  \langle-\Delta_\kappa \mu,\mu\rangle
  = \int_0^\infty
  \frac{e^{-\mu_\kappa-3 \l_\kappa}}{2r\mu_\kappa'+1}
  \left(\left(e^{\mu_\kappa+\l_\kappa} \mu\right)'\right)^2 r^2 \,dr,
\end{equation}
in particular, using \eqref{E:MUBARDEFNEW} and \eqref{E:BASICVLASOV},
\begin{eqnarray}\label{lap_kappa_mubar1}
  \langle-\Delta_\kappa \bar\mu,\bar\mu\rangle
  &=&
  \int_0^\infty
  e^{\mu_\kappa-\l_\kappa} \left(2r\mu_\kappa'+1\right)
  \left(\frac{e^{-\mu_\kappa-\l_\kappa}}{2r\mu_\kappa'+1} r
  \left(e^{\mu_\kappa+\l_\kappa} \bar\mu\right)'\right)^2  \,dr
  \nonumber \\
  &=&
  \int_0^\infty e^{\mu_\kappa-\l_\kappa} \left(2r\mu_\kappa'+1\right)\, \lambda^2 dr,
\end{eqnarray}
and on the other hand by \eqref{E:BARMUELLIPTIC},
\begin{equation}\label{lap_kappa_mubar2}
  \langle-\Delta_\kappa \bar\mu,\bar\mu\rangle
  =
  - 4\pi \int_0^\infty e^{\mu_\kappa+\l_\kappa} \bar\mu\, \rho\, r^2 dr
  = -\iint e^{\mu_\kappa+\l_\kappa} \bar\mu \p f \,dv\,dx.
\end{equation}
The definitions of ${\mathcal A}_\kappa$ and $S_\kappa$ together with
\eqref{lap_kappa_mubar1}--\eqref{lap_kappa_mubar2} imply that
\begin{eqnarray} \label{skappa<}
  \langle S_\kappa \bar\mu,\bar\mu\rangle
  &=&
  \int_0^\infty e^{\mu_\kappa-\l_\kappa} \left(2r\mu_\kappa'+1\right)\, \lambda^2 dr
  -\iint e^{\l_\kappa+2\mu_\kappa}
  |\phi_\kappa'(E_\kappa)|\p^2 \bar\mu^2 \,dv\,dx\nonumber \\
  &=&
  {\mathcal A}_\kappa(f,f)
  -\iint f^2 \frac{e^{\lambda_\kappa}}{|\phi'(E_\kappa)|}\,dv\,dx
  - 2 \iint e^{\mu_\kappa+\l_\kappa} \bar\mu \p f \,dv\,dx\nonumber\\
  &&
  {}-\iint e^{\l_\kappa+2\mu_\kappa}
  |\phi_\kappa'(E_\kappa)|\p^2 \bar\mu^2 \,dv\,dx\nonumber\\
  &=&
  {\mathcal A}_\kappa(f,f)
  - \iint e^{\l_\kappa} \left(\frac{f}{|\phi_\kappa'(E_\kappa)|^{1/2}}
  + e^{\mu_\kappa} \bar\mu |\phi_\kappa'(E_\kappa)|^{1/2} \p\right)^2 \,dv\,dx
  \nonumber\\
  &\leq&
  {\mathcal A}_\kappa(f,f), \nonumber
\end{eqnarray}
and  the lower bound \eqref{E:LOWERBOUND1} is proven.
To obtain the upper bound \eqref{E:UPPERBOUND1}
we observe that by \eqref{lap_kappa_mu} for $\mu\in H^2_r$ and
$\lambda=\lambda_f$
with $f\in X$,
\begin{eqnarray} \label{upper1}
  \langle-\Delta_\kappa \mu,\mu\rangle
  &=&
  \int_0^\infty
  \frac{e^{-\mu_\kappa-3 \l_\kappa}}{2r\mu_\kappa'+1}
  \left(\left(e^{\mu_\kappa+\l_\kappa} \mu\right)'\right)^2 r^2 \,dr
  \nonumber \\
  &=&
  \int_0^\infty
  e^{\mu_\kappa-\l_\kappa} \left(2r\mu_\kappa'+1\right)
  \left(\frac{r e^{-\mu_\kappa-\l_\kappa} \left(e^{\mu_\kappa+\l_\kappa} \mu\right)'}
       {2r\mu_\kappa'+1} -\lambda\right)^2\,dr\nonumber \\
  &&
  {} + 2  \int_0^\infty
  r e^{-2\l_\kappa} \left(e^{\mu_\kappa+\l_\kappa} \mu\right)' \lambda \,dr
  - \int_0^\infty e^{\mu_\kappa-\l_\kappa}\left(2r\mu_\kappa'+1\right)\, \lambda^2 \,dr.
  \quad \nonumber
\end{eqnarray}
Now
\begin{eqnarray} \label{mulambdarho}
  \int_0^\infty r e^{-2\l_\kappa} \left(e^{\mu_\kappa+\l_\kappa} \mu\right)'
  \lambda \,dr
  &=&
  - \int_0^\infty e^{\mu_\kappa+\l_\kappa} \mu
  \left(r e^{-2\l_\kappa}\lambda\right)' dr
  =
  -\int e^{\mu_\kappa+\l_\kappa} \mu \rho\,dx\qquad \nonumber\\
  &=&
  -\iint e^{\mu_\kappa+\l_\kappa} \mu \p f\, dv\,dx
  = (f,\phi_\kappa'(E_\kappa) E_\kappa \mu)_X \nonumber.
\end{eqnarray}
Together with \eqref{upper1} this implies that
\begin{eqnarray}\label{skappa>}
  \langle S_\kappa\mu, \mu\rangle
  &\geq&
  2 (f,\phi_\kappa'(E_\kappa) E_\kappa \mu)_X + \mathcal A_\kappa(f,f) - \|f\|_X^2
  - \iint |\phi_\kappa'(E_\kappa)| E_\kappa^2 \mu^2 e^{\lambda_\kappa} dv\,dx
  \nonumber\\
  &=&
  \mathcal A_\kappa(f,f) + 2 (f,\phi_\kappa'(E_\kappa) E_\kappa \mu)_X - \|f\|_X^2
  - \| \phi_\kappa'(E_\kappa) E_\kappa\mu\|_X^2 \nonumber\\
  &=&
  \mathcal A_\kappa(f,f), \nonumber
\end{eqnarray}
provided $f= \phi_\kappa'(E_\kappa) E_\kappa\mu$, and \eqref{E:UPPERBOUND1}
is proven.

We now turn to part (b)
and assume that $f\in R(\mathcal B_\kappa)$.
By definition of the projection $\Pi$ and \eqref{lap_kappa_mubar1}--\eqref{lap_kappa_mubar2},
\begin{align}\label{mix1}
  &\iint f\,
  (\mathrm{id} - \Pi)\left(\phi_\kappa'(E_\kappa) E_\kappa\bar\mu\right)
  \frac{e^{\l_\kappa}}{ |\phi_\kappa'(E_\kappa)|}\,dv\,dx
  = - \iint f\, \phi_\kappa'(E_\kappa)\bar\mu E_\kappa
  \frac{e^{\l_\kappa}}{ \phi_\kappa'(E_\kappa)}\,dv\,dx \notag \\
  & \qquad = -  \iint e^{\mu_\kappa+\l_\kappa}   f\, \bar\mu \p\,dv\,dx
  = \int_0^\infty e^{\mu_\kappa-\l_\kappa} \left(2r\mu_\kappa'+1\right)\, \lambda^2 dr,
\end{align}
and
\begin{align}\label{mix2}
  &
  \iint e^{\l_\kappa}
  (\mathrm{id} - \Pi)
  \left(\phi_\kappa'(E_\kappa) E_\kappa \bar\mu\right) E_\kappa \bar\mu\,dv\,dx
  \notag \\
  &
  \qquad = - \iint
  (\mathrm{id} - \Pi)\left(\phi_\kappa'(E_\kappa) E_\kappa \bar\mu\right)
  \phi_\kappa'(E_\kappa) E_\kappa \bar\mu
  \frac{e^{\l_\kappa}}{|\phi_\kappa'(E_\kappa)|} \,dv\,dx \notag \\
  &
  \qquad = - \iint
  \left((\mathrm{id} - \Pi)\left(\phi_\kappa'(E_\kappa)
  E_\kappa \bar\mu\right)\right)^2
  \frac{e^{\l_\kappa}}{|\phi_\kappa'(E_\kappa)|} \,dv\,dx .
\end{align}
Using \eqref{mix1} and \eqref{mix2} it follows that
\begin{eqnarray*}
  \langle\tilde S_\kappa \bar\mu,\bar\mu\rangle
  &=&
  {\mathcal A}_\kappa(f,f)
  - \iint f^2 \frac{e^{\lambda_\kappa}}{|\phi'(E_\kappa)|}\,dv\,dx
  + 2 \int_0^\infty e^{\mu_\kappa-\l_\kappa}
  \left(2r\mu_\kappa'+1\right)\, \lambda^2 dr \\
  &&
  {}+\iint e^{\l_\kappa}
  (\mathrm{id} - \Pi)\left(\phi_\kappa'(E_\kappa) E_\kappa \bar\mu\right)
  E_\kappa \bar\mu\,dv\,dx \\
  &=&
  {\mathcal A}_\kappa(f,f)
  - \iint
  \left(f- (\mathrm{id} - \Pi)
  \left(\phi_\kappa'(E_\kappa) E_\kappa \bar\mu\right)\right)^2
  \frac{e^{\l_\kappa}}{|\phi_\kappa'(E_\kappa)|}\,dv\,dx \\
  &\leq&
  {\mathcal A}_\kappa(f,f),
\end{eqnarray*}
which proves \eqref{E:LOWERBOUND2}.
To obtain the upper bound \eqref{E:UPPERBOUND2}
we let $f =(\mathrm{id}-\Pi)\left(\phi_\kappa'(E_\kappa)E_\kappa \mu\right)$.
Then in particular, $(f,\phi_\kappa'(E_\kappa)E_\kappa \mu)_X = \|f\|_X^2$.
An argument analogous to the one used for
\eqref{skappa>} implies that
\begin{eqnarray*}
  \langle\tilde S_\kappa\mu, \mu\rangle
  &\geq&
  \mathcal A_\kappa(f,f) + 2 (f,\phi_\kappa'(E_\kappa) E_\kappa \mu)_X - \|f\|_X^2\\
  &&
  {}
  + \iint (\mathrm{id}-\Pi)(\phi_\kappa'(E_\kappa) E_\kappa \mu)\, E_\kappa \mu
  e^{\lambda_\kappa} dv\,dx \\
  &=&
  \mathcal A_\kappa(f,f) + \|f\|_X^2
  - \iint f \phi_\kappa'(E_\kappa)\, E_\kappa \mu
  \frac{e^{\lambda_\kappa}}{|\phi_\kappa'(E_\kappa)|} dv\,dx \\
  &=&
  \mathcal A_\kappa(f,f),
\end{eqnarray*}
and the proof is complete.
\end{proof}

\subsection{Main results for the Einstein-Vlasov system}
\label{SS:PROOFOFTHEOREM1}

We first need to comment on some notation.
 Let $L$
be either a self-adjoint, linear operator $L\colon H\supset D(L) \to H$ on some
Hilbert space $H$ with scalar product $(\cdot,\cdot)_H$
or  a self-dual operator $L\colon H\to H'$. In either case we
we denote by $n^{-}\left(L\right)$
its negative Morse index, which by definition
is the maximal dimension of subspaces of
$H$ on which
$(L\cdot,\cdot)_H <0$ or $\langle L\cdot,\cdot\rangle < 0$ in the self-dual case
with $\langle \cdot,\cdot\rangle$ the dual pairing.  For any decomposition
$H=H_{-}\oplus \ker L\oplus H_{+}$ such that
$(L\cdot,\cdot)_H <0$ on
$H_{-}\setminus\left\{  0\right\}$ and
$(L u,u)_H \geq\delta\left\Vert u\right\Vert_{H}^{2}$
on $H_{+}$ it
can be shown that $\dim H_{-}=n^{-}\left( L\right)$, see Remark 2.2 and
Lemma 12.1 in \cite{LinZeng2017}. For any subspace $S\subset H$, we also use
$n^{-}\left(L|_{S}\right)$ to denote the Morse index of
$L$ restricted to $S$.

The self-adjointness of the operator $\mathcal L_\kappa$ and
Theorem~\ref{T:VLASOVREDUCED}
imply that the negative part of the spectrum has to be discrete
and that part (a) of Theorem~\ref{mainev} holds,
more precisely:


\begin{theorem}\label{thm:main VE2}
  The negative part of the spectrum of
  $\mathcal L_\kappa \colon X \supset D(\mathcal L_\kappa)\to X$
  is either empty or consists of at most finitely many eigenvalues
  with finite multiplicities. For $\kappa$ sufficiently large, there
  exists at least one negative eigenvalue and therefore a growing mode
  for the linearization of the Einstein-Vlasov system around
  $(f_\kappa, \l_\kappa,\mu_\kappa)$.
\end{theorem}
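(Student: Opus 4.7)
\quad
The strategy is to reduce the count of negative directions of $\mathcal L_\kappa$ to that of the Schr\"odinger-type reduced operator $S_\kappa$ from \eqref{E:SKAPPADEFINITION}, and then invoke the explicit negative direction of $\mathcal A_\kappa$ already supplied by Theorem~\ref{T:NEGATIVEA}.

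The central identity is
\[
(\mathcal L_\kappa f, f)_X = \mathcal A_\kappa(\mathcal B_\kappa f, \mathcal B_\kappa f), \qquad f\in D(\mathcal L_\kappa),
\]
which follows at once from the factorization $\mathcal L_\kappa = -\mathcal B_\kappa \bar{\mathcal L}_\kappa \mathcal B_\kappa$ in Lemma~\ref{L:SECONDORDEREV}, the anti-self-adjointness of $\mathcal B_\kappa$, and the form representation \eqref{E:QUADRATICFORM}. On any subspace $V\subset D(\mathcal L_\kappa)$ on which $(\mathcal L_\kappa\,\cdot,\cdot)_X<0$, the map $f\mapsto \bar\mu_{\mathcal B_\kappa f}\in\dot H^1_r$ sends $V$ into the negative cone of $S_\kappa$, by the lower bound \eqref{E:LOWERBOUND1} of Theorem~\ref{T:VLASOVREDUCED}. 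This map is injective on $V$: if $\mathcal B_\kappa f=0$ the form vanishes on $f$, contradicting strict negativity; and if $\bar\mu_g=0$ for $g=\mathcal B_\kappa f\ne 0$, then \eqref{E:BASICVLASOV} forces $\lambda_g=0$, so that $\mathcal A_\kappa(g,g)=\|g\|_X^2\ge 0$, again a contradiction. Thus $n^-(\mathcal L_\kappa)\le n^-(S_\kappa)$.

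I would then show $n^-(S_\kappa)<\infty$ by classical Schr\"odinger arguments applied to the macroscopic form \eqref{redEVmacro}. The principal part $-\Delta_\kappa$ has coefficients $e^{\pm\mu_\kappa}$, $e^{\pm\lambda_\kappa}$, $(2r\mu_\kappa'+1)^{-1}$ which are all bounded and bounded away from zero on $[0,\infty[$ by Buchdahl's inequality \eqref{buchdahl} and the regularity of the steady metric, so its quadratic form is equivalent to the standard $\dot H^1_r$-norm squared; the potential $V_\kappa=e^{\mu_\kappa+\lambda_\kappa}\rho_\kappa'/\mu_\kappa'$ is bounded and compactly supported in $[0,R_\kappa]$, hence multiplication by $V_\kappa$ is a relatively form-compact perturbation. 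Weyl's theorem together with the min-max principle then give $\sigma_{\mathrm{ess}}(S_\kappa)\subset[0,\infty[$ and $n^-(S_\kappa)<\infty$. Combined with the self-adjointness of $\mathcal L_\kappa$ from Lemma~\ref{L:SECONDORDEREV} and the bound $n^-(\mathcal L_\kappa)\le n^-(S_\kappa)$, the min-max principle applied directly to $\mathcal L_\kappa$ forces $\sigma_{\mathrm{ess}}(\mathcal L_\kappa)\subset[0,\infty[$ and collapses the negative spectrum to at most $n^-(S_\kappa)$ eigenvalues of finite multiplicity.

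For the second assertion, fix $\kappa>\kappa_0$ as in Theorem~\ref{T:NEGATIVEA} and let $h$ be the generator produced there, so that $\mathcal A_\kappa(f_h,f_h)<0$ with $f_h=-\mathcal B_\kappa(\phi_\kappa' h)$ (cf.~\eqref{dynacdef}). Setting $g:=-\phi_\kappa' h$, which under $(\Phi 2)$ has the $C^1$ regularity needed to lie in the form domain of $\mathcal L_\kappa$, the central identity above yields
\[
(\mathcal L_\kappa g, g)_X = \mathcal A_\kappa(\mathcal B_\kappa g, \mathcal B_\kappa g) = \mathcal A_\kappa(f_h, f_h) < 0.
\]
Min-max therefore places the bottom of $\sigma(\mathcal L_\kappa)$ strictly below zero, and by the discreteness just established this infimum is attained at a genuine eigenvalue $-\nu^2<0$. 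Any associated eigenfunction $\psi$ then yields the growing mode $f_-(t,\cdot)=e^{\nu t}\psi$ of \eqref{E:LINEARISEDEV}.

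\emph{Anticipated main obstacle.}\quad The non-trivial step is the finiteness $n^-(S_\kappa)<\infty$: rigorously verifying that $\Delta_\kappa$ generates a coercive form on $\dot H^1_r$ equivalent to the standard homogeneous Sobolev norm and that $V_\kappa$ is form-compact amounts to Hardy-type and weighted Poincar\'e estimates that exploit the asymptotics of $\mu_\kappa,\lambda_\kappa$ at $r=0$ and $r=\infty$. A secondary technical point is confirming that $g=-\phi_\kappa' h$ lies in the form domain rather than merely $X$, which under $(\Phi 2)$ follows once one checks $\mathcal B_\kappa g\in X$, but requires some care near the boundary of $D_\kappa$ where $\phi_\kappa'$ may fail to vanish.
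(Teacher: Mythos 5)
Your proposal is correct and follows essentially the same route as the paper: the identity $(\mathcal L_\kappa f,f)_X=\mathcal A_\kappa(\mathcal B_\kappa f,\mathcal B_\kappa f)$, the reduction $n^-(\mathcal L_\kappa)\le n^-(S_\kappa)<\infty$ via Theorem~\ref{T:VLASOVREDUCED} (your explicit injectivity argument is exactly the content of Lemma~\ref{op_est_morse}, applied with the composite map rather than through $\bar{\mathcal L}_\kappa$), finiteness of $n^-(S_\kappa)$ by a compactness argument for the compactly supported bounded potential, the abstract fact that a self-adjoint operator with finite Morse index has only finitely many negative eigenvalues of finite multiplicity (Proposition~\ref{saspec}), and finally $n^-(\mathcal L_\kappa)\ge1$ for large $\kappa$ from Theorem~\ref{T:NEGATIVEA} with $g=-\phi_\kappa'h$, so $\mathcal B_\kappa g=f_h$. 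The technical points you flag (equivalence of the $\Delta_\kappa$-form with the homogeneous Sobolev norm and the domain/regularity of $g$) are treated at the same level of detail in the paper itself.
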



\begin{proof}
  By Lemma~\ref{L:SECONDORDEREV} the operator $\mathcal L_\kappa$ is self-adjoint. 
  We note that for $f\in D(\mathcal L_\kappa)$,
  \begin{equation*}
    (\mathcal{L}_{\kappa }f,f)_X =(\bar{\mathcal{L}}_{\kappa }
    \mathcal{B}_{\kappa }f ,\mathcal{B}_{\kappa}f)_X .
  \end{equation*}
  By Theorem~\ref{T:VLASOVREDUCED}, 
  \[
  n^{-}\left( \mathcal{L}_{\kappa }\right) \leq
  n^{-}\left( \bar{\mathcal{L}}_{\kappa }\right)
  \le n^{-}\left( S_{\kappa }\right) <\infty;
  \]
  for the sake of completeness we recall the general argument behind
  the first two estimates in Lemma~\ref{op_est_morse} in the appendix.
  In order to show that $n^{-}\left( S_{\kappa }\right) <\infty$ we first observe
  that $\langle \S_\kappa \phi,\phi\rangle \geq C \langle \S'_\kappa \phi,\phi\rangle$
  for $\phi \in \dot H^1_r$, where the self-dual operator
  $S'_\kappa \colon  \dot H^1_r\to (\dot H^1_r)'$ is formally given as
  $S'_\kappa = -\Delta -V_\kappa$ with a non-negative, continuous potential $V_\kappa$
  which has compact support. This follows from the corresponding bounds on
  $\lambda_\kappa$ and $\mu_\kappa$. Now we observe that the mapping
  $(-\Delta)^{1/2} \colon \dot H^1(\R^3) \to L^2(\R^3)$,
  $\phi\mapsto (2\pi |\xi|\hat\phi)\check{\phantom{\phi}}$ is an isomorphism which respects
  spherical symmetry. Passing to
  $\psi = (-\Delta)^{1/2}\phi$ we arrive at the relation
  that
  $4 \pi \langle \S'_\kappa \phi,\phi\rangle = ((\mathrm{id}-K)\psi,\psi)_{L^2}$,
  where $K=(-\Delta)^{-1/2}V_\kappa(-\Delta)^{-1/2}\colon L^2(\R^3) \to  L^2(\R^3)$
  is compact, since $V_\kappa$ is bounded and supported on the compact set
  $\bar B_{R_\kappa}(0)$, and the mapping
  $\dot H^1(\R^3) \ni f \mapsto {\bf 1}_{\bar B_{R_\kappa}(0)} f\in L^2(\R^3)$ is compact;
  notice that $h = V_\kappa(-\Delta)^{-1/2}\psi\in L^1\cap L^2(\R^3)$ so that
  $\hat h\in L^\infty \cap L^2(\R^3)$, and hence $\frac{1}{2 \pi|\xi|} \hat h$ and its inverse
  Fourier transform are in $L^2(\R^3)$.
  The spectral properties of compact operators imply that $n^-(\mathrm{id}-K) <\infty$,
  and invoking Lemma~\ref{op_est_morse} again it follows that
  $n^{-}\left( S_{\kappa }\right) <\infty$ as claimed.
  
  The assertion on the spectrum of $\mathcal L_\kappa$ follows from its spectral
  representation; for the sake of completeness we include the argument
  in Proposition~\ref{saspec} in the appendix.
  
  Each negative eigenvalue of $\mathcal{L}_{\kappa }$ gives rise to a
  pair of stable and unstable eigenvalues for~\eqref{E:LINEARISEDEV}.
  For $\kappa$ sufficiently large, $n^-(\mathcal L_\kappa)\ge 1$ by
  Theorem~\ref{T:NEGATIVEA}, and the claim follows.
  Note that the operator $\mathcal L_\kappa$ is non-negative when
  restricted to the subspace 
  of $X$ of all even-in-$v$ functions in $X$, since for any
  $f\in X$, $\int_{\mathbb R^3} f_+ w \,dv=0$, 
  where $f_+(x,v)=\frac12(f(x,v)+f(x,-v))$ is the even part of $f$.
  In particular, since the 
  negative eigenvalues are of finite multiplicity,
  the associated eigenspace necessarily contains only odd-in-$v$ 
  functions. 
\end{proof}

\begin{remark}
The finiteness of $n^-(\bar{\mathcal L}_\kappa)$ follows alternatively from the observation that 
the difference $\text{id} - \bar{\mathcal L}_\kappa = K$, which is a compact operator by Lemma~\ref{L:KCOMPACT}.
Since $\bar{\mathcal L}_\kappa$ is a compact perturbation of the identity, the only possible accumulation point
is $1$ and therefore $n^-(\bar{\mathcal L}_\kappa)<\infty$. 
\end{remark}

\begin{remark}
  By the same arguments as in the proof above we
  can conclude that
  \begin{equation}\label{E:DIMENSIONFORMULA}
  n^{-}\left( \mathcal{L}_{\kappa }\right) \leq
  n^{-}\left( \bar{\mathcal{L}}_{\kappa }|_{\overline{R(\mathcal{B}_{\kappa })}}\right)
  \le n^{-}\left( \tilde{S}_{\kappa }\right) <\infty,
  \end{equation}
  which will be used below.
\end{remark}


We now come to the existence of the linearized flow and
the corresponding exponential trichotomy decomposition into a
stable, unstable, and center space. Under a non-degeneracy assumption on
the operator $\tilde S_\kappa$ we in fact obtain a Lyapunov stability of the flow
on the center space for any $\kappa>0$. The key in our analysis will be the
first order formulation~\eqref{E:EVLINFIRST}.


\begin{theorem} \label{thm:main VE}
  The operator $\mathcal{B}_{\kappa }\bar{\mathcal{L}}_{\kappa }$
  generates a $C^{0}$ group $e^{t\mathcal{B}_{\kappa }\bar{\mathcal{L}}_{\kappa }}$
  of bounded linear operators on $X$, and there exists a decomposition
  \begin{equation*}
    X=E^{u}\oplus E^{c}\oplus E^{s}
  \end{equation*}
  with the following properties:
  \begin{itemize}
  \item[(i)]
    $E^{u}\left( E^{s}\right) $ is the linear subspace spanned by the
    eigenvectors corresponding to
    positive (negative) eigenvalues of
    $\mathcal{B}_{\kappa }\bar{\mathcal{L}}_{\kappa }$, and
    \begin{equation}
      \dim E^{u}=\dim E^{s}=n^{-}\left( \tilde{S}_{\kappa }\right) .
      \label{unstable-dimension-formula}
    \end{equation}
  \item[(ii)]
    The quadratic form
    $\left(\bar{\mathcal{L}}_{\kappa }\cdot ,\cdot \right)_X$
    vanishes on $E^{u,s}$, but is non-degenerate on $E^{u}\oplus E^{s}$, and
    \begin{equation*}
      E^{c}=\{f\in X \mid {\mathcal{A}}_{\kappa }(f,g)=
      \left( \bar{\mathcal{L}}_{\kappa }f,\ g\right)_X =0\ \mbox{for all}\
      g\in E^{s}\oplus E^{u}
      \}.
    \end{equation*}
  \item[(iii)]
    $E^{c}$, $E^{u}$, $E^{s}$ are invariant under
    $e^{t\mathcal{B}_{\kappa }\bar{\mathcal{L}}_{\kappa }}$.
  \item[(iv)]
    Let $\lambda _{u}=
    \min \{\lambda \mid \lambda \in \sigma (\mathcal{B}_{\kappa }
    \bar{\mathcal{L}}_{\kappa }|_{E^{u}})\}>0$.
    Then there exist $M>0$
    such that
    \begin{equation}
      \begin{split}
        & \left\vert e^{t\mathcal{B}_{\kappa }\bar{\mathcal{L}}_\kappa}|_{E^{s}}\right\vert
        \leq M e^{-\lambda_{u}t},\ t\geq 0,\\
        & \left\vert e^{t\mathcal{B}_{\kappa }\bar{\mathcal{L}}_\kappa}|_{E^{u}}\right\vert
        \leq M e^{\lambda_{u}t},\ t\leq 0,
      \end{split}
      \label{estimate-stable-unstable}
    \end{equation}
    \begin{equation}
      \left\vert e^{t\mathcal{B}_{\kappa }\bar{\mathcal{L}}_{\kappa }}|_{E^{c}}\right\vert
      \leq M (1+\left\vert t\right\vert )^{k_{0}},\ t\in \mathbb{R},
      \label{estimate-center}
    \end{equation}
    where
    \begin{equation}
      k_{0}\leq
      1+2\left( n^{-}\left( S_{\kappa }\right) -
      n^{-}\left( \tilde{S}_{\kappa }\right) \right) .  \label{bound-k-0}
    \end{equation}
  \item[(v)]
    When $\ker \tilde{S}_{\kappa }=\left\{ 0\right\} $, then $E^{c}$ can be
    further decomposed as
    \begin{equation}
      E^{c}=\ker \left( \mathcal{B}_{\kappa }\bar{\mathcal{L}}_{\kappa }\right)
      \oplus \left( E^{c}\cap \overline{R(\mathcal{B}_{\kappa })}\right),
      \label{decom-E-c}
    \end{equation}
    and
    $\bar{\mathcal{L}}_{\kappa }|_{E^{c}\cap \overline{R(\mathcal{B}_{\kappa})}}>\delta $
    for some positive constant $\delta $. As a consequence,
    $|e^{t\mathcal{B}_{\kappa }\bar{\mathcal{L}}_{\kappa }}|_{E^{c}}|\leq M$ for some
    constant $M$. In particular, when $\tilde{S}_{\kappa }>0$, then the
    steady state is stable in the sense that for all perturbations
    $f\in X$,
    \begin{equation}
      \left\Vert e^{t\mathcal{B}_{\kappa }\bar{\mathcal{L}}_{\kappa }}f\right\Vert_{X}
      \leq M\left\Vert f\right\Vert_{X}.
      \label{stability-general data}
    \end{equation}
    \end{itemize}
\end{theorem}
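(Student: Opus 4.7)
The plan is to invoke the abstract Hamiltonian PDE framework of Lin and Zeng~\cite{LinZeng2017}, which is designed precisely for systems of the form $\partial_t u = \mathscr{J}\mathscr{L} u$ with $\mathscr{J}$ densely defined and anti-self-adjoint and $\mathscr{L}$ bounded and self-adjoint. Lemma~\ref{L:EVLIN1} casts the linearized Einstein--Vlasov system in this form with $\mathscr{J}=\mathcal{B}_\kappa$ and $\mathscr{L}=\bar{\mathcal{L}}_\kappa$, and both the anti-self-adjointness of $\mathcal{B}_\kappa$ and the boundedness and self-adjointness of $\bar{\mathcal{L}}_\kappa$ have already been established. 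The remaining structural hypothesis, namely that $\bar{\mathcal{L}}_\kappa$ admits a decomposition $X = X_-\oplus \ker\bar{\mathcal{L}}_\kappa\oplus X_+$ with $\dim X_- < \infty$ and $\bar{\mathcal{L}}_\kappa \geq \delta\,\mathrm{id}$ on $X_+$, follows from the representation $\bar{\mathcal{L}}_\kappa = \mathrm{id} - K$ with $K$ self-adjoint and compact (Lemma~\ref{L:KCOMPACT}): the spectrum below $1-\varepsilon$ is then discrete with finite multiplicities, and $n^-(\bar{\mathcal{L}}_\kappa) \leq n^-(S_\kappa) < \infty$ by Theorem~\ref{T:VLASOVREDUCED}. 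Applying the abstract trichotomy theorem of~\cite{LinZeng2017} then yields the $C^0$ group generated by $\mathcal{B}_\kappa\bar{\mathcal{L}}_\kappa$, the invariant decomposition $X = E^u\oplus E^c \oplus E^s$, the spectral/Lagrangian properties (i)--(iii), and the decay and growth bounds~\eqref{estimate-stable-unstable}--\eqref{estimate-center}.

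For the dimension identity in~\eqref{unstable-dimension-formula} I would exploit that $\mathcal{B}_\kappa\bar{\mathcal{L}}_\kappa f \in R(\mathcal{B}_\kappa)$ for every $f\in D(\mathcal{B}_\kappa)$, so $\overline{R(\mathcal{B}_\kappa)}$ is flow-invariant and contains $E^u\oplus E^s$. Consequently, $\dim E^u = \dim E^s$ equals the negative Morse index of the restriction of $(\bar{\mathcal{L}}_\kappa\cdot,\cdot)_X$ to $\overline{R(\mathcal{B}_\kappa)}$, which by the non-local reduction~\eqref{E:UPPERBOUND2}--\eqref{E:LOWERBOUND2} and the comparison Lemma~\ref{op_est_morse} equals $n^-(\tilde{S}_\kappa)$. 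The polynomial growth bound~\eqref{bound-k-0} then follows from the Jordan-chain length estimate of~\cite{LinZeng2017}: the maximal Jordan-block length on $E^c$ is controlled by $1 + 2(n^-(\bar{\mathcal{L}}_\kappa) - n^-(\bar{\mathcal{L}}_\kappa|_{\overline{R(\mathcal{B}_\kappa)}}))$, which via Theorem~\ref{T:VLASOVREDUCED} is bounded by $1 + 2(n^-(S_\kappa) - n^-(\tilde{S}_\kappa))$.

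For part (v), under the assumption $\ker \tilde{S}_\kappa = \{0\}$, I would upgrade the lower bound~\eqref{E:LOWERBOUND2} to strict coercivity $(\bar{\mathcal{L}}_\kappa f,f)_X \geq \delta'\|f\|_X^2$ on $E^c \cap \overline{R(\mathcal{B}_\kappa)}$ via a standard weak-compactness argument exploiting the compactness of $K$: any normalized sequence $(f_n)$ in this subspace along which the quadratic form tends to zero has a weak limit $f_\infty$ whose modified potential $\bar\mu_{f_\infty}$ must lie in $\ker\tilde{S}_\kappa = \{0\}$, forcing $f_\infty \in \ker\bar{\mathcal{L}}_\kappa$ and contradicting the symplectic-orthogonal transversality of $E^c$ to $\ker(\mathcal{B}_\kappa\bar{\mathcal{L}}_\kappa)$ guaranteed by the abstract decomposition. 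This coercivity yields~\eqref{decom-E-c} and rules out Jordan blocks on $E^c\cap\overline{R(\mathcal{B}_\kappa)}$, and the uniform bound~\eqref{stability-general data} is then immediate from the conservation of $(\bar{\mathcal{L}}_\kappa f,f)_X$ together with the splitting of initial data along the finite-dimensional $E^u\oplus E^s$ components. The main technical obstacle I anticipate is the precise bookkeeping needed to pass from the abstract Lin--Zeng identities to the sharp explicit form~\eqref{bound-k-0}: one must identify how each negative direction of $\bar{\mathcal{L}}_\kappa$ that is lost upon restriction to $\overline{R(\mathcal{B}_\kappa)}$ manifests either as an element of $\ker\mathcal{B}_\kappa$ or as the top of a Jordan chain of length exactly two on $E^c$, which requires a delicate algebraic matching between the generalized-eigenspace structure of $\mathcal{B}_\kappa\bar{\mathcal{L}}_\kappa$ and the Morse-index defect $n^-(S_\kappa) - n^-(\tilde{S}_\kappa)$.
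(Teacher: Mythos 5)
Your overall strategy coincides with the paper's: the linearized system is cast in the Lin--Zeng Hamiltonian form via Lemma~\ref{L:EVLIN1}, the hypotheses of their abstract trichotomy theorem are verified through the decomposition $X=X_-\oplus\ker\bar{\mathcal L}_\kappa\oplus X_+$ coming from $\bar{\mathcal L}_\kappa=\mathrm{id}-K$ with $K$ compact (Lemma~\ref{L:KCOMPACT}) together with $n^-(\bar{\mathcal L}_\kappa)=n^-(S_\kappa)<\infty$ from Theorem~\ref{T:VLASOVREDUCED} (this is Proposition~\ref{prop-L-decompo} in the paper), and then (ii)--(iv) are quoted from \cite[Thm.~2.2]{LinZeng2017}. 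Your identification $\dim E^u=\dim E^s=n^-(\bar{\mathcal L}_\kappa|_{\overline{R(\mathcal B_\kappa)}})=n^-(\tilde S_\kappa)$, obtained by applying Lemma~\ref{op_est_morse} in both directions to \eqref{E:UPPERBOUND2}--\eqref{E:LOWERBOUND2}, is the content of the paper's Lemma~\ref{lemma-quadratic-L-R}; the paper reaches \eqref{unstable-dimension-formula} by combining this with the eigenvalue pairs supplied by the second-order operator $\mathcal L_\kappa$ of Theorem~\ref{thm:main VE2}, but your route through the flow-invariance of $\overline{R(\mathcal B_\kappa)}$ and the abstract index theorem is in the same spirit and at a comparable level of detail, as is your treatment of \eqref{bound-k-0}.

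The genuine weakness is in part (v). Your weak-compactness argument for the coercivity of $\bar{\mathcal L}_\kappa$ on $E^c\cap\overline{R(\mathcal B_\kappa)}$ presupposes that the quadratic form is already non-negative there: if it could take negative values, a normalized sequence driving the form to its infimum would not have the form tending to zero, and your dichotomy does not apply. The missing step, which is exactly how the paper proceeds, is the Morse-index count
\begin{equation*}
n^-\left(\bar{\mathcal L}_\kappa|_{E^c\cap\overline{R(\mathcal B_\kappa)}}\right)
= n^-\left(\bar{\mathcal L}_\kappa|_{\overline{R(\mathcal B_\kappa)}}\right)
- n^-\left(\bar{\mathcal L}_\kappa|_{E^u\oplus E^s}\right)
= n^-(\tilde S_\kappa)-n^-(\tilde S_\kappa)=0,
\end{equation*}
where $n^-(\bar{\mathcal L}_\kappa|_{E^u\oplus E^s})=\dim E^u$ is taken from \cite[Cor.~6.1]{LinZeng2017}; only after this, combined with the non-degeneracy of the restricted form when $\ker\tilde S_\kappa=\{0\}$ (Lemma~\ref{lemma-quadratic-L-R}) and the compact-perturbation structure of $\bar{\mathcal L}_\kappa$, does one get the lower bound $\ge\delta$.

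Two further points in (v) are stated backwards or imprecisely. The decomposition \eqref{decom-E-c} does not follow from coercivity; it follows from $\overline{R(\mathcal B_\kappa)}\cap\ker(\mathcal B_\kappa\bar{\mathcal L}_\kappa)=\{0\}$ (a consequence of $\ker\tilde S_\kappa=\{0\}$ via Lemma~\ref{lemma-quadratic-L-R}), which by \cite[Prop.~2.8]{LinZeng2017} yields $X=\overline{R(\mathcal B_\kappa)}\oplus\ker(\mathcal B_\kappa\bar{\mathcal L}_\kappa)$, and then \eqref{decom-E-c} since $\ker(\mathcal B_\kappa\bar{\mathcal L}_\kappa)\subset E^c$. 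Relatedly, your contradiction invoking ``the symplectic-orthogonal transversality of $E^c$ to $\ker(\mathcal B_\kappa\bar{\mathcal L}_\kappa)$'' cannot be right as stated, because $\ker(\mathcal B_\kappa\bar{\mathcal L}_\kappa)$ is \emph{contained} in $E^c$ (it consists of steady solutions of the linearized flow); the transversality that matters is between $\overline{R(\mathcal B_\kappa)}$ and this kernel. Finally, for \eqref{stability-general data} under $\tilde S_\kappa>0$ the spaces $E^u,E^s$ are trivial, so the estimate comes from the decomposition \eqref{decom-X} and the coercivity on $\overline{R(\mathcal B_\kappa)}$ together with the flow being the identity on $\ker(\mathcal B_\kappa\bar{\mathcal L}_\kappa)$, not from a splitting along $E^u\oplus E^s$.
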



\begin{remark}
  \begin{itemize}
    \item[(a)]
      Eqn.~\eqref{unstable-dimension-formula} gives the sharp linear
      stability condition, namely $\tilde{S}_{\kappa }\geq 0$.
    \item[(b)]
      We expect the exponential trichotomy estimates in (iii) to become
      useful for the possible
      construction of invariant (stable, unstable and center) manifolds,
      as well as for a possible proof of nonlinear instability in the future.
    \item[(c)]
      In the Newtonian limit the operator $\tilde{S}_{\kappa}$
      formally converges to its Newtonian counterpart, which was
      proved to be positive e.g.\ in \cite{GuoLin}.  This can be shown rigorously
      and for $\kappa\ll1$ it can be then shown that
      $\tilde{S}_{\kappa}\ge 0$.
      By~\eqref{stability-general data} we therefore obtain linear stability
      against {\em general initial data} in $X$.
      This improves the result in \cite{HaRe2013}, where linear
      stability was proved for dynamically accessible initial data,
      i.e., when $f \in R\left( \mathcal{B}_{\kappa }\right)$.
  \end{itemize}
\end{remark}


To prove Theorem~\ref{thm:main VE} we first
need to prove several preparatory statements.

\begin{proposition}
  \label{prop-L-decompo}
  There exists a decomposition of $X$ into the direct
  sum of three closed subspaces
  \begin{equation*}
    X=X_{-}\oplus \ker \bar{\mathcal{L}}_{\kappa }\oplus X_{+}
  \end{equation*}
  with the following properties:
  \begin{itemize}
  \item[(H1)]
    $\left( \bar{\mathcal{L}}_{\kappa }f,f\right)_X <0$ for all
    $f\in X_{-}\setminus \{0\}$, and
  \begin{equation}
    n^{-}\left( \bar{\mathcal{L}}_{\kappa }\right) := \dim
    X_{-}=n^{-}\left( S_{\kappa }\right)<\infty ,\ \ \
    \dim \ker \bar{\mathcal{L}}_{\kappa }=\dim \ker S_{\kappa }.
    \label{dim-equality-L-kappa}
  \end{equation}
\item[(H2)]
  There exists $\delta >0$ such that for all $f\in X_{+}$,
  \begin{equation*}
    \left( \bar{\mathcal{L}}_{\kappa }f,f\right)_X \geq
    \delta \left\Vert f\right\Vert_{X}^{2}.
\end{equation*}
\end{itemize}
\end{proposition}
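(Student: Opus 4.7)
The plan is to combine the structural fact that $\bar{\mathcal{L}}_\kappa$ is a compact perturbation of the identity (Lemma~\ref{L:KCOMPACT}) with the two-sided quadratic form comparison to the reduced operator $S_\kappa$ provided by Theorem~\ref{T:VLASOVREDUCED}. The decomposition and property (H2) come from the spectral theorem applied to $\bar{\mathcal{L}}_\kappa = I - K$, while the dimension identities in (H1) are obtained via injective ``transfer'' maps between $X$ and $\dot H^1_r$.

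First I would invoke Lemma~\ref{L:KCOMPACT}: writing $Kf := \phi_\kappa'(E_\kappa)E_\kappa\bar\mu_f$, we have $\bar{\mathcal{L}}_\kappa = I - K$ with $K$ self-adjoint and compact on $X$. The spectral theorem for compact self-adjoint operators then decomposes $X$ orthogonally into eigenspaces of $K$ with eigenvalues $\lambda_n$ that accumulate only at $0$; correspondingly $\bar{\mathcal{L}}_\kappa$ has spectrum $\{1-\lambda_n\}$ accumulating only at $1$. I define $X_-$ as the spectral subspace for the (finitely many) eigenvalues $1-\lambda_n < 0$, $\ker\bar{\mathcal{L}}_\kappa$ as the eigenspace for eigenvalue $0$ (finite dimensional since $1$ is isolated in $\sigma(K)$), and $X_+$ as the orthogonal complement $(X_- \oplus \ker\bar{\mathcal{L}}_\kappa)^\perp$. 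Property (H2) follows immediately: on $X_+$ the spectrum of $\bar{\mathcal{L}}_\kappa$ consists of positive eigenvalues accumulating only at $1$, so $0$ lies at positive distance $\delta$ from this set, and $(\bar{\mathcal{L}}_\kappa f, f)_X \geq \delta\|f\|_X^2$ on $X_+$.

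The core of (H1) is the identity $n^-(\bar{\mathcal{L}}_\kappa) = n^-(S_\kappa)$, which I would prove by a pair of symmetric comparison arguments. For $n^-(S_\kappa) \geq \dim X_-$: consider the linear map $X_- \to \dot H^1_r$, $f \mapsto \bar\mu_f$. It is injective, since $\bar\mu_f = 0$ combined with~\eqref{E:LOWERBOUND1} and~\eqref{E:QUADRATICFORM} would force $(\bar{\mathcal{L}}_\kappa f, f)_X \geq 0$, contradicting strict negativity on $X_-$. On the image, \eqref{E:LOWERBOUND1} gives $\langle S_\kappa \bar\mu_f,\bar\mu_f\rangle \leq (\bar{\mathcal{L}}_\kappa f, f)_X < 0$, yielding a negative definite subspace for $S_\kappa$ of dimension $\dim X_-$. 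The reverse inequality proceeds analogously: take an $n^-(S_\kappa)$-dimensional negative definite subspace $W \subset \dot H^1_r$ for $S_\kappa$, and use $\mu \mapsto f_\mu = \phi_\kappa'(E_\kappa)E_\kappa\mu$ together with~\eqref{E:UPPERBOUND1} to transport it into $X$.

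Finally, the kernel equality $\dim\ker\bar{\mathcal{L}}_\kappa = \dim\ker S_\kappa$ is obtained by applying the same two transfer maps to the full non-positive spectral subspaces rather than the strictly negative ones. Using that $\bar{\mathcal{L}}_\kappa$ annihilates $\ker\bar{\mathcal{L}}_\kappa$ and that $X_-$, $\ker\bar{\mathcal{L}}_\kappa$ are mutually orthogonal in $X$, one checks that $f \mapsto \bar\mu_f$ remains injective on $X_- \oplus \ker\bar{\mathcal{L}}_\kappa$, giving $n^-(\bar{\mathcal{L}}_\kappa) + \dim\ker\bar{\mathcal{L}}_\kappa \leq n^-(S_\kappa) + \dim\ker S_\kappa$. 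The reverse inequality, combined with the already-established Morse index equality, yields $\dim\ker\bar{\mathcal{L}}_\kappa = \dim\ker S_\kappa$. The main technical obstacle is the injectivity of $\mu \mapsto f_\mu$ on the non-positive subspace of $S_\kappa$: the vanishing $f_\mu = 0$ only forces $\mu \equiv 0$ on $D_\kappa$, so one must invoke the explicit Dirichlet-form part of $S_\kappa$ in~\eqref{E:SKAPPADEFINITION} together with the asymptotic decay built into $\dot H^1_r$ to rule out nontrivial $\mu$ supported outside the steady state.
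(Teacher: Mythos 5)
Your proposal is correct and follows essentially the same route as the paper: the decomposition and (H2) come from Lemma~\ref{L:KCOMPACT} (so that $\bar{\mathcal L}_\kappa$ is the identity plus a compact self-adjoint operator), and the dimension identities in (H1) come from the two-sided comparison with $S_\kappa$ in Theorem~\ref{T:VLASOVREDUCED}. The only difference is bookkeeping: the paper gets $\dim\ker\bar{\mathcal L}_\kappa=\dim\ker S_\kappa$ directly from the explicit kernel correspondence ($f\in\ker\bar{\mathcal L}_\kappa$ gives $\bar\mu_f\in\ker S_\kappa$ via \eqref{E:BARMUELLIPTIC}, and $\mu\in\ker S_\kappa$ gives $\phi_\kappa'(E_\kappa)E_\kappa\mu\in\ker\bar{\mathcal L}_\kappa$) and then deduces the Morse-index equality from $n^{\leq 0}(\bar{\mathcal L}_\kappa)=n^{\leq 0}(S_\kappa)$, whereas you prove the Morse-index equality first and recover the kernel equality from the non-positive index, which additionally leans on the (true, but unproved in your sketch) fact that maximal non-positive subspaces for $S_\kappa$ have dimension $n^-(S_\kappa)+\dim\ker S_\kappa$.
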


\begin{proof}
  By Lemma~\ref{L:KCOMPACT} and the definition~\eqref{E:LKAPPA1DEF} of $\bar{\mathcal L}_\kappa$
  the operator $\bar{\mathcal L}_\kappa$ is a compact perturbation of the identity, and
  therefore the only possible accumulation point of its spectrum is $1$.
  Therefore the negative Morse index and the dimension of
  $\ker \bar{\mathcal L}_\kappa$ are finite and the claimed decomposition follows,
  as well as the bound (H2).

  To prove (\ref{dim-equality-L-kappa}), first we note that
  \begin{equation}
    \dim \ker \bar{\mathcal{L}}_{\kappa }=\dim \ker S_{\kappa },
    \label{equ-dim-kernel}
  \end{equation}
  since it is easy to see that $f\in \ker \bar{\mathcal{L}}_{\kappa }$
  implies $\mu \in \ker S_{\kappa }$ where $\mu$ is defined from $f$
  via~\eqref{E:BARMUELLIPTIC}, and $\mu \in \ker S_{\kappa }$ implies that
  $\phi_{\kappa }'(E_{\kappa })E_{\kappa }\mu \in \ker \bar{\mathcal{L}}_{\kappa }$.
  Let $n^{\leq 0}\left( \bar{\mathcal{L}}_{\kappa }\right) $ and
  $n^{\leq 0}\left( S_{\kappa }\right)$ be the non-positive dimensions of
  $\left( \bar{\mathcal{L}}_{\kappa }\cdot ,\cdot \right)_X$ and
  $\left(S_{\kappa }\cdot ,\cdot \right)_{L^2_r}$. By
  Theorem~\ref{T:VLASOVREDUCED},
  $n^{\leq 0}\left(\bar{\mathcal{L}}_{\kappa }\right)=n^{\leq 0}\left(S_{\kappa }\right)$
  which combined with
  (\ref{equ-dim-kernel}) implies that
  $n^{-}\left( \bar{\mathcal{L}}_{\kappa}\right) =
  n^{-}\left( S_{\kappa }\right)$.
  This finishes the proof of the proposition.
\end{proof}

In the study of linear stability of (\ref{E:EVLINFIRST}), it is important to
restrict to the dynamically accessible
space $R\left( \mathcal{B}_{\kappa }\right) $, for which we have the
following result.

\begin{lemma}
  \label{lemma-quadratic-L-R}
  The quadratic form
  $\left( \bar{\mathcal{L}}_{\kappa }\cdot ,\cdot \right)_X $
  is non-degenerate on the quotient space
  $\overline{R(\mathcal{B}_{\kappa })} /
  \left( \overline{R(\mathcal{B}_{\kappa })}\cap
  \ker \bar{\mathcal{L}}_{\kappa }\right)$
  if and only if $\ker \tilde{S}_{\kappa}\subset \ker S_{\kappa }$.
  Moreover,
  \begin{equation} \label{equ-dim-ker-S-tilde}
  \dim \ker \left( \left( \bar{\mathcal{L}}_{\kappa }\cdot ,\cdot \right)_X
  |_{\overline{R(\mathcal{B}_{\kappa })}}\right) =\dim \ker \tilde{S}_{\kappa },
\end{equation}
and
\begin{equation}
n^{-}\left( \bar{\mathcal{L}}_{\kappa }|_{\overline{R(\mathcal{B}_{\kappa })}
/ \left( \overline{R(\mathcal{B}_{\kappa })}\cap \ker \bar{\mathcal{L}}
_{\kappa }\right) }\right)
=n^{-}\left(  \bar{\mathcal{L}}_{\kappa }
|_{\overline{R\left( \mathcal{B}_{\kappa }\right) }}\right) =
n^{-}\left( \tilde{S}_{\kappa }\right).
\label{equality-negative-modes-L-R}
\end{equation}
\end{lemma}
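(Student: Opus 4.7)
The plan is to use Theorem~\ref{T:VLASOVREDUCED}(b) to set up a correspondence between the bilinear form $\mathcal A_\kappa(\cdot,\cdot)=(\bar{\mathcal L}_\kappa\cdot,\cdot)_X$ restricted to $\overline{R(\mathcal B_\kappa)}$ and the quadratic form $\langle \tilde S_\kappa\cdot,\cdot\rangle$ on $\dot H^1_r$, through the natural maps $F\colon f\mapsto \bar\mu_f$ (Definition~\ref{D:XDEF}) and $G\colon\mu\mapsto \tilde f_\mu := (\mathrm{id}-\Pi)(\phi_\kappa'(E_\kappa)E_\kappa\mu)\in \overline{R(\mathcal B_\kappa)}$. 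The key tool is the sharp identity, already implicit in the computation proving \eqref{E:LOWERBOUND2},
\begin{equation} \label{sharp}
\mathcal A_\kappa(f,f)-\langle \tilde S_\kappa\bar\mu_f,\bar\mu_f\rangle = \bigl\|f-\tilde f_{\bar\mu_f}\bigr\|_X^2, \qquad f\in \overline{R(\mathcal B_\kappa)},
\end{equation}
together with the companion identity
\begin{equation} \label{StildeS}
\langle(\tilde S_\kappa-S_\kappa)\mu,\mu\rangle = \bigl\|\Pi(\phi_\kappa'(E_\kappa)E_\kappa\mu)\bigr\|_X^2,
\end{equation}
which follows directly from \eqref{E:SKAPPADEFINITION}, \eqref{E:STILDEKAPPADEF} and the self-adjointness of $\Pi$ on $X$.

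For \eqref{equality-negative-modes-L-R}, a standard push-pull using the inequalities in Theorem~\ref{T:VLASOVREDUCED}(b) suffices. On any subspace $V\subset\overline{R(\mathcal B_\kappa)}$ where $\mathcal A_\kappa$ is negative definite, $F|_V$ is injective (otherwise $\bar\mu_f=0$ would force $\mathcal A_\kappa(f,f)\geq 0$) and $\tilde S_\kappa$ is negative definite on $F(V)$, yielding $n^-(\mathcal A_\kappa|_{\overline{R(\mathcal B_\kappa)}})\leq n^-(\tilde S_\kappa)$; the reverse inequality follows symmetrically via $G$. For \eqref{equ-dim-ker-S-tilde}, I would polarize \eqref{sharp} and test the resulting bilinear equation against elements of $\overline{R(\mathcal B_\kappa)}$, in particular those of the form $\tilde f_\nu$ for $\nu\in \dot H^1_r$, to show that any $f$ in the form kernel must satisfy both $\bar\mu_f\in\ker\tilde S_\kappa$ and $f=\tilde f_{\bar\mu_f}$; conversely every $\mu\in\ker\tilde S_\kappa$ produces a form-kernel element $\tilde f_\mu$. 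This bijection yields the dimension identity.

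For the non-degeneracy equivalence, recall from \eqref{E:LKAPPA1DEF} that $f\in\ker\bar{\mathcal L}_\kappa$ iff $f=\phi_\kappa'(E_\kappa)E_\kappa\bar\mu_f$. Since a form-kernel element already satisfies $f=\tilde f_{\bar\mu_f}=(\mathrm{id}-\Pi)(\phi_\kappa'(E_\kappa)E_\kappa\bar\mu_f)$, the upgrade to $\ker\bar{\mathcal L}_\kappa$ amounts to the vanishing $\Pi(\phi_\kappa'(E_\kappa)E_\kappa\bar\mu_f)=0$. By \eqref{StildeS} and the non-negativity of $\tilde S_\kappa - S_\kappa$, for $\mu\in\ker\tilde S_\kappa$ this vanishing is equivalent to $(\tilde S_\kappa - S_\kappa)\mu=0$, hence to $\mu\in\ker S_\kappa$. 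Combining this with the bijection from the previous paragraph yields the claimed equivalence. The main obstacle will be the density argument in the form-kernel characterization: proving that testing the polarized \eqref{sharp} against $G(\dot H^1_r)$ is sufficient to conclude $\tilde S_\kappa\bar\mu_f=0$. This reduces to the density of $G(\dot H^1_r)$ in $\overline{R(\mathcal B_\kappa)}$, which should follow from the surjectivity (modulo kernel) of $\Delta_\kappa$ on $\dot H^1_r$ together with the continuity of $F$ (Lemma~\ref{mubarlemma}) and $\Pi$, which allow \eqref{sharp} and \eqref{StildeS} to extend from $R(\mathcal B_\kappa)$ to its closure.
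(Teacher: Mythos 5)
Your two auxiliary identities are correct: the ``sharp'' identity is exactly the defect computed in the paper's proof of \eqref{E:LOWERBOUND2}, namely $\mathcal A_\kappa(f,f)-\langle\tilde S_\kappa\bar\mu_f,\bar\mu_f\rangle=\|f-(\mathrm{id}-\Pi)(\phi_\kappa'(E_\kappa)E_\kappa\bar\mu_f)\|_X^2$ for $f\in\overline{R(\mathcal B_\kappa)}$, and the identity $\langle(\tilde S_\kappa-S_\kappa)\mu,\mu\rangle=\|\Pi(\phi_\kappa'(E_\kappa)E_\kappa\mu)\|_X^2$ is a correct consequence of \eqref{E:SKAPPADEFINITION}, \eqref{E:STILDEKAPPADEF}. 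Your push--pull argument for \eqref{equality-negative-modes-L-R} is essentially the paper's (Lemma~\ref{op_est_morse} applied in both directions via $f\mapsto\bar\mu_f$ and $\mu\mapsto(\mathrm{id}-\Pi)(\phi_\kappa'(E_\kappa)E_\kappa\mu)$), and your use of the second identity to pass from ``form-kernel element lies in $\ker\bar{\mathcal L}_\kappa$'' to ``$\bar\mu_f\in\ker S_\kappa$'' is sound and even makes explicit a step the paper leaves implicit.

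The genuine gap is in your characterization of the kernel of the restricted form, on which both \eqref{equ-dim-ker-S-tilde} and the non-degeneracy equivalence rest. You propose to polarize the sharp identity and test against $G(\dot H^1_r)=\{(\mathrm{id}-\Pi)(\phi_\kappa'(E_\kappa)E_\kappa\nu)\,:\,\nu\in\dot H^1_r\}$, and you yourself reduce this to the density of $G(\dot H^1_r)$ in $\overline{R(\mathcal B_\kappa)}$. That density is false: every element of $G(\dot H^1_r)$ is even in $v$ (the generator $\phi_\kappa'(E_\kappa)E_\kappa\nu(r)$ is even, and since $\mathcal B_\kappa$ reverses $v$-parity, $\ker\mathcal B_\kappa$ is parity invariant, so $\mathrm{id}-\Pi$ preserves evenness), whereas $\overline{R(\mathcal B_\kappa)}$ contains many nonzero odd-in-$v$ elements, e.g.\ $\mathcal B_\kappa h$ for even generators $h$; the negative-energy directions of Theorem~\ref{T:NEGATIVEA} are of this type. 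Surjectivity of $\Delta_\kappa$ on $\dot H^1_r$ only says that every radial potential is realized, not that the image $G(\dot H^1_r)$, which is parameterized by functions of $r$ alone, fills out a subspace of the phase-space Hilbert space $X$. In addition, even granting density, testing the polarized identity with $g=(\mathrm{id}-\Pi)(\phi_\kappa'(E_\kappa)E_\kappa\nu)$ produces the terms $\bar\mu_g$ and $(\mathrm{id}-\Pi)(\phi_\kappa'(E_\kappa)E_\kappa\bar\mu_g)$, which do not simplify, so isolating $\tilde S_\kappa\bar\mu_f$ this way is not straightforward. The detour is unnecessary: since $\mathcal A_\kappa(f,g)=(\bar{\mathcal L}_\kappa f,g)_X$ and $\mathrm{id}-\Pi$ is precisely the orthogonal projection onto $\overline{R(\mathcal B_\kappa)}$ (because $\overline{R(\mathcal B_\kappa)}^{\perp}=\ker\mathcal B_\kappa$ by anti-self-adjointness), membership of $f\in\overline{R(\mathcal B_\kappa)}$ in the form kernel is \emph{equivalent} to the single equation $(\mathrm{id}-\Pi)\bar{\mathcal L}_\kappa f=0$, i.e.\ $f=(\mathrm{id}-\Pi)(\phi_\kappa'(E_\kappa)E_\kappa\bar\mu_f)$; then $\tilde S_\kappa\bar\mu_f=0$ follows at once from \eqref{E:BARMUELLIPTIC} and \eqref{E:STILDEKAPPADEF}, and conversely every $\mu\in\ker\tilde S_\kappa$ yields the form-kernel element $(\mathrm{id}-\Pi)(\phi_\kappa'(E_\kappa)E_\kappa\mu)$ whose modified potential is $\mu$ (by uniqueness for $\Delta_\kappa$ on $\dot H^1_r$). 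This is exactly the paper's argument; replacing your density step by it, the remainder of your proposal goes through.
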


\begin{proof}
Since
\begin{equation*}
  \ker \left( \left( \bar{\mathcal{L}}_{\kappa }\cdot ,\cdot \right)_X
  |_{\overline{R(\mathcal{B}_{\kappa })}}\right)
  =\overline{R(\mathcal{B}_{\kappa})}\cap
  \ker \mathcal{B}_{\kappa }\bar{\mathcal{L}}_{\kappa },
\end{equation*}
it implies that
$f\in
\ker \left( \left( \bar{\mathcal{L}}_{\kappa }\cdot,\cdot \right)_X
|_{\overline{R(\mathcal{B}_{\kappa })}}\right) $
if and only if $f\in \overline{R(\mathcal{B}_{\kappa })}$ and
\begin{equation}
  0=\left( \mathrm{id}-\Pi \right) \bar{\mathcal{L}}_{\kappa }f
  =f-\left( \mathrm{id}-\Pi \right)
\left( \phi _{\kappa }^{\prime }(E_{\kappa })E_{\kappa }\mu \right)
\label{eqn-ker--f-R-B};
\end{equation}
here $\Pi$ is the projection onto $\ker \mathcal{B}_{\kappa }$ in
$X$. In view of the the definition
of $\mu $ by \eqref{E:BARMUELLIPTIC}, we deduce from
(\ref{eqn-ker--f-R-B}) that $\tilde{S}_{\kappa }\mu =0$.
On the other hand, if
$\tilde{S}_{\kappa }\mu =0$, then it is easy to check that
\begin{equation*}
  \left( \mathrm{id}-\Pi \right)
  \left( \phi _{\kappa }^{\prime }(E_{\kappa })E_{\kappa}\mu \right)
  \in \overline{R(\mathcal{B}_{\kappa })}\cap \ker \mathcal{B}_{\kappa }
  \bar{\mathcal{L}}_{\kappa }.
\end{equation*}
Thus (\ref{equ-dim-ker-S-tilde}) is proved. By Theorem \ref{T:VLASOVREDUCED},
\begin{equation*}
  n^{\leq 0}\left( \bar{\mathcal{L}}_{\kappa }|_{\overline{R(\mathcal{B}_{\kappa })}}\right)
  =n^{\leq 0}\left( \tilde{S}_{\kappa }\right) ,
\end{equation*}
which together with (\ref{equ-dim-ker-S-tilde}) implies
(\ref{equality-negative-modes-L-R}).
\end{proof}

Now we are ready to prove Theorem~\ref{thm:main VE}.

\begin{proof}[Proof of Theorem \protect\ref{thm:main VE}]
To prove (i) we
note that each pair of stable and unstable eigenvalues for the operator
$\mathcal L_\kappa$
provided by Theorem~\ref{thm:main VE2} also
gives a pair of stable and unstable eigenvalues for (\ref{E:EVLINFIRST}).
Together with~\eqref{equality-negative-modes-L-R}
we obtain the dimension formula~\eqref{unstable-dimension-formula}.

Consider the Hamiltonian formulation (\ref{E:EVLINFIRST}). The operator
$\mathcal{B}_{\kappa }$ can be defined as a self-dual operator
$X'\supset D\left( \mathcal{B}_{\kappa }\right) \rightarrow X$.
Moreover, by Proposition \ref{prop-L-decompo}, the quadratic form
$\left( \bar{\mathcal{L}}_{\kappa }\cdot ,\cdot \right)_X$
satisfies
the assumptions (H1)--(H3) in \cite{LinZeng2017}. Thus the conclusions
(ii)--(iv)
follow directly from \cite[Thm.\ 2.2]{LinZeng2017} for general
Hamiltonian PDEs.

It remains to show (v). When $\ker \tilde{S}_{\kappa }=\left\{ 0\right\} $,
by Lemma \ref{lemma-quadratic-L-R},
$\left( \bar{\mathcal{L}}_{\kappa }\cdot,\cdot \right)_X |_{\overline{R(\mathcal{B}_{\kappa })}}$
is non-degenerate and
\begin{equation*}
  \overline{R(\mathcal{B}_{\kappa })}\cap
  \ker \mathcal{B}_{\kappa }\bar{\mathcal{L}}_{\kappa }=\left\{ 0\right\}.
\end{equation*}
By \cite[Prop.~2.8]{LinZeng2017} we have the direct sum
decomposition
\begin{equation}
  X=\overline{R(\mathcal{B}_{\kappa })}\oplus
  \ker \mathcal{B}_{\kappa }\bar{\mathcal{L}}_{\kappa }.  \label{decom-X}
\end{equation}
Since $\ker \mathcal{B}_{\kappa }\bar{\mathcal{L}}_{\kappa }$ is the steady
solution space of (\ref{E:EVLINFIRST}) and is therefore contained in $E^{c}$, we
get the decomposition (\ref{decom-E-c}) for $E^{c}$. Since
\begin{equation*}
n^{-}\left( \bar{\mathcal{L}}_{\kappa }|_{E^{s}\oplus E^{u}}\right) =\dim
E^{u}=n^{-}\left( \tilde{S}_{\kappa }\right)
\end{equation*}
by \cite[Cor. 6.1]{LinZeng2017} and $E^{c}$ is the orthogonal complement
of $E^{s}\oplus E^{u}$ in $X$, it follows that
\begin{equation*}
  n^{-}\left( \mathcal{\bar{L}}_{\kappa }
  |_{E^{c}\cap \overline{R(\mathcal{B}_{\kappa })}}\right)
  =n^{-}\left( \mathcal{\bar{L}}_{\kappa }
  |_{\overline{R(\mathcal{B}_{\kappa })}}\right)
  -n^{-}\left( \mathcal{\bar{L}}_{\kappa}|_{E^{u}\oplus E^{s}}\right)
  =n^{-}\left( \bar{S}_{\kappa }\right)
-n^{-}\left( \bar{S}_{\kappa }\right) =0.
\end{equation*}
Thus
$\mathcal{\bar{L}}_{\kappa }|_{E^{c}\cap \overline{R(\mathcal{B}_{\kappa })}}
\geq \delta _{1}>0$, and by using
$\left( \mathcal{\bar{L}}_{\kappa }f,f\right)_X$
as the Lyapunov functional, we get the stability
estimate
$\left\Vert e^{t\mathcal{B}_{\kappa }\bar{\mathcal{L}}_{\kappa}}f\right\Vert _{X}
\leq M\left\Vert f\right\Vert _{X}$
for
$f \in E^{c}\cap \overline{R(\mathcal{B}_{\kappa })}$. Then the stability
estimate
$|e^{t\mathcal{B}_{\kappa }\bar{\mathcal{L}}_{\kappa}}|_{E^{c}}|\leq M$
follows by the decomposition (\ref{decom-E-c}) and the
fact that $e^{t\mathcal{B}_{\kappa }\bar{\mathcal{L}}_{\kappa }}f=f$ when
$f\in \ker \mathcal{B}_{\kappa }\bar{\mathcal{L}}_{\kappa }$.
When $\tilde{S}_{\kappa }>0$, (\ref{stability-general data})
follows from the decomposition
(\ref{decom-X}) by the same arguments as above. This finishes the proof of
the theorem.
\end{proof}

\begin{remark}
  \begin{itemize}
  \item[(a)]
    The polynomial growth on $E^{c}$ can be shown (\cite{LinZeng2017b}) to be at most quadratic
    by using the second order formulation (\ref{E:LINEARISEDEV}).
  \item[(b)]
    It is tempting to find the most unstable eigenvalue $\lambda _{0}<0\ $
    of $\mathcal L_\kappa$
    by trying to minimize
    $\left( \mathcal{L}_{\kappa }f,f\right)_X$
    over the constraint set
    $\{f\in D\left(\mathcal{L}_{\kappa }\right) \mid
    \left\Vert f\right\Vert _{X}=1\}$.
    However, in the current case it is difficult to study this variational
    problem directly due to the lack of compactness.
    It is the self-adjointness of $\mathcal{L}_{\kappa }$ and the
    finiteness of the negative Morse index that are crucial to the proof of
    the existence of unstable eigenvalues.
  \end{itemize}
\end{remark}


\section{Stability analysis for the Einstein-Euler system}
\label{S:EE}
\setcounter{equation}{0}


\subsection{The number density $n$}\label{SS:NUMBERDENSITY}

We need to introduce an additional quantity $n=N(\rho)$,
the number density. We define the function $N$ as
\be \label{Ndef}
N(\rho):= \exp\left(\int_1^\rho \frac{ds}{s+P(s)}\right),\
\rho \in ]0,\infty[.
\ee
Clearly,
$N\in C^2 (]0,\infty[)$ with
\be \label{Ndiffeq}
\frac{dN}{d\rho} = \frac{N}{\rho + P(\rho)},
\ee
and
\be \label{N2diff}
\frac{d^2 N}{d\rho^2} = \frac{dN/d\rho}{\rho + P(\rho)}
- \frac{N}{(\rho + P(\rho))^2} \left(1+ \frac{dP}{d\rho}\right)
= - \frac{dP}{d\rho} \frac{N}{(\rho + P(\rho))^2} < 0,
\ee
in particular, $\frac{dN}{d\rho}$ is a positive and strictly decreasing
function on $]0,\infty[$.
By $({\mathrm P}1)$, $P(s) \leq c s^\gamma$
for $s\in [0,1]$, where $\gamma >1$. Hence for $0<\rho\leq 1$,
\[
\int_1^\rho \frac{ds}{s+P(s)}\leq -\int_\rho^1 \frac{ds}{s(1+c s^{\gamma-1})}
= -\frac{1}{\gamma-1}\int_{\rho^{\gamma-1}}^1 \frac{d\sigma}{\sigma(1+c \sigma)}
= \ln\left(\frac{(1+c)^{1/(\gamma-1)}\rho}{(1+c \rho^{\gamma-1})^{1/(\gamma-1)}}\right).
\]
Together with
\eqref{Ndef}, \eqref{Ndiffeq}, and the monotonicity of $\frac{dN}{d\rho}$
this implies that
$\lim_{\rho\searrow 0} N(\rho) = 0$ and
$\lim_{\rho\searrow 0} \frac{dN}{d\rho}(\rho)$ exists in $]0,\infty[$,
in particular, $N\in C^1([0,\infty[)$.

The asymptotic behavior \eqref{radiation} implies that
$\lim_{\rho \to \infty} N(\rho) =\infty$ so that
$N\colon [0,\infty[ \to [0,\infty[$ is one-to-one and onto. Given any
solution of the Einstein-Euler system we define the corresponding
number density $n= N(\rho)$ so that $\rho=N^{-1}(n) = \rho(n)$,
and by \eqref{Ndiffeq},
\begin{equation}\label{E:NDEF}
n \frac{d \rho}{d n} = \rho(n) + p(n).
\end{equation}
Since the functional relationship between the pressure and the energy
density $\rho$ is prescribed by~\eqref{eqstate}, the expression $p(n)$
simply means $P(\rho(n))$.

Using \eqref{N2diff},
the Tolman-Oppenheimer-Volkov equation \eqref{tov}
and the fact that
$p'_\kappa = \frac{dP}{d\rho}(\rho_\kappa) \rho'_\kappa$ it follows that
\[
\left(\frac{dN}{d\rho}(\rho_\kappa)\right)'
= - \frac{dP}{d\rho}(\rho_\kappa)
\frac{N(\rho_\kappa)}{(\rho_\kappa + p_\kappa)^2}\rho'_\kappa
= - \frac{N(\rho_\kappa)}{(\rho_\kappa + p_\kappa)^2} p'_\kappa
= \frac{N(\rho_\kappa)}{\rho_\kappa + p_\kappa} \mu'_\kappa
= \frac{dN}{d\rho}(\rho_\kappa) \mu'_\kappa
\]
on the interval $[0,R_\kappa[$, where $[0,R_\kappa]$ is the support of the steady state.
We integrate this differential equation
to find that for all $0\leq r, r_0 < R_\kappa$,
\[
\frac{dN}{d\rho}(\rho_\kappa(r))
= \frac{dN}{d\rho}(\rho_\kappa(r_0)) e^{\mu_\kappa(r) - \mu_\kappa(r_0)} .
\]
We take the limit
$r_0\nearrow R_\kappa$ and
observe that $\rho_\kappa(R_\kappa)=0$ so that
\[
\frac{dN}{d\rho}(\rho_\kappa(r))
= \frac{dN}{d\rho}(0) \, e^{\mu_\kappa(r) - \mu_\kappa(R_\kappa)}.
\]
Now we observe that the function
\be\label{E:NKAPPADEF0}
N_\kappa := c_\kappa N
\ee
with $c_\kappa>0$ has the same properties
as the function $N$ stated above, and we can choose the normalization
constant $c_\kappa$ such that
$\frac{dN_\kappa}{d\rho}(0)= e^{\mu_\kappa(R_\kappa)}$.
For the given steady state and with $n_\kappa = N_\kappa(\rho_\kappa)$
the identity
\be \label{Nmu}
\frac{n_\kappa}{\rho_\kappa + p_\kappa} =
\frac{dN_\kappa}{d\rho}(\rho_\kappa)
= e^{\mu_\kappa}
\ee
holds on the interval $[0,R_\kappa]$.

\begin{remark}
  The normalization can not be achieved by taking $r\to\infty$,
  since outside the support of the steady state the middle
  term is necessarily constant, but $\mu_\kappa$ is not.
\end{remark}

Before we linearize the Einstein-Euler system we collect a
few properties of the steady state under consideration.


\begin{lemma} \label{ssprops}
  Let $(\rho_{\kappa},\lambda_{\kappa},\mu_{\kappa})$
  be a steady state given by Proposition~\ref{ssfamilies}
  for some $\kappa >0$.
  Then
  \begin{equation}\label{E:TOV3}
    \frac{n_\kappa}{\rho_\kappa+p_\kappa}
    = \frac1{\frac{d \rho}{d n}(n_\kappa)}
    = e^{\mu_\kappa}.
  \end{equation}
  Let
  \begin{equation} \label{E:PSIDEF}
  \Psi_\kappa(r) :=\frac1{n_\kappa(r)}\frac{d P}{d \rho}(\rho_\kappa(r)),
  \ 0 \leq r < R_\kappa,
  \end{equation}
  denote the specific enthalpy of the steady state; $[0,R_\kappa]$
  is the support of $\rho_\kappa$. Then
  \begin{equation}  \label{E:TOV2}
    \Psi_\kappa n_\kappa' = -\mu_\kappa'.
  \end{equation}
  and $1/\Psi_\kappa \in C([0,R_\kappa])$ with
  \begin{equation}\label{E:PSIKAPPAPROP}
    \lim_{r\nearrow R_{\kappa }}\frac{1}{\Psi _{\kappa }(r) }=0.
  \end{equation}
\end{lemma}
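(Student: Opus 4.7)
The plan is to establish the three assertions of the lemma in the order stated, relying on the TOV equation \eqref{tov0}, the definition \eqref{E:NDEF} of the number density, and the normalization \eqref{Nmu}. Each part is essentially a short calculation, but the final continuity at $r=R_\kappa$ is a genuine $0/0$ limit and is the one point requiring care.

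For \eqref{E:TOV3} I would combine the two identities already proved just above the lemma. From \eqref{E:NDEF} one reads off $n/(\rho+p) = 1/(d\rho/dn)$ at any point; evaluating at $n=n_\kappa$ supplies the first equality, while the second equality is exactly \eqref{Nmu} for the normalized number density $N_\kappa = c_\kappa N$. For \eqref{E:TOV2} I would differentiate $n_\kappa = N_\kappa(\rho_\kappa)$ and $p_\kappa = P(\rho_\kappa)$ with respect to $r$. Using \eqref{Ndiffeq} applied to $N_\kappa$ gives
\[
n_\kappa' = \frac{dN_\kappa}{d\rho}(\rho_\kappa)\,\rho_\kappa' = \frac{n_\kappa}{\rho_\kappa+p_\kappa}\,\rho_\kappa',
\]
so that $\rho_\kappa' = \frac{\rho_\kappa+p_\kappa}{n_\kappa}\,n_\kappa'$. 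Substituting into $p_\kappa' = P'(\rho_\kappa)\rho_\kappa'$ yields $p_\kappa' = (\rho_\kappa+p_\kappa)\,\Psi_\kappa\,n_\kappa'$. The TOV equation \eqref{tov0} gives $p_\kappa' = -(\rho_\kappa+p_\kappa)\mu_\kappa'$, and dividing by $\rho_\kappa+p_\kappa>0$ on $[0,R_\kappa)$ yields \eqref{E:TOV2}.

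For the continuity and limit assertion \eqref{E:PSIKAPPAPROP} I would write $1/\Psi_\kappa = n_\kappa/P'(\rho_\kappa)$. On $[0,R_\kappa)$ this is continuous because $\rho_\kappa>0$ there and $N_\kappa,\,P\in C^1$ with $P'>0$. The issue is at $r=R_\kappa$, where $\rho_\kappa(R_\kappa)=0$ forces both numerator and denominator to zero (note that $(P1)$ implies $P'(0)=0$, since $P'(\rho)\le c_1\rho^{1/n}$). Since $N_\kappa\in C^1([0,\infty[)$ with $N_\kappa(0)=0$, we have $n_\kappa \le C\rho_\kappa$ near $R_\kappa$; the sharp quantitative lower bounds on $P'$ coming from the concrete equations of state of interest, namely \eqref{Pestsmallrho} in the Vlasov-induced case (where $P'(\rho)\gtrsim \rho^{\gamma}$ with $\gamma=1/(k+3/2)<1$) and \eqref{pnsest} in the neutron-star case (where $P'(\rho)\gtrsim \rho^{2/3}$), yield
\[
\frac{1}{\Psi_\kappa(r)} = \frac{n_\kappa(r)}{P'(\rho_\kappa(r))} \;\lesssim\; \rho_\kappa(r)^{1-\gamma} \xrightarrow[r\nearrow R_\kappa]{} 0,
\]
with $\gamma<1$. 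Extending $1/\Psi_\kappa$ by $0$ at $r=R_\kappa$ therefore produces a continuous function on $[0,R_\kappa]$.

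The main obstacle in the argument is precisely this last step: $(P1)$ as stated supplies only the upper bound $P'(\rho)\le c_1\rho^{1/n}$ near the origin, and the limit $1/\Psi_\kappa\to 0$ genuinely requires a matching \emph{lower} bound of sub-linear type on $P'$. In the concrete cases treated in the paper (Propositions~\ref{veqstate} and~\ref{nsok}) such two-sided bounds are available, and I would invoke them directly; more generally one has to reinforce $(P1)$ with the asymptotic $P'(\rho)\approx\rho^{1/n}$ as $\rho\to 0^+$, or equivalently assume $n>1$ together with a lower bound of order $\rho^{1/n}$. Everything else in the lemma is a direct consequence of the identities already assembled.
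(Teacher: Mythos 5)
Your derivations of \eqref{E:TOV3} and \eqref{E:TOV2} are correct and essentially identical to the paper's. The gap is in the last step. You evaluate the limit of $1/\Psi_\kappa = n_\kappa/\frac{dP}{d\rho}(\rho_\kappa)$ directly from the equation-of-state asymptotics, and since $({\mathrm P}1)$ only supplies the \emph{upper} bound $P'(\rho)\leq c_1\rho^{1/n}$ near $\rho=0$, your route genuinely needs a matching sub-linear lower bound on $P'$; you then conclude that the lemma itself requires reinforcing $({\mathrm P}1)$ or restricting to the concrete cases \eqref{Pestsmallrho}, \eqref{pnsest}. That conclusion is incorrect: \eqref{E:PSIKAPPAPROP} holds under $({\mathrm P}1)$ alone, and the missing idea is to exploit the identity \eqref{E:TOV2} you have just proved instead of the raw definition of $\Psi_\kappa$. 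Since $\mu_\kappa'>0$ on $]0,R_\kappa]$ (in particular $\mu_\kappa'(R_\kappa)>0$, as $m(R_\kappa)=M_\kappa>0$), \eqref{E:TOV2} gives
\[
\frac{1}{\Psi_\kappa(r)}=-\frac{n_\kappa'(r)}{\mu_\kappa'(r)}
=-\frac{\frac{dN_\kappa}{d\rho}(\rho_\kappa(r))\,\rho_\kappa'(r)}{\mu_\kappa'(r)},\qquad 0<r<R_\kappa .
\]
By Proposition~\ref{ssfamilies}, $\rho_\kappa\in C^1([0,\infty[)$ and $\rho_\kappa\equiv 0$ for $r\geq R_\kappa$, so $\rho_\kappa'(R_\kappa)=0$, while $\frac{dN_\kappa}{d\rho}(0)$ exists and is finite by the analysis in Section~\ref{SS:NUMBERDENSITY}. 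Hence the numerator tends to $0$ and the denominator to $\mu_\kappa'(R_\kappa)>0$ as $r\nearrow R_\kappa$, which yields \eqref{E:PSIKAPPAPROP} with no lower bound on $P'$ near vacuum. This is exactly the paper's argument; together with your observation that $1/\Psi_\kappa=n_\kappa/P'(\rho_\kappa)$ is continuous and positive on $[0,R_\kappa[$, it completes the proof in the stated generality, so no additional hypothesis on the equation of state should be introduced.
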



\begin{proof}
  The identity~\eqref{E:TOV3} is a simple consequence of~\eqref{Nmu}.
  The proof of~\eqref{E:TOV2} relies on the
  Tolman-Oppenheimer-Volkov equation \eqref{tov}.
  Replacing
  $\rho_\kappa+p_\kappa$ by $n_\kappa\frac{d \rho}{d n}(n_\kappa)$
  and $p_\kappa'$ by
  $\frac{d P}{d \rho}(\rho_\kappa)\frac{d \rho}{d n}(n_\kappa) n_\kappa'$
  we obtain the result.
  The function $\Psi_\kappa$ is continuous on $]0,R_\kappa[$, and
  $\Psi_\kappa(0) >0$. Finally, \eqref{E:PSIKAPPAPROP} follows if we
  express $1/\Psi_\kappa$ via \eqref{E:TOV2} and observe that
  $\mu_\kappa'(R_\kappa)>0$ and
  $n_\kappa'(R_\kappa) = \frac{d N_\kappa}{d \rho}(0)\, \rho_\kappa'(R_\kappa)=0$.
\end{proof}


\subsection{Linearization}

Let us now linearize the Einstein-Euler
system \eqref{eelambda}--\eqref{ud} around a given steady state
$(n_\kappa, u_\kappa\equiv 0, \l_\kappa,\mu_\kappa)$.
We write $n,  u, \rho, p, \l,  \mu$ for
the Eulerian perturbations, i.e., $n_\kappa + n, u_\kappa + u, \rho_\kappa + \rho$
etc.\ correspond to the solution of the original, non-linear system. Linearizing
$\rho_\kappa +\rho = N_\kappa^{-1}(n_\kappa + n)$ and
$p_\kappa + p = P(N_\kappa^{-1}(n_\kappa + n))$ we have the relations
\begin{align}
  \rho & = \frac{d N_\kappa^{-1}}{d n}(n_\kappa)\,  n
  = \frac{1}{\frac{d N_\kappa}{d \rho}(\rho_\kappa)}\,  n
  = e^{-\mu_\kappa} n, \label{deltarhon} \\
p & =\frac{d P}{d \rho}(\rho_\kappa) \frac{d N_\kappa^{-1}}{d n}(n_\kappa)\,  n
= \Psi_\kappa (\rho_\kappa + p_\kappa)\, n,
\label{E:DELTAPFORMULA}
\end{align}
where we have used \eqref{Nmu} and \eqref{E:PSIDEF}.


Linearizing~\eqref{rhod} we arrive at
\be\label{E:FOLIN}
\dot{ \rho} + n_\kappa'  u
+ n_\kappa\left(\dot{ \l}+\left(\l_\kappa'+\mu_\kappa'+\frac2r\right) u +u'\right)
=0.
\ee
From~\eqref{eelambdad} it follows that
$\dot{ \l}=- 4 \pi  r e^{\mu_\kappa + 2\lambda_\kappa}\,
u\, (\rho_\kappa + p_\kappa)$.
Together with the relation~\eqref{laplusmu}
this yields
\be\label{E:DOTLAMBDAEQN}
\dot{ \l}+e^{\mu_\kappa}(\l_\kappa'+\mu_\kappa') u = 0,
\ee
and \eqref{E:FOLIN} simplifies to
\be
\dot{ \rho} + \frac{1}{r^2} \frac{d}{dr}\left(r^2n_\kappa  u\right)
= 0. \label{E:NLIN2}
\ee
Linearizing~\eqref{ud} we obtain the equation
\begin{align*}
  0&=e^{2\l_\kappa} \dot{ u}
  +e^{\mu_\kappa} \mu\left(\mu_\kappa'+\Psi_\kappa n_\kappa'\right)
+ e^{\mu_\kappa}\mu'+e^{\mu_\kappa}\left(e^{\mu_\kappa}\Psi_\kappa \rho\right)'\\
& = e^{2\l_\kappa} \dot{ u}
+ e^{\mu_\kappa}\mu'+e^{\mu_\kappa}\left(e^{\mu_\kappa}\Psi_\kappa \rho\right)',
\end{align*}
where we have used \eqref{E:TOV2}. Multiplying by $e^{\l_\kappa}$
we find that
\begin{align}
  0 &=
  e^{3\l_\kappa} \dot u  + e^{\l_\kappa+\mu_\kappa}\mu'
  + e^{\l_\kappa+\mu_\kappa}\left(e^{\mu_\kappa}\Psi_\kappa \rho\right)'  \notag \\
  &=
  e^{3\l_\kappa} \dot u + e^{\mu_\kappa+\l_\kappa}\mu'
  + \left(e^{2\mu_\kappa+\l_\kappa}\Psi_\kappa \rho\right)'
  - e^{2\mu_\kappa+\l_\kappa}(\mu_\kappa'+\l_\kappa')\Psi_\kappa \rho \notag \\
  &=
  e^{3\l_\kappa} \dot u
  + e^{\mu_\kappa+\l_\kappa} \left(\l \left(2\mu_\kappa'+\frac1r\right)
  + \Psi_\kappa(\l_\kappa'+\mu_\kappa') e^{\mu_\kappa}\rho\right)
  + \left(e^{2\mu_\kappa+\l_\kappa}\Psi_\kappa \rho\right)' \notag \\
  &\quad {}
  - e^{2\mu_\kappa+\l_\kappa}(\mu_\kappa'+\l_\kappa')\Psi_\kappa \rho \notag \\
&=
  e^{3\l_\kappa} \dot u
  + \frac1r e^{\mu_\kappa+\l_\kappa} (2 r \mu_\kappa'+ 1) \, \l
  + \left(e^{2\mu_\kappa+\l_\kappa}\Psi_\kappa \rho\right)', \label{linveleq}
\end{align}
where we have used the  field equation~\eqref{eemu}
in the third line to conclude that
\be\label{E:RMUPRIME}
r \mu' =  (2r\mu_\kappa'+1)\, \l
+ r\Psi_\kappa(\l_\kappa'+\mu_\kappa')e^{\mu_\kappa} \rho.
\ee

In order to proceed we define a suitable phase space and a modified
potential in complete analogy to Definition~\ref{D:XDEF}.

\begin{definition}\label{D:XDEF_EE}
\begin{itemize}
\item[(a)]
  The Hilbert space $X_1$ is the space of all spherically symmetric
  functions in the weighted $L^2$ space on the set
  $B_\kappa=B_{R_\kappa}$ (the ball with radius $R_\kappa$ which is
  the support of $\rho _{\kappa }$) with weight
  $e^{2\mu_\kappa +\lambda_\kappa}\Psi_{\kappa }$ and the corresponding inner product,
  $X_{2}$ is the space of
  radial functions in $L^2 \left( B_\kappa\right)$, and the phase space for the
  linearized Einstein-Euler system is
  \[
  X:=X_{1}\times X_{2}.
  \]
\item[(b)]
  For $\rho\in X_1$ the {\em induced modified potential $\bar \mu$}
  is defined as
  \begin{align}\label{E:MUBARDEFNEW_EE}
    \bar\mu(r) = \bar\mu_\rho(r)
    : =- e^{-\mu_\kappa-\l_\kappa}\int_r^\infty \frac1s\,
    e^{\mu_\kappa(s)+\l_\kappa(s)}(2s\mu_\kappa'(s)+1)\,\l(s)\,ds,
  \end{align}
  where
  \be\label{E:LAMBDAOFF_EE}
  \l(r) = 4\pi \frac{e^{2\l_\kappa}}r \int_0^r s^2 \rho(s)\,ds.
  \ee
\item[(c)]
  The operators
  $\tilde{\mathcal L}_\kappa \colon X_1
  \to X_1'$ and $\mathcal L_\kappa \colon X_1 \to X_1$
  are defined by
  \be\label{E:TILDELKAPPADEF}
  \tilde{\mathcal L}_\kappa \rho : =
  e^{2\mu_\kappa+\lambda_\kappa}\Psi_{\kappa }\rho +e^{\mu_\kappa +\lambda_\kappa}\bar{\mu}_\rho.
  \ee
  \be\label{E:MATHCALLKAPPADEF}
  \mathcal L_\kappa \rho : = e^{-2\mu_\kappa-\lambda_\kappa}\Psi^{-1}_{\kappa }
  \tilde{\mathcal L}_\kappa \rho
  = \rho + e^{-\mu_\kappa}\Psi_\kappa^{-1}\bar{\mu}_\rho
  \ee
\end{itemize}
\end{definition}
Here the dual pairing is realized through the $L^2$-inner product, so that
\[
\langle \tilde{\mathcal L}_\kappa \rho, \bar\rho\rangle
= \left(\tilde{\mathcal L}_\kappa \rho, \bar\rho\right)_{L^2},
\quad \rho,\bar\rho\in X_1.
\]
To  check that the operators $\tilde{\mathcal L}_\kappa $ and
${\mathcal L}_\kappa$
are well-defined and self-dual or self-adjoint respectively,
it clearly suffices to show only the self-adjointness
of $\mathcal L_\kappa$.

\begin{lemma}
  \label{lemma-self-adjoint-L-kappa}
  The operator $\mathcal{L}_{\kappa}$
  defined by (\ref{E:MATHCALLKAPPADEF}) is well-defined, self-adjoint,
  and $\mathcal{L}_{\kappa}-\mathrm{id}$ is compact.
\end{lemma}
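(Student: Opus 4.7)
\medskip
\noindent\textbf{Proof plan.}\quad
The plan is to write $\mathcal L_\kappa = \mathrm{id} + K$, where $K\rho := e^{-\mu_\kappa}\Psi_\kappa^{-1}\bar\mu_\rho$, and to establish the three assertions by showing that $K\colon X_1\to X_1$ is bounded, symmetric, and compact. The first step is boundedness. By Lemma~\ref{ssprops}, $1/\Psi_\kappa\in C([0,R_\kappa])$ vanishes at $R_\kappa$, and $e^{-\mu_\kappa}$ is bounded. For the modified potential $\bar\mu_\rho$ defined in \eqref{E:MUBARDEFNEW_EE}, the Cauchy--Schwarz inequality applied inside \eqref{E:LAMBDAOFF_EE} yields, exactly as in Lemma~\ref{mubarlemma}, the estimate $|\bar\mu_\rho(r)|\leq C\|\rho\|_{X_1}$ with a constant depending only on the steady state. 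Combined with the uniform bound on $e^{-\mu_\kappa}\Psi_\kappa^{-1}$ on $[0,R_\kappa]$ and the fact that $B_\kappa$ has finite volume, this gives $\|K\rho\|_{X_1}\leq C\|\rho\|_{X_1}$.

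Next I would establish symmetry. A short computation using the weight on $X_1$ gives
\[
(K\rho,\bar\rho)_{X_1}
=\int_{B_\kappa}e^{\mu_\kappa+\lambda_\kappa}\,\bar\mu_\rho\,\bar\rho\,dx
=4\pi\int_0^{R_\kappa}e^{\mu_\kappa+\lambda_\kappa}\,\bar\mu_\rho\,\bar\rho\,r^2\,dr.
\]
The key observation, proved in verbatim analogy to \eqref{E:BASICVLASOV} and \eqref{E:BARMUELLIPTIC} of Lemma~\ref{mubarlemma}, is that $\bar\mu_\rho\in \dot H^1_r$ with
\[
\frac{e^{-\mu_\kappa-\lambda_\kappa}\,r}{2r\mu_\kappa'+1}\bigl(e^{\mu_\kappa+\lambda_\kappa}\bar\mu_\rho\bigr)'=\lambda_\rho,
\qquad
\frac{1}{4\pi r^2}\frac{d}{dr}\!\left(\frac{e^{-\mu_\kappa-3\lambda_\kappa}\,r^2}{2r\mu_\kappa'+1}\bigl(e^{\mu_\kappa+\lambda_\kappa}\bar\mu_\rho\bigr)'\right)=\rho.
\]
Multiplying by $e^{\mu_\kappa+\lambda_\kappa}\bar\mu_{\bar\rho}$ and integrating by parts on $[0,R_\kappa]$ converts $(K\rho,\bar\rho)_{X_1}$ into the symmetric bilinear form
\[
\int_0^{R_\kappa}\frac{e^{-\mu_\kappa-3\lambda_\kappa}\,r^2}{2r\mu_\kappa'+1}\bigl(e^{\mu_\kappa+\lambda_\kappa}\bar\mu_\rho\bigr)'\bigl(e^{\mu_\kappa+\lambda_\kappa}\bar\mu_{\bar\rho}\bigr)'\,dr,
\]
which is manifestly symmetric in $\rho,\bar\rho$. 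The boundary term at $r=R_\kappa$ requires care, but $\bar\mu_\rho,\bar\mu_{\bar\rho}\in \dot H^1_r$ together with the compact support of $\rho,\bar\rho$ makes it vanish (the matter-dependent piece is supported in $[0,R_\kappa]$, while outside the support $\bar\mu$ behaves like a constant multiple of $e^{-\mu_\kappa-\lambda_\kappa}$).

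Finally I would prove compactness of $K$ along the lines of Lemma~\ref{L:KCOMPACT}. Let $\rho_n\rightharpoonup 0$ weakly in $X_1$; then for each $s\in[0,R_\kappa]$,
\[
C_n(s):=\int_0^s t^2 \rho_n(t)\,dt
\;\longrightarrow\;0,
\]
because this is a bounded linear functional on $X_1$, and the family $\{C_n\}$ is uniformly Lipschitz in $s$ by Cauchy--Schwarz. By Arzel\`a--Ascoli, $C_n\to 0$ uniformly on $[0,R_\kappa]$, so $r\lambda_{\rho_n}(r)\to 0$ uniformly on $[0,R_\kappa]$, hence $\bar\mu_{\rho_n}\to 0$ uniformly on compact sets of $[0,\infty[$ by dominated convergence in \eqref{E:MUBARDEFNEW_EE}. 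Together with the uniform bound on $e^{-\mu_\kappa}\Psi_\kappa^{-1}$ on $B_\kappa$, this gives $\|K\rho_n\|_{X_1}\to 0$.

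The main technical subtlety will be the boundary behavior at $r=R_\kappa$ in the symmetry calculation: the factor $\Psi_\kappa^{-1}$ vanishes there by \eqref{E:PSIKAPPAPROP}, which ensures that $K\rho$ itself lies in $X_1$ (its $X_1$-norm carries a factor $\Psi_\kappa$ that tames the singularity of $\Psi_\kappa^{-1}$), and also that the boundary contributions in the integration by parts drop out cleanly; verifying this carefully is the one place where the special structure of the Euler weight really enters.
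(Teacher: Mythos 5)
Your decomposition $\mathcal L_\kappa=\mathrm{id}+K$ with $K\rho=e^{-\mu_\kappa}\Psi_\kappa^{-1}\bar\mu_\rho$ is exactly the paper's, but your implementation of boundedness and compactness takes a genuinely different route. The paper never uses a pointwise bound on $\bar\mu_\rho$: it multiplies the elliptic identity \eqref{eqn-mu-bar} by $e^{\mu_\kappa+\lambda_\kappa}\bar\mu$ to get the energy estimate $\|\nabla(e^{\mu_\kappa+\lambda_\kappa}\bar\mu)\|_{L^2}\leq C\|\rho\|_{X_1}$ (via Cauchy--Schwarz in the $X_1$-weight, boundedness and compact support of $1/\Psi_\kappa$, and $\dot H^1\hookrightarrow L^6$), and then obtains compactness immediately from the compactness of the restriction $\dot H^1_r(\R^3)\to L^2(B_\kappa)$. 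You instead prove $|\bar\mu_\rho(r)|\leq C\|\rho\|_{X_1}$ (which does go through as in Lemma~\ref{mubarlemma}, since $1/\Psi_\kappa$ is bounded on $[0,R_\kappa]$ by Lemma~\ref{ssprops}) and run the Arzel\`a--Ascoli scheme of Lemma~\ref{L:KCOMPACT}; this is legitimate and is in fact the argument the paper itself invokes later, just before Lemma~\ref{lemma-L-kappa-tilde}, to get compactness of the same map. Two small corrections: your equicontinuity bound for $C_n$ is H\"older-$1/2$ rather than Lipschitz (which is all you need), and only the boundedness of $1/\Psi_\kappa$ on $[0,R_\kappa]$ is used, not its vanishing \eqref{E:PSIKAPPAPROP}; the relevant cancellation is simply that the $X_1$-weight contributes $\Psi_\kappa\cdot\Psi_\kappa^{-2}=\Psi_\kappa^{-1}$, which is bounded.

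The one step that is wrong as written is the symmetry computation (which the paper dismisses as clear). After $(K\rho,\bar\rho)_{X_1}=\int e^{\mu_\kappa+\lambda_\kappa}\bar\mu_\rho\,\bar\rho\,dx$ you integrate by parts only over $[0,R_\kappa]$ and discard the boundary term at $R_\kappa$ on the grounds that outside the support $\bar\mu$ is a constant multiple of $e^{-\mu_\kappa-\lambda_\kappa}$. That is false: for $r>R_\kappa$ one has $\lambda_{\bar\rho}(r)=4\pi e^{2\lambda_\kappa}r^{-1}\int_0^{R_\kappa}s^2\bar\rho\,ds\neq 0$ in general, so by \eqref{E:BASICVLASOV} the derivative $\left(e^{\mu_\kappa+\lambda_\kappa}\bar\mu_{\bar\rho}\right)'$ does not vanish beyond $R_\kappa$, and the truncated Dirichlet form over $[0,R_\kappa]$ is not equal to $(K\rho,\bar\rho)_{X_1}$. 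The repair is straightforward: the elliptic identity holds on all of $]0,\infty[$ with $\bar\rho$ extended by zero, so integrate by parts on $]0,\infty[$; the boundary term vanishes at $r=0$ because of the factor $r^2$ (indeed $r e^{-2\lambda_\kappa}\lambda_{\bar\rho}=O(r^{3/2})$) and at $r=\infty$ because $\bar\mu_\rho(r)=O(1/r)$ while $r e^{-2\lambda_\kappa}\lambda_{\bar\rho}$ stays bounded. This yields
\[
(K\rho,\bar\rho)_{X_1}
=-\int_0^\infty\frac{e^{-\mu_\kappa-3\lambda_\kappa}r^2}{2r\mu_\kappa'+1}
\left(e^{\mu_\kappa+\lambda_\kappa}\bar\mu_\rho\right)'
\left(e^{\mu_\kappa+\lambda_\kappa}\bar\mu_{\bar\rho}\right)'dr,
\]
note also the minus sign, which is manifestly symmetric. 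With this fixed, bounded, everywhere defined, and symmetric gives self-adjointness, and together with the compactness of $K$ your proof of the lemma is complete.
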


\begin{proof}
The operator $\mathcal{L}_{\kappa}$ is clearly symmetric, i.e.,
\[
\left(  \mathcal{L}_{\kappa}\rho_{1},\rho_{2}\right) _{X_1}
=\left(  \rho_{1},\mathcal{L}_{\kappa}\rho_{2}\right)_{X_1}.
\]
We define the operator $K \colon X_{1}\rightarrow X_{1}$ by
$K\rho=\frac{1}{e^{\mu_{\kappa}}\Psi_{\kappa}}\bar{\mu}$.
It is straightforward to check that the modified potential
$\bar \mu$ has the properties stated in Lemma~\ref{mubarlemma},
in particular
\begin{equation}\label{eqn-mu-bar}
  \frac{1}{4\pi r^{2}}
  \frac{d}{dr}\left(  \frac{e^{-\mu_{\kappa}-3\lambda_{\kappa}}
    r^{2}}{2r\mu_{\kappa}^{\prime}+1}
  \frac{d}{dr}\left(  e^{\mu_{\kappa}+\lambda_{\kappa}}\bar{\mu}\right)  \right)
  =\rho.
\end{equation}
Evaluating the $L^2(\mathbb R^3)$-inner product of~(\ref{eqn-mu-bar}) with
$e^{\mu_{\kappa}+\lambda_{\kappa}}\bar{\mu}$, we get
\[
\int_{0}^{\infty}
\frac{e^{-\mu_{\kappa}-3\lambda_{\kappa}}r^{2}}{2r\mu_{\kappa}^{\prime}+1}
\left(\frac{d}{dr}\left(  e^{\mu_{\kappa}+\lambda_{\kappa}}\bar{\mu}\right)\right)^{2} dr
=-\int_{0}^{R_{\kappa}}e^{\mu_{\kappa}+\lambda_{\kappa}}\bar{\mu}\rho\ 4\pi r^{2}dr.
\]
Therefore
\begin{align*}
\left\Vert \nabla\left(  e^{\mu_{\kappa}+\lambda_{\kappa}}\bar{\mu}\right)
\right\Vert _{L^{2}}^{2}
& \leq C
\int_{0}^{\infty}
\frac{e^{-\mu_{\kappa}-3\lambda_{\kappa}}r^{2}}{2r\mu_{\kappa}^{\prime}+1}\left(
\frac{d}{dr}\left(  e^{\mu_{\kappa}+\lambda_{\kappa}}\bar{\mu}\right)\right)^{2}dr\\
& \leq C
\left\Vert \rho\right\Vert_{X_{1}}
\left\Vert \frac{1}{\sqrt{\Psi_{\kappa}}}e^{\mu_{\kappa}+\lambda_{\kappa}}
\bar{\mu}\right\Vert _{L^{2}}
\leq C
\left\Vert \rho\right\Vert _{X_{1}}\left\Vert e^{\mu_{\kappa}+\lambda_{\kappa}}
\bar{\mu}\right\Vert _{L^{6}}\\
& \leq C
\left\Vert \rho\right\Vert _{X_{1}}
\left\Vert \nabla\left(e^{\mu_{\kappa}+\lambda_{\kappa}}\bar{\mu}\right)
\right\Vert_{L^{2}}
\end{align*}
and thus
$\left\Vert\nabla\left(e^{\mu_\kappa+\lambda_\kappa}\bar{\mu}\right)\right\Vert_{L^{2}}
\leq C \left\Vert \rho\right\Vert_{X_{1}}$.
Here we made use of the compact support of $\frac{1}{\Psi_{\kappa}}$
and the  embedding $\dot{H}^{1}\left(\R^{3}\right)  \hookrightarrow
L^{2}\left( B_\kappa\right)$.
Since
\[
\left\Vert K\rho\right\Vert _{X_{1}}\leq C \left\Vert \frac{1}{\sqrt
{\Psi_{\kappa}}}e^{\mu_{\kappa}+\lambda_{\kappa}}\bar{\mu}\right\Vert _{L^{2}}
\]
and the mapping
$\dot{H}_{r}^{1}\left(\R^{3}\right)\ni \rho \mapsto \rho_{|S}\in L^{2}(S)$
is compact for any compact set $S\subset\mathbb R^3$, the operator $K$ is compact.
Therefore, $\mathcal{L}_{\kappa}=\mathrm{id}+K$ is self-adjoint on $X_{1}$.
\end{proof}

With the definition of $\tilde{\mathcal L}_\kappa$
the linearized velocity equation \eqref{linveleq} takes the form
\be
\dot u+ e^{-3\lambda _{\kappa }} \frac{d}{dr}(\tilde{\mathcal L}_\kappa \rho)
= 0 \label{E:ULIN2}.
\ee
It easy to check that formally the energy
\begin{equation*}
I\left( \rho ,u\right) =\int e^{3\lambda _{\kappa }}n_{\kappa }u^{2}\
dx+\left( \mathcal{\tilde{L}}_{\kappa }\rho ,\rho \right)_{L^2}
\end{equation*}
is conserved along solutions
of~\eqref{E:NLIN2},~\eqref{E:ULIN2}.
The functional $I$ maps
$X_1 \times L_{e^{3\lambda _{\kappa }}n_{\kappa }}^{2}\left( B_{\kappa}\right)$ to $\mathbb R$.
From the definition~\eqref{E:TILDELKAPPADEF} one readily checks that
for $\rho \in X_1$,
\begin{equation}\label{E:ENERGYEE1}
\left( \mathcal{\tilde{L}}_{\kappa }\rho ,\rho \right)_{L^2} =\int
e^{2\mu _{\kappa }+\lambda _{\kappa }}\Psi _{\kappa }\rho
^{2}dx-\int_{0}^{\infty }e^{\mu _{\kappa }-\lambda _{\kappa }}\left( 2r\mu
_{\kappa }^{\prime }+1\right) \,\lambda ^{2}\,dr,
\end{equation}
where we used \eqref{E:BASICVLASOV} and \eqref{E:LAMBDAOFF_EE}.

The quadratic form $I(\rho,u)$ motivates the definition
\[
v:=e^{\frac{3}{2}\lambda _{\kappa }}n_{\kappa }^{\frac{1}{2}}u
\]
so that
$v\in X_{2}$ for any
$u\in L_{e^{3\lambda _{\kappa }}n_{\kappa }}^{2}\left(B_{\kappa}\right)$.
Then the system ~\eqref{E:NLIN2},~\eqref{E:ULIN2} can be equivalently
rewritten in the form
\begin{align}
  \dot{\rho} +
  \frac{1}{r^{2}}\frac{d}{dr}
  \left(r^2 e^{-\frac{3}{2}\lambda_{\kappa }}n_{\kappa }^{\frac{1}{2}}v\right)
  & =0,   \label{eqn-LEE-rho} \\
  \dot{v}+e^{-\frac{3}{2}\lambda _{\kappa }} n_{\kappa }^{\frac{1}{2}}
  \frac{d}{dr}\left(\mathcal{\tilde{L}}_{\kappa }\rho\right) & =0.
  \label{eqn-LEE-v}
\end{align}
In order to exhibit the Hamiltonian structure of this system
it is convenient to define the modified divergence operator
$A_{\kappa }\colon D\left( A_{\kappa }\right) \subset X_{2}=X_2^\ast\rightarrow X_{1}$
by
\be \label{Akappadef}
D\left( A_{\kappa }\right) := \{ v\in X_2 \,\big| \, A_\kappa v
\text{ exists weakly in } X_1 \},\quad
  A_{\kappa }v
  :=-\frac{1}{r^{2}} \frac{d}{dr}\left( r^{2}e^{-\frac{3}{2}\lambda_{\kappa }}
  n_{\kappa }^{\frac{1}{2}}v\right),
\ee
and its dual operator (the modified gradient)
$A_{\kappa }^{\prime}\colon X_{1}' \supset D( A_{\kappa }') \to X_{2}$ by
\be \label{Akappasdef}
D\left( A_{\kappa }^{\prime}\right)
:= \{ v\in X_1' \,\big| \, A_\kappa' v \text{ exists weakly in } X_2 \},\quad
  A_{\kappa }^\prime \rho :=e^{-\frac{3}{2}\lambda _{\kappa }}
  n_{\kappa }^{\frac{1}{2}} \frac d{dr}\rho .
\ee
Then for any $v\in X_{2}$ and $\rho\in X_{1}'$ it holds that
\[
\left\langle A_{\kappa}v,\rho\right\rangle
=\left\langle v,A_{\kappa}^{\prime}\rho\right\rangle ,
\]
where $\left\langle \cdot,\cdot\right\rangle $ is the dual bracket. We also
define the adjoint operator
$A_{\kappa}^{\ast}\colon X_{1} \supset D\left(  A_{\kappa}^{\ast}\right) \to X_{2}$ by
\[
A_{\kappa}^{\ast}\rho : =e^{-\frac{3}{2}\lambda_{\kappa}}n_{\kappa}^{\frac{1}{2}}
\partial_{r}\left(  e^{2\mu_{\kappa}+\lambda_{\kappa}}\Psi_{\kappa}
\rho\right)  .
\]
Then for any $v\in X_{2}$ and $\rho\in X_{1}$
\[
\left(  A_{\kappa}v,\rho\right)_{X_1}=\left(  v,A_{\kappa}^{\ast}\rho\right)_{X_2}.
\]
We claim that both $A_\kappa$ and $A_\kappa^\prime$ are densely defined and closed.
The density is clear, and to show that they are closed we consider
the isometries
\[
S_1 = \text{id} \colon L^2(B_\kappa) \to X_2' \, \ \
S_2 = e^{\mu_\kappa+\l_\kappa/2}\Psi_\kappa^{\frac12}\colon X_1\to L^2(B_\kappa).
\]
Then to show that $A_\kappa\colon X_2'\supset D(A_\kappa)\to X_1$
is closed it suffices to show that
$\mathscr A_\kappa:=S_2A_\kappa S_1 \colon L^2(B_\kappa)\supset D(\mathscr A_\kappa)\to L^2(B_\kappa)$
is closed, where
\[
D(\mathscr A_\kappa) =
\{ v\in L^2(B_\kappa)\, \big| \,
e^{\mu_\kappa+\l_\kappa/2}\Psi_\kappa^{\frac12}\frac{1}{r^{2}}
\frac{d}{dr}\left( r^{2}e^{-\frac{3}{2}\lambda_{\kappa}n_\kappa^{\frac12}} v\right) \
\text { is weakly in } \ L^2(B_\kappa)\}.
\]
The adjoint of $\mathscr A_\kappa$ is given by
$
{\mathscr A}_\kappa^\ast \rho =
e^{-\frac32\l_\kappa}n_\kappa^{\frac12} \frac{d}{dr}\left(e^{\mu_\kappa+\l_\kappa/2}
\Psi_\kappa^{\frac12}\rho\right)
$
with
\[
D({\mathscr A}_\kappa^\ast) =
\{ \rho\in L^2(B_\kappa)\, \big| \, e^{-\frac32\l_\kappa}n_\kappa^{\frac12}
\frac{d}{dr}\left(e^{\mu_\kappa+\l_\kappa/2}\Psi_\kappa^{\frac12}\rho\right)
\ \text { is weakly in } \ L^2(B_\kappa)\}.
\]
Clearly ${\mathscr A}_\kappa^\ast$ is also densely defined.
To show that $\mathscr A_\kappa$ is closed it suffices to show
$\mathscr A_\kappa={\mathscr A_\kappa}^{\ast\ast}$. To justify the latter we
need to show that the adjoint of ${\mathscr A}_\kappa^\ast$ is precisely
$\mathscr A_\kappa$, which is a simple consequence of integration-by-parts
since by~\eqref{E:PSIDEF}
$n_\kappa \Psi_\kappa = \frac{dP}{d\rho}(\rho_\kappa)\sim_{\rho\ll1} \rho_\kappa^{\gamma-1}$
and thus the boundary term vanishes.

\begin{remark}[Notational conventions]
We adopt the notation from Yosida's book~\cite{Yosida}: For an
operator $A\colon X\rightarrow Y$ between two Hilbert spaces $X$ and $Y$, we use
$A'\colon Y'\rightarrow X'$ to
denote the dual operator and $A^{\ast}\colon Y\rightarrow X$ to denote the
adjoint operator of $A$. The operators $A^{\prime}$ and $A^{\ast}$ are related
by
\[
A^{\ast}=I_{X}A^{\prime}I_{Y}^{-1},
\]
where $I_{X}\colon X'\rightarrow X$ and
$I_{Y}\colon Y'\rightarrow Y$ are the
isomorphisms defined by the Riesz representation theorem.
\end{remark}


\begin{lemma}\label{L:LINEAREE}
  The formal linearization of the spherically symmetric Einstein-Euler system
  takes the Hamiltonian form
  \begin{equation}\label{eqn-LEE-hamiltonian}
    \frac{d}{dt}
    \begin{pmatrix} \rho \\ v \end{pmatrix}
    =J^{\kappa }L^{\kappa }
    \begin{pmatrix} \rho \\ v \end{pmatrix},
  \end{equation}
where $\left( \rho ,v\right) \in X$, and
\begin{equation}
  J^{\kappa } :=
  \begin{pmatrix}
  0 & A_{\kappa } \\
  -A_{\kappa }^{\prime } & 0
  \end{pmatrix},\quad
  L^{\kappa } :=
  \begin{pmatrix}
  \mathcal{\tilde{L}}_{\kappa } & 0 \\
  0 & \mathrm{id}
  \end{pmatrix}.
\label{defn-J-K-kappa}
\end{equation}
Then $J^{\kappa}\colon X'\rightarrow X$ and $L^{\kappa}\colon X\rightarrow X'$
are anti-self-dual and self-dual respectively.
The conserved energy functional takes the form
\begin{equation*}
I\left( \rho ,v\right) =\left( L^{\kappa }
\begin{pmatrix} \rho \\ v \end{pmatrix},
\begin{pmatrix} \rho \\ v \end{pmatrix}
\right)_{X_2}
=\int v^{2}dx+\left( \mathcal{\tilde{L}}_{\kappa}\rho ,
\rho \right)_{X_2} , \ \  \left( \rho ,v\right) \in X.
\end{equation*}
Alternatively, one can rewrite~\eqref{eqn-LEE-hamiltonian} in the
second-order form
\begin{equation}
  \frac{d^2}{dt^2}v
  +A_{\kappa }^{\prime}\tilde{\mathcal{L}}_{\kappa }A_{\kappa}v=0,
  \label{eqn-2nd order-LEE}
\end{equation}
or equivalently
\begin{equation}
  \frac{d^2}{dt^2}v
  +A_{\kappa }^{\ast}\mathcal{L}_{\kappa }A_{\kappa}v=0.
  \label{eqn-2nd order-LEE2}
\end{equation}
\end{lemma}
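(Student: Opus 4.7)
\emph{Proof plan for Lemma~\ref{L:LINEAREE}.} The strategy is essentially bookkeeping: unpack the definitions of $A_\kappa$, $A_\kappa'$, $\tilde{\mathcal L}_\kappa$, $\mathcal L_\kappa$, $J^\kappa$, $L^\kappa$ and verify that the equations~\eqref{eqn-LEE-rho}, \eqref{eqn-LEE-v} are precisely the component form of $\partial_t (\rho,v) = J^\kappa L^\kappa (\rho,v)$. Indeed, from~\eqref{Akappadef} the term $\frac{1}{r^2}\partial_r(r^2 e^{-\frac32\lambda_\kappa}n_\kappa^{1/2} v)$ appearing in~\eqref{eqn-LEE-rho} equals $-A_\kappa v$, and from~\eqref{Akappasdef} the term $e^{-\frac32\lambda_\kappa}n_\kappa^{1/2}\partial_r(\tilde{\mathcal L}_\kappa \rho)$ in~\eqref{eqn-LEE-v} equals $A_\kappa'(\tilde{\mathcal L}_\kappa \rho)$. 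Since $L^\kappa(\rho,v)=(\tilde{\mathcal L}_\kappa \rho,v)$, reading off the matrix product in~\eqref{defn-J-K-kappa} gives precisely the claim.

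Next I would verify the duality properties. Self-duality of $L^\kappa$ reduces to the self-duality of the diagonal blocks: for $\mathrm{id}\colon X_2\to X_2'$ this is trivial, and for $\tilde{\mathcal L}_\kappa\colon X_1\to X_1'$ it follows from Lemma~\ref{lemma-self-adjoint-L-kappa}, since $\mathcal L_\kappa$ is self-adjoint on $X_1$ with respect to the weighted inner product $e^{2\mu_\kappa+\lambda_\kappa}\Psi_\kappa\,dx$, which is exactly the statement that $\tilde{\mathcal L}_\kappa=e^{2\mu_\kappa+\lambda_\kappa}\Psi_\kappa\mathcal L_\kappa$ is self-dual as a map $X_1\to X_1'$. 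For the anti-self-duality of $J^\kappa$, one computes directly from the definitions that for $v\in D(A_\kappa)\subset X_2$ and $\rho\in D(A_\kappa')\subset X_1'$, integration by parts gives $\langle A_\kappa v,\rho\rangle = \langle v, A_\kappa'\rho\rangle$, so that $(A_\kappa)'=A_\kappa'$ and
\begin{equation*}
(J^\kappa)'=\begin{pmatrix} 0 & -(A_\kappa')'\\ (A_\kappa)' & 0\end{pmatrix}
= \begin{pmatrix} 0 & -A_\kappa \\ A_\kappa' & 0\end{pmatrix} = -J^\kappa.
\end{equation*}

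The conservation of the energy $I(\rho,v) = \langle L^\kappa(\rho,v),(\rho,v)\rangle$ then follows abstractly: along the flow,
\begin{equation*}
\tfrac12\tfrac{d}{dt} I = \langle L^\kappa J^\kappa L^\kappa (\rho,v),\,(\rho,v)\rangle = \langle J^\kappa L^\kappa(\rho,v),\,L^\kappa(\rho,v)\rangle = 0,
\end{equation*}
where in the second equality I used the self-duality of $L^\kappa$ and in the last the anti-self-duality of $J^\kappa$. The explicit formula $I(\rho,v) = \|v\|_{X_2}^2 + (\tilde{\mathcal L}_\kappa\rho,\rho)_{L^2}$ is then immediate from the block-diagonal structure of $L^\kappa$.

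Finally, the second-order formulations are obtained by differentiating $\dot v = -A_\kappa'\tilde{\mathcal L}_\kappa\rho$ in time and substituting $\dot\rho = A_\kappa v$, yielding~\eqref{eqn-2nd order-LEE}. To pass to~\eqref{eqn-2nd order-LEE2}, observe that by~\eqref{E:MATHCALLKAPPADEF} we have $\tilde{\mathcal L}_\kappa\rho = e^{2\mu_\kappa+\lambda_\kappa}\Psi_\kappa\mathcal L_\kappa\rho$, so
\begin{equation*}
A_\kappa'\tilde{\mathcal L}_\kappa\rho = e^{-\frac32\lambda_\kappa}n_\kappa^{1/2}\partial_r\bigl(e^{2\mu_\kappa+\lambda_\kappa}\Psi_\kappa \mathcal L_\kappa\rho\bigr) = A_\kappa^\ast \mathcal L_\kappa\rho
\end{equation*}
by the very definition of $A_\kappa^\ast$. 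The only nonroutine point in the whole argument is justifying the integration by parts used to identify $(A_\kappa)'$ with the formal operator $A_\kappa'$ on the appropriate domains; the boundary contribution at $r=R_\kappa$ vanishes because $n_\kappa\Psi_\kappa = \frac{dP}{d\rho}(\rho_\kappa)$ vanishes there (by the asymptotic behavior $n_\kappa\Psi_\kappa\sim \rho_\kappa^{\gamma-1}$ already noted after~\eqref{Akappasdef}), so the pairing is well defined on $D(A_\kappa)\times D(A_\kappa')$ as argued just before the statement.
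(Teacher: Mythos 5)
Your proposal is correct and follows essentially the same route as the paper: the Hamiltonian form is read off from \eqref{eqn-LEE-rho}--\eqref{eqn-LEE-v}, anti-self-duality of $J^\kappa$ comes from the pairing $\langle A_\kappa v,\rho\rangle=\langle v,A_\kappa'\rho\rangle$, self-duality of $L^\kappa$ from the self-adjointness of $\mathcal L_\kappa$ (Lemma~\ref{lemma-self-adjoint-L-kappa}), energy conservation from the Hamiltonian structure, and the second-order forms by differentiating the $v$-equation and using $\tilde{\mathcal L}_\kappa=e^{2\mu_\kappa+\lambda_\kappa}\Psi_\kappa\mathcal L_\kappa$. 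Your treatment is merely more explicit than the paper's terse proof, including the boundary-term justification already recorded before the lemma, so nothing needs to change.
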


\begin{proof}
Eqn.~\eqref{eqn-LEE-hamiltonian} is just a rephrasing of
\eqref{eqn-LEE-rho}, \eqref{eqn-LEE-v}.
The formal conservation of energy $t\mapsto I(\rho,v)(t)$ is a direct
consequence of the Hamiltonian structure~\eqref{eqn-LEE-hamiltonian}.
That $J^\kappa$ is anti-self-dual follows easily from the definition of
$A_\kappa$ and $A_\kappa^\prime$. The self-duality of $L^\kappa$ follows from
the self-duality of $\tilde{\mathcal L}_\kappa$,
which in turn, is a trivial consequence of the self-adjointness of
$\mathcal L_\kappa$.
Finally,~\eqref{eqn-2nd order-LEE} follows formally by applying $\frac{d}{dt}$
to the
$v$-component of~\eqref{eqn-LEE-hamiltonian} and then plugging in the
equation satisfied by the $\rho$-component of~\eqref{eqn-LEE-hamiltonian}.
Equation~\eqref{eqn-2nd order-LEE2} is just a restatement
of~\eqref{eqn-2nd order-LEE2}.
\end{proof}

We will show below that the operator
$A_{\kappa}^{\ast}\mathcal{L}_{\kappa} A_{\kappa}$ is in fact self-adjoint,
cf.~Lemma~\ref{prop-self-adjoint-composition}.


\subsection{Variational approach to stability}

The variational picture explained in this section is naturally constrained to radially symmetric perturbations. However,
by comparison to the corresponding variational picture for the EV-system in Section~\ref{linEV},
we need to provide more details, as it will play an important role in our formulation of the main results for the Einstein-Euler
system in Section~\ref{SS:EEMAIN}.
The fundamental conserved quantities associated with the Einstein-Euler
system are the total ADM mass
\begin{align}\label{E:MASS}
\M(\rho) =  4\pi \int_0^\infty r^2 \rho(r)\,dr
\end{align}
and the total particle (or baryon) number
\begin{align}\label{E:NKAPPADEF}
\N_\kappa(\rho) =  4\pi \int_0^\infty r^2 e^{\l}N_\kappa(\rho) \,dr.
\end{align}
Recalling~\eqref{E:NKAPPADEF0}
we write
\be\label{E:NZERODEF}
\N_\kappa(\rho) = c_\kappa \mathscr N_0(\rho),
\ \ \mathscr N_0(\rho):= 4\pi \int_0^\infty r^2 e^{\l}N(\rho) \,dr.
\ee
By~\eqref{lambdadef}, $e^{\l} = \left(1-\frac{2m(r)}{r}\right)^{-\frac12}$
with $m(r)=4\pi\int_0^r s^2\rho(s)\, ds$,
and therefore
\[
\mathscr N_\kappa(\rho)
=   4\pi  \int_0^\infty r^2 \left(1-\frac{2m(r)}{r}\right)^{-\frac12}
N_\kappa(\rho)\,dr.
\]
We define the open set
\[
U:= \left\{ \rho\in X_1\,\big| \, \max_{r\in[0,R_\kappa]}\frac{2m(r)}{r} <1,
\ m(r)=4\pi\int_0^r s^2\rho(s)\, ds \right\}.
\]
Since by Cauchy-Schwarz $X_1 \subset L^1(B_\kappa)$,
it follows that both $\M$ and $\N_\kappa$
are well-defined functionals on $U$.

\begin{remark}
  For any $\kappa>0$ the associated fluid density $\rho_\kappa$ belongs to $U$.
  The estimate
  $\max_{r\in[0,R_\kappa]}\frac{2m(r)}{r} <1$ follows from the Buchdahl inequality,
  while the integral bound
  $\int_{B_\kappa}e^{2\mu_\kappa+\l_\kappa}\rho_\kappa^2 \Psi_\kappa <\infty$
  follows from the boundary asymptotics $\Psi_\kappa\sim \rho_\kappa^{\gamma-1}$
  in the neighborhood of the vacuum boundary $r=R_\kappa$, for some $\gamma>0$.
\end{remark}

Since $\M$ is a linear map it is clearly twice Fr\'echet differentiable and
by a standard argument so is $\N_\kappa$.
In order to compute the first and the second Fr\'echet derivative of these
conserved quantities we will as above replace $\rho$ by
$\rho_\kappa +\rho$ etc., i.e.,
$\rho$ now stands for the corresponding perturbation.
Since by definition all perturbations $\rho$ belong to $X_1$
we have $\supp \rho\subset B_\kappa$.
\begin{lemma}\label{L:VARIATIONEE}
  Let $\kappa>0$. The steady state $(\rho_\kappa,\l_\kappa,\mu_\kappa)$
  of the Einstein-Euler system given by Proposition~\ref{ssfamilies} is
  a critical point of the binding energy
  \[
  \mathscr E_\kappa : = \mathscr M- \mathscr N_\kappa,
  \]
  and
  \begin{equation*}
    D^2\mathscr E[\rho_\kappa](\rho,\rho)
    = \frac12  (\tilde{\mathcal L}_\kappa \rho, \rho)_{L^2},
  \end{equation*}
  cf.\ \eqref{E:ENERGYEE1}.
\end{lemma}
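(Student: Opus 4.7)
The plan is to Taylor-expand $\mathscr E_\kappa(\rho_\kappa+\epsilon\rho)$ in $\epsilon$. Since $\mathscr M$ is linear, $D\mathscr M[\rho_\kappa](\rho)=4\pi\int_0^\infty r^2\rho\,dr$ and $D^2\mathscr M=0$, so all the work concerns $\mathscr N_\kappa$. Two building blocks are needed. From $e^{-2\lambda}=1-2m/r$ with $m(\rho_\kappa+\epsilon\rho)=m_\kappa+\epsilon\,\delta m$ and $\delta m(r)=4\pi\int_0^r s^2\rho\,ds$, one reads off $\delta\lambda=e^{2\lambda_\kappa}\delta m/r$---precisely the $\lambda$ of~\eqref{E:LAMBDAOFF_EE}---together with $\delta^2\lambda=2(\delta\lambda)^2$. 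From~\eqref{Nmu} and~\eqref{E:TOV3} one gets $N_\kappa'(\rho_\kappa)=e^{\mu_\kappa}$, while combining~\eqref{N2diff} with~\eqref{E:PSIDEF} and~\eqref{E:TOV3} yields $N_\kappa''(\rho_\kappa)=-\Psi_\kappa e^{2\mu_\kappa}$.

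Plugging these into the expansion of $r^2 e^\lambda N_\kappa(\rho)$ gives, for the first variation,
\[
D\mathscr N_\kappa(\rho) = 4\pi\int_0^\infty r^2 e^{\lambda_\kappa}\delta\lambda\cdot n_\kappa\,dr+4\pi\int_0^\infty r^2 e^{\lambda_\kappa+\mu_\kappa}\rho\,dr.
\]
The key identity needed to match this with $D\mathscr M(\rho)$ is
\[
\int_s^\infty r e^{3\lambda_\kappa(r)}n_\kappa(r)\,dr = \frac{1-e^{\lambda_\kappa(s)+\mu_\kappa(s)}}{4\pi}.
\]
I would verify it by noting that (i) both sides vanish on $[R_\kappa,\infty[$---the left since $n_\kappa\equiv 0$ there, the right since~\eqref{laplusmu} and asymptotic flatness force $\lambda_\kappa+\mu_\kappa\equiv 0$ outside the support---and that (ii) their derivatives coincide via~\eqref{laplusmu} combined with $e^{\mu_\kappa}(\rho_\kappa+p_\kappa)=n_\kappa$ from~\eqref{E:TOV3}; this last relation is precisely how the normalization constant $c_\kappa$ in~\eqref{E:NZERODEF} was chosen. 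Substituting $\delta\lambda=4\pi e^{2\lambda_\kappa}\delta m/r$ into the first term of $D\mathscr N_\kappa(\rho)$ and swapping the order of integration reduces it to $4\pi\int r^2\rho(1-e^{\lambda_\kappa+\mu_\kappa})\,dr$, which together with the second term yields exactly $4\pi\int r^2\rho\,dr=D\mathscr M(\rho)$, proving criticality.

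For the second variation, collecting the $\epsilon^2$-coefficient in the Leibniz expansion (with the convention $D^2 F(\rho,\rho):=\tfrac{1}{2}\tfrac{d^2}{d\epsilon^2}\big|_0 F(\rho_\kappa+\epsilon\rho)$) gives
\[
D^2\mathscr E_\kappa(\rho,\rho) = -2\pi\int r^2\left[3 e^{\lambda_\kappa}n_\kappa(\delta\lambda)^2 + 2 e^{\lambda_\kappa+\mu_\kappa}\delta\lambda\cdot\rho - e^{\lambda_\kappa+2\mu_\kappa}\Psi_\kappa\rho^2\right]dr.
\]
The cross term is reshaped by writing $4\pi r^2\rho=(r e^{-2\lambda_\kappa}\delta\lambda)'$ and integrating by parts, with boundary terms vanishing since $\delta\lambda(0)=0$ and $e^{\lambda_\kappa+\mu_\kappa}\delta\lambda\to 0$ at infinity (where $\delta\lambda$ decays like $1/r$). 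A further IBP on the resulting $\delta\lambda(\delta\lambda)'$ factor, combined with $\lambda_\kappa'+\mu_\kappa'=4\pi r e^{2\lambda_\kappa-\mu_\kappa}n_\kappa$ (from~\eqref{laplusmu} and~\eqref{E:TOV3}), yields
\[
-4\pi\int r^2 e^{\lambda_\kappa+\mu_\kappa}\delta\lambda\cdot\rho\,dr = 4\pi\int r^2 e^{\lambda_\kappa}n_\kappa(\delta\lambda)^2\,dr - \tfrac{1}{2}\int e^{\mu_\kappa-\lambda_\kappa}\left(1+r(\mu_\kappa'-\lambda_\kappa')\right)(\delta\lambda)^2\,dr.
\]
Substituting back and invoking $4\pi r^2 e^{\lambda_\kappa}n_\kappa=r(\lambda_\kappa'+\mu_\kappa')e^{\mu_\kappa-\lambda_\kappa}$ once more on the residual $(\delta\lambda)^2 n_\kappa$ piece collapses the various $(\delta\lambda)^2$ contributions into the single clean term $-\tfrac{1}{2}\int e^{\mu_\kappa-\lambda_\kappa}(2r\mu_\kappa'+1)(\delta\lambda)^2\,dr$, and comparison with~\eqref{E:ENERGYEE1} identifies the result as $\tfrac{1}{2}(\tilde{\mathcal L}_\kappa\rho,\rho)_{L^2}$.

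The main obstacle will be the careful bookkeeping in this chain of integrations by parts and repeated applications of~\eqref{laplusmu}, \eqref{E:TOV3}, and~\eqref{Nmu}: the several disparate pieces proportional to $r(\lambda_\kappa'+\mu_\kappa')(\delta\lambda)^2$ and $r(\mu_\kappa'-\lambda_\kappa')(\delta\lambda)^2$ must telescope precisely into $(2r\mu_\kappa'+1)(\delta\lambda)^2$. A secondary point is checking the vanishing of boundary terms at $r=\infty$ given only the $O(1/r)$ decay of $\delta\lambda$, which is controlled by the compensating factor $e^{\lambda_\kappa+\mu_\kappa}\to 1$ together with the exact identity $\lambda_\kappa+\mu_\kappa\equiv 0$ on $[R_\kappa,\infty[$.
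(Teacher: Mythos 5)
Your proposal is correct and follows essentially the same route as the paper: Taylor expansion of $\mathscr N_\kappa$ using $\frac{dN_\kappa}{d\rho}(\rho_\kappa)=e^{\mu_\kappa}$, $\frac{d^2N_\kappa}{d\rho^2}(\rho_\kappa)=-e^{2\mu_\kappa}\Psi_\kappa$, and $\lambda=\delta\lambda$ from \eqref{E:LAMBDAOFF_EE}, with the first variation killed via \eqref{laplusmu}, \eqref{E:TOV3} (your Fubini identity is just the integrated form of the paper's integration by parts) and the second variation collapsed by the same chain of integrations by parts into $\frac12\bigl(\int e^{2\mu_\kappa+\lambda_\kappa}\Psi_\kappa\rho^2\,dx-\int_0^\infty e^{\mu_\kappa-\lambda_\kappa}(2r\mu_\kappa'+1)\lambda^2\,dr\bigr)$, matching \eqref{E:ENERGYEE1}. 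Only a notational slip: having set $\delta m=4\pi\int_0^r s^2\rho\,ds$ so that $\delta\lambda=e^{2\lambda_\kappa}\delta m/r$, you later write $\delta\lambda=4\pi e^{2\lambda_\kappa}\delta m/r$, double-counting the $4\pi$, though your subsequent formulas are consistent with the correct expression.
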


\begin{proof}
  Since $\mathscr M$ is linear its expansion is trivial. For
  $\mathscr N_\kappa$ we recall \eqref{E:LAMBDAOFF_EE} so that
  \begin{align} \label{Nlamexpand}
    &
    N_\kappa(\rho_\kappa+\rho)
    \left(1-\frac{2m_\kappa(r)}{r}-\frac{2m(r)}{r}\right)^{-\frac12}\notag\\
    &\qquad =
    e^{\lambda_\kappa} \frac{d N_\kappa}{d \rho}(\rho_\kappa) \rho
    + e^{\lambda_\kappa} n_\kappa + e^{\lambda_\kappa} n_\kappa \lambda \notag\\
    &\quad\qquad
    +\frac{1}{2} e^{\lambda_\kappa} \frac{d^2 N_\kappa}{d^2 \rho}(\rho_\kappa) \rho^2
    + e^{\lambda_\kappa} \frac{d N_\kappa}{d \rho}(\rho_\kappa) \rho \lambda
    + \frac{3}{2} e^{\lambda_\kappa} n_\kappa \lambda^2 +\ldots.
  \end{align}
  By \eqref{Nmu}, $\frac{dN}{d\rho}(\rho_\kappa) = e^{\mu_\kappa}$.
  Hence
  \[
  D \mathscr E_\kappa [\rho_\kappa](\rho)
  =
  4\pi \int_0^{R_\kappa}  r^2 \rho\,dr
  - 4\pi \int_0^{R_\kappa}  r^2 e^{\l_\kappa+\mu_\kappa} \rho\,dr
  - 4\pi \int_0^{R_\kappa} r^2 e^{\l_\kappa} n_\kappa \l \,dr.
  \]
  Using~\eqref{E:TOV3} and~\eqref{laplusmu} it follows that
  \be\label{E:NKAPPA1}
  n_\kappa = e^{\mu_\kappa-2\l_\kappa}\frac1{4\pi r}(\l_\kappa'+\mu_\kappa'),
  \ee
  which we use to rewrite the last integral as follows:
  \begin{align*}
    4\pi \int_0^{R_\kappa}  r^2 e^{\l_\kappa} n_\kappa \l \,dr
    & =
    \int_0^{R_\kappa} r e^{\mu_\kappa-\l_\kappa} (\mu_\kappa'+\l_\kappa')\l \,dr \\
    & =
    \int_0^{R_\kappa} r e^{-2\l_\kappa} \l \left(e^{\mu_\kappa+\l_\kappa}\right)' \,dr  \\
    & =
    - \int_0^{R_\kappa} \left( r e^{-2\l_\kappa} \l \right)'
    \left(e^{\mu_\kappa+\l_\kappa} - e^{\mu_\kappa+\l_\kappa}\big|_{r=R_\kappa}\right) \,dr \\
    & =
    - 4\pi \int_0^{R_\kappa} r^2 \rho \left(e^{\mu_\kappa+\l_\kappa} - 1\right) \,dr;
  \end{align*}
  in the last line we used \eqref{E:LAMBDAOFF_EE} and the fact that due to
  \eqref{laplusmu},
  $\mu_\kappa(R_\kappa)+\l_\kappa(R_\kappa)=\mu_\kappa(\infty)+\l_\kappa(\infty)=0$.
  It follows that
  $D \mathscr E_\kappa[\rho_\kappa] =0$,
  i.e., $\rho_\kappa$ is a critical point of $\mathscr E_\kappa$.

  Next we compute $D^2\mathscr E_\kappa[\rho_\kappa]$, using \eqref{Nlamexpand}:
  \begin{align*}
    D^2\mathscr E_\kappa[\rho_\kappa](\rho,\rho)
    & =
    - 2\pi \int_0^{R_\kappa} r^2 e^{\l_\kappa}\frac{d^2N_\kappa}{d\rho^2}(\rho_\kappa)\,
    \rho^2 \,dr
    - 4\pi \int_0^{R_\kappa}  r^2 e^{\l_\kappa} \frac{dN_\kappa}{d\rho}(\rho_\kappa)
    \rho \, \lambda \,dr \\
    &\quad
    -6\pi \int_0^{R_\kappa} r^2 e^{\l_\kappa}  n_\kappa \lambda^2\,dr \\
    & = : I_1 + I_2 + I_3.
  \end{align*}
  From \eqref{N2diff}, \eqref{Nmu}, and \eqref{E:PSIDEF} it follows that
  \[
    \frac{d^2N_\kappa}{d\rho^2} = -e^{2\mu_\kappa} \Psi_\kappa .
  \]
  Therefore,
  \[
  I_1 = 2\pi \int_0^{R_\kappa}  r^2 e^{2\mu_\kappa+\l_\kappa} \Psi_\kappa \rho^2  \,dr.
  \]
  In order to compute $I_2$ we recall \eqref{Nmu} and note that
  $
  \rho = \frac{1}{4\pi r^2}\left(e^{-2\lambda_\kappa} r \lambda\right)'.
  $
  Hence
  \begin{align*}
    I_2
    & =
    - \int_0^\infty e^{\mu_\kappa+\l_\kappa} \l
    \left(e^{-2\lambda_\kappa} r \lambda\right)' dr \\
    & =
    - \int_0^{\infty} e^{\mu_\kappa-\l_\kappa}
    \left(1-2 r \lambda_\kappa'\right)\l^2 \,dr
    -  \int_0^{\infty} e^{\mu_\kappa-\l_\kappa} r \frac{1}{2}\frac{d}{dr}\l^2 \,dr \\
    & =
    - \frac{1}{2} \int_0^{\infty} e^{\mu_\kappa-\l_\kappa}
    \left(1 - 3 r \l_\kappa' - r\mu_\kappa'\right) \l^2\,dr;
  \end{align*}
  since $\lambda(r)\sim 1/r$ for large $r$ there is no boundary term
  at infinity in the integration by parts.
  Finally, using \eqref{E:NKAPPA1} it follows that
  \[
  I_3 =
  - 6 \pi \int_0^\infty r^2 e^{\l_\kappa} n_\kappa \lambda^2\,dr
  =  -\frac{3}{2} \int_0^\infty r e^{\mu_\kappa-\l_\kappa} (\mu_\kappa'+\l_\kappa')
  \l^2\,dr.
  \]
  Summing $I_j$, $j=1,2,3$, we obtain the asserted formula for
  $D^2\mathscr E_\kappa$, cf.\ \eqref{E:ENERGYEE1}.
\end{proof}

In the context of stellar perturbations the natural notion of
dynamically accessible perturbations
parallels Definition~\ref{D:DYNACCDEF} for the Einstein-Vlasov system.

\begin{definition}[Linearly dynamically accessible perturbations]
  \label{D:LDAPEE}
  A function $\rho$ is a {\em linearly dynamically accessible perturbation}
  if $\rho \in $ $\overline{R\left( A_{\kappa }\right) }$,
  the closure of $R\left( A_{\kappa }\right)$ in $X_1$.
\end{definition}

This definition is motivated by~\eqref{eqn-LEE-rho}, which states that for
solutions of the linearized system,
$\dot\rho \in \overline{R\left( A_{\kappa }\right)}$. In other words,
the space of linearly dynamically accessible perturbations is identified
with the tangent space at $\rho_\kappa$ of
the leaf of all density configurations of a fixed total mass
$\M(\rho)=\M(\rho_\kappa)$.
Since $\overline{R\left( A_{\kappa }\right)}=(\ker A_\kappa')^\perp$ and
$A_\kappa' =e^{-\frac{3}{2}\lambda _{\kappa }} n_{\kappa }^{\frac{1}{2}} \pa_r$
we have  $\overline{R\left( A_{\kappa }\right)}=(\ker \pa_r)^\perp$ and therefore
\begin{equation}\label{E:CHANDRA}
    \mathcal C_\kappa: = \overline{R\left( A_{\kappa }\right) }
    =\left\{ \rho \in X_{1} \mid
    \delta\M(\rho)=\int_{B_\kappa}\rho \ dx=0\right\} .
\end{equation}
A weaker notion of linear stability is spectral stability.
\begin{definition}
  The steady state $(\rho_\kappa,\l_\kappa,\mu_\kappa)$ is
  {\em spectrally stable} if the spectrum of the linearized  operator
  $A_{\kappa }'\mathcal{\tilde{L}}_{\kappa }A_{\kappa}$ is non-negative.
\end{definition}

Combining Lemmas~\ref{L:LINEAREE} and~\ref{L:VARIATIONEE},
we can provide a
variational criterion for spectral stability of steady states of
the Einstein-Euler system, which in the physics literature goes back
to Chandrasekhar~\cite{Chandrasekhar1964}.

\begin{proposition}[Chandrasekhar stability criterion]
  \label{P:CHANDRASEKHAR}
  A steady state $(\rho_\kappa,\l_\kappa,\mu_\kappa)$ of the Einstein-Euler system
  given by Proposition~\ref{ssfamilies} is spectrally stable if and only if
  \begin{align*}
    \left( \tilde{\mathcal L}_\kappa \rho, \rho \right)_{L^2_r}  \ge 0 \
    \text{ for all}\ \rho \in \mathcal C_\kappa,
  \end{align*}
  where the constraint set $\mathcal C_\kappa=\overline{R(A_\kappa)}$
  is the set of all linearly dynamically accessible perturbations.
\end{proposition}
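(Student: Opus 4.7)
The plan is to base the argument on the second-order self-adjoint formulation~\eqref{eqn-2nd order-LEE2}, namely $\pa_{tt} v + A_\kappa^{\ast}\mathcal L_\kappa A_\kappa v = 0$, rather than on the mixed-space form~\eqref{eqn-2nd order-LEE}. Granting the self-adjointness of $A_\kappa^{\ast}\mathcal L_\kappa A_\kappa$ on $X_2$ (Lemma~\ref{prop-self-adjoint-composition}, referenced in the text), spectral stability of the steady state is equivalent, by the spectral theorem, to the non-negativity of the quadratic form $v\mapsto (A_\kappa^{\ast}\mathcal L_\kappa A_\kappa v,v)_{X_2}$ on $D(A_\kappa^{\ast}\mathcal L_\kappa A_\kappa)$. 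The two second-order operators $A_\kappa^{\prime}\tilde{\mathcal L}_\kappa A_\kappa$ and $A_\kappa^{\ast}\mathcal L_\kappa A_\kappa$ are conjugate via the Riesz isomorphism $X_1\to X_1^{\prime}$ induced by the weight $e^{2\mu_\kappa+\lambda_\kappa}\Psi_\kappa$, so they have the same spectrum and passing from one to the other is harmless.

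The key algebraic identity, following directly from the defining duality $(A_\kappa w,\rho)_{X_1}=(w,A_\kappa^{\ast}\rho)_{X_2}$ and the relation $\tilde{\mathcal L}_\kappa\rho = e^{2\mu_\kappa+\lambda_\kappa}\Psi_\kappa\mathcal L_\kappa\rho$ built into Definition~\ref{D:XDEF_EE}, is
\[
(A_\kappa^{\ast}\mathcal L_\kappa A_\kappa v,v)_{X_2}
= (\mathcal L_\kappa A_\kappa v, A_\kappa v)_{X_1}
= (\tilde{\mathcal L}_\kappa A_\kappa v, A_\kappa v)_{L^2}.
\]
Consequently, spectral stability is equivalent to $(\tilde{\mathcal L}_\kappa\rho,\rho)_{L^2}\ge 0$ for every $\rho$ of the form $\rho=A_\kappa v$ with $v$ in the (dense) domain $D(A_\kappa^{\ast}\mathcal L_\kappa A_\kappa)$. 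To pass from this dense subset of $R(A_\kappa)$ to the full closure $\mathcal C_\kappa=\overline{R(A_\kappa)}$, I will invoke the continuity of the quadratic form $\rho\mapsto (\tilde{\mathcal L}_\kappa\rho,\rho)_{L^2} = (\mathcal L_\kappa\rho,\rho)_{X_1}$ on $X_1$, which is immediate from the boundedness of $\mathcal L_\kappa$ established in Lemma~\ref{lemma-self-adjoint-L-kappa}. A standard density/approximation argument then extends the sign condition from $A_\kappa\bigl(D(A_\kappa^{\ast}\mathcal L_\kappa A_\kappa)\bigr)$ first to $R(A_\kappa)$ and then to $\mathcal C_\kappa$.

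Conversely, if the quadratic form is non-negative on $\mathcal C_\kappa$, then in particular it is non-negative on $R(A_\kappa)$, so the identity above gives $(A_\kappa^{\ast}\mathcal L_\kappa A_\kappa v,v)_{X_2}\ge 0$ for all $v\in D(A_\kappa^{\ast}\mathcal L_\kappa A_\kappa)$. Self-adjointness and the spectral theorem then yield $\sigma(A_\kappa^{\ast}\mathcal L_\kappa A_\kappa)\subset [0,\infty)$, i.e.\ spectral stability. Finally, the characterization of $\mathcal C_\kappa$ as $\{\rho\in X_1:\int_{B_\kappa}\rho\,dx=0\}$ stated in~\eqref{E:CHANDRA} gives the interpretation of these perturbations as mass-preserving, matching the physical meaning of linearly dynamically accessible perturbations in Definition~\ref{D:LDAPEE}.

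The main technical obstacle is the density step: one needs enough regularity in $D(A_\kappa^{\ast}\mathcal L_\kappa A_\kappa)$ to guarantee that $A_\kappa\bigl(D(A_\kappa^{\ast}\mathcal L_\kappa A_\kappa)\bigr)$ is dense in $R(A_\kappa)$. This is delicate because $\Psi_\kappa^{-1}$ degenerates at the vacuum boundary $r=R_\kappa$ (cf.~\eqref{E:PSIKAPPAPROP}), so $\mathcal L_\kappa$ does not trivially preserve $D(A_\kappa^{\ast})$. The cleanest route is to verify this on a core of smooth compactly supported radial test densities inside $B_\kappa$, noting that for $\rho$ smooth and compactly supported in $\{r<R_\kappa\}$ both $A_\kappa^{\ast}\rho$ and $\mathcal L_\kappa\rho$ are well defined, and such $\rho$ span a dense subspace of $R(A_\kappa)$ in the $X_1$ norm. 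Everything else is a soft functional-analytic argument relying on the bounded/compact structure of $\mathcal L_\kappa$ already established.
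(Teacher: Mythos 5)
Your argument is essentially the paper's own proof: the paper also passes to the second-order formulation, uses the identity $\langle A_\kappa'\tilde{\mathcal L}_\kappa A_\kappa v, v\rangle = \langle \tilde{\mathcal L}_\kappa A_\kappa v, A_\kappa v\rangle$ (your version with $A_\kappa^{\ast}\mathcal L_\kappa A_\kappa$ is the same operator up to the Riesz identification, as the paper notes when stating \eqref{eqn-2nd order-LEE} and \eqref{eqn-2nd order-LEE2} are equivalent), and concludes via the characterization \eqref{E:CHANDRA} of $\mathcal C_\kappa=\overline{R(A_\kappa)}$. The only difference is that you spell out the self-adjointness/spectral-theorem equivalence and the density-plus-continuity step (boundedness of $\mathcal L_\kappa$) needed to pass from the operator domain to all of $\mathcal C_\kappa$, which the paper leaves implicit; this is a correct and welcome elaboration, not a different route.
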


\begin{proof}
  We use the second-order formulation~\eqref{eqn-2nd order-LEE} and
  note that the spectral stability is equivalent to the non-negativity of the quadratic form
  \[
  \left\langle A_{\kappa }^{\prime}\mathcal{\tilde{L}}_{\kappa }A_{\kappa}v,v\right\rangle
  =\left\langle \mathcal{\tilde{L}}_{\kappa }A_{\kappa}v,A_{\kappa }v\right\rangle .
  \]
  The claim now follows from~\eqref{E:CHANDRA}.
\end{proof}

\begin{remark}
  By Lemma~\ref{L:VARIATIONEE} spectral stability is  equivalent to the
  positive semi-definiteness of the second variation of the energy
  along the manifold of linearly dynamically accessible
  perturbations.
\end{remark}

\subsection{The reduced Einstein-Euler operator}\label{SS:REDUCEDEE}

In analogy to Section~\ref{SS:REDUCED} our goal is to show that the
operator $\mathcal{\tilde{L}}_{\kappa}$ is bounded from above and
below by a Schr\"odinger-type operator.
Our main estimate is contained in Theorem~\ref{T:EULERREDUCED}.
It is of independent interest, but it is also a crucial ingredient
in our proof of Theorem~\ref{mainee}.

We first recall the definition \eqref{E:MUBARDEFNEW_EE}
of the modified potential $\bar\mu$ and the fact that
the relation~\eqref{E:BASICVLASOV} also holds in the present situation.

\begin{definition}[Reduced operator for the Einstein-Euler system]\label{D:REDUCEDEE}
  By analogy to $S_\kappa$ the {\em reduced operator}
  $\Sigma_\kappa \colon \dot H^1_r\to (\dot H^1_r)'$ is defined by
  \be\label{E:SIGMAKAPPADEF}
  \Sigma_\kappa \psi : = -\Delta_\kappa \psi - \frac{e^{\l_\kappa}}{\Psi_\kappa} \psi,
  \ee
  with $\dot H^1_r$ and $\Delta_\kappa$ defined as in
  Definition~\ref{redEVdef}.
\end{definition}

\begin{remark}
  \begin{itemize}
  \item[(a)]
    We recall from Section~\ref{micromacro} that the macroscopic quantities
    related to a steady state of the Einstein-Vlasov system constitute
    a steady state of the Einstein-Euler system.
    Indeed, in this situation $S_\kappa=\Sigma_\kappa$, since by \eqref{Nmu}
    and \eqref{E:TOV2},
    \be\label{s=sigma}
    e^{\mu_\kappa+\l_\kappa} \frac{\rho_\kappa'}{\mu_\kappa'}
    = - e^{\mu_\kappa+\l_\kappa} \frac{\rho_\kappa'}{\Psi_\kappa n_\kappa'}
    = - \frac{e^{\l_\kappa}}{\Psi_\kappa}.
    \ee
  \item[(b)]
    Just as $S_\kappa$ the operator $\Sigma_\kappa$ is  self-dual.
  \end{itemize}
\end{remark}


\begin{theorem}\label{T:EULERREDUCED}
  Let $\Sigma_\kappa$ be the reduced operator given in
  Definition~\ref{D:REDUCEDEE}.
  For every  $\mu \in \dot H^1_r$ and
  $\rho = \rho_\mu := - \frac{e^{-\mu_\kappa}}{\Psi_\kappa}\mu$ it holds that
  \begin{equation} \label{E:LOWERBOUND11}
    \left( \Sigma_{\kappa }\mu ,\,\mu \right)_{L^2_r}\geq
    \left(\mathcal{\tilde{L}}_{\kappa}\rho ,\rho \right)_{L^2_r}.
  \end{equation}
  For every $\rho \in X_{1}$ and $\bar\mu=\bar\mu_\rho$ as defined in
  \eqref{E:MUBARDEFNEW_EE} we have
  \begin{equation}\label{E:UPPERBOUND11}
    \left(\mathcal{\tilde{L}}_{\kappa}\rho ,\rho \right)_{L^2_r} \geq
    \left( \Sigma_\kappa\bar{\mu},\,\bar{\mu}\right)_{L^2_r}.
  \end{equation}
\end{theorem}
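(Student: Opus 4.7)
The strategy mirrors that of Theorem~\ref{T:VLASOVREDUCED}(a), replacing the Vlasov microscopic factor $|\phi_\kappa'(E_\kappa)|\p^2$ by its Euler analogue $e^{\mu_\kappa}/\Psi_\kappa$. Recall from the identity~\eqref{E:IDLOC} and the computation~\eqref{s=sigma} that these are precisely the objects that determine the ``potential'' in the reduced operator. The two main ingredients are the representation of $\langle-\Delta_\kappa\psi,\psi\rangle$ established in the Vlasov proof and the defining relation~\eqref{E:BASICVLASOV} for $\bar\mu$, which holds verbatim in the Euler setting with $\lambda$ given by~\eqref{E:LAMBDAOFF_EE}.

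For the lower bound~\eqref{E:UPPERBOUND11}, I would first use~\eqref{E:BASICVLASOV} to rewrite
\[
\langle-\Delta_\kappa\bar\mu,\bar\mu\rangle = \int_0^\infty e^{\mu_\kappa-\lambda_\kappa}(2r\mu_\kappa'+1)\,\lambda^2\,dr,
\]
and separately (via~\eqref{E:BARMUELLIPTIC} integrated against $e^{\mu_\kappa+\lambda_\kappa}\bar\mu$) the same quantity equals $-\int e^{\mu_\kappa+\lambda_\kappa}\bar\mu\,\rho\,dx$. Substituting both expressions into $(\Sigma_\kappa\bar\mu,\bar\mu)_{L^2_r}$ and subtracting from $(\tilde{\mathcal L}_\kappa\rho,\rho)_{L^2_r}$ as given in~\eqref{E:ENERGYEE1}, the cross terms combine into a perfect square:
\[
(\tilde{\mathcal L}_\kappa\rho,\rho)_{L^2_r} - \langle\Sigma_\kappa\bar\mu,\bar\mu\rangle = \int e^{\lambda_\kappa}\left(e^{\mu_\kappa}\sqrt{\Psi_\kappa}\,\rho + \frac{\bar\mu}{\sqrt{\Psi_\kappa}}\right)^{2}\!dx \;\ge\; 0.
\]

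For the upper bound~\eqref{E:LOWERBOUND11}, I would use the alternative rewriting of $\langle-\Delta_\kappa\mu,\mu\rangle$ employed in the Vlasov proof (the computation just after~\eqref{upper1}): completing the square in $(e^{\mu_\kappa+\lambda_\kappa}\mu)'$ against $\lambda$ yields
\[
\langle-\Delta_\kappa\mu,\mu\rangle \;\ge\; -2\int e^{\mu_\kappa+\lambda_\kappa}\mu\,\rho\,dx - \int_0^\infty e^{\mu_\kappa-\lambda_\kappa}(2r\mu_\kappa'+1)\,\lambda^2\,dr,
\]
where the cross term is rewritten using $\rho=\frac{1}{4\pi r^2}(e^{-2\lambda_\kappa}r\lambda)'$ and an integration by parts. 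Inserting the specific choice $\rho=\rho_\mu=-e^{-\mu_\kappa}\mu/\Psi_\kappa$, the coupling term collapses to $2\int e^{\lambda_\kappa}\Psi_\kappa^{-1}\mu^2\,dx$, which exactly cancels the $-2\int e^{\lambda_\kappa}\Psi_\kappa^{-1}\mu^2$ contribution obtained by subtracting $(\tilde{\mathcal L}_\kappa\rho_\mu,\rho_\mu)_{L^2_r}$ from $\langle\Sigma_\kappa\mu,\mu\rangle$ (note that $\int e^{2\mu_\kappa+\lambda_\kappa}\Psi_\kappa\rho_\mu^2\,dx = \int e^{\lambda_\kappa}\Psi_\kappa^{-1}\mu^2\,dx$); the boundary-pressure integrals also match by construction, and one concludes $\langle\Sigma_\kappa\mu,\mu\rangle \ge (\tilde{\mathcal L}_\kappa\rho_\mu,\rho_\mu)_{L^2_r}$.

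The routine parts are the two integration-by-parts identities for $\langle-\Delta_\kappa\cdot,\cdot\rangle$; the main technical point is verifying that these formulas are valid in the present regularity class, namely for $\mu,\bar\mu\in\dot H^1_r$ with $\rho\in X_1$ (rather than in the pointwise sense). Here the key facts are that $1/\Psi_\kappa\in C([0,R_\kappa])$ with $1/\Psi_\kappa(R_\kappa)=0$ by~\eqref{E:PSIKAPPAPROP}, that Buchdahl's inequality~\eqref{buchdahl} controls $2r\mu_\kappa'+1$, and that $\bar\mu\in\dot H^1_r$ by Lemma~\ref{mubarlemma} (whose proof applies unchanged since it only uses the structure of $\lambda$ and the steady state, not the Vlasov ansatz). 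These allow the identities above to be interpreted in the weak sense via Definition~\ref{redEVdef}, after which the completion-of-square arguments yield~\eqref{E:LOWERBOUND11} and~\eqref{E:UPPERBOUND11}.
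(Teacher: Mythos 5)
Your proposal is correct and follows essentially the same route as the paper's proof: the two representations of $\langle-\Delta_\kappa\bar\mu,\bar\mu\rangle$ coming from \eqref{E:BASICVLASOV} and from \eqref{E:BARMUELLIPTIC}, completion of squares, and the same test elements $\bar\mu_\rho$ and $\rho_\mu=-e^{-\mu_\kappa}\Psi_\kappa^{-1}\mu$, with the regularity issues handled exactly as in Lemma~\ref{mubarlemma} and Lemma~\ref{lemma-self-adjoint-L-kappa}. In fact your completed square $\left(e^{\mu_\kappa}\sqrt{\Psi_\kappa}\,\rho+\Psi_\kappa^{-1/2}\bar\mu\right)^2$ carries the correct relative sign, since the cross term in $(\tilde{\mathcal L}_\kappa\rho,\rho)_{L^2}-\langle\Sigma_\kappa\bar\mu,\bar\mu\rangle$ equals $+2\int e^{\mu_\kappa+\lambda_\kappa}\bar\mu\,\rho\,dx=-2\int_0^\infty e^{\mu_\kappa-\lambda_\kappa}(2r\mu_\kappa'+1)\lambda^2\,dr$, whereas the sign inside the square displayed in the paper's proof is a (harmless) typo.
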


\begin{proof}
  The proof is analogous to the corresponding result,
  Theorem~\ref{T:VLASOVREDUCED},
  in the Einstein-Vlasov case. We first observe that the identities
  \eqref{lap_kappa_mubar1} and \eqref{lap_kappa_mubar2} remain valid
  in the present situation. Using these together with
  the definitions of $\mathcal{\tilde{L}}_\kappa$ and $\Sigma_\kappa$ implies
  that
  \begin{eqnarray*}
    (\Sigma_\kappa \bar\mu,\bar\mu)_{L^2_r}
    &=&
    \int_0^\infty e^{\mu_\kappa-\l_\kappa} \left(2r\mu_\kappa'+1\right)\, \lambda^2 dr
    -\int \frac{e^{\l_\kappa}}{\Psi_\kappa} \bar\mu^2\,dx \\
    &=&
    \left(\mathcal{\tilde{L}}_\kappa\rho ,\rho \right)_{L^2_r}
    -\int e^{2\mu_\kappa + \lambda_\kappa}\Psi_\kappa \rho^2 dx
    + 2 \int e^{\mu_\kappa+\l_\kappa} \bar\mu \rho \,dx
    -\int \frac{e^{\l_\kappa}}{\Psi_\kappa} \bar\mu^2\,dx \\
    &=&
    \left(\mathcal{\tilde{L}}_\kappa\rho ,\rho \right)_{L^2_r}
    - \int e^{\l_\kappa} \left(e^{\mu_\kappa} (\Psi_\kappa)^{1/2}\rho
    - (\Psi_\kappa)^{-1/2}\bar\mu\right)^2 \,dx \\
    &\leq&
    \left(\mathcal{\tilde{L}}_\kappa\rho ,\rho \right)_{L^2_r},
  \end{eqnarray*}
  and \eqref{E:UPPERBOUND11} is proven.
  To obtain the estimate \eqref{E:LOWERBOUND11}
  we observe that \eqref{upper1} remains valid for  $\mu\in \dot H^1_r$ and
  $\lambda=\lambda_\rho$ with $\rho\in X_1$, and hence
  \begin{eqnarray*}
    (-\Delta_\kappa \mu,\mu)_{L^2_r}
    &\geq&
    2  \int_0^\infty
    r e^{-2\l_\kappa} \left(e^{\mu_\kappa+\l_\kappa} \mu\right)' \lambda \,dr
    - \int_0^\infty e^{\mu_\kappa-\l_\kappa}\left(2r\mu_\kappa'+1\right)\,
    \lambda^2 \,dr\\
    &=&
    - 2  \int e^{\mu_\kappa+\l_\kappa} \mu \rho \,dx
    - \int_0^\infty e^{\mu_\kappa-\l_\kappa}\left(2r\mu_\kappa'+1\right)\,
    \lambda^2 \,dr;
  \end{eqnarray*}
  in the last equality we used \eqref{mulambdarho}, which again remains valid
  in the present situation.
  This implies that
  \begin{eqnarray*}
    \left(\Sigma_\kappa\mu, \mu\right)_{L^2_r}
    &\geq&
    - 2 \int e^{\mu_\kappa+\l_\kappa} \mu \rho \,dx
    + \left(\mathcal{\tilde{L}}_\kappa\rho ,\rho \right)_{L^2_r}
    - \int e^{2 \mu_\kappa+\l_\kappa}\Psi_\kappa \rho^2 dx
    - \int \frac{e^{\l_\kappa}}{\Psi_\kappa} \mu^2 dx\\
    &=&
    \left(\mathcal{\tilde{L}}_\kappa\rho ,\rho \right)_{L^2_r},
  \end{eqnarray*}
  provided $\rho = - \frac{e^{-\mu_\kappa}}{\Psi_\kappa}\mu$,
  and \eqref{E:LOWERBOUND11}
  is proven.
\end{proof}

\subsection{Main results for the Einstein-Euler system}\label{SS:EEMAIN}


\begin{lemma}\label{L:CORRECTION}
  The identity
  \[
  e^{2\mu_\kappa(R_\kappa)} = 1 - \frac{2M_\kappa}{R_\kappa}
  \]
  holds. In particular
  \[
  \sign (\frac{d}{d\kappa}(\mu_\kappa(R_\kappa)))
  = - \sign \frac{d}{d\kappa}\left(\frac{M_\kappa}{R_\kappa}\right).
  \]
\end{lemma}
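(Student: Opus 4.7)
\smallskip
\noindent
\emph{Proof plan for Lemma~\ref{L:CORRECTION}.}
The plan is to exploit that a spherically symmetric, asymptotically flat steady state of the Einstein--Euler (or Einstein--Vlasov) system possesses a Schwarzschild exterior, so that both metric coefficients $\lambda_\kappa$ and $\mu_\kappa$ can be evaluated explicitly at $r=R_\kappa$ in terms of the ADM mass. First I would recall that $\rho_\kappa$ and $p_\kappa$ vanish on $[R_\kappa,\infty[$, so the identity \eqref{lambdadef} together with \eqref{mdef_ssv} gives $e^{-2\lambda_\kappa(r)}=1-2M_\kappa/r$ for all $r\geq R_\kappa$, where $M_\kappa=m_\kappa(R_\kappa)$ is the ADM mass. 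In particular, evaluating at $r=R_\kappa$ yields
\[
e^{-2\lambda_\kappa(R_\kappa)}=1-\frac{2M_\kappa}{R_\kappa}.
\]
Next I would invoke the identity \eqref{laplusmu} from Proposition~\ref{ssfamilies}: on the vacuum region $[R_\kappa,\infty[$ the right-hand side vanishes, so $\lambda_\kappa+\mu_\kappa$ is constant there; by the boundary conditions \eqref{boundcinf} this constant equals $0$. Hence $\mu_\kappa(R_\kappa)=-\lambda_\kappa(R_\kappa)$, which combined with the previous display gives the asserted identity
\[
e^{2\mu_\kappa(R_\kappa)}=1-\frac{2M_\kappa}{R_\kappa}.
\]

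For the sign assertion, since the identity above holds for every $\kappa>0$, I would simply differentiate both sides with respect to $\kappa$ (treating $\mu_\kappa(R_\kappa)$, $M_\kappa$ and $R_\kappa$ all as functions of $\kappa$). This gives
\[
2\,e^{2\mu_\kappa(R_\kappa)}\,\frac{d}{d\kappa}\bigl(\mu_\kappa(R_\kappa)\bigr)
=-2\,\frac{d}{d\kappa}\!\left(\frac{M_\kappa}{R_\kappa}\right),
\]
and since the prefactor $e^{2\mu_\kappa(R_\kappa)}$ is strictly positive, the signs of the two sides agree, yielding the claim. There is no substantive obstacle here; one only needs to be careful that $\frac{d}{d\kappa}(\mu_\kappa(R_\kappa))$ denotes the total derivative of the scalar quantity $\kappa\mapsto \mu_\kappa(R_\kappa)$, which is precisely the quantity that the differentiated identity produces. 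The smoothness in $\kappa$ needed to justify the differentiation follows from the standard smooth dependence of the one-parameter family in Proposition~\ref{ssfamilies} on the parameter $\kappa$.
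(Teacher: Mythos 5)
Your proposal is correct and follows essentially the same route as the paper: the paper likewise evaluates $e^{-2\lambda_\kappa(R_\kappa)}=1-2M_\kappa/R_\kappa$ via \eqref{lambdadef} and uses $e^{\mu_\kappa+\lambda_\kappa}=1$ on $[R_\kappa,\infty[$ (which you justify, as the paper leaves implicit, from \eqref{laplusmu} and \eqref{boundcinf}) to conclude the identity. Your differentiation in $\kappa$ for the sign statement is exactly the intended, straightforward consequence, with positivity of $e^{2\mu_\kappa(R_\kappa)}$ guaranteed by the Buchdahl bound \eqref{buchdahl}.
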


\begin{proof}
  Recall that for any $r\ge0$ $e^{-2\l_\kappa(r)} = 1 - \frac{2m(r)}{r}$,
  where $m(r)=\int_0^r 4\pi \rho_\kappa(s) s^2\,ds$. In particular,
  $M_\kappa=\int_0^\infty 4\pi s^2\rho_\kappa\,ds = m(R_\kappa)$.
  Since $e^{\mu_\kappa(r)+\l_\kappa(r)} = 1$ for all $r\ge R_\kappa$, it finally follows that
  \[
  e^{2\mu_\kappa(R_\kappa)} = e^{-2\l_\kappa(R_\kappa)}
  = 1 - \frac{2m(R_\kappa)}{R_\kappa} = 1 - \frac{2M_\kappa}{R_\kappa}.
  \]
\end{proof}


\begin{theorem}
  \label{T:TPP}
  Let $(\rho_\kappa,\l_\kappa,\mu_\kappa)$ be a steady state of the
  Euler-Einstein system given by Proposition~\ref{ssfamilies}, and let
  \be\label{E:IKAPPADEF}
  i_\kappa =
  \begin{cases}
    1&\text{if }
    \frac{d}{d\kappa} M_\kappa \frac{d}{d\kappa}\left(\frac{M_\kappa}{R_\kappa}\right) >0, \\
    0&\text{if }
    \frac{d}{d\kappa} M_\kappa\frac{d}{d\kappa}\left(\frac{M_\kappa}{R_\kappa}\right) <0.
  \end{cases}
  \ee
  If
  $\frac{d}{d\kappa} M_\kappa \frac{d}{d\kappa}\left(\frac{M_\kappa}{R_\kappa}\right)\neq0$,
  then the number of growing modes associated with the linearized operator
  $J_\kappa L_\kappa$ equals $n^-(\Sigma_\kappa)-i_\kappa$.
\end{theorem}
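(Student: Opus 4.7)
The plan is first to relate the growing modes of $J^\kappa L^\kappa$ to a constrained Morse index, then to express this in terms of the reduced operator $\Sigma_\kappa$, and finally to identify the codimension-one correction with $i_\kappa$. Starting from the second-order formulation \eqref{eqn-2nd order-LEE}, an application of the abstract Hamiltonian framework of~\cite{LinZeng2017} in complete analogy with Theorem~\ref{thm:main VE} shows that the number of growing modes of $J^\kappa L^\kappa$ equals $n^-\!\bigl(\tilde{\mathcal L}_\kappa|_{\overline{R(A_\kappa)}}\bigr)$; the compact-perturbation-of-identity structure of $\mathcal L_\kappa$ provided by Lemma~\ref{lemma-self-adjoint-L-kappa} ensures finiteness of this index. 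By \eqref{E:CHANDRA}, $\overline{R(A_\kappa)} = \mathcal C_\kappa = \{\rho \in X_1 : \int_{B_\kappa} \rho\,dx = 0\}$, which is a codimension-one subspace determined by the linear functional $\mathbf 1$.

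\textbf{From $\tilde{\mathcal L}_\kappa$ to $\Sigma_\kappa$.} Next I would invoke Theorem~\ref{T:EULERREDUCED}: the two-sided bounds \eqref{E:LOWERBOUND11}--\eqref{E:UPPERBOUND11}, together with the bijective correspondences $\mu \mapsto \rho_\mu = -e^{-\mu_\kappa}\Psi_\kappa^{-1}\mu$ and $\rho \mapsto \bar\mu_\rho$, imply by the same Morse-index comparison argument as in Proposition~\ref{prop-L-decompo} that
\[
n^-(\tilde{\mathcal L}_\kappa) = n^-(\Sigma_\kappa) \quad \text{on } X_1.
\]
A standard codimension-one Morse-index formula (see e.g.\ Lemma~\ref{op_est_morse}) then gives
\[
n^-\!\bigl(\tilde{\mathcal L}_\kappa|_{\mathcal C_\kappa}\bigr) = n^-(\tilde{\mathcal L}_\kappa) - i_\kappa' = n^-(\Sigma_\kappa) - i_\kappa',
\]
where $i_\kappa' \in \{0,1\}$ is $1$ if $\langle \tilde{\mathcal L}_\kappa^{-1}\mathbf 1,\mathbf 1\rangle_{L^2(B_\kappa)} < 0$ and $0$ if this quantity is positive.

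\textbf{Identifying $i_\kappa'$ via the mass-radius curve.} The key observation is that $\partial_\kappa \rho_\kappa$ is, up to a scalar factor, exactly the Lagrange multiplier direction $\rho_* = \tilde{\mathcal L}_\kappa^{-1}\mathbf 1$. Writing $\mathscr E_\kappa = \mathscr M - c_\kappa\,\mathscr N_0$ with $c_\kappa \propto e^{\mu_\kappa(R_\kappa)}$ as in Section~\ref{SS:NUMBERDENSITY}, and differentiating the critical-point identity $D\mathscr E_\kappa[\rho_\kappa] \equiv 0$ in $\kappa$, one obtains
\[
D^2\mathscr E_\kappa[\rho_\kappa](\partial_\kappa \rho_\kappa) = (\partial_\kappa c_\kappa)\, D\mathscr N_0[\rho_\kappa].
\]
Since $D\mathscr N_\kappa[\rho_\kappa] = D\mathscr M[\rho_\kappa] = \mathbf 1$ by the proof of Lemma~\ref{L:VARIATIONEE}, this rewrites as
\[
\tilde{\mathcal L}_\kappa(\partial_\kappa \rho_\kappa) = 2\,\frac{d\mu_\kappa(R_\kappa)}{d\kappa}\,\mathbf 1.
\]
Pairing with $\mathbf 1$ and using $\int_{B_\kappa} \partial_\kappa \rho_\kappa\,dx = dM_\kappa/d\kappa$ (the boundary contribution at $r = R_\kappa$ vanishes since $\rho_\kappa(R_\kappa) = 0$) yields
\[
\mathrm{sign}\,\langle \tilde{\mathcal L}_\kappa^{-1}\mathbf 1,\mathbf 1\rangle = \mathrm{sign}\!\left(\frac{dM_\kappa/d\kappa}{d\mu_\kappa(R_\kappa)/d\kappa}\right).
\]
Applying Lemma~\ref{L:CORRECTION}, which gives $\mathrm{sign}\,\tfrac{d\mu_\kappa(R_\kappa)}{d\kappa} = -\,\mathrm{sign}\,\tfrac{d}{d\kappa}(M_\kappa/R_\kappa)$, the right-hand side equals $-\,\mathrm{sign}\bigl(\tfrac{dM_\kappa}{d\kappa}\,\tfrac{d}{d\kappa}(M_\kappa/R_\kappa)\bigr)$. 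Consequently $i_\kappa' = 1$ precisely when the product is positive and $i_\kappa' = 0$ when it is negative, matching~\eqref{E:IKAPPADEF}.

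\textbf{Main obstacle.} The technical heart of the argument is the identity $\tilde{\mathcal L}_\kappa(\partial_\kappa \rho_\kappa) = 2\,\frac{d\mu_\kappa(R_\kappa)}{d\kappa}\,\mathbf 1$. Three subtleties must be addressed rigorously: (i) the $\kappa$-dependent support $[0,R_\kappa]$ of $\rho_\kappa$ requires interpreting $\partial_\kappa\rho_\kappa$ with care at the vacuum boundary, using the decay $1/\Psi_\kappa(r) \to 0$ as $r\nearrow R_\kappa$ from \eqref{E:PSIKAPPAPROP} to suppress boundary contributions; (ii) the normalization $c_\kappa = e^{\mu_\kappa(R_\kappa)}/(dN/d\rho)(0)$ is itself $\kappa$-dependent, and it is precisely this dependence that produces the boundary-data factor $d\mu_\kappa(R_\kappa)/d\kappa$ on the right; (iii) differentiating the elliptic identity \eqref{E:BARMUELLIPTIC} for $\bar\mu$ in $\kappa$ must be carried out consistently with the varying background metric via \eqref{E:BASICVLASOV}. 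The nondegeneracy hypothesis $\tfrac{dM_\kappa}{d\kappa}\tfrac{d}{d\kappa}(M_\kappa/R_\kappa)\neq 0$ guarantees, through Lemma~\ref{L:CORRECTION}, that $d\mu_\kappa(R_\kappa)/d\kappa \neq 0$ so that $\mathbf 1 \in R(\tilde{\mathcal L}_\kappa)$ and the Lagrange multiplier $\rho_* = \tfrac{1}{2(d\mu_\kappa(R_\kappa)/d\kappa)}\partial_\kappa\rho_\kappa$ is well-defined. The remaining passage from a second-order Morse count to a first-order growing-mode count follows from the framework of~\cite{LinZeng2017} exactly as in the proof of Theorem~\ref{thm:main VE}.
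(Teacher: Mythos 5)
Your proposal is correct and follows essentially the same route as the paper: growing modes are counted by the constrained Morse index of $\tilde{\mathcal L}_\kappa$ on $\mathcal C_\kappa$, the equality $n^-(\tilde{\mathcal L}_\kappa)=n^-(\Sigma_\kappa)$ comes from Theorem~\ref{T:EULERREDUCED}, and the codimension-one correction is obtained by differentiating the critical-point identity in $\kappa$, which is exactly the paper's computation \eqref{E:KEYCALC1} showing $\tilde{\mathcal L}_\kappa\partial_\kappa\rho_\kappa\propto \frac{d\mu_\kappa(R_\kappa)}{d\kappa}\mathbf 1$, combined with Lemma~\ref{L:CORRECTION}. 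Your phrasing via $\operatorname{sign}\langle\tilde{\mathcal L}_\kappa^{-1}\mathbf 1,\mathbf 1\rangle$ is equivalent to the paper's direct argument that $\mathcal C_\kappa$ is $\tilde{\mathcal L}_\kappa$-orthogonal to $\partial_\kappa\rho_\kappa$ and that the sign of the form on this direction decides whether one negative direction is lost (only note that the codimension-one index formula is not Lemma~\ref{op_est_morse} but the standard decomposition argument the paper invokes from~\cite{LinZeng2017}).
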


\begin{proof}
  By Lemma~\ref{L:VARIATIONEE} we have
  \begin{align*}
    \langle \mathcal{\tilde{L}}_{\kappa }\rho ,\rho \rangle_{L^2}
    & =\langle D^2\mathscr E[\rho_\kappa]\rho,\,\rho\rangle
    = \langle \left(D^2\mathscr M[\rho_\kappa]
    - c_\kappa D^2\mathscr N_0[\rho_\kappa]\right)\rho,\,\rho\rangle,
  \end{align*}
  where $\mathscr N_0(\rho)$ is defined in~\eqref{E:NZERODEF}.
  Since $c_\kappa>0$ the spectral stability is equivalent to checking
  the non-negativity of
  $\langle \left(\frac{D^2\mathscr M[\rho_\kappa]}{c_\kappa}
  -  D^2\mathscr N_0[\rho_\kappa]\right)\rho,\,\rho\rangle$ for all
  $\rho\in \mathcal C_\kappa$ (i.e. $\rho\in X_1$, $\int_{D_\kappa}\rho=0$)
  by Proposition~\ref{P:CHANDRASEKHAR}.
  Define the operator
  \[
  \tilde{\mathscr E}_\kappa  = \frac1{c_\kappa} \mathscr E_\kappa
  \]
  By the above analysis it is clear that $\rho_\kappa$ is a critical point of
  $\tilde{\mathscr E}_\kappa$, i.e.,
  \be\label{E:FIRSTVAR}
  \frac1{c_\kappa}D\mathscr M[\rho_\kappa] - D\mathscr N_0[\rho_\kappa]=0.
  \ee
  It is not hard to see that the maps
  $]0,\infty[ \, \ni \kappa\mapsto
    \frac1{c_\kappa}D\mathscr M[\rho_\kappa] - D\mathscr N_0[\rho_\kappa]
    \in \mathcal L(X_1, \mathbb R)$
  and
  $]0,\infty[ \, \ni \kappa\mapsto \rho_\kappa \in X_1$
  are $C^1$-maps in the sense of Fr\'echet differentiability.
  The claim follows from the smooth dependence of the solutions
  of the steady state ODE~\eqref{yeq} on the initial data.
  Differentiating~\eqref{E:FIRSTVAR} with respect to $\kappa$ we arrive at
  \begin{align}\label{E:KEYCALC1}
    D^2\tilde{\mathscr E}_\kappa[\rho_\kappa] \frac{d\rho_\kappa}{d\kappa}
    =- \frac{d}{d\kappa}\left(\frac1{c_\kappa}\right) D\mathscr M[\rho_\kappa].
  \end{align}
  Therefore we have the relationship
  \begin{align}\label{E:ORTHOGONAL2}
    \langle D^2\tilde{\mathscr E}_\kappa[\rho_\kappa]
    \frac{d\rho_\kappa}{d\kappa}, \rho\rangle
    =- \frac{d}{d\kappa}\left(\frac1{c_\kappa}\right)
    \int_0^{R_\kappa} 4\pi r^2 \rho\,dr =0
  \end{align}
  for any $\rho\in X_1$. In other words, $\rho$ is orthogonal
  to $\frac{d\rho_\kappa}{d\kappa}$ with respect to the
  (possibly sign indefinite) distance induced by the quadratic form
  $\langle D^2\tilde{\mathscr E}_\kappa[\rho_\kappa] \cdot, \cdot\rangle $.
  Furthermore
  \begin{align*}
    \langle D^2\tilde{\mathscr E}_\kappa[\rho_\kappa]
    \frac{d\rho_\kappa}{d\kappa}, \frac{d\rho_\kappa}{d\kappa}\rangle
    & = - \frac{d}{d\kappa}\left(\frac1{c_\kappa}\right)
    \frac{d}{d\kappa} M_\kappa, \\
    & =  \frac1{c_0} e^{-\mu_\kappa(R_\kappa)}
    \frac{d}{d\kappa} M_\kappa \frac{d}{d\kappa}
    \left(\mu_\kappa(R_\kappa)\right),
  \end{align*}
  where $M_\kappa = \mathscr M(\rho_\kappa)$ is the ADM mass of the
  steady star, and we recall from Section~\ref{SS:NUMBERDENSITY}
  that $c_\kappa=c_0 e^{\mu_\kappa(R_\kappa)}$
  for some $\kappa$-independent constant $c_0$.
  Recalling the reduced operator $\Sigma_\kappa$ given by
  Definition~\ref{D:REDUCEDEE}, we then have
  \be\label{E:INDEXRELATION}
  n^-(\tilde{\mathcal L}_\kappa)
  =n^-(D^2\tilde{\mathscr E}_\kappa[\rho_\kappa])=n^-(\Sigma_\kappa).
  \ee
  First assume
  $\frac{d}{d\kappa} M_\kappa\frac{d}{d\kappa} \left(\mu_\kappa(R_\kappa)\right)<0$.
  Then by the above calculation $\frac{d\rho_\kappa}{d\kappa}$ is a
  negative direction
  for the quadratic form
  $\langle D^2\tilde{\mathscr E}_\kappa[\rho_\kappa]\cdot, \cdot\rangle$,
  i.e., for $\langle \tilde{\mathcal L}_\kappa \cdot,\cdot\rangle_{L^2}$.
  Since $\mathcal C_\kappa$ is orthogonal to $\frac{d\rho_\kappa}{d\kappa}$
  by~\eqref{E:ORTHOGONAL2}
  it follows that the number of growing dynamically accessible  modes
  is $n^-(\Sigma_\kappa)-1$.
  If
  $\frac{d}{d\kappa} M_\kappa\frac{d}{d\kappa} \left(\mu_\kappa(R_\kappa)\right)>0$,
  then by a similar argument
  \[
  n^-(\tilde{\mathcal L}_\kappa) = n^-(D^2\tilde{\mathscr E}_\kappa)
  =n^-(D^2\tilde{\mathscr E}_\kappa\big\vert_{\mathcal C_\kappa}),
  \]
  since $\mathcal C_\kappa$ is orthogonal to
  $\frac{d\rho_\kappa}{d\kappa}$ by~\eqref{E:ORTHOGONAL2}. Therefore, by~\eqref{E:INDEXRELATION} it follows that
  $n^-(\Sigma_\kappa) = n^-(L_\kappa)$ and the theorem is proved, since
  by Lemma~\ref{L:CORRECTION},
  $\text{sign}(\frac{d}{d\kappa}(\mu_\kappa(R_\kappa))
  =-\text{sign}(\frac{d}{d\kappa}\left(\frac{M_\kappa}{R_\kappa}\right))$.
\end{proof}

A simple consequence of the above theorem is the following
characterization of spectral stability.

\begin{corollary}[A near-characterization of spectral stability]
 The steady state $\rho_\kappa$ is spectrally stable if $n^-(\Sigma_\kappa)=1$ and
  $\frac{d}{d\kappa} M_\kappa\frac{d}{d\kappa}\left(\frac{M_\kappa}{R_\kappa}\right)<0$.
  The transition of stability to
    instability can only happen at the values $\kappa $ satisfying
    $n^{-}\left(\Sigma_{\kappa }\right) =1$ and
    $\frac{d}{d\kappa }M_\kappa\frac{d}{d\kappa}\left(\frac{M_\kappa}{R_\kappa}\right)=0$.
\end{corollary}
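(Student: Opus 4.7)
The plan is to read both assertions directly off Theorem~\ref{T:TPP}, which asserts that the number of exponentially growing modes of $J^\kappa L^\kappa$ equals $n^-(\Sigma_\kappa)-i_\kappa$ whenever the product $\frac{d}{d\kappa}M_\kappa\,\frac{d}{d\kappa}(M_\kappa/R_\kappa)$ is nonzero. Spectral stability amounts to this count vanishing, i.e., to the identity $n^-(\Sigma_\kappa)=i_\kappa$. Since by~\eqref{E:IKAPPADEF} the index $i_\kappa$ takes values in $\{0,1\}$, the only configurations compatible with spectral stability are $(n^-(\Sigma_\kappa),i_\kappa)=(0,0)$ and $(n^-(\Sigma_\kappa),i_\kappa)=(1,1)$.

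For the first assertion I would simply specialize the formula of Theorem~\ref{T:TPP}: under the hypothesis $n^-(\Sigma_\kappa)=1$ together with the sign condition on $\frac{d}{d\kappa}M_\kappa\,\frac{d}{d\kappa}(M_\kappa/R_\kappa)$ which by~\eqref{E:IKAPPADEF} forces $i_\kappa=1$, one reads off zero growing modes and hence spectral stability. (The product-sign hypothesis in the corollary statement must be interpreted as the one selecting $i_\kappa=1$ in~\eqref{E:IKAPPADEF}.)

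For the transition statement, I would argue that both $\kappa\mapsto n^-(\Sigma_\kappa)$ and $\kappa\mapsto i_\kappa$ are integer-valued on $]0,\infty[$, hence locally constant on the complements of their (discrete) jump sets. By~\eqref{E:IKAPPADEF}, $i_\kappa$ is piecewise constant on the open set where the product $\frac{d}{d\kappa}M_\kappa\,\frac{d}{d\kappa}(M_\kappa/R_\kappa)$ is nonzero, and can only change value across a $\kappa_*$ at which this product vanishes; geometrically, these are precisely the extremal points of the mass--radius curve. A transition between stability and instability corresponds to a jump of $n^-(\Sigma_\kappa)-i_\kappa$ across $0$. Restricted to the regime in which $n^-(\Sigma_\kappa)$ stays pinned at $1$ in a neighborhood of $\kappa_*$, the only mechanism available is a flip of $i_\kappa$ between $0$ and $1$, which by the above forces the product to vanish at $\kappa_*$; combined with $n^-(\Sigma_{\kappa_*})=1$ this is precisely the claim.

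The main obstacle, and the reason the corollary is phrased as a near-characterization, is the competing mechanism in which $n^-(\Sigma_\kappa)$ itself jumps at some value of $\kappa$: this is a purely spectral event for the Schr\"odinger-type operator $\Sigma_\kappa$ which occurs when $0$ enters or leaves its spectrum, and which cannot be detected from the mass--radius diagram alone. Excluding this alternative (or else absorbing it into a more refined statement) requires independent control on the Morse index of $\Sigma_\kappa$ along the one-parameter family, and it is the reason the corollary restricts attention to transitions at which $n^-(\Sigma_\kappa)=1$ holds throughout.
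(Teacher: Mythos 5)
Your proposal is correct and follows essentially the same route as the paper: the corollary appears there with no separate proof, precisely as an immediate specialization of Theorem~\ref{T:TPP}, which is what you do. Two remarks. First, your parenthetical about the hypothesis "selecting $i_\kappa=1$" is the right call and worth making explicit: as printed, \eqref{E:IKAPPADEF} assigns $i_\kappa=0$ when $\frac{d}{d\kappa}M_\kappa\,\frac{d}{d\kappa}\bigl(\frac{M_\kappa}{R_\kappa}\bigr)<0$, so taking the corollary's inequality literally would give $n^-(\Sigma_\kappa)-i_\kappa=1$ growing mode; the proof of Theorem~\ref{T:TPP} (where the stable case is $\frac{d}{d\kappa}M_\kappa\,\frac{d}{d\kappa}\mu_\kappa(R_\kappa)<0$, which by Lemma~\ref{L:CORRECTION} is the product in terms of $M_\kappa/R_\kappa$ being \emph{positive}) and the later stability theorem for the Einstein-Vlasov system in Section~\ref{SS:MICROMACRO} confirm that the consistent reading is the one you chose, i.e.\ the corollary's sign condition should be the one forcing $i_\kappa=1$. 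Second, your treatment of the transition statement, including the honest identification of the uncontrolled mechanism in which $n^-(\Sigma_\kappa)$ itself jumps through a nontrivial kernel of $\Sigma_\kappa$, matches the paper's own position: Remark~\ref{R:GROWINGMODES} notes exactly this subtlety and defers its analysis to~\cite{HaLin}, so your argument does not fall short of what the paper actually establishes.
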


\begin{remark} \label{R:GROWINGMODES}
  \begin{itemize}
  \item[(a)]
    By~\eqref{E:KEYCALC1} it holds that at a critical point
    $\kappa^\ast$ of  $\kappa\mapsto \frac{M_\kappa}{R_{\kappa}}$,
    \[
    \mathcal{L}_{\kappa^\ast}\frac{d\rho_{\kappa}}{d\kappa}|_{\kappa=\kappa^\ast}=0.
    \]
    Therefore, $\mathcal{L}_{\kappa^\ast}$ (and equivalently the reduced operator
    $\Sigma_{\kappa^\ast}$) has a zero eigenvalue at $\kappa^\ast$.
    This is a strong indication
    that the number of negative eigenvalues of $\mathcal{L}_{\kappa}$
    increases
    or decreases by $1$ as $\kappa$ crosses $\kappa^\ast$.
     A detailed analysis of this subtle point will be given in~\cite{HaLin}.
    \item[(b)]
    Since $M_\kappa$ is the ADM mass and $R_\kappa$
    the area-radius of the $2$-sphere $t=0$ and $r=R_\kappa$
    it follows that $i_\kappa$ has an invariant geometric meaning.
  \end{itemize}
\end{remark}

We now state the main result giving a detailed description of
the linearized dynamics for the Einstein-Euler system.

\begin{theorem} \label{thm:main EE}
  The operator $J^{\kappa }L^{\kappa }$ generates a $%
  C^{0}$ group $e^{tJ^{\kappa }L^{\kappa }}$ of bounded linear operators on $X$
  and there exists a decomposition%
  \[
    X=E^{u}\oplus E^{c}\oplus E^{s},\quad
  \]
  with the following properties:
  \begin{itemize}
  \item[(i)]
    $E^{u}\left( E^{s}\right) $ consist only of eigenvectors corresponding to
    negative (positive) eigenvalues of $J^{\kappa }L^{\kappa }$ and
    when $\frac{d}{d\kappa} M_\kappa \frac{d}{d\kappa}\left(\frac{M_\kappa}{R_\kappa}\right)\neq 0$,
    \begin{equation}
      \dim E^{u}=\dim E^{s}=n^{-}\left( \Sigma_{\kappa }\right) -i_{\kappa },
      \label{unstable-dimension-formula-EE}
    \end{equation}
    where $i_{\kappa }$ is defined by~\eqref{E:IKAPPADEF}.
  \item[(ii)]
    The quadratic form $\left( L^{\kappa }\cdot ,\cdot \right)_X $
    vanishes on $E^{u,s}$, but is non-degenerate on $E^{u}\oplus E^{s} $, and
    \begin{equation*}
      E^{c}=\left\{
      \begin{pmatrix} \rho \\ v \end{pmatrix}
      \in X \mid \left( L^{\kappa }
      \begin{pmatrix} \rho \\ v \end{pmatrix},
      \begin{pmatrix} \rho_1 \\ v_1 \end{pmatrix}
      \right)_X =0\ \mbox{for all}\
      \begin{pmatrix} \rho_1 \\ v_1 \end{pmatrix}
      \in E^{s}\oplus E^{u}\right\}.
    \end{equation*}
  \item[(iii)]
    $E^{c}, E^{u}, E^{s}$ are invariant under $e^{tJ^{\kappa }L^{\kappa}}$.
  \item[(iv)]
    Let $\lambda _{u}=\min \{\lambda \mid \lambda \in \sigma (J^{\kappa}L^{\kappa }|_{E^{u}})\}>0$.
    Then there exist $M>0$ such that
    \begin{equation}
      \begin{split}
        & \left\vert e^{tJ^{\kappa }L^{\kappa }}|_{E^{s}}\right\vert
        \leq M e^{-\lambda _{u}t},\ t\geq 0,\\
        & \left\vert e^{tJ^{\kappa }L^{\kappa }}|_{E^{u}}\right\vert
        \leq M e^{\lambda _{u}t},\ t\leq 0,
      \end{split}
      \label{estimate-stable-unstable-EE}
    \end{equation}
    \begin{equation}
      \left\vert e^{tJ^{\kappa }L^{\kappa }}|_{E^{c}}\right\vert
      \leq M (1+|t|)^{k_{0}},
      \ t\in \mathbb{R},  \label{estimate-center-EE}
    \end{equation}
    where
    \begin{equation}
      k_{0}\leq 1+2i_{\kappa }.  \label{bound-k-0-EE}
    \end{equation}
  \end{itemize}
\end{theorem}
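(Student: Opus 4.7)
The plan is to place the first-order formulation~\eqref{eqn-LEE-hamiltonian} into the abstract Hamiltonian framework of~\cite{LinZeng2017} and then quote~\cite[Thm.~2.2]{LinZeng2017}, in complete parallel with the proof of Theorem~\ref{thm:main VE}. Lemma~\ref{L:LINEAREE} already identifies $J^\kappa$ as anti-self-dual and $L^\kappa$ as self-dual on $X=X_1\times X_2$, so the remaining tasks are to verify the spectral decomposition hypotheses (H1)--(H3) of~\cite{LinZeng2017} for $L^\kappa$, and then to identify the abstract indices appearing in the conclusions with $n^-(\Sigma_\kappa)$ and $i_\kappa$.

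On the $X_2$-factor the identity is trivially coercive. On $X_1$, Lemma~\ref{lemma-self-adjoint-L-kappa} gives $\mathcal L_\kappa=\mathrm{id}+K$ with $K$ compact and self-adjoint, hence $\mathcal L_\kappa$ has pure point spectrum accumulating only at $1$. Since $\tilde{\mathcal L}_\kappa=e^{2\mu_\kappa+\lambda_\kappa}\Psi_\kappa\mathcal L_\kappa$ with a strictly positive weight, a direct sum decomposition
\[
X_1=X_{1,-}\oplus\ker\tilde{\mathcal L}_\kappa\oplus X_{1,+}
\]
results, on which $\langle\tilde{\mathcal L}_\kappa\cdot,\cdot\rangle_{L^2}$ is negative definite, zero, and uniformly positive on the respective summands, with $\dim X_{1,-}<\infty$. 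Theorem~\ref{T:EULERREDUCED}, combined with the argument used for Proposition~\ref{prop-L-decompo}, identifies $\dim X_{1,-}=n^-(\tilde{\mathcal L}_\kappa)=n^-(\Sigma_\kappa)$; the finiteness of $n^-(\Sigma_\kappa)$ follows from the compact-perturbation-of-$-\Delta$ argument used for $S_\kappa$ in the proof of Theorem~\ref{thm:main VE2}, noting that the potential $e^{\lambda_\kappa}/\Psi_\kappa$ in~\eqref{E:SIGMAKAPPADEF} is bounded and compactly supported thanks to~\eqref{E:PSIKAPPAPROP}. With (H1)--(H3) in hand, \cite[Thm.~2.2]{LinZeng2017} directly yields the $C^0$-group generation, the trichotomy $X=E^u\oplus E^c\oplus E^s$, the characterization (ii), the invariance (iii), and the exponential/polynomial bounds~\eqref{estimate-stable-unstable-EE}, \eqref{estimate-center-EE} of (iv).

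For the dimension formula~\eqref{unstable-dimension-formula-EE} I would quote the computation already carried out in Theorem~\ref{T:TPP}: differentiating $D\mathscr E_\kappa[\rho_\kappa]=0$ in $\kappa$ yields~\eqref{E:KEYCALC1}, which shows simultaneously that $\frac{d\rho_\kappa}{d\kappa}$ is $L^2$-orthogonal to $\mathcal C_\kappa=\overline{R(A_\kappa)}$ (see~\eqref{E:ORTHOGONAL2}) and that the sign of $\langle\tilde{\mathcal L}_\kappa\frac{d\rho_\kappa}{d\kappa},\frac{d\rho_\kappa}{d\kappa}\rangle_{L^2}$ is opposite to the sign of $\frac{dM_\kappa}{d\kappa}\cdot\frac{d}{d\kappa}(\frac{M_\kappa}{R_\kappa})$ by Lemma~\ref{L:CORRECTION}. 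A case split on this sign then gives $n^-(\tilde{\mathcal L}_\kappa|_{\mathcal C_\kappa})=n^-(\Sigma_\kappa)-i_\kappa$, and this equals $\dim E^u$ by the identification in Theorem~\ref{T:TPP} of growing modes of $J^\kappa L^\kappa$ with vectors in $E^u$. The polynomial bound~\eqref{bound-k-0-EE} follows from~\cite[Cor.~6.1]{LinZeng2017}, which gives $k_0\le 1+2(n^-(\tilde{\mathcal L}_\kappa)-n^-(\tilde{\mathcal L}_\kappa|_{\mathcal C_\kappa}))=1+2i_\kappa$.

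The hardest point is not the abstract application itself but the underlying functional-analytic setup: one must verify that $A_\kappa^\ast\mathcal L_\kappa A_\kappa$, which governs the second-order dynamics~\eqref{eqn-2nd order-LEE2}, is self-adjoint on a suitable dense domain in $X_2$. This is the Einstein-Euler analogue of the self-adjointness statement Lemma~\ref{L:SECONDORDEREV} in the kinetic case and is referenced in the introduction as Lemma~\ref{prop-self-adjoint-composition}. The vacuum-boundary degeneracy $\Psi_\kappa^{-1}(R_\kappa)=0$ from~\eqref{E:PSIKAPPAPROP} keeps the potential in $\Sigma_\kappa$ well-behaved but makes the weight $e^{2\mu_\kappa+\lambda_\kappa}\Psi_\kappa$ in $X_1$ singular at $r=R_\kappa$, so one must check that the decomposition of $X_1$ above is genuinely compatible with the weighted norm and that the closed-range, duality, and density properties of $A_\kappa, A_\kappa'$ sketched after~\eqref{Akappasdef} can be made fully rigorous in this degenerate setting.
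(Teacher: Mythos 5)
Your proposal follows essentially the same route as the paper's proof: the decomposition of $X_1$ (and hence of $X$) via the compactness of $\mathcal L_\kappa-\mathrm{id}$ together with Theorem~\ref{T:EULERREDUCED} to verify the hypotheses of \cite[Thm.~2.2]{LinZeng2017}, which yields (ii)--(iv), and the Morse-index computation from the proof of Theorem~\ref{T:TPP} combined with the self-adjointness of $A_\kappa^{\ast}\mathcal L_\kappa A_\kappa$ (Lemma~\ref{prop-self-adjoint-composition}) for the dimension count. The one point to phrase more carefully is (i): rather than quoting the ``growing modes'' statement of Theorem~\ref{T:TPP} (whose rigorous mode count itself rests on this machinery), the paper argues directly with the second-order formulation, using the self-adjointness and the finiteness of $n^-\bigl(A_\kappa'\tilde{\mathcal L}_\kappa A_\kappa\bigr)=n^-\bigl(\tilde{\mathcal L}_\kappa|_{\overline{R(A_\kappa)}}\bigr)=n^-(\Sigma_\kappa)-i_\kappa$ to conclude that there are exactly that many negative eigenvalues, each giving a stable/unstable pair for $J^\kappa L^\kappa$ --- an argument for which you already have all the ingredients.
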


The proof of Theorem~\ref{thm:main EE} is similar to the proof of
Theorem~\ref{thm:main VE}.
We need several auxiliary lemmas.
By an argument analogous to the proof of Lemma~\ref{L:KCOMPACT},
one can show that the linear operator
$X_1\ni \rho\mapsto e^{-\mu_\kappa}\Psi_\kappa^{-1}\bar{\mu}_\rho\in X_1$
is compact and thus the only possible accumulation point of the
spectrum of $\mathcal L_\kappa$ and therefore $\tilde{\mathcal L}_\kappa$ is $1$.
Analogously to the proof of Proposition~\ref{prop-L-decompo},
we obtain the following result.

\begin{lemma} \label{lemma-L-kappa-tilde}
  There exists a decomposition of $X_{1}$ into the
  direct sum of three closed subspaces
  \[
  X_{1}=X_{1,-}\oplus \ker \tilde{\mathcal{L}}_{\kappa }\oplus X_{1,+}
  \]
  with the following properties:
  \begin{itemize}
  \item[(H1)]
    $\left( \mathcal{\tilde{L}}_{\kappa }\rho ,\rho \right)_{L^2} <0$
    for all $\rho \in X_{1,-}\backslash \{0\}$; and
    \[
    n^{-}\left( \tilde{\mathcal{L}}_{\kappa }\right) := \dim
    X_{1,-}=n^{-}\left( \Sigma_{\kappa }\right) ,\ \ \
    \dim \ker \tilde{\mathcal{L}}_{\kappa }=\dim \ker \Sigma_{\kappa }.
    \]
  \item[(H2)]
    There exists a $\delta >0$ such that
    \[
    \left( \mathcal{\tilde{L}}_{\kappa }\rho ,\rho \right)_{L^2} \geq
    \delta \left\Vert \rho \right\Vert _{X_{1}}^{2}\ ,\text{ for any }\rho \in X_{1,+}.
    \]
  \end{itemize}
\end{lemma}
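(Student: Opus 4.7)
\medskip
\noindent
\textbf{Proof proposal.} The plan is to follow the same strategy that established Proposition~\ref{prop-L-decompo} in the Vlasov case, exploiting the fact that the operator $\mathcal{L}_\kappa$ is a compact perturbation of the identity together with the two--sided bounds supplied by Theorem~\ref{T:EULERREDUCED}. First I would extract from Lemma~\ref{lemma-self-adjoint-L-kappa} the spectral representation of $\mathcal{L}_\kappa = \mathrm{id}+K$ on $X_1$: since $K$ is compact and self--adjoint the only possible accumulation point of $\sigma(\mathcal{L}_\kappa)$ is $1$, so $\mathcal{L}_\kappa$ has at most finitely many eigenvalues in $]-\infty,0]$, the kernel is finite--dimensional, and there exists $\delta>0$ such that $\sigma(\mathcal{L}_\kappa)\cap\, ]0,\infty[\,\subset [\delta,\infty[$. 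The corresponding orthogonal decomposition $X_1=X_{1,-}\oplus \ker \mathcal{L}_\kappa\oplus X_{1,+}$ of $X_1$ into negative, zero and positive spectral subspaces, combined with the pointwise identity
\[
\bigl(\tilde{\mathcal L}_\kappa\rho,\rho\bigr)_{L^2}
\;=\;\bigl(e^{2\mu_\kappa+\lambda_\kappa}\Psi_\kappa\,\mathcal L_\kappa\rho,\rho\bigr)_{L^2}
\;=\;\bigl(\mathcal L_\kappa\rho,\rho\bigr)_{X_1},
\]
immediately gives (H1) (with $X_{1,\pm}$ being the spectral subspaces) and (H2) with the same $\delta$. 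It also shows $\ker\tilde{\mathcal L}_\kappa=\ker\mathcal L_\kappa$, so these subspaces are genuinely closed in $X_1$.

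Second, to identify the dimensions with those of $\Sigma_\kappa$, I would apply Theorem~\ref{T:EULERREDUCED} in a standard min--max fashion. For $n^-(\tilde{\mathcal L}_\kappa)\ge n^-(\Sigma_\kappa)$ take any finite--dimensional subspace $V\subset \dot H^1_r$ on which $\langle\Sigma_\kappa\cdot,\cdot\rangle<0$ and consider $\mu\mapsto \rho_\mu=-e^{-\mu_\kappa}\Psi_\kappa^{-1}\mu$. Injectivity on $V$ follows because $\rho_\mu=0$ in $X_1$ forces $\mu\equiv 0$ on the support of $\rho_\kappa$; outside $B_\kappa$ the potential term of $\Sigma_\kappa$ is supported on $B_\kappa$ (recall $1/\Psi_\kappa$ is supported on $[0,R_\kappa]$ by Lemma~\ref{ssprops}), so the surviving gradient term of $\langle\Sigma_\kappa\mu,\mu\rangle$ is nonnegative, contradicting $\langle\Sigma_\kappa\mu,\mu\rangle<0$ unless $\mu\equiv 0$. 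The bound \eqref{E:LOWERBOUND11} then yields $(\tilde{\mathcal L}_\kappa\rho_\mu,\rho_\mu)_{L^2}\le\langle\Sigma_\kappa\mu,\mu\rangle<0$ on this injective image. Conversely, for any finite--dimensional $W\subset X_1$ with $(\tilde{\mathcal L}_\kappa\cdot,\cdot)_{L^2}<0$, the map $\rho\mapsto\bar\mu_\rho$ is injective because \eqref{E:BARMUELLIPTIC} recovers $\rho$ from $\bar\mu_\rho$, and \eqref{E:UPPERBOUND11} makes $\{\bar\mu_\rho:\rho\in W\}$ a negative subspace for $\Sigma_\kappa$. Combining both inclusions gives $n^-(\tilde{\mathcal L}_\kappa)=n^-(\Sigma_\kappa)$.

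Third, for the kernel identity I would check that the maps $\rho\mapsto\bar\mu_\rho$ and $\mu\mapsto\rho_\mu$ restrict to mutually inverse bijections between $\ker\tilde{\mathcal L}_\kappa$ and $\ker\Sigma_\kappa$. If $\tilde{\mathcal L}_\kappa\rho=0$ then $\bar\mu_\rho=-e^{\mu_\kappa}\Psi_\kappa\rho$, so $\rho=-e^{-\mu_\kappa}\Psi_\kappa^{-1}\bar\mu_\rho$; substituting this into $\Delta_\kappa\bar\mu_\rho=e^{\mu_\kappa+\lambda_\kappa}\rho$ (which is the content of \eqref{E:BARMUELLIPTIC}) gives $\Sigma_\kappa\bar\mu_\rho=0$, and outside $B_\kappa$ the equation $\Sigma_\kappa\bar\mu_\rho=0$ is automatic. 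In the other direction, if $\Sigma_\kappa\mu=0$ then setting $\rho_\mu=-e^{-\mu_\kappa}\Psi_\kappa^{-1}\mu$ and using \eqref{E:BARMUELLIPTIC} together with the boundary behaviour of $\mu$ and $\bar\mu_{\rho_\mu}$ at infinity forces $\mu=\bar\mu_{\rho_\mu}$, whence $\tilde{\mathcal L}_\kappa\rho_\mu=0$. Injectivity of both maps on these kernels was already established, so $\dim\ker\tilde{\mathcal L}_\kappa=\dim\ker\Sigma_\kappa$ follows.

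The main technical obstacle I foresee is the careful treatment of the \emph{boundary and exterior behaviour} in the above correspondences: the weight $\Psi_\kappa^{-1}$ degenerates at $r=R_\kappa$ and vanishes for $r>R_\kappa$, so both injectivity of $\mu\mapsto\rho_\mu$ (which reduces to showing that a negative direction for $\Sigma_\kappa$ cannot be supported outside $B_\kappa$) and the kernel bijection require using that the operator $\Sigma_\kappa$ reduces to $-\Delta_\kappa$ in the exterior region, together with the asymptotic flatness condition built into $\dot H^1_r$. Everything else is a direct transcription of the arguments used for the Einstein--Vlasov system.
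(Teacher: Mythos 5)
Your proposal is correct and follows essentially the same route as the paper: the paper's proof simply invokes the compactness of $\mathcal L_\kappa-\mathrm{id}$ (Lemma~\ref{lemma-self-adjoint-L-kappa}) and then argues "analogously to Proposition~\ref{prop-L-decompo}", i.e.\ it uses the spectral decomposition of the compact perturbation of the identity together with the two-sided bounds of Theorem~\ref{T:EULERREDUCED} and the kernel correspondence $\rho\mapsto\bar\mu_\rho$, $\mu\mapsto\rho_\mu$, exactly as you do. Your write-up merely makes explicit the details the paper leaves to analogy (your support-based injectivity arguments are even dispensable, since injectivity on negative subspaces follows automatically from the quadratic-form inequalities as in Lemma~\ref{op_est_morse}).
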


By the definition~\eqref{defn-J-K-kappa} of the operator $L^{\kappa }$,
the above lemma readily implies the following.

\begin{proposition}
\label{prop-L-decomp-EE}
There exists a decomposition of $X$ into the direct
sum of three closed subspaces
\begin{equation*}
X=X_{-}\oplus \ker L^{\kappa }\oplus X_{+}
\end{equation*}
with the following properties:
\begin{itemize}
\item[(H1)]
  $\left( L^{\kappa }\begin{pmatrix} \rho \\ v \end{pmatrix},
  \begin{pmatrix} \rho \\ v \end{pmatrix}
  \right)_X <0$ for all
  $\begin{pmatrix} \rho \\ v \end{pmatrix}\in X_{-}\backslash \{0\}$; and
\begin{equation*}
n^{-}\left( L^{\kappa }\right) := \dim X_{-}=n^{-}\left( \Sigma_{\kappa}\right),\ \ \
\dim \ker L^{\kappa }=\dim \ker \Sigma_{\kappa }.
\end{equation*}
\item[(H2)]
  There exists $\delta >0$ such that
\begin{equation*}
  \left(L^{\kappa }\begin{pmatrix} \rho \\ v \end{pmatrix},
  \begin{pmatrix} \rho \\ v \end{pmatrix}
  \right)_X \geq
\delta \left( \left\Vert \rho \right\Vert_{X_{1}}^{2}
+\left\Vert v\right\Vert _{X_{2}}^{2}\right) \ \text{ for all }
\begin{pmatrix} \rho \\ v \end{pmatrix} \in X_{+}.
\end{equation*}
\end{itemize}
\end{proposition}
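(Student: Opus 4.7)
The plan is to obtain Proposition~\ref{prop-L-decomp-EE} as a direct corollary of Lemma~\ref{lemma-L-kappa-tilde}, exploiting the block-diagonal structure
\[
L^{\kappa} =\begin{pmatrix} \tilde{\mathcal L}_{\kappa} & 0 \\ 0 & \mathrm{id} \end{pmatrix}
\]
on the product space $X = X_{1}\times X_{2}$. Since the second diagonal block is the identity on $X_{2}$, which is strictly positive, the entire negative and kernel content of $L^{\kappa}$ comes from the $X_{1}$-block $\tilde{\mathcal L}_{\kappa}$; the $X_{2}$ factor can simply be absorbed into the positive subspace.

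Concretely, I would set
\[
X_{-} := X_{1,-}\times\{0\},\qquad \ker L^{\kappa} := \ker\tilde{\mathcal L}_{\kappa}\times\{0\},\qquad X_{+} := X_{1,+}\times X_{2},
\]
where $X_{1,\pm}$ and $\ker\tilde{\mathcal L}_{\kappa}$ are provided by Lemma~\ref{lemma-L-kappa-tilde}. These three subspaces are closed and their direct sum is clearly all of $X$. Verification of the kernel identification just uses that $\mathrm{id}_{X_{2}}$ has trivial kernel, so $(\rho,v)\in\ker L^{\kappa}$ forces $v=0$ and $\rho\in\ker\tilde{\mathcal L}_{\kappa}$, giving $\dim\ker L^{\kappa}=\dim\ker\tilde{\mathcal L}_{\kappa}=\dim\ker\Sigma_{\kappa}$ by Lemma~\ref{lemma-L-kappa-tilde}.

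For property (H1), observe that for any $(\rho,0)\in X_{-}\setminus\{0\}$ we have
\[
\Bigl(L^{\kappa}\begin{pmatrix}\rho\\0\end{pmatrix},\begin{pmatrix}\rho\\0\end{pmatrix}\Bigr)_{X} = (\tilde{\mathcal L}_{\kappa}\rho,\rho)_{L^{2}} < 0,
\]
so $X_{-}$ is indeed a negative subspace; its dimension is $\dim X_{1,-}=n^{-}(\Sigma_{\kappa})$. One must also check that this is \emph{maximal}, i.e.\ that $n^{-}(L^{\kappa}) = n^{-}(\Sigma_{\kappa})$. This follows because any subspace $V\subset X$ on which the quadratic form is negative projects injectively onto $X_{1}$ (otherwise some $(0,v)\in V$ with $v\neq0$ would give $\|v\|_{X_{2}}^{2}<0$), and the projected subspace is negative for $\tilde{\mathcal L}_{\kappa}$, hence of dimension at most $n^{-}(\tilde{\mathcal L}_{\kappa})=n^{-}(\Sigma_{\kappa})$.

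For property (H2), using the uniform lower bound $\delta$ from Lemma~\ref{lemma-L-kappa-tilde} on $X_{1,+}$, for any $(\rho,v)\in X_{+}$ we simply compute
\[
\Bigl(L^{\kappa}\begin{pmatrix}\rho\\v\end{pmatrix},\begin{pmatrix}\rho\\v\end{pmatrix}\Bigr)_{X} = (\tilde{\mathcal L}_{\kappa}\rho,\rho)_{L^{2}} + \|v\|_{X_{2}}^{2} \ge \delta\|\rho\|_{X_{1}}^{2} + \|v\|_{X_{2}}^{2} \ge \min(\delta,1)\bigl(\|\rho\|_{X_{1}}^{2}+\|v\|_{X_{2}}^{2}\bigr),
\]
which gives the required coercivity after relabeling $\delta$. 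There is no real obstacle here; the proposition is purely bookkeeping once Lemma~\ref{lemma-L-kappa-tilde} is in hand, and the only mild subtlety is the maximality argument for $n^{-}(L^{\kappa})$, which uses the positivity of the $X_{2}$-block.
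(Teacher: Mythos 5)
Your proposal is correct and is exactly the argument the paper has in mind: the paper gives no separate proof, stating only that Lemma~\ref{lemma-L-kappa-tilde} "readily implies" the proposition via the block-diagonal form \eqref{defn-J-K-kappa}, which is precisely the lifting $X_{-}=X_{1,-}\times\{0\}$, $\ker L^{\kappa}=\ker\tilde{\mathcal L}_{\kappa}\times\{0\}$, $X_{+}=X_{1,+}\times X_{2}$ that you spell out. Your explicit maximality check for $n^{-}(L^{\kappa})$ (using positivity of the $X_{2}$-block) is a worthwhile detail the paper leaves implicit.
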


\begin{lemma} \label{prop-self-adjoint-composition}
  The operator
  $A_{\kappa}^{\ast}\mathcal{L}_{\kappa}A_{\kappa}\colon X_{2}\rightarrow X_{2}$
  is self-adjoint.
\end{lemma}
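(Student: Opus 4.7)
The plan is to invoke Kato's first representation theorem for symmetric forms, using crucially the structural decomposition $\mathcal{L}_\kappa = \mathrm{id} + K$ from Lemma~\ref{lemma-self-adjoint-L-kappa}, where $K$ is compact and self-adjoint on $X_1$. I would define the symmetric quadratic form $q(v,w) := (\mathcal{L}_\kappa A_\kappa v, A_\kappa w)_{X_1}$ with domain $D(q) := D(A_\kappa) \subset X_2$. Once we verify that $q$ is closed and semi-bounded below, Kato's theorem produces a unique self-adjoint operator $T$ on $X_2$ satisfying $(Tv,w)_{X_2} = q(v,w)$ for all $v \in D(T) \subset D(A_\kappa)$ and $w \in D(A_\kappa)$. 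A standard adjoint identification then yields $T = A_\kappa^* \mathcal{L}_\kappa A_\kappa$: the defining relation says that $w \mapsto (\mathcal{L}_\kappa A_\kappa v, A_\kappa w)_{X_1}$ extends to a bounded functional on $X_2$, which is precisely the statement $\mathcal{L}_\kappa A_\kappa v \in D(A_\kappa^*)$ together with $A_\kappa^* \mathcal{L}_\kappa A_\kappa v = Tv$; since $T$ is self-adjoint and $A_\kappa^* \mathcal{L}_\kappa A_\kappa$ is a symmetric extension of $T$, they must coincide.

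The heart of the argument lies in the closedness and semi-boundedness of $q$. The unperturbed form $q_0(v) := \|A_\kappa v\|_{X_1}^2$ is closed since $A_\kappa$ is a closed operator (as verified earlier in this section via the isometric transfer to $\mathscr A_\kappa$ and the identity $\mathscr A_\kappa = \mathscr A_\kappa^{**}$). For the perturbation $q - q_0 = (KA_\kappa v, A_\kappa w)_{X_1}$, the naive estimate $|(KA_\kappa v, A_\kappa v)_{X_1}| \leq \|K\|\, q_0(v)$ only yields a form-relative bound of $\|K\|$, which may exceed $1$ and thus fail the hypothesis of the KLMN theorem. I would circumvent this by using the spectral decomposition $K = \sum_i \lambda_i (\cdot, e_i)_{X_1} e_i$ of the compact self-adjoint operator $K$, with $\lambda_i \to 0$ and $\{e_i\}$ an orthonormal eigenbasis of $\overline{R(K)}$. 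Choose $N$ so that the tail $R_N := K - K_N$ satisfies $\|R_N\| < 1/2$; the contribution of $R_N$ to the form has relative bound $<1$ and is absorbed by KLMN. For the finite-rank remnant $K_N$, the decisive observation is that each eigenvector $e_i$ with $\lambda_i \neq 0$ lies in $R(K)$, and the elliptic gain $\|\nabla(e^{\mu_\kappa+\lambda_\kappa}\bar\mu_\rho)\|_{L^2} \leq C\|\rho\|_{X_1}$ established in the proof of Lemma~\ref{lemma-self-adjoint-L-kappa} forces $e_i \in D(A_\kappa^*)$. One may therefore rewrite
$(K_N A_\kappa v, A_\kappa v)_{X_1} = \sum_{i \leq N} \lambda_i \, |(v, A_\kappa^* e_i)_{X_2}|^2,$
which is bounded in modulus by $C_N \|v\|_{X_2}^2$. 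Combining the two estimates yields $q(v,v) \geq \tfrac{1}{2} q_0(v) - C_N \|v\|_{X_2}^2$, so $q$ is closed (as a $q_0$-bounded perturbation with relative bound $<1$ plus a bounded perturbation in the $X_2$-norm) and semi-bounded below, as required.

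The main obstacle is precisely this semi-boundedness step. Had $K$ been merely bounded rather than compact, the finite-rank truncation would be unavailable and the Kato-Rellich relative bound could exceed $1$. Both the compactness of $K$ (ensuring $\|R_N\| \to 0$) and the elliptic smoothing underlying Lemma~\ref{lemma-self-adjoint-L-kappa} (placing the spectral projectors of $K$ inside $D(A_\kappa^*)$, so that derivatives on $A_\kappa v$ can be transferred onto fixed vectors $e_i$) are essential; their confluence is what makes the form-theoretic construction succeed in the setting provided by the steady Einstein-Euler system, and the ensuing self-adjointness in turn unlocks the exponential trichotomy framework of Theorem~\ref{thm:main EE}.
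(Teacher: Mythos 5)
Your proof is sound, but it takes a genuinely different route from the paper's. The paper deduces the lemma from the abstract Proposition~\ref{prop-self-adjoint} in the appendix: starting from the decomposition of $X_1$ in Lemma~\ref{lemma-L-kappa-tilde}, one perturbs the finite-codimensional complement of the positive subspace into $D(A_\kappa^{\ast})$ using only the \emph{density} of $D(A_\kappa^{\ast})$, writes $\mathcal L_\kappa = L_+ - L_1$ with $L_+, L_1 \geq 0$ and $L_1$ living on that finite-dimensional piece, so that $A_\kappa^{\ast}\mathcal L_\kappa A_\kappa = T_+^{\ast}T_+ - B_1$ with $T_+ = S_+P_+A_\kappa$ closed (hence $T_+^{\ast}T_+$ self-adjoint by von Neumann's theorem) and $B_1$ bounded symmetric, and concludes by Kato--Rellich. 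You instead argue at the level of quadratic forms: Kato's representation theorem applied to $q(v,w)=(\mathcal L_\kappa A_\kappa v, A_\kappa w)_{X_1}$, with the compact perturbation $K=\mathcal L_\kappa-\mathrm{id}$ split into a small-norm tail (form-bounded by $\tfrac12\|A_\kappa v\|_{X_1}^2$) and a finite-rank part controlled by $\|v\|_{X_2}^2$ alone, because the eigenvectors $e_i=\lambda_i^{-1}Ke_i$ inherit the smoothing of $K$ and lie in $D(A_\kappa^{\ast})$; your identification of the form operator with $A_\kappa^{\ast}\mathcal L_\kappa A_\kappa$ via ``a symmetric extension of a self-adjoint operator coincides with it'' is correct. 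Both proofs isolate a finite-dimensional obstruction, and your observation that the naive KLMN bound fails precisely when $n^-(\mathcal L_\kappa)>0$ (so $\|K\|>1$) is the right diagnosis. The paper's route is more abstract: it needs only that $\mathcal L_\kappa$ is bounded, self-adjoint, with finite negative index and finite-dimensional kernel, plus density of $D(A_\kappa^{\ast})$; yours exploits the specific structure $\mathcal L_\kappa=\mathrm{id}+K$ with $K$ compact and smoothing, and in exchange avoids the appendix machinery (and any finiteness assumption on $\ker\mathcal L_\kappa$). One step deserves care: the claim $e_i\in D(A_\kappa^{\ast})$ needs, beyond the elliptic bound $\|\nabla(e^{\mu_\kappa+\lambda_\kappa}\bar\mu)\|_{L^2}\le C\|\rho\|_{X_1}$, the verification that $(A_\kappa v,e_i)_{X_1}=(v,A_\kappa^{\ast}e_i)_{X_2}$ for all $v\in D(A_\kappa)$ with no boundary contribution at $r=R_\kappa$ (using $n_\kappa(R_\kappa)=0$ and the boundedness of $\bar\mu$, in the same spirit as the paper's check that $\mathscr A_\kappa=\mathscr A_\kappa^{\ast\ast}$); alternatively you can bypass this entirely by approximating each $e_i$ by elements of the dense set $D(A_\kappa^{\ast})$ and absorbing the resulting small multiple of $\|A_\kappa v\|_{X_1}^2$ into the tail estimate, which is in effect how the paper's appendix proof handles the finite-dimensional piece.
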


\begin{proof}
  The operator $A_{\kappa}\colon D\left(  A_{\kappa}\right)  \subset X_{2}\rightarrow X_{1}$
  is clearly densely defined and closed. The operator
  $A_{\kappa}^{\ast}\colon D\left(  A_{\kappa}^{\ast}\right)  \subset X_{1}\rightarrow X_{2}$ is also
  densely defined. By Lemma \ref{lemma-self-adjoint-L-kappa},  the operator
  $\mathcal{L}_{\kappa}\colon X_{1}\rightarrow X_{1}$ is self-adjoint. Moreover, by
  Lemma~\ref{lemma-L-kappa-tilde}, for the quadratic form
  $\left(  \mathcal{L}_{\kappa}\cdot,\cdot\right)_{X_1}
  =\left\langle \mathcal{\tilde{L}}_{\kappa}\cdot,\cdot\right\rangle $,
  there exists a decomposition
  $X_{1}=X_{1,-}\oplus \ker\mathcal{L}_{\kappa}\oplus X_{1,+}$
  such that the properties (H1) and (H2) from
  Lemma~\ref{lemma-L-kappa-tilde} hold.
  Therefore, by Proposition \ref{prop-self-adjoint} in the Appendix
  the operator
  $A_{\kappa}^{\ast}\mathcal{L}_{\kappa}A_{\kappa}$ is self-adjoint.
\end{proof}

Now we are ready to prove Theorem \protect\ref{thm:main EE}.

\begin{proof}[Proof of Theorem \protect\ref{thm:main EE}]
To prove (i), we use the second order
formulation \eqref{eqn-2nd order-LEE}.
We note that
\begin{equation*}
  \left\langle A_{\kappa }^{\prime}\mathcal{\tilde{L}}_{\kappa }A_{\kappa}v,
  v\right\rangle =\left\langle \mathcal{\tilde{L}}_{\kappa }A_{\kappa}v,
  A_{\kappa }v\right\rangle .
\end{equation*}

When
$\frac{d}{d\kappa }M_\kappa\frac{d}{d\kappa }
\left( \mu _{\kappa}(R_{\kappa })\right) \neq 0$,
 by~\eqref{E:CHANDRA}  and the
proof of Theorem~\ref{T:TPP}, we have
\begin{equation*}
n^{-}\left( A_{\kappa }^{\prime}\mathcal{\tilde{L}}_{\kappa }A_{\kappa
}\right) =n^{-}\left( \mathcal{\tilde{L}}_{\kappa }|_{\overline{R\left(
A_{\kappa }\right) }}\right) =n^{-}\left( \Sigma_{\kappa }\right) -i_{\kappa}<\infty .
\end{equation*}
Thus there exist exactly $n^{-}\left( \Sigma_{\kappa }\right) -i_{\kappa }$
negative eigenvalues of $A_{\kappa }'\mathcal{\tilde{L}}_{\kappa
}A_{\kappa }\ $in $X_{2}$, each of which corresponds to a pair of stable and
unstable eigenvalues for (\ref{eqn-2nd order-LEE}) and consequently for
(\ref{eqn-LEE-hamiltonian}).
This finishes the proof of part (i).
The conclusions in (ii)--(iv)
follow from \cite[Thm.~2.2]{LinZeng2017}.
\end{proof}


\subsection{A negative energy direction for $\kappa$ sufficiently large}
\label{SS:NEGATIVEEE}

In order to prove that the negative Morse index
of $\tilde{\mathcal L}_\kappa$ is at least $1$ we need to
introduce an additional
assumption on the equation of state.

\noindent
{\bf Assumption (P2).}
There exists a constant $C>0$ such that
\be \label{Pass3}
\left|P'(\rho) -\frac{1}{3}\right| \leq C \rho^{-1/2}\ \mbox{for all}\ \rho>0.
\ee
\smallskip
We note that the examples $P_\Phi$ and $P_\mathrm{NS}$ which were discussed in
Section~\ref{micromacro} satisfy this assumption, cf.\
Propositions~\ref{veqstate} and \ref{nsok}.

To analyze the sign of the quadratic form
$\langle \tilde{\mathcal L}_\kappa \rho,\rho\rangle_{L^2}$ it
will be convenient to rewrite in a slightly different form.
Motivated by the Lagrangian coordinates point of view we consider
perturbations  of the form $\rho = -(r^2 n_\kappa \xi)'/r^2$ and
the associated metric perturbation $\l$. The second variation then
can be written in the form
\begin{align}
  & A_\kappa(\xi,\xi):=
  \left( \tilde{\mathcal L}_\kappa \rho,\rho\right)_{L^2} \notag \\
  & =  - 4\pi  \int_0^\infty
  e^{2\mu_\kappa+\l_\kappa}\frac{2r\mu_\kappa'+1}{r}(\l_\kappa'+\mu_\kappa')
  n_\kappa \xi^2 r^2\,dr
  +4\pi \int_0^\infty e^{2\mu_\kappa+\l_\kappa}  \Psi_\kappa \,
  \frac{1}{r^2}\left[\pa_r\left(r^2n_\kappa \xi\right)\right]^2\,dr,
  \label{E:QUADRATICFORMEE2}
\end{align}
where $\rho = -A_\kappa \xi$,
and therefore $\rho\in \mathcal C_\kappa$ is linearly dynamically accessible
in the sense of Definition~\ref{D:LDAPEE}.

\begin{theorem}\label{T:NEGATIVEA2}
  Let the equation of state $p=P(\rho)$ satisfy the assumptions
  $({\mathrm P} 1)$ and $({\mathrm P} 2)$,
  and let $(\rho_{\kappa},\lambda_{\kappa},\mu_{\kappa})_{\kappa >0}$ be a
  family of steady states given by Proposition~\ref{ssfamilies}.
  There exists $\kappa_0>0$ sufficiently large such that for any
  $\kappa>\kappa_0$ there exists a nontrivial perturbation
  $\rho\in \mathcal C_\kappa$
  satisfying $(\tilde{\mathcal L}_\kappa \rho,\rho)_{L^2}<0$.
  In particular, the negative Morse index $n^-(\tilde{\mathcal L}_\kappa)$
  of the operator $\tilde{\mathcal L}_\kappa$ is at least 1.
\end{theorem}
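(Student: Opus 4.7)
The plan is to mirror the strategy of Theorem~\ref{T:NEGATIVEA} for the Vlasov case, exploiting the fact that by Proposition~\ref{P:EVTOBKZ} and the identity \eqref{s=sigma}, the steady state $(\rho_\kappa, \lambda_\kappa, \mu_\kappa)$ is approximated by the BKZ solution on the annulus $[r_\kappa^1, r_\kappa^2]$ for $\kappa$ large, just as in the Vlasov case. The rewritten quadratic form \eqref{E:QUADRATICFORMEE2} is a sum of a manifestly negative term and a nonnegative gradient term, so the goal is to produce a Lagrangian displacement $\xi$ supported in the BKZ annulus for which the negative term beats the gradient term.

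The natural choice is to set $\xi = \chi_\kappa/(r^2 n_\kappa)$, where $\chi_\kappa$ is a smooth cutoff identically equal to $1$ on $[2r_\kappa^1, r_\kappa^2/2]$, supported in $[r_\kappa^1, r_\kappa^2]$, and satisfying the derivative bounds \eqref{chipbounds}. Then $\pa_r(r^2 n_\kappa \xi) = \chi_\kappa'$, so the gradient term localizes to two thin shells near the endpoints of the annulus. Using \eqref{Nmu} to write $n_\kappa = e^{\mu_\kappa}(\rho_\kappa+p_\kappa)$, together with the BKZ asymptotics $r^2(\rho_\kappa+p_\kappa) \to 1/(14\pi)$, $2r\mu_\kappa'+1\to 2$, $\lambda_\kappa'+\mu_\kappa' \to 1/(2r)$, $e^{2\lambda_\kappa}\to 7/4$ from Proposition~\ref{P:EVTOBKZ}, and the sharp control $e^{2\mu_\kappa}/r \asymp C_\kappa$ from Proposition~\ref{muass}, the negative term reduces at leading order to
\begin{equation*}
-4\pi \int_{r_\kappa^1}^{r_\kappa^2} e^{2\mu_\kappa+\lambda_\kappa}\,\frac{2r\mu_\kappa'+1}{r}(\lambda_\kappa'+\mu_\kappa')\,\frac{\chi_\kappa^2}{r^2 n_\kappa}\,dr \;\sim\; -C\, C_\kappa e^{-C\kappa^{-\delta}\ln\kappa}\int_{2r_\kappa^1}^{r_\kappa^2/2}\frac{dr}{r},
\end{equation*}
which contributes a crucial logarithmic gain of order $\ln(r_\kappa^2/r_\kappa^1) \sim (\alpha_2-\alpha_1)\ln\kappa$, exactly as in the Vlasov case.

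For the gradient term, the assumption (P2) guarantees that in the BKZ regime where $\rho_\kappa \sim r^{-2}$ is large, $\Psi_\kappa = P'(\rho_\kappa)/n_\kappa \approx 1/(3 n_\kappa)$ with a controlled error of size $O(\rho_\kappa^{-1/2})$. Substituting this and using $|\chi_\kappa'|\le 2/r_\kappa^1$ on the inner shell and $|\chi_\kappa'|\le 4/r_\kappa^2$ on the outer shell, together with the fact that each shell has width comparable to $r_\kappa^1$ and $r_\kappa^2$ respectively, the positive contribution is bounded above by $C\,C_\kappa e^{C\kappa^{-\delta}\ln\kappa}$, i.e.\ uniformly in $\kappa$ up to the same multiplicative prefactor appearing in the negative term. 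Combining these estimates yields
\begin{equation*}
A_\kappa(\xi,\xi) \le C_\kappa e^{-C\kappa^{-\delta}\ln\kappa}\bigl(-C_1\ln\kappa + C_2 e^{2C\kappa^{-\delta}\ln\kappa}\bigr) < 0
\end{equation*}
for all $\kappa$ sufficiently large. Setting $\rho:= -A_\kappa\xi$ gives a linearly dynamically accessible perturbation in $\mathcal C_\kappa$ with $(\tilde{\mathcal L}_\kappa\rho,\rho)_{L^2}<0$, proving $n^-(\tilde{\mathcal L}_\kappa)\ge 1$.

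The main obstacle will be bookkeeping the error terms from the three layers of approximation (fluid-to-massless via Lemma~\ref{p-pml}, massless-to-BKZ via Lemma~\ref{dynsys}, and $P'\to 1/3$ via (P2)) and verifying that each surviving error is at most of order $\kappa^{-\delta}\ln\kappa$ relative to the dominant logarithm, so that the negative term genuinely wins. A secondary subtlety is the appearance of $\Psi_\kappa$ in the gradient term: unlike the Vlasov case where the analogous coefficient is naturally bounded, here one must use (P2) specifically to prevent $\Psi_\kappa$ from producing an uncontrolled factor. The remaining computation is parallel in structure to the Vlasov case, but algebraically cleaner since there is no $v$-integral to handle.
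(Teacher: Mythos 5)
Your overall strategy (localize a Lagrangian displacement to the BKZ annulus of Proposition~\ref{P:EVTOBKZ} and let the negative term in \eqref{E:QUADRATICFORMEE2} win) is the right one, but the specific test function $\xi=\chi_\kappa/(r^2 n_\kappa)$ does not produce the logarithmic gain you claim, and the argument as written does not close. With your choice the negative term becomes
$-4\pi\int e^{2\mu_\kappa+\lambda_\kappa}\,\frac{(2r\mu_\kappa'+1)(\lambda_\kappa'+\mu_\kappa')}{r}\,\frac{\chi_\kappa^2}{r^2 n_\kappa}\,dr$.
On the annulus, $n_\kappa=e^{\mu_\kappa}(\rho_\kappa+p_\kappa)\approx e^{\mu_\kappa}/(14\pi r^2)$ by \eqref{Nmu} and Proposition~\ref{P:EVTOBKZ}, and $e^{\mu_\kappa}\approx (C_\kappa r)^{1/2}$ by Proposition~\ref{muass}, so the integrand scales like $C_\kappa^{1/2}r^{-3/2}$, \emph{not} like $r^{-1}$. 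The integral is therefore dominated by the inner endpoint and has size $\sim C_\kappa^{1/2}(r_\kappa^1)^{-1/2}$; there is no factor $\ln(r_\kappa^2/r_\kappa^1)\sim\ln\kappa$. At the same time the gradient term, which for your $\xi$ reduces to $4\pi\int e^{2\mu_\kappa+\lambda_\kappa}\Psi_\kappa r^{-2}(\chi_\kappa')^2\,dr$ with $\Psi_\kappa\approx 14\pi r^2/(3e^{\mu_\kappa})$, contributes from the inner shell an amount of exactly the same order $C_\kappa^{1/2}(r_\kappa^1)^{-1/2}$ --- not the $O(C_\kappa)$ bound you assert (note $(r_\kappa^1)^{-1/2}=\kappa^{-\alpha_1/2}e^{\kappa}$ is huge). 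So the negative bulk term and the positive boundary term are of the same order in $\kappa$, and the sign of their sum comes down to a fight between absolute constants; with the crude bound $|\chi_\kappa'|\le 2/r_\kappa^1$ of \eqref{chipbounds} the resulting upper bound for the inner-shell term in fact exceeds the bulk negative term, so negativity cannot be concluded. One could try to salvage this with an optimized cutoff and explicit constant chasing, but that is not done in the proposal, and it would also forfeit the robustness needed to absorb the $O(\kappa^{-\delta})$ errors of Propositions~\ref{P:EVTOBKZ} and~\ref{muass}.

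The paper's proof avoids this by taking $\xi=r^a\chi_\kappa$ and tuning the exponent. Inserting $n_\kappa'=-\mu_\kappa' n_\kappa\,(dP/d\rho)^{-1}$, $\Psi_\kappa=\frac{1}{n_\kappa}\frac{dP}{d\rho}(\rho_\kappa)$ and the BKZ limits (this is where (P2) enters, to send $dP/d\rho\to\tfrac13$), the \emph{bulk} integrand --- negative term plus the part of the gradient term where $\chi_\kappa$ is not differentiated --- carries the bracket $(2+a)^2\tfrac{1}{3r^2}-(2+a)\tfrac1{r^2}+\tfrac{3}{4r^2}-\tfrac1{r^2}$, which for $a=-1/4$ equals $-\tfrac{47}{48}r^{-2}$, and, crucially, the full bulk integrand is then proportional to $C_\kappa^{3/2}\,dr/r$. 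This scale invariance over the long annulus yields $A_{\kappa,1}\le -C C_\kappa^{3/2}\ln\kappa$, while the $\chi_\kappa'$-terms are only $O(C_\kappa^{3/2})$, so the conclusion follows from a separation of scales rather than a comparison of numerical constants. If you wish to keep your ansatz you must either carry out the constant comparison with an optimized cutoff, or better, build the $dr/r$ scaling into the bulk term as the paper does.
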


\begin{proof}
Some of the main ideas of this proof
are analogous to the ones used in the proof
of Theorem~\ref{T:NEGATIVEA},
but due to the local character of the operator $L_\kappa$, the proof is actually
simpler than the one of Theorem~\ref{T:NEGATIVEA}.
Our starting point is the formula~\eqref{E:QUADRATICFORMEE2}.
We localize the perturbation $\xi$ to the interval $[r_\kappa^1, r_\kappa^2]$
by setting $\xi(r) = r^a\chi_\kappa(r)$ for an $a\in\mathbb R$
to be specified later; the smooth cut-off function $\chi_\kappa$ is as
in~\eqref{chipbounds}.
We split the quadratic form~\eqref{E:QUADRATICFORMEE2} into two parts:
$A_\kappa=A_{\kappa,1}+A_{\kappa,2}$, where
\beas
A_{\kappa,1}
&=&
4\pi  \int_{r_\kappa^1}^{r_\kappa^2} e^{2\mu_\kappa+\l_\kappa}
\left[- \frac{2r\mu'_\kappa+1}{r}(\l'_\kappa +\mu'_\kappa)n_\kappa r^{2+2a}
+ \Psi_\kappa \frac{1}{r^2}\left(\left(r^{2+a}n_\kappa \right)'\right)^2\right]\,
\chi_\kappa^2 dr, \\
A_{\kappa,2}
&=&
4\pi \left( \int_{r_\kappa^1}^{2 r_\kappa^1}+  \int_{r_\kappa^2/2}^{r_\kappa^2}\right)
e^{2\mu_\kappa+\l_\kappa} \Psi_\kappa \frac{1}{r^2}
\left[2 \left(r^{2+a}n_\kappa \right)' r^{2+a} n_\kappa \chi_\kappa
+ r^{4+2 a} n_\kappa^2 \chi'_\kappa\right]\,\chi'_\kappa\, dr.
\eeas
In order to understand the asymptotic behavior of these expressions as
$\kappa \to \infty$ we need to supplement the results from
Propositions~\ref{P:EVTOBKZ}
and \ref{muass} with corresponding information on $n_\kappa$, $\Psi_\kappa$,
and $n'_\kappa$. But by \eqref{E:TOV2} and \eqref{E:TOV3} it holds that
\[
n_\kappa = e^{\mu_\kappa}\left(\rho_\kappa + p_\kappa\right),\
n'_\kappa = -\mu'_\kappa n_\kappa \left(\frac{d P}{d\rho}(\rho_\kappa)\right)^{-1},\
\Psi_\kappa = \frac{1}{n_\kappa} \frac{d P}{d\rho}(\rho_\kappa).
\]
We also recall \eqref{laplusmu} so that
\bea
A_{\kappa,1}
&=&
4\pi  \int_{r_\kappa^1}^{r_\kappa^2} e^{3\mu_\kappa+\l_\kappa}\chi_\kappa^2 r^{2+2a}
\left(\rho_\kappa + p_\kappa\right)
\Biggl[
  - 4\pi e^{2\l_\kappa}\left(\rho_\kappa + p_\kappa\right)\left(2r\mu'_\kappa+1\right)
\nonumber \\
&&
\qquad \qquad {}+ (2+a)^2 \frac{1}{r^2}\frac{d P}{d\rho}(\rho_\kappa)
- 2 (2+a) \frac{1}{r}\mu'_\kappa
+ \left(\mu'_\kappa\right)^2\left(\frac{d P}{d\rho}(\rho_\kappa)\right)^{-1}
\Biggr]\, dr. \quad \label{AK1} \nonumber
\eea
By Proposition~\ref{P:EVTOBKZ},
\[
\rho_\kappa (r) \geq \frac{3}{56\pi} r^{-2} + \mathrm{O}(1)
\geq \frac{3}{56\pi} R_\kappa^{-2} + \mathrm{O}(1) \geq e^{3\kappa},\
r\in [r_\kappa^1,r_\kappa^2]
\]
for $\kappa$ sufficiently large, and hence by (P2),
\[
\left|\frac{d P}{d\rho}(\rho_\kappa)-\frac{1}{3}\right|+
\left|\left(\frac{d P}{d\rho}(\rho_\kappa)\right)^{-1}-3\right| \leq C e^{-\kappa}
\]
on $[r_\kappa^1, r_\kappa^2]$ and for $\kappa$ sufficiently large. Together with
the estimates from Proposition~\ref{P:EVTOBKZ} this implies that for
any $\epsilon>0$ the estimates
\[
\left|\frac{d P}{d\rho}(\rho_\kappa)-\frac{1}{3}\right|,\
\left|\left(\frac{d P}{d\rho}(\rho_\kappa)\right)^{-1}-3\right|,\
\left|e^{2 \l_\kappa}-\frac{7}{4}\right|,\
\left|r^2 \left(\rho_\kappa + p_\kappa\right) -\frac{4}{56\pi}\right|,\
\left|2r\mu'_\kappa -1\right|
< \epsilon
\]
hold on $[r_\kappa^1, r_\kappa^2]$ for $\kappa$ sufficiently large.
If we substitute the exact asymptotic values of these expressions
and the estimate from Proposition~\ref{muass} into \eqref{AK1}
the integrand becomes
proportional to $r^{-1}$, provided we choose $a=-1/4$.
With this choice and with the exact asymptotic values
a short computation shows for the term $[\ldots]$ in \eqref{AK1} that
\[
[\ldots]_{|\epsilon=0} = -\frac{47}{48} r^{-2}.
\]
Choosing $\epsilon >0$ sufficiently small and $\kappa$ sufficiently large
it therefore follows that
\beas
A_{\kappa,1}
&\leq&
4\pi  \int_{r_\kappa^1}^{r_\kappa^2} e^{3\mu_\kappa+\l_\kappa}\chi_\kappa^2 r^{2+2a}
\left(\rho_\kappa + p_\kappa\right) \left[-\frac{1}{2} r^{-2}\right]\,dr\\
&\leq&
- C C_\kappa^{3/2} \int_{2 r_\kappa^1}^{r_\kappa^2/2} r^{-1} dr
\leq - C C_\kappa^{3/2} \ln \kappa,
\eeas
where $C>0$ is independent of $\kappa$ and $C_\kappa$ is the constant
introduced in Proposition~\ref{muass}.

Turning now to $A_{\kappa,2}$ we must show that this term
is smaller in modulus for $\kappa$ large. We split this term further:
\[
|A_{\kappa,2}|
\leq 4\pi
\left( \int_{r_\kappa^1}^{2r_\kappa^1}+  \int_{r_\kappa^2/2}^{r_\kappa^2}\right)
f_{\kappa,21} dr
+4\pi
\left( \int_{r_\kappa^1}^{2r_\kappa^1}+  \int_{r_\kappa^2/2}^{r_\kappa^2}\right)
f_{\kappa,22} dr,
\]
where
\beas
f_{\kappa,21}
&=&
e^{2\mu_\kappa+\l_\kappa} \Psi_\kappa \frac{1}{r^2}
2 \left((2+a) r^{a+1} n_\kappa + r^{a+2} |n'_\kappa|\right) r^{2+a} n_\kappa
\chi_\kappa |\chi'_\kappa|\\
&\leq&
C e^{3\mu_\kappa} r^{2 a-1} |\chi'_\kappa| \leq C C_\kappa^{3/2}|\chi'_\kappa|,
\eeas
and
\[
f_{\kappa,22}
=
e^{2\mu_\kappa+\l_\kappa} \Psi_\kappa \frac{1}{r^2} r^{4+2 a} n_\kappa^2
(\chi'_\kappa)^2
\leq
C e^{3\mu_\kappa} r |\chi'_\kappa|^2 \leq C C_\kappa^{3/2} r |\chi'_\kappa|^2.
\]
We recall that on the interval $[r_\kappa^1,2r_\kappa^1]$ the
estimate $|\chi'_\kappa|\leq 2/r_\kappa^1$ holds, and
$|\chi'_\kappa|\leq 4/r_\kappa^2$ holds $[r_\kappa^2/2, r_\kappa^2]$. Thus
\[
|A_{\kappa,2}| \leq C C_\kappa^{3/2},
\]
and together with the estimate for $A_{\kappa,1}$ this implies that
$A_{\kappa} < 0$ for $\kappa$ sufficiently large.
\end{proof}

\begin{corollary}[Unstable star models]\label{C:UNSTABLESTARS}
  By Theorem~\ref{T:NEGATIVEA2} and Proposition~\ref{P:CHANDRASEKHAR}
  steady states of the Einstein-Euler system are dynamically unstable
  for sufficiently large central redshifts $\kappa\gg1$. In particular
  the stable and unstable spaces $E^{u,s}$
  are non-empty and the exponential trichotomy with precise estimates
  described in Theorem~\ref{thm:main EE} holds.
\end{corollary}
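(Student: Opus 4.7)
The plan is a direct assembly of the preceding results, so the proof should be quite short. First I would fix $\kappa > \kappa_0$ with $\kappa_0$ as in Theorem~\ref{T:NEGATIVEA2} and extract from that theorem a nontrivial linearly dynamically accessible perturbation $\rho_\ast \in \mathcal C_\kappa = \overline{R(A_\kappa)}$ satisfying $(\tilde{\mathcal L}_\kappa \rho_\ast, \rho_\ast)_{L^2} < 0$. This immediately gives $n^-(\tilde{\mathcal L}_\kappa|_{\mathcal C_\kappa}) \geq 1$. Since Proposition~\ref{P:CHANDRASEKHAR} characterizes spectral stability of $(\rho_\kappa,\lambda_\kappa,\mu_\kappa)$ as the non-negativity of precisely this quadratic form on $\mathcal C_\kappa$, the presence of $\rho_\ast$ rules out spectral stability. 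In operator-theoretic terms, since $A_\kappa^\ast \mathcal L_\kappa A_\kappa$ is self-adjoint on $X_2$ by Lemma~\ref{prop-self-adjoint-composition} and satisfies
\[
(A_\kappa^\ast \mathcal L_\kappa A_\kappa v, v)_{X_2} = (\tilde{\mathcal L}_\kappa A_\kappa v, A_\kappa v)_{L^2}, \qquad v \in D(A_\kappa),
\]
a standard density argument (using that $R(A_\kappa)$ is dense in $\mathcal C_\kappa$ and that the quadratic form is continuous on $X_1$) yields at least one negative eigenvalue of $A_\kappa^\ast \mathcal L_\kappa A_\kappa$, hence a real pair $\pm\lambda$ with $\lambda > 0$ in the spectrum of the first-order generator $J^\kappa L^\kappa$ via~\eqref{eqn-2nd order-LEE2}.

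To then promote this to the full conclusion of the corollary, I would simply invoke Theorem~\ref{thm:main EE}: the decomposition $X = E^u \oplus E^c \oplus E^s$, the invariance properties, and the quantitative semigroup bounds \eqref{estimate-stable-unstable-EE}--\eqref{estimate-center-EE} hold for every $\kappa > 0$ without any supplementary input. The only remaining content is the non-emptiness $E^u, E^s \neq \{0\}$, and this is immediate: by part~(i) of Theorem~\ref{thm:main EE}, these subspaces collect exactly the eigenvectors associated with the negative (resp.\ positive) real eigenvalues of $J^\kappa L^\kappa$, and we have just produced one such pair. (When the transversality condition $\frac{d}{d\kappa}M_\kappa \frac{d}{d\kappa}(M_\kappa/R_\kappa) \neq 0$ holds, the sharper formula~\eqref{unstable-dimension-formula-EE} gives $\dim E^{u,s} = n^-(\Sigma_\kappa) - i_\kappa \geq 1$ via the identity $n^-(\Sigma_\kappa) - i_\kappa = n^-(\tilde{\mathcal L}_\kappa|_{\mathcal C_\kappa})$ established in the proof of Theorem~\ref{T:TPP}; but even without transversality the eigenvalue $\lambda > 0$ above already forces non-emptiness.) There is no genuine obstacle at the level of this corollary: the full analytical burden has been discharged in Theorem~\ref{T:NEGATIVEA2}, whose construction of the cut-off test function $\xi(r) = r^{-1/4}\chi_\kappa(r)$ supported on the ultrarelativistic annulus $[r^1_\kappa, r^2_\kappa]$ is the substantive step, relying on the sharp BKZ asymptotics from Proposition~\ref{P:EVTOBKZ}.
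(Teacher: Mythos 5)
Your proposal is correct and follows essentially the same route as the paper, which treats the corollary as an immediate assembly of Theorem~\ref{T:NEGATIVEA2} (negative direction in $\mathcal C_\kappa$ for $\kappa\gg1$), Proposition~\ref{P:CHANDRASEKHAR} (spectral instability), and Theorem~\ref{thm:main EE} (negative eigenvalues of $A_\kappa^\ast\mathcal L_\kappa A_\kappa$ yield pairs of stable/unstable eigenvalues, hence $E^{u,s}\neq\{0\}$ and the trichotomy). The only detail you leave implicit is that passing from a negative direction of the form to an actual negative eigenvalue uses, besides self-adjointness, the finiteness of the negative Morse index coming from the reduced operator $\Sigma_\kappa$ (Theorem~\ref{T:EULERREDUCED}, Lemma~\ref{lemma-L-kappa-tilde}), but this is exactly what the proof of Theorem~\ref{thm:main EE} supplies, so invoking it as you do closes the argument.
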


\subsection{The micro-macro stability principle}\label{SS:MICROMACRO}

The following theorem is a general-relativistic analogue of Antonov's
First Law which can be expressed as follows:
In the context of Newtonian physics a spherically symmetric stellar system
with phase space density of the form $f=\phi(E)$ with $\phi$ strictly
decreasing on its support is stable if the barotropic star with the
corresponding equation of state and the same macroscopic density is stable,
cf.\ \cite[Sect.~5.2]{BiTr}.

\begin{theorem}[Micro-macro stability comparison]\label{T:MICROMACRO}
  Let $\Phi$ be a microscopic equation of state satisfying the
  assumption $(\Phi1)$ and let $P=P_\Phi$ be the macroscopic equation
  of state generated by $\Phi$ according to Proposition~\ref{veqstate}.
  Now fix some $\kappa>0$. If the corresponding steady state
  $(f_\kappa, \l_\kappa,\mu_\kappa)$ of the Einstein-Vlasov system
  is spectrally unstable, then the same is true for
  the corresponding steady state $(\rho_\kappa,\l_\kappa,\mu_\kappa)$
  of the Einstein-Euler system.
\end{theorem}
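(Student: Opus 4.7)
The plan is to convert a negative energy direction of the linearized Einstein--Vlasov operator into a density perturbation that belongs to the Einstein--Euler constraint set $\mathcal C_\kappa$ and renders the reduced energy negative, after which Proposition~\ref{P:CHANDRASEKHAR} delivers the claim. The structural fact that makes this work is that, because $P=P_\Phi$, the Einstein--Euler steady state built from $P$ coincides macroscopically with the Vlasov steady state built from $\Phi$; in particular the two quadratic forms \eqref{E:QUADRATICFORM} and \eqref{E:ENERGYEE1} share identical metric potential terms in $\lambda$, with $\lambda$ computed from \eqref{E:LAMBDAOFF}${}={}$\eqref{E:LAMBDAOFF_EE} via a common density perturbation.

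By Theorem~\ref{thm:main VE2}, Vlasov spectral instability produces $\phi\in D(\mathcal L_\kappa)$ with $(\mathcal L_\kappa\phi,\phi)_X<0$. Since $\mathcal L_\kappa=-\mathcal B_\kappa\bar{\mathcal L}_\kappa\mathcal B_\kappa$ by \eqref{E:MATHCALLDEF} and $\mathcal B_\kappa$ is anti-self-adjoint, $g:=\mathcal B_\kappa\phi\in R(\mathcal B_\kappa)$ satisfies
\[
\mathcal A_\kappa(g,g)=(\bar{\mathcal L}_\kappa g,g)_X=(\mathcal L_\kappa\phi,\phi)_X<0.
\]
I then set $\rho:=\rho_g=\int g\langle v\rangle\,dv$, which is compactly supported in $[0,R_\kappa]$. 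To place $\rho$ in $\mathcal C_\kappa$ I need $\int\rho\,dx=0$, equivalently $(g,\psi_\ast)_X=0$ with $\psi_\ast:=e^{-\lambda_\kappa}|\phi_\kappa'(E_\kappa)|\langle v\rangle$; by anti-self-adjointness of $\mathcal B_\kappa$, it is enough to show $\psi_\ast\in\ker\mathcal B_\kappa$. A direct calculation of $\mathcal T_\kappa\psi_\ast$ using $\nabla_{x,v}E_\kappa$, together with \eqref{11} and the vanishing (by oddness in $\tilde v$) of $\int|\phi_\kappa'|\tilde w\,d\tilde v$, shows that the transport piece of $\mathcal B_\kappa\psi_\ast$ exactly cancels the non-local piece from \eqref{E:BDEF}, yielding $\mathcal B_\kappa\psi_\ast=0$; this kernel identity is just the linearized statement of ADM-mass conservation.

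The heart of the comparison is a single Cauchy--Schwarz estimate: pointwise in $x$, with weight $|\phi_\kappa'(E_\kappa)|$,
\[
\rho(x)^2\le\Bigl(\int|\phi_\kappa'(E_\kappa)|\langle v\rangle^2\,dv\Bigr)\Bigl(\int\frac{g^2}{|\phi_\kappa'(E_\kappa)|}\,dv\Bigr),
\]
and the sharpening identity $e^{2\mu_\kappa}\Psi_\kappa\int|\phi_\kappa'|\langle v\rangle^2\,dv=1$ follows by combining \eqref{E:IDLOC} (which gives $\int|\phi_\kappa'|\langle v\rangle^2\,dv=-e^{-\mu_\kappa}\rho_\kappa'/\mu_\kappa'$), the Tolman--Oppenheimer--Volkov equation \eqref{tov} with $p_\kappa=P(\rho_\kappa)$ (which turns $-\rho_\kappa'/\mu_\kappa'$ into $(\rho_\kappa+p_\kappa)/P'(\rho_\kappa)$), and the expression $\Psi_\kappa=P'(\rho_\kappa)/[e^{\mu_\kappa}(\rho_\kappa+p_\kappa)]$ obtained from \eqref{E:TOV3}. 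Multiplying the Cauchy--Schwarz bound by $e^{2\mu_\kappa+\lambda_\kappa}\Psi_\kappa$ and integrating in $x$ yields
\[
\int e^{2\mu_\kappa+\lambda_\kappa}\Psi_\kappa\rho^2\,dx\le\iint\frac{e^{\lambda_\kappa}}{|\phi_\kappa'(E_\kappa)|}g^2\,dv\,dx,
\]
so that in view of the common metric term in \eqref{E:QUADRATICFORM} and \eqref{E:ENERGYEE1} one obtains $(\tilde{\mathcal L}_\kappa\rho,\rho)_{L^2}\le\mathcal A_\kappa(g,g)<0$; in particular $\rho\ne 0$, since $\rho=0$ would force $\lambda=0$ and the whole Euler energy to vanish. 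Proposition~\ref{P:CHANDRASEKHAR} then concludes the spectral instability of the Einstein--Euler steady state. The main obstacle in the plan is the kernel computation $\mathcal B_\kappa\psi_\ast=0$; everything else reduces to one Cauchy--Schwarz made sharp by the TOV identity, and the hypothesis $P=P_\Phi$ enters precisely to guarantee that both steady states share the same macroscopic profile so these manipulations apply.
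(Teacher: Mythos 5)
Your proposal is correct and follows essentially the same route as the paper's proof: extract $g\in R(\mathcal B_\kappa)$ with $\mathcal A_\kappa(g,g)<0$, show $\int\rho_g\,dx=0$ via anti-self-adjointness of $\mathcal B_\kappa$ and the kernel identity $\mathcal B_\kappa\bigl(e^{-\lambda_\kappa}|\phi_\kappa'(E_\kappa)|\p\bigr)=0$, and then use the Cauchy--Schwarz inequality in $v$ together with the identity $\int|\phi_\kappa'(E_\kappa)|\p^2\,dv=e^{-2\mu_\kappa}/\Psi_\kappa$ (i.e.\ $S_\kappa=\Sigma_\kappa$) to obtain $(\tilde{\mathcal L}_\kappa\rho_g,\rho_g)_{L^2}\le\mathcal A_\kappa(g,g)<0$ and conclude via Proposition~\ref{P:CHANDRASEKHAR}. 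The only differences are cosmetic: you derive the negative-energy element through the eigenfunction of $\mathcal L_\kappa$ and spell out the TOV/enthalpy identity rather than quoting \eqref{s=sigma}.
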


\begin{proof}
  If $f_\kappa$ is spectrally unstable, then by the linear stability criterion
  for the Einstein-Vlasov system
  there exists a function
  $f\in  R\left( \mathcal{B}_{\kappa }\right)$
  such that $\mathcal A_\kappa(f,f)<0$, where
  $\mathcal A_\kappa(f,f)$ is defined
  by~\eqref{E:QUADRATICFORM}, \eqref{E:LAMBDAOFF}.
  We show that $f\in R\left( \mathcal{B}_{\kappa }\right) $ implies that
  $\int \rho_{f}dx=0$,
  where $\rho_{f}=\int f\, \p \, dv$.
  Indeed, let $f=\mathcal{B}_{\kappa }h$, $h\in D(\mathcal B_\kappa)$.
  Then
  \begin{eqnarray*}
    \int \rho_{f} dx
    &=&
    \iint f\,\p\, dv\,dx
    =\iint \frac{e^{\lambda _{\kappa }}}{|\phi_{\kappa }^{\prime }(E_{\kappa })|}
    \left(\mathcal{B}_{\kappa }h\right) \,
    \left( |\phi_{\kappa }^{\prime }(E_{\kappa})|\,E_{\kappa }
    e^{-\mu _{\kappa }-\lambda_{\kappa }}\right) dv\,dx \\
    &=&
    -\iint \frac{e^{\lambda_{\kappa }}}{|\phi_{\kappa }^{\prime }(E_{\kappa})|}h\,
    \mathcal{B}_{\kappa }\left( |\phi _{\kappa }^{\prime }(E_{\kappa})|\,E_{\kappa }
    e^{-\mu _{\kappa }-\lambda _{\kappa }}\right) dv\,dx=0.
  \end{eqnarray*}
  Here we used the facts that $\mathcal{B}_{\kappa }$ is
  anti-self-adjoint in $X$
  and
  $\mathcal{B}_{\kappa }\left( |\phi _{\kappa }^{\prime }(E_{\kappa })|
  E_{\kappa}e^{-\mu _{\kappa }-\lambda _{\kappa }}\right) =0$; the latter follows
  from the definition of $\mathcal{B}_{\kappa }$ and \eqref{11}.
  On the other hand by the Cauchy-Schwarz inequality,
  \[
  \rho_{f}^{2} \leq \int \frac{f^{2}}{|\phi _{\kappa }^{\prime }(E_{\kappa })|}dv\;
  \int |\phi _{\kappa }^{\prime }(E_{\kappa })| \p^{2}dv.
  \]
  By \eqref{E:IDLOC} and \eqref{s=sigma},
  \[
  \int |\phi_\kappa'(E_\kappa)| \p^2\,dv
  = \frac{e^{-2\mu_\kappa}}{\Psi_\kappa} ,
  \]
  which simply says that in the present situation $S_\kappa=\Sigma_\kappa$.
  This implies that
  \[
  \int \frac{f^{2}}{|\phi _{\kappa }^{\prime }(E_{\kappa })|}dv
  \geq \rho_{f}^{2}\Psi _{\kappa }e^{2\mu _{\kappa }} ,
  \]
  and
  \[
  \iint \frac{e^{\lambda _{\kappa }}}{|\phi _{\kappa }^{\prime }(E_{\kappa })|}
  f^{2}\,dv\,dx
  \geq \int e^{2\mu_{\kappa }+\lambda_{\kappa }}\Psi _{\kappa}\rho_{f}^{2} dx,
  \]
  which in turn implies that
  \be\label{E:MICROMACRO}
  \mathcal A_\kappa(f,f) \ge
  \left( \mathcal{\tilde{L}}_{\kappa }\rho_f ,\rho_f \right)_{L^2}.
  \ee
  It follows that
  $\left(\tilde{\mathcal L}_{\kappa} \rho_f ,\rho_f \right)_{L^2}<0$
  with $f$ chosen as above. By Proposition~\ref{P:CHANDRASEKHAR}
  this implies that the steady state $\rho_\kappa$ of the Einstein-Euler system
  is unstable as well.
\end{proof}

A corollary of the above micro-macro stability estimate is the stability
part of the turning point principle for the Einstein-Vlasov system.

\begin{theorem}[The stability part of the turning point principle for
                the Einstein-Vlasov system]
  Under the assumptions of Theorem~\ref{T:MICROMACRO} the following holds:
  If $n^-(S_\kappa)=1$ and
  $\frac{d}{d\kappa} M_\kappa\frac{d}{d\kappa}
  \left(\mu_\kappa(R_\kappa)\right)<0$, then the steady state
  $f_\kappa$ is linearly stable. Here $S_\kappa$ denotes the reduced operator
  defined in~\eqref{E:SKAPPADEFINITION}.
\end{theorem}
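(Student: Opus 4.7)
The plan is to combine the micro-macro stability comparison from Theorem~\ref{T:MICROMACRO} with the turning point principle from Theorem~\ref{T:TPP}. The crucial preliminary observation is recorded in \eqref{s=sigma}: whenever the macroscopic equation of state is generated by a microscopic one via $P=P_\Phi$, the reduced Vlasov operator $S_\kappa$ and the reduced Euler operator $\Sigma_\kappa$ coincide. In particular the hypothesis $n^-(S_\kappa)=1$ is equivalent to $n^-(\Sigma_\kappa)=1$.

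First I would translate the sign condition on $\mu_\kappa(R_\kappa)$. By Lemma~\ref{L:CORRECTION} one has $\operatorname{sign}(\tfrac{d}{d\kappa}\mu_\kappa(R_\kappa)) = -\operatorname{sign}(\tfrac{d}{d\kappa}(M_\kappa/R_\kappa))$, so the hypothesis $\tfrac{d}{d\kappa} M_\kappa \cdot \tfrac{d}{d\kappa}(\mu_\kappa(R_\kappa)) \le 0$ becomes $\tfrac{d}{d\kappa} M_\kappa \cdot \tfrac{d}{d\kappa}(M_\kappa/R_\kappa) \ge 0$. In the generic non-degenerate case of strict inequality, definition~\eqref{E:IKAPPADEF} yields $i_\kappa = 1$, and Theorem~\ref{T:TPP} shows that the linearized Einstein-Euler operator has exactly $n^-(\Sigma_\kappa)-i_\kappa = 1-1 = 0$ growing modes. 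By Proposition~\ref{P:CHANDRASEKHAR} this is equivalent to the Chandrasekhar criterion $\langle \tilde{\mathcal L}_\kappa \rho, \rho\rangle_{L^2} \geq 0$ for every linearly dynamically accessible $\rho \in \mathcal C_\kappa$.

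Next I would invoke the contrapositive of Theorem~\ref{T:MICROMACRO}. Inspection of its proof shows that it in fact establishes the pointwise inequality \eqref{E:MICROMACRO}: every $f \in R(\mathcal B_\kappa)$ produces a macroscopic density $\rho_f = \int f\,\p\,dv \in \mathcal C_\kappa$ satisfying $\mathcal A_\kappa(f,f) \geq \langle \tilde{\mathcal L}_\kappa \rho_f, \rho_f\rangle_{L^2}$. Combining this with the Euler-level non-negativity obtained above gives $\mathcal A_\kappa(f,f) \geq 0$ on the dynamically accessible subspace $R(\mathcal B_\kappa)$. Via Lemma~\ref{lemma-quadratic-L-R} together with Theorem~\ref{thm:main VE}(i) and the dimension formula~\eqref{unstable-dimension-formula}, this is exactly the condition $\tilde S_\kappa \geq 0$ that precludes the existence of growing modes of the linearized Einstein-Vlasov system around $f_\kappa$, which is the claimed linear stability.

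The main obstacle is the degenerate boundary case $\tfrac{d}{d\kappa} M_\kappa \cdot \tfrac{d}{d\kappa}(\mu_\kappa(R_\kappa)) = 0$, which is excluded from the hypotheses of Theorem~\ref{T:TPP} as stated. Here I would argue as follows: the identity \eqref{E:KEYCALC1} shows that at any critical point of $\kappa \mapsto c_\kappa$ (equivalently of $\mu_\kappa(R_\kappa)$), the profile derivative $\tfrac{d\rho_\kappa}{d\kappa}$ lies in $\ker \tilde{\mathcal L}_\kappa$, while the orthogonality~\eqref{E:ORTHOGONAL2} forces it to be orthogonal to every element of $\mathcal C_\kappa$ in the $L^2$ pairing. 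Together with $\dim\ker\tilde{\mathcal L}_\kappa = \dim\ker\Sigma_\kappa$ from Lemma~\ref{lemma-L-kappa-tilde}, and a continuity/perturbation argument through nearby non-degenerate values of $\kappa$ (using the $C^1$ dependence of the steady state family), one recovers $n^-(\tilde{\mathcal L}_\kappa|_{\mathcal C_\kappa}) = 0$, at which point the micro-macro step completes the proof exactly as before.
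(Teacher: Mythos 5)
Your core argument is exactly the paper's: identify $S_\kappa=\Sigma_\kappa$ via \eqref{s=sigma}, use Lemma~\ref{L:CORRECTION} and Theorem~\ref{T:TPP} (equivalently Proposition~\ref{P:CHANDRASEKHAR}) to get spectral stability of the associated Einstein-Euler steady state, and then transfer this to the Vlasov level through the micro-macro inequality \eqref{E:MICROMACRO}, concluding $\mathcal A_\kappa(f,f)\ge 0$ on $\overline{R(\mathcal B_\kappa)}$ and hence linear stability. Your extra bookkeeping (the sign translation, the identification of $\mathcal A_\kappa\ge0$ on the dynamically accessible space with $\tilde S_\kappa\ge0$ and $\dim E^u=0$ via Lemma~\ref{lemma-quadratic-L-R} and Theorem~\ref{thm:main VE}) just makes explicit what the paper leaves implicit; in the non-degenerate case your proof and the paper's coincide.

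The last paragraph, however, does not hold up. First, the dichotomy is misstated: \eqref{E:KEYCALC1} puts $\frac{d\rho_\kappa}{d\kappa}$ into $\ker\tilde{\mathcal L}_\kappa$ only in the sub-case $\frac{d}{d\kappa}\mu_\kappa(R_\kappa)=0$; if instead $\frac{d}{d\kappa}M_\kappa=0$ with $\frac{d}{d\kappa}\mu_\kappa(R_\kappa)\neq0$, then $\frac{d\rho_\kappa}{d\kappa}$ lies in $\mathcal C_\kappa$ itself and is merely isotropic for the quadratic form, and knowing an isotropic (or kernel) vector $\tilde{\mathcal L}_\kappa$-orthogonal to $\mathcal C_\kappa$ does not by itself remove the single negative direction from the constrained space -- a three-dimensional model with $\tilde{\mathcal L}=\mathrm{diag}(-1,0,1)$ and the constraint functional along the third coordinate satisfies all the algebraic facts you list yet is constrained-unstable. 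Second, the ``continuity through nearby non-degenerate $\kappa$'' step is not available: the hypotheses $n^-(\Sigma_{\kappa'})=1$ and the sign condition are assumed only at the given $\kappa$, and the negative Morse index is expected to change exactly at such degenerate values (cf.\ Remark~\ref{R:GROWINGMODES}, where this analysis is deferred to \cite{HaLin}). Note that the paper's own proof quietly proves only the strict-inequality case, since Theorem~\ref{T:TPP} requires $\frac{d}{d\kappa}M_\kappa\frac{d}{d\kappa}\left(\frac{M_\kappa}{R_\kappa}\right)\neq0$; so you should either restrict to that case or supply a genuinely new argument for the equality case, rather than the perturbation sketch given.
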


\begin{proof}
  Let $(\rho_\kappa,\l_\kappa,\mu_\kappa)$ be the associated steady state of the
  Einstein-Euler system. Since $S_\kappa=\Sigma_\kappa$ by~\eqref{s=sigma}, we
  have $n^-(\Sigma_\kappa)=1$. Thus by the part (i) of the turning point
  principle for the Einstein-Euler system, Theorem~\ref{T:TPP}, we conclude
  that $(\rho_\kappa,\l_\kappa,\mu_\kappa)$ is spectrally stable.
  By~\eqref{E:MICROMACRO} it follows that $\mathcal A_\kappa(f,f)\ge0$ for
  all linearly dynamically accessible $f\in \overline{R(\mathcal B_\kappa)}$.
  This in turn implies the linear stability of $f_\kappa$.
\end{proof}

\begin{remark}
  The proof of the full turning point principle for the Einstein-Vlasov system
  is considerably harder than in the case of the Einstein-Euler system since
  the space of linearly dynamically accessible perturbations
  $R(\mathcal B_\kappa)$ is of infinite codimension, by contrast to
  $\mathcal C_\kappa$,  see~\eqref{E:CHANDRA}.
\end{remark}


\appendix

\setcounter{section}{0}
\setcounter{equation}{0}

\renewcommand{\thesection}{\Alph{section}}
\renewcommand{\theequation}{\Alph{section}.\arabic{equation}}

\section{Auxiliary results for self-adjoint operators} \label{A}

In this appendix we collect some auxiliary results on symmetric
and self-adjoint operators;
to make the present paper self-contained, we include the proofs.
First we show how estimates between two operators translate to
corresponding information on their negative Morse index.

\begin{lemma} \label{op_est_morse}
  Let $X$ and $Y$ be Hilbert spaces with scalar products $(\cdot,\cdot)_X$
  and  $(\cdot,\cdot)_Y$, let $A\colon X\supset D(A) \to X$ and
  $B \colon Y\supset D(B) \to Y$ be linear, symmetric operators,
  let $P \colon D(A) \to D(B)$ be linear, and assume that
  for all $x\in D(A)$,
  \be \label{ABest}
  (BPx,Px)_Y \leq (Ax,x)_X.
  \ee
  Then $n^-(A) \leq n^-(B)$.
  The analogous conclusion holds if  $B \colon Y \to Y'$ is linear and self-dual instead.
\end{lemma}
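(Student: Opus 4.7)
The plan is to transport a negative subspace for $A$ through the map $P$ to obtain a negative subspace for $B$ of the same dimension. First I would fix any finite-dimensional subspace $V\subset D(A)$ on which $(A\cdot,\cdot)_X$ is negative definite, i.e. $(Ax,x)_X<0$ for every $x\in V\setminus\{0\}$; by the definition of $n^-(A)$ recalled in Section~\ref{SS:PROOFOFTHEOREM1} it suffices to produce, for each such $V$, a subspace of $D(B)$ of dimension $\dim V$ on which $(B\cdot,\cdot)_Y$ is negative definite, and then to take $\dim V=n^-(A)$ (or, if $n^-(A)=\infty$, to let $\dim V\to\infty$).

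Next I would verify that $P$ restricted to $V$ is injective. If $x\in V\setminus\{0\}$ satisfied $Px=0$, then the hypothesis~\eqref{ABest} would give $0=(BPx,Px)_Y\leq (Ax,x)_X<0$, a contradiction. Hence $P|_V\colon V\to D(B)$ is injective and $W:=P(V)\subset D(B)$ is a subspace of $Y$ with $\dim W=\dim V$. For any nonzero $y\in W$, writing $y=Px$ with $x\in V\setminus\{0\}$, the inequality~\eqref{ABest} yields
\[
(By,y)_Y=(BPx,Px)_Y\leq (Ax,x)_X<0,
\]
so $(B\cdot,\cdot)_Y$ is negative definite on $W$. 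Therefore $n^-(B)\geq\dim W=\dim V$, and taking the supremum over such $V$ gives $n^-(B)\geq n^-(A)$.

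Finally, the self-dual case $B\colon Y\to Y'$ is handled by the same argument with the dual pairing $\langle B\cdot,\cdot\rangle$ replacing $(B\cdot,\cdot)_Y$ throughout; the only ingredients used were linearity of $P$, the inequality~\eqref{ABest}, and the characterization of $n^-$ as the supremum of dimensions of negative-definite subspaces. There is no real obstacle here; the only point that requires a moment of care is the injectivity of $P|_V$, which is why the hypothesis~\eqref{ABest} with strict negativity on $V$ is exactly what is needed.
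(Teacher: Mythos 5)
Your proposal is correct and follows essentially the same route as the paper's proof: transport a negative subspace $V$ through $P$, use \eqref{ABest} to get injectivity of $P|_V$ (since $Px=0$ would force $0\leq (Ax,x)_X<0$) and negativity of $(B\cdot,\cdot)_Y$ on $W=P(V)$, and conclude $n^-(B)\geq \dim V$. The handling of the self-dual case by replacing $(B\cdot,\cdot)_Y$ with the dual pairing is also exactly what the paper intends.
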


\begin{proof}
  Let $V\subset D(A)$ be a subspace such that $(Ax,x)_X <0$ for all
  $x\in V\setminus \{0\}$, and define $W:= P(V)$, which is a subspace
  of $D(B)$. If $x\in V$ such that $Px=0$, then by \eqref{ABest}
  and the choice of $V$ it follows that $x=0$. Hence $P\colon V \to W$
  is one-to-one and onto so that $\dim V = \dim W$.
  Let $y\in W\setminus\{0\}$ so that $y=P x$ for some $x\in V\setminus\{0\}$.
  Then by  \eqref{ABest} and the choice of $V$
  it holds that $(By,y)_Y < 0$. Hence $n^-(B)\geq \dim V$, and the assertion
  follows.
\end{proof}

Next we consider the spectral properties of a self-adjoint operator with
negative Morse index.

\begin{proposition} \label{saspec}
  Let $A \colon X\supset D(A)\rightarrow X$ be a self-adjoint operator on a
  Hilbert space $X$ with scalar product $(\cdot,\cdot)$ and assume that
  $n^-(A) < \infty$.
  Then the negative part of its spectrum, $\sigma(A)\,\cap\, ]-\infty,0[$,
  consists of at most finitely many eigenvalues which have finite
  multiplicities.
  If $n^-(A) >0$, then $A$ has a negative eigenvalue of finite multiplicity.
\end{proposition}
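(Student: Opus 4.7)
The plan is to exploit the spectral theorem for the unbounded self-adjoint operator $A$ and transfer the finite-dimensional information encoded by $n^-(A)$ to the spectral resolution. Let $\{E_\lambda\}_{\lambda\in\R}$ denote the spectral family associated with $A$, and for any Borel set $B\subset\R$ write $E_B = \int_B dE_\lambda$ for the corresponding spectral projection. The strategy hinges on a single observation: for any $a<0$ and any $x$ in the closed subspace $H_a := \mathrm{ran}(E_{(-\infty,a]})$ with $x\in D(A)$,
\[
(Ax,x) = \int_{(-\infty,a]}\lambda\, d(E_\lambda x,x) \leq a\,\|x\|^2 < 0.
\]
Since $D(A)\cap H_a$ is dense in $H_a$ and since any finite-dimensional subspace of $H_a$ can be approximated inside $D(A)\cap H_a$ (for instance by applying the bounded spectral cut-off $E_{[-n,a]}$ and letting $n\to\infty$), a linear-algebra argument gives that any finite-dimensional subspace of $H_a$ yields a finite-dimensional subspace of $D(A)$ on which the quadratic form is negative. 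By the hypothesis $n^-(A)<\infty$, this forces
\[
\dim H_a \leq n^-(A) <\infty \qquad\text{for every } a<0.
\]

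From this finite-dimensionality the rest of the proof unfolds by standard spectral reasoning. First I would fix $a_0<0$ and decompose the closed subspace $H_{a_0}$ by applying the bounded self-adjoint operator $A|_{H_{a_0}}$ (which preserves $H_{a_0}$ since every $E_B$ commutes with $A$): being a self-adjoint operator on a finite-dimensional space it is diagonalized by finitely many eigenvalues, each an eigenvalue of $A$ itself, with finite multiplicities. Letting $a_0\uparrow 0$ (through a sequence) and noting that the collection of such eigenvalues is monotone in $a_0$, I conclude that $\sigma(A)\cap(-\infty,0)$ is a countable set of eigenvalues of finite multiplicity. If this set were infinite, one could select a sequence of mutually orthogonal eigenvectors $(e_n)$ with eigenvalues $\lambda_n<0$; their linear span would be an infinite-dimensional subspace of $D(A)$ on which $(A\cdot,\cdot)$ is negative definite, contradicting $n^-(A)<\infty$. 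Hence the negative spectrum consists of \emph{finitely many} eigenvalues of finite multiplicity.

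For the final assertion, assume $n^-(A)>0$ and pick $x\in D(A)\setminus\{0\}$ with $(Ax,x)<0$. Since
\[
(Ax,x) = \int_\R \lambda\, d(E_\lambda x,x),
\]
the positive measure $d(E_\lambda x,x)$ must put mass on $(-\infty,0)$, i.e.\ $E_{(-\infty,0)}x\neq 0$; hence $\sigma(A)\cap(-\infty,0)\neq\emptyset$ and by the previous step it contains at least one eigenvalue of finite multiplicity.

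I do not anticipate a genuine obstacle here: the argument is entirely a translation between the quadratic-form description of $n^-(A)$ and the spectral resolution, and the only small technical point requiring care is the density step used to justify $\dim H_a\le n^-(A)$, which is handled by the bounded cut-offs $E_{[-n,a]}\to E_{(-\infty,a]}$ mapping into $D(A)$.
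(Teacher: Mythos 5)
Your argument is correct, but it is organized differently from the paper's proof. Your key step is to bound $\dim \mathrm{ran}\,E_{(-\infty,a]}\le n^-(A)$ directly for every $a<0$, using that the form satisfies $(Ax,x)\le a\|x\|^2$ on that subspace and handling the domain issue with the bounded cut-offs $E_{[-n,a]}$; the structure of the negative spectrum is then read off from the finite-dimensional reducing subspace, and the final assertion is proved directly from the spectral representation. The paper instead argues by contradiction through the essential spectrum: if some $\lambda<0$ lay in $\sigma_{\mathrm{ess}}(A)$, the spectral projection onto a small interval $I\subset\,]-\infty,0[$ around $\lambda$ would have infinite-dimensional range (citing Kato), contained in $D(A)$ because $I$ is bounded (verified via the second-moment criterion), on which the form is negative, contradicting $n^-(A)<\infty$; the nonemptiness of the negative spectrum when $n^-(A)>0$ is then taken from Rudin, Thm.~13.31. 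So your route avoids both external citations and the notion of essential spectrum altogether, at the price of working with the unbounded interval $(-\infty,a]$, which is exactly why you need the cut-off/density approximation that the paper sidesteps by localizing to bounded intervals; the two proofs nevertheless rest on the same core observation that the quadratic form is strictly negative on nonzero vectors of any negative spectral subspace. One point you should make explicit in the step ``letting $a_0\uparrow 0$'': use the reducing decomposition $A=A|_{H_{a_0}}\oplus A|_{H_{a_0}^{\perp}}$ with $\sigma\bigl(A|_{H_{a_0}^{\perp}}\bigr)\subset[a_0,\infty)$ to conclude that every point of $\sigma(A)\,\cap\,]-\infty,a_0[$ is an eigenvalue of the finite-dimensional part $A|_{H_{a_0}}$, and note that finite dimensionality of $H_{a_0}$ together with the density of $D(A)\cap H_{a_0}$ gives $H_{a_0}\subset D(A)$, so this restriction is defined on all of $H_{a_0}$; both facts are standard and do not affect the validity of your argument.
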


\begin{proof}
  Let $E$ denote the resolution of the identity corresponding to $A$; we follow
  the notation in \cite{Ru}. Then for all $x\in D(A)$,
  \begin{equation} \label{axx}
    (Ax,x) = \int_{-\infty}^\infty t dE_{xx}(t)
    = \int_0^\infty (E(]\tau,\infty[)x,x)\,d\tau
    - \int_{-\infty}^0 (E(]-\infty,\tau[)x,x)\,d\tau,
  \end{equation}
  cf.~\cite[Thm.~13.30, Def.~12.17]{Ru}.
  Next we note that
  \begin{equation}\label{domain}
    R(E(I)) \subset D(A)\ \mbox{for any bounded interval}\ I\subset \R;
  \end{equation}
  as above, $R(E(I))$ denotes the range of the operator $E(I)$. To see the
  latter we note that
  \begin{equation}\label{domaincrit}
    \int_{-\infty}^\infty t^2 dE_{xx}(t) =
    \int_0^\infty (E(\R\setminus[-\sqrt{\tau},\sqrt{\tau}])x,x)\, d\tau =
    \int_0^\infty \left(||x||^2 - (E([-\sqrt{\tau},\sqrt{\tau}])x,x)\right)\,
    d\tau.
  \end{equation}
  Now let $x=E(I) y$ with some $y\in D(A)$.
  For $\tau$ sufficiently large so that $I\subset [-\sqrt{\tau},\sqrt{\tau}]$,
  \[
  (E([-\sqrt{\tau},\sqrt{\tau}])x,x) =
  (E([-\sqrt{\tau},\sqrt{\tau}]) E(I) y,E(I) y) =
  (E(I) y,E(I) y) = ||x||^2.
  \]
  Hence the integral in \eqref{domaincrit} converges and \eqref{domain}
  follows from \cite[Thm.~13.24]{Ru}.

  Now we assume that the essential spectrum of $A$
  intersects the negative reals,
  i.e., there exists $\lambda \in ]-\infty,0[\, \cap\, \sigma_\mathrm{ess}(A)$.
  We choose an $\epsilon>0$ such that
  $I:=]\lambda-\epsilon,\lambda+\epsilon[ \,\subset\, ]-\infty,0[$.
  Then $\dim R(E(I)) = \infty$, see \cite[X Remark 1.11]{Kato}.
  We fix some $k>n^-(A)$ and choose linearly independent vectors
  $x_1,\ldots,x_k \in R(E(I))$.
  By \eqref{domain}, $x_j \in D(A)$, and $x_j=E(I) y_j$
  with $y_j\in X$, $j=1,\ldots,k$. We consider the integrals in \eqref{axx}
  with $x=x_j$. For $\tau > \lambda +\epsilon$ it holds that
  \[
  (E(]-\infty,\tau[) x_j, x_j)=(E(]-\infty,\tau[) E(I) y_j, E(I) y_j)
    = (E(I) y_j, E(I) y_j)=||x_j||^2 >0.
  \]
  For $\tau >0$ we have that $]\tau,\infty[ \cap I =\emptyset$ and therefore
  $(E(]\tau,\infty[) x_j, x_j)=0$, and since $E(]-\infty,\tau[)$ is a
  projection, $E(]-\infty,\tau[) x_j, x_j)\geq 0$ for all $\tau$. Therefore by
  \eqref{axx}, $(A x_j,x_j)<0$ for all $j=1,\ldots k$, but since
  $(x_1,\ldots,x_k)$ are linearly independent and $k>n^-(A)$ this is a
  contradiction.
  Hence  $\sigma(A) \cap ]-\infty,0[$ consists only of isolated
  eigenvalues with finite multiplicities, and there can be at most $n^-(A)$
  of them.

  If we now assume that in addition $n^-(A)>0$, then \cite[Thm.~13.31]{Ru}
  shows that $\sigma(A)$ must intersect the negative reals, and the proof
  is complete.
\end{proof}

Finally we give a special version of an abstract result from
\cite{LinZeng2017b}, which is used in the proof of
Lemma~\ref{prop-self-adjoint-composition}.

\begin{proposition} \label{prop-self-adjoint}
Let $X,Y$ be Hilbert spaces and
assume $A \colon Y\supset D(A)\rightarrow X$ is densely defined and closed. In
addition, assume the following:
\begin{itemize}
\item[(i)]
  The adjoint operator $A^{\ast}\colon X\supset D(A^{\ast})\rightarrow Y$ is also
  densely defined.
\item[(ii)]
  The operator $L\colon X\rightarrow X$ is bounded and self-adjoint
  so that $\left(  Lu,v\right)  $ is a bounded
  symmetric bilinear form
  on $X$, where $(\cdot,\cdot)$ denotes the inner product in $X$.
\item[(iii)]
  There exists a decomposition of $X$ into the direct sum of three closed
  subspaces
  \begin{equation}
    X=X_{-}\oplus\ker L\oplus X_{+},\quad n^{-}(L) := \dim X_{-}
    <\infty,\ \ker L<\infty. \label{decom-X-A}
  \end{equation}
  such that the following holds:
  \begin{itemize}
  \item[(L1)]
   $\left\langle Lu,u\right\rangle <0$ for all $u\in X_{-}\backslash\{0\}$,
  \item[(L2)]
    there exists $\delta>0$ such that
    \[
    \left(  Lu,u\right)  \geq\delta\left\Vert u\right\Vert _{X}^{2}
    \ \text{for all}\ u\in X_{+}.
    \]
  \end{itemize}
\end{itemize}
Then the operator $B=A^{\ast}L A \colon Y\rightarrow Y$ is self-adjoint
with domain
$D\left(  A^{\ast}LA\right)  \subset D\left(A\right)$.
\end{proposition}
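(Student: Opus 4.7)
The plan is to reduce the assertion to the classical von Neumann theorem that $T^{\ast}T$ is self-adjoint for any closed, densely defined operator $T$ between Hilbert spaces, and then absorb the indefinite part of $L$ as a bounded, finite-rank correction via a Kato--Rellich argument. Since $L$ is bounded self-adjoint with $n^{-}(L)<\infty$ and $\dim\ker L<\infty$, the spectral theorem forces the non-positive spectrum of $L$ to be discrete of total finite multiplicity, and the decomposition (A.1) can without loss of generality be taken to be the corresponding \emph{orthogonal} spectral decomposition with orthogonal projections $P_{-},P_{0},P_{+}$; properties (L1) and (L2) then read $L\le -\delta'\,\mathrm{id}$ on $X_{-}$ and $L\ge \delta\,\mathrm{id}$ on $X_{+}$ for some $\delta,\delta'>0$.

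Next I would introduce the bounded, self-adjoint, uniformly positive modification
\[
\tilde L := L P_{+}+c\,(P_{0}+P_{-}),\qquad c>\|L\|_{\mathrm{op}},
\]
so that $\tilde L\ge\mu\,\mathrm{id}$ for some $\mu>0$. Then $\tilde L^{1/2}$ is a bounded self-adjoint operator with bounded inverse. The operator $T:=\tilde L^{1/2}A$ with $D(T):=D(A)$ is closed (since $A$ is closed and $\tilde L^{\pm1/2}$ are topological isomorphisms of $X$) and densely defined, and hypothesis~(i) gives $T^{\ast}=A^{\ast}\tilde L^{1/2}$. The von Neumann theorem then yields that
\[
\tilde B := T^{\ast}T = A^{\ast}\tilde L A
\]
is self-adjoint on its natural domain $D(\tilde B)=\{\,y\in D(A): \tilde L Ay\in D(A^{\ast})\,\}\subset D(A)$.

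Finally I would handle the correction $R:=L-\tilde L = (L-c\,\mathrm{id})P_{-}-cP_{0}$, which is bounded, self-adjoint, and of finite rank with range in the finite-dimensional subspace $X_{-}\oplus\ker L$. In the intended applications---most notably Lemma~\ref{prop-self-adjoint-composition}, where the relevant finite-dimensional subspace is spanned by smooth eigenfunctions of a compact perturbation of the identity---the inclusion $X_{-}\oplus\ker L\subset D(A^{\ast})$ holds; consequently $A^{\ast}RA$ is defined on all of $D(A)$, bounded and symmetric there (indeed of finite rank, since it factors through $X_{-}\oplus\ker L$), and $D(B)=D(\tilde B)$. Writing $B=\tilde B+A^{\ast}RA$ and applying the Kato--Rellich theorem for a bounded symmetric perturbation of a self-adjoint operator then gives self-adjointness of $B$ on $D(\tilde B)$, as desired.

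The main obstacle is precisely the verification of $X_{-}\oplus\ker L\subset D(A^{\ast})$: this is the only step that is not purely abstract and must be extracted from the concrete structure of each application, via elliptic/regularity information on the finite-dimensional eigenspaces of $L-\mathrm{id}$. Once this regularity is in place, the remainder of the argument is a routine combination of the von Neumann and Kato--Rellich theorems.
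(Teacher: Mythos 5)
Your overall strategy (von Neumann's theorem for $T^{\ast}T$ plus a Kato--Rellich absorption of the indefinite part) is indeed the skeleton of the paper's argument, but as written your proposal does not prove the stated proposition: it proves it only under the additional hypothesis $X_{-}\oplus\ker L\subset D(A^{\ast})$, which is not among the assumptions, and you explicitly defer its verification to ``the concrete structure of each application.'' This is precisely the point the abstract proposition is designed to eliminate. The statement must hold with no regularity information whatsoever on the non-positive spectral subspaces of $L$; in the application to $A_{\kappa}^{\ast}\mathcal{L}_{\kappa}A_{\kappa}$ in Lemma~\ref{prop-self-adjoint-composition} it is by no means clear (and is nowhere proved in the paper, because it is not needed) that eigenfunctions of $\mathcal{L}_{\kappa}$ with non-positive eigenvalues lie in $D(A_{\kappa}^{\ast})$. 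So the step you flag as ``the main obstacle'' is exactly the gap: without it your decomposition $B=\tilde{B}+A^{\ast}RA$ breaks down, since for $y\in D(A)$ one only gets $\|A^{\ast}RAy\|\leq C\|Ay\|$, not $C\|y\|$, unless the finite-dimensional range of $R$ sits inside $D(A^{\ast})$.

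The missing idea is to perturb the decomposition instead of demanding regularity of the spectral one. Since $D(A^{\ast})$ is dense (hypothesis (i)) and $X_{+}^{\perp}$ is finite dimensional, choose $g_{1},\dots,g_{k}\in D(A^{\ast})$ close to a basis of $X_{+}^{\perp}$ and set $X_{+}^{\prime}:=\{u\in X\mid (g_{j},u)=0,\ j=1,\dots,k\}$; closeness preserves the coercivity (L2) on $X_{+}^{\prime}$, and one defines $X_{-}^{\prime}$ inside $X_{1}^{\prime}:=\{x\mid (Lx,x_{+})=0\ \text{for all}\ x_{+}\in X_{+}^{\prime}\}$, so the new decomposition is $L$-orthogonal and a Morse-index count gives (L1) on $X_{-}^{\prime}$. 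Writing $L=P_{+}^{\ast}LP_{+}+P_{1}^{\ast}LP_{1}=:L_{+}-L_{1}$ with $L_{+},L_{1}\geq 0$, the crucial gain is that $R(P_{1}^{\ast})=(X_{+}^{\prime})^{\perp}=\mathrm{span}\{g_{1},\dots,g_{k}\}\subset D(A^{\ast})$, so $A^{\ast}P_{1}^{\ast}$ is bounded by the closed graph theorem, hence $P_{1}A$ is bounded and $A^{\ast}P_{1}^{\ast}L_{1}P_{1}A$ is a bounded symmetric operator with no extra assumptions; then $A^{\ast}LA=(S_{+}P_{+}A)^{\ast}(S_{+}P_{+}A)-A^{\ast}P_{1}^{\ast}L_{1}P_{1}A$ is self-adjoint by von Neumann and Kato--Rellich, where $S_{+}=L_{+}^{1/2}$ and the lower bound $\|S_{+}x\|\geq\sqrt{\delta}\|x\|$ on $X_{+}^{\prime}$ keeps $S_{+}P_{+}A$ closed with domain $D(A)$. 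In short, your absorption step has to be implemented with an approximate, $L$-orthogonal decomposition built from vectors in $D(A^{\ast})$, not with the orthogonal spectral decomposition, whose negative and kernel subspaces need not intersect $D(A^{\ast})$ at all.
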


\begin{proof}
First, we show that there exists a new decomposition
\[
X=X_{-}^{\prime}\oplus\ker L\oplus X_{+}^{\prime}
\]
such that (L1) and (L2) are satisfied and in addition
\begin{equation}
\left(  X_{+}^{\prime}\right)  ^{\perp}:=\left\{  x\in X\ |\ \left(
x,x_{+}\right)  =0\ \mbox{for all}\ x_{+}\in X_{+}^{\prime}\right\}  \subset
D(A^{\ast})\label{p1}
\end{equation}
and
\begin{equation}
(Lx_{+},x_{-})=0\ \mbox{for all}\ x_{+}\in X_{+}^{\prime},\ x_{-}\in
X_{-}^{\prime}.\label{p2}
\end{equation}
Indeed, Let $(f_{1},\ldots,f_{k})$ be a basis of $X_{+}^{\perp}$, where
$k=\dim\ker L+n^{-}(L)$ is the co-dimension of $X_{+}$. As $D(A^{\ast})$ is
dense in $X$, one may take $g_{j}\in D(A^{\ast})$ sufficiently close to
$f_{j}$, $j=1,\ldots,k$. Let
\[
X_{+}^{\prime}=\{u\in X\mid\left(  g_{j},u\right)  =0,\  j=1,\ldots ,k\}.
\]
Since $X_{+}^{\prime}$ is close to $X_{+}$, it is easy to show that (L2) is
satisfied on $X_{+}^{\prime}$. Define
\[
X_{1}^{\prime}=\left\{  x\in X \mid \left(  Lx,x_{+}\right)  =0
\ \mbox{for all}\ x_{+}\in X_{+}^{\prime}\right\}  .
\]
Clearly, $\ker L\subset X_{1}^{\prime}$ and $\dim X_{1}^{\prime}=k$. Let
$X_{1}^{\prime}=\ker L\oplus X_{-}^{\prime}$. Then (L1) is satisfied on
$X_{-}^{\prime}$, since
\[
n^{-}\left(  L|_{X_{-}^{\prime}}\right)  =n^{-}\left(  L\right)  -n^{-}\left(
L|_{X_{+}^{\prime}}\right)  =n^{-}\left(  L\right)
\]
and $\dim X_{-}^{\prime}=n^{-}(L)$.

For convenience, we denote $\left(  X_{-}^{\prime},X_{+}^{\prime}\right)  $ by
$\left(  X_{-},X_{+}\right)  $ below and let $X_{1}=\ker L\oplus X_{-}$. Let
$P_{+,1}$ be the projection of $X$ to $X_{+},_{1}$. By (\ref{p2}),
\[
P_{1}^{\ast}L P_{+}=P_{+}^{\ast}L P_{1}=0,
\]
which implies that
\[
L=P_{+}^{\ast}L P_{+}+P_{1}^{\ast}L P_{1}:= L_{+}-L_{1}
\]
with symmetric $L_{+,1}\geq0$. Since
$R(P_{1}^{\ast})=X_{+}^{\perp}\subset D(A^{\ast})$,
by the Closed Graph Theorem $A^{\ast}P_{1}^{\ast}$ is bounded.
Therefore $P_{1}A$ has a continuous extension
$(A^{\ast}P_{1}^{\ast})^{\ast}=(P_{1}A)^{\ast\ast}$, i.e., $P_{1}A$ is bounded.
Therefore $P_{+}A$ is closed
and densely defined. Let $S_{+} \colon X\rightarrow X$
be a bounded symmetric linear
operator such that
\[
S_{+}^{\ast}S_{+}=S_{+}^{2}=L_{+},\quad S_{+}\geq0.
\]
For any $x\in X_{+}$,
\[
\left\Vert S_{+}x\right\Vert ^{2}=(L_{+}x,x)=(Lx,x)\geq\delta\left\Vert
x\right\Vert ^{2},
\]
which implies that for all $x\in X_{+}$,
\[
\left\Vert S_{+}x\right\Vert \geq\delta\left\Vert x\right\Vert.
\]
This lower bound of $S_{+}$ implies that $T_{+}:= S_{+}P_{+}A$ is
also closed with the dense domain $D(T_{+})=D(A)$ and thus $T_{+}^{\ast}T_{+}$
is self-adjoint. Then
\begin{align*}
A^{\ast}L A= &  A^{\ast}P_{+}^{\ast}L_{+}P_{+}A-A^{\ast}P_{1}^{\ast}L_{1}
P_{1}A=(A^{\ast}P_{+}^{\ast}S_{+})(S_{+}P_{+}A)-A^{\ast}P_{1}^{\ast}L_{1}
P_{1}A\\
:= &  T_{+}^{\ast}T_{+}-B_{1},
\end{align*}
where $B_{1}=A^{\ast}P_{1}^{\ast}L_{1}P_{1}A$ is bounded and symmetric.
Therefore, $A^{\ast}L A$ is self-adjoint by the Kato-Rellich Theorem.
\end{proof}

\bigskip

\noindent
{\bf Acknowledgments.}
MH acknowledges the support of EPSRC Grant EP/N016777/1 and the
EPSRC Early Career Fellowship EP/S02218X/1.
ZL is supported in part by NSF grants DMS-1715201 and
DMS-2007457.

\end{document}